\def\dOi{13(4:28)2017}
\DeclareRobustCommand*\cal{\@fontswitch\relax\mathcal}
\theoremstyle{plain}
\theoremstyle{plain}
\theoremstyle{plain}
\theoremstyle{plain}
\theoremstyle{plain}
\theoremstyle{plain}
\newcommand{\mycomment}[1]{}
\newcommand{\Var}{\textit{Var}}
\newcommand{\Cfg}{{\it Cfg}}
\newcommand{\ra}{\rightarrow}
\newcommand{\lra}{\leftrightarrow}
\newcommand{\Ra}{\mathbin{\Rightarrow}}
\newcommand{\andx}{\mathbin{\wedge}}
\newcommand{\Prop}[0]{\textit{Prop}}
\newcommand{\Pred}{\textit{Pred}}
\newcommand{\Int}{{\it Int}}
\newcommand{\Real}{{\it Real}}
\newcommand{\Dom}{{\it Dom}}
\newcommand{\ttrue}{{\it true}}
\newcommand{\ffalse}{{\it false}}
\newcommand{\Pattern}{\textsc{Pattern}\xspace}
\newcommand{\Nat}{{\it Nat}}
\newcommand{\PNat}{{\it PNat}}
\newcommand{\Bool}{{\it Bool}}
\newcommand{\Seq}{{\it Seq}}
\newcommand{\MSet}{{\it MultiSet}}
\newcommand{\Map}{{\it Map}}
\newcommand{\Set}{{\it Set}}
\newcommand{\reverse}{{\it rev}}
\newcommand{\upto}{{\it upto}}
\newcommand{\llist}{{\it list}}
\newcommand{\listf}{{\it lseg}}
\newcommand{\tree}{{\it tree}}
\newcommand{\node}{{\it node}}
\newcommand{\Tree}{{\it Tree}}
\newcommand{\Cells}{\it Cells}
\newcommand{\Env}{\it Env}
\newcommand{\Heap}{\it Heap}
\newcommand{\SLemp}{{\it emp}}
\newcommand{\SLstar}{*}
\newcommand{\SLempM}{\SLemp_{M}}
\newcommand{\SLstarM}{\SLstar_{M}}
\newcommand{\mapstoM}{\mapsto_{M}}
\newcommand{\emptytree}{{\it none}}
\newcommand{\magicwand}{\mbox{$-\!*\,$}}
\newcommand{\FV}{{\it FV}}
\newcommand{\PL}{{\it PL}}
\newcommand{\ML}{{\it ML}}
\newcommand{\FOL}{{\it FOL}}
\newcommand{\PLmodels}{\models_\PL^=}
\newcommand{\PLvdash}{\vdash_\PL^=}
\newcommand{\K}{\ensuremath{\mathbb{K}}\xspace}
\newcommand{\ellipses}{\mathrel{\cdot\kern-2pt\cdot\kern-2pt\cdot}}
\newcommand{\mall}[2]{\langle{#2}\rangle_{\sf{#1}}}
\newcommand{\mprefix}[2]{\mall{#1}{{#2}\ \ellipses}}
\newcommand{\mmiddle}[2]{\mall{#1}{\ellipses\ {#2}\ \ellipses}}
\newcommand{\msuffix}[2]{\mall{#1}{\ellipses\ {#2}}}
\newcommand{\kall}{\mall}
\newcommand{\kprefix}{\mprefix}
\newcommand{\kmiddle}{\mmiddle}
\newcommand{\ksuffix}{\msuffix}
\newcommand{\constant}[1]{\ensuremath{\mathsf{#1}}}
\newcommand{\sequent}[4]{{#2}\ \mathrel{\vdash}_{#1} {#3} \Ra {#4}}
\newcommand{\orx}{\mathbin{\vee}}
\newcommand{\rl}[1]{
\hspace*{-2.5ex}
$
\begin{array}{rl}
\textrm{rule}\hspace*{-2ex} & {#1}
\end{array}
$
}
\newcommand{\ReflexivityRule}[0]{\textsc{Reflexivity}}
\newcommand{\AxiomRule}[0]{\textsc{Axiom}}
\newcommand{\TransitivityRule}[0]{\textsc{Transitivity}}
\newcommand{\CaseAnalysisRule}[0]{\textsc{Case Analysis}}
\newcommand{\ConsequenceRule}[0]{\textsc{Consequence}}
\newcommand{\AbstractionRule}[0]{\textsc{Abstraction}}
\newcommand{\CircularityRule}[0]{\textsc{Circularity}}
\newlength{\lstspace} 
\lstdefinelanguage{verbatim}{
	basicstyle=\ttfamily,
	aboveskip=\lstspace,
	belowskip=\lstspace,
}
\newcommand{\lstc}[1]{\text{\lstinline{#1}}}
\definecolor{Grey}{rgb}{0.8,0.8,0.8}
\definecolor{backgroundGray}{gray}{.7}
\newcommand{\mdot}{\cdot}
\newcommand{\kdot}{\mdot}
\newcommand{\inv}[1]{
\textrm{inv} \ \ ${#1}$
}
\newcommand{\graybox}[2]{
\begingroup
\setlength{\fboxsep}{1pt}
\hspace*{-1ex}
\colorbox{backgroundGray}{
\begin{minipage}{#1}
\begingroup
\vspace*{.5ex}
\setlength{\fboxsep}{1pt}
#2
\vspace*{1.3ex}
\endgroup
\end{minipage}
}\hfill
\endgroup
}
\newcommand{\grigore}[1]{}
\begin{document}

\title[Mtching Logic]{Matching Logic\rsuper*}

\author[G.~Ro\c{s}u]{Grigore Ro\c{s}u}
\address{University of Illinois at Urbana-Champaign, USA}
\email{grosu@illinois.edu}
\thanks{
The work presented in this paper was supported in part by
the Boeing grant on "Formal Analysis Tools for Cyber Security" 2014-2017,
the NSF grants CCF-1218605, CCF-1318191 and CCF-1421575, and
the DARPA grant under agreement number FA8750-12-C-0284.
}

\keywords{Program logic; First-order logic; Rewriting; Verification}
\subjclass{
D.2.4 Software/Program Verification;
D.3.1 Formal Definitions and Theory;
F.3 LOGICS AND MEANINGS OF PROGRAMS;
F.4 MATHEMATICAL LOGIC AND FORMAL LANGUAGES
}

\titlecomment{{\lsuper*}Extended version of an invited paper at the 26$^{\rm th}$
International Conference on Rewriting Techniques and Applications (RTA'15),
June 29 to July 1, 2015, Warsaw, Poland.}

\begin{abstract}
This paper presents {\em matching logic}, a first-order logic (FOL)
variant for specifying and reasoning about structure by means of patterns and
pattern matching.
Its sentences, the {\em patterns}, are constructed using
{\em variables}, {\em symbols}, {\em connectives} and {\em quantifiers},
but no difference is made between function and predicate symbols.
In models, a pattern evaluates into a power-set domain (the set of values
that {\em match} it), in contrast to FOL where functions and predicates
map into a regular domain.
Matching logic uniformly generalizes several logical frameworks important
for program analysis, such as: propositional logic, algebraic specification,
FOL with equality, modal logic, and separation logic.
Patterns can specify separation requirements at any level
in any program configuration, not only in the heaps or stores, without
any special logical constructs for that: the very nature of pattern
matching is that if two structures are matched as part of a pattern, then
they can only be spatially separated.
Like FOL, matching logic can also be translated into pure predicate logic
with equality,
at the same time admitting its own sound and complete proof system.
A practical aspect of matching logic is that FOL reasoning with equality remains sound,
so off-the-shelf provers and SMT solvers can be used for matching logic
reasoning.
Matching logic is particularly well-suited for reasoning about programs
in programming languages that have an operational semantics, but
it is not limited to this.
\end{abstract}

\maketitle

\tableofcontents

\section{Introduction}
\label{sec:introduction}

In their simplest form, as term templates with variables,
patterns abound in mathematics and computer science.
They match a concrete, or ground, term if and only if there is
some substitution applied to the pattern's variables that makes it equal
to the concrete term, possibly via domain reasoning.
This means, intuitively, that the concrete term obeys the structure
specified by the pattern.
We show that when combined with logical connectives and variable constraints
and quantifiers, patterns provide a powerful means to specify and reason about
the structure of states, or configurations, of a programming language.

Matching logic was inspired from the domain of programming language semantics,
specifically from attempting to use operational semantics directly
for program verification.
Recently, operational semantics of several real languages have been proposed,
e.g., of C~\cite{ellison-rosu-2012-popl,hathhorn-ellison-rosu-2015-pldi},
Java~\cite{bogdanas-rosu-2015-popl},
JavaScript~\cite{bodin-etal-javascript,park-stefanescu-rosu-2015-pldi},
Python~\cite{guth-2013-thesis,pltredex-python},
PHP~\cite{k-php}, CAML~\cite{caml-ott},
thanks to the development of semantics engineering frameworks like
PLT-Redex~\cite{plt-redex}, Ott~\cite{ott-icfp},
\K~\cite{rosu-serbanuta-2010-jlap,rosu-serbanuta-2013-k}, 
etc., which make defining
an operational semantics for a programming language almost as easy as
implementing an interpreter, if not easier.
Operational semantics are comparatively easy to define and
understand,
require little formal training, scale up well, and, being
executable, can be tested.
Indeed, the language semantics above have more than 1,000
(some even more than 3,000) semantic rules and have been tested on
benchmarks/test-suites that language
implementations use to test their conformance, where
available.
Thus, operational semantics are typically used as trusted reference models
for the defined languages.
We would like to use such operational semantics of languages, {\em unchanged},
for program verification.

\begin{figure}[!t]
\centering
\includegraphics[width=3.4in]{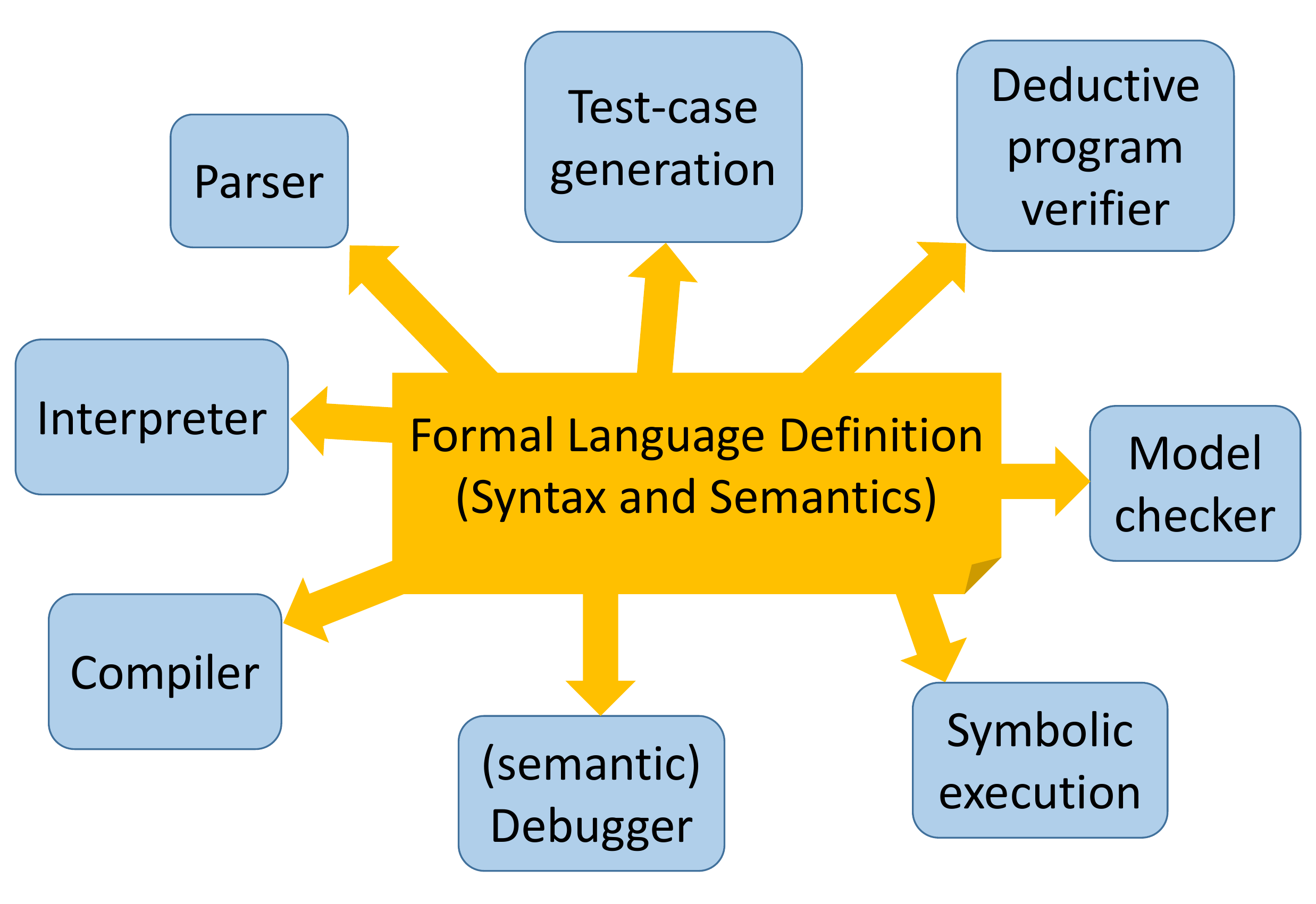}
\caption{Architecture of the \K framework, powered by matching logic}
\label{fig:dream}
\end{figure}

Despite their advantages, operational semantics are rarely used directly for
program verification, because the general belief is that proofs tend to be
low-level,
as they work directly with the corresponding transition system.
Hoare \cite{hoare-69} or dynamic \cite{Harel84dynamiclogic} logics are typically
used, because they allow higher level reasoning.
However, these come at the cost of (re)defining the language semantics as a set
of abstract proof rules, which are harder to understand and trust.
The state-of-the-art in mechanical program verification is to develop and prove
such language-specific proof systems sound w.r.t.\ a trusted operational
semantics~\cite{Nipkow-FAC98,DBLP:journals/jlp/Jacobs04,DBLP:conf/esop/Appel11},
but that needs to be done for each language separately and is labor intensive.

Defining even one complete semantics for a real language like C or Java is
already a huge effort.
Defining multiple semantics, each good for a different purpose, is at best
uneconomical, with or without proofs of soundness w.r.t.\ the reference
semantics.
It is therefore not surprising that many practical program verifiers forgo
defining a semantics altogether, and instead they implement ad-hoc
verification condition (VC) generation, sometimes via (unverified) translations
to intermediate verification languages like 
Boogie~\cite{Barnett06boogie:a} or
Why3~\cite{DBLP:conf/esop/FilliatreP13}.
For example, program verifiers for C like
VCC~\cite{DBLP:conf/tphol/CohenDHLMSST09} and
Frama-C~\cite{DBLP:conf/esop/FilliatreP13}, and for
Java like jStar~\cite{1449782} take this approach.
Also, {\em none} of the 35 verifiers that participated
in the 2016 software verification competition (SV-COMP)
\cite{DBLP:conf/tacas/Beyer16} appear to be based on a formal semantics of
any kind.
The consequence is that such tools cannot be trusted.
We would like program verifiers, ideally, to produce
proof certificates whose trust base is only an operational semantics of the
target language, same as mechanical verifiers based on Coq~\cite{Coq:manual}
or Isabelle~\cite{Nipkow:2002:IPA:1791547} do,
but without the effort to define any other semantics of the same language,
either directly as a separate proof system or indirectly by extending
the operational semantics with language-specific lemmas.
We would like program verifiers, ideally, to take an operational
semantics of a language as input and to yield, as output, a verifier
for that language which is as easy to use and as efficient as verifiers
specifically developed for that language.

\mycomment{
Unfortunately, the current state-of-the-art in program verification is to
define yet another semantics for these languages, amenable for reasoning
about programs, such as an axiomatic or a dynamic logic semantics, because the
general belief is that operational semantics are too low level for program
verification.
Moreover, when the correctness of the verifier itself is a concern, tedious proofs
of equivalence between the operational and the alternative semantics are
produced.
That is because operational semantics are comparatively much easier to define
and at the same time are executable (and thus also testable), so they are often
considered as reference models of the corresponding languages, while the
alternative semantics devised for verification purposes tend to be more
mathematically involved and are not executable so they may hide tricky errors.
Defining even one semantics for a real language like C or Java is already a
huge effort.
Defining more semantics, each good for a different purpose, is at best very
uneconomical, with or without proofs of equivalence with the reference
semantics.
}

Matching logic was born from our belief that programming languages must
have formal definitions, and that tools for a
given language, such as interpreters, compilers, state-space
explorers, model checkers, deductive program verifiers, etc., can be
derived from just {\em one} reference formal
definition of the language, which is executable.
No other semantics for the same language should be needed.
This belief is reflected in the design of the \K framework
~\cite{rosu-serbanuta-2010-jlap,rosu-serbanuta-2013-k}
(\url{http://kframework.org}), illustrated in Figure~\ref{fig:dream}.
This is the ideal scenario and there is enough evidence that
it is within our reach in the short term.
For example, \cite{stefanescu-park-yuwen-li-rosu-2016-oopsla} presents
a program verification module of \K, based on matching logic, which takes the
respective operational semantics of
C~\cite{hathhorn-ellison-rosu-2015-pldi},
Java~\cite{bogdanas-rosu-2015-popl}, and
JavaScript~\cite{park-stefanescu-rosu-2015-pldi}
as input and yields automated program verifiers for these languages,
capable of verifying challenging heap-manipulating
programs at performance comparable to that of state-of-the-art
verifiers specifically crafted for those languages.
A precursor of this verifier,
MatchC~\cite{rosu-stefanescu-2012-oopsla},
has an online interface at \url{http://matching-logic.org}
where one can verify dozens of predefined programs or new ones;
e.g., the program in Figure~\ref{fig:matchC-example}
is under the {\tt io} folder and it takes about 150ms to verify.

\begin{figure}
\begin{footnotesize}
\begin{lstlisting}
struct listNode {  int val;  struct listNode *next;  };

void list_read_write(int n) {
\end{lstlisting}
\graybox{335pt}{
\vspace*{1ex}
\rl{
\kprefix{code}{\constant{\$} \Rightarrow \lstc{return;}} \
\kprefix{in}{\constant{A}\Rightarrow \kdot} \ 
\ksuffix{out}{\kdot\Rightarrow\constant{rev}(\constant{A})}
\ \andx \ 
\lstc{n} = \constant{len}(\constant{A})}
}
\begin{lstlisting}
  int i=0;
  struct listNode *x=0;
\end{lstlisting}
\graybox{330pt}{
\vspace*{1ex}
\inv{
\kprefix{in}{\constant{\beta} \ \andx \ \constant{len}(\constant{\beta}) =
\lstc{n} - \lstc{i} \ \andx \ \lstc{i} \leq \lstc{n}} \
\kprefix{heap}{\constant{list}(\lstc{x}, \constant{\alpha})} \
\andx \ \constant{A} = \constant{rev}(\constant{\alpha}) @ \constant{\beta}
}}
\begin{lstlisting}
  while (i < n) {
    struct listNode *y = x;
    x = (struct listNode*) malloc(sizeof(struct listNode));
    scanf("%d", &(x->val));
    x->next = y;
    i += 1; }
\end{lstlisting}
\graybox{230pt}{
\vspace*{1ex}
\inv{
\ksuffix{out}{\constant{\alpha}} \
\kprefix{heap}{\constant{list}(\lstc{x},\constant{\beta})} \
\andx \ \constant{rev}(\constant{A}) = \constant{\alpha} @ \constant{\beta}
}
}
\begin{lstlisting}
  while (x) {
    struct listNode *y;
    y = x->next;
    printf("%d ",x->val);
    free(x);
    x = y; }
}
\end{lstlisting}
\end{footnotesize}
\vspace*{-2ex}
\caption{Reading, storing, and reverse writing a sequence of integers}
\label{fig:matchC-example}
\end{figure}

\mycomment{
The main idea is that semantic rules match and apply on program
{\em configurations}, which are algebraic data types defined as
terms constrained by equations capturing the needed mathematical domains,
such as lists (e.g., for input/output buffers, function stacks, etc.), sets
(e.g., for concurrent threads or processes, for resources held, etc.),
maps (e.g., for environments, heaps, etc), and so on.
}

To reason about programs we need to be able to reason about program
{\em configurations}.
Specifically, we need to define configuration abstractions and reason with
them.
Consider, for example, the program in Figure~\ref{fig:matchC-example} which
shows a C function that reads \texttt{n} elements from the standard input
and prints them to the standard output in reversed order (for now, we can
ignore the specifications, which are grayed).
While doing so, it allocates a singly linked list storing the elements as
they are read, and then deallocates the list as the elements are printed.
In the end, the heap stays unchanged.
To state the specification of this program, we need to match an abstract
sequence of \texttt{n} elements in the input buffer, and then to match its
reverse at the end of the output buffer when the function terminates.
Further, to state the invariants of the two loops we need to identify a
singly linked pattern in the heap, which is a partial map.
Many such sequence or map patterns, as well as operations on them, can be
defined using conventional algebraic data types (ADTs).
But some of them cannot.

A major limitation of ADTs and of first-order logic (FOL) is that operation
symbols are interpreted as functions in models, which sometimes is insufficient.
E.g., a two-element linked list in the heap
(we regard heaps as maps from natural number locations to values)
starting with location
7 and holding values 9 and 5, written as $\llist(7,9@5)$,
can allow infinitely many heap values, one for each location where the value
5 may be stored.
So we cannot define $\llist$ as an operation symbol $\Int \times \Seq \ra \Map$.
The FOL alternative is to define $\llist$ as a predicate
$\Int \times \Seq \times \Map$, but mentioning the map all the time as an
argument makes specifications verbose and hard to read, use and reason about.
An alternative, proposed by separation
logic~\cite{reynolds-02}, is to fix and
move the map domain from explicit in models to implicit in the
logic, so that $\llist(7,9@5)$ is interpreted as a predicate
but the non-deterministic map choices are implicit in the logic.
We then may need custom separation logics for different languages
that require different variations of map models or different configurations
making use of different kinds of resources.
This may also require specialized separation logic provers needed
for each, or otherwise encodings that need to be proved correct.
Finally, since the map domain is not available as data, one cannot
use FOL variables to range over maps and thus proof rules like
``heap framing'' need to be added to the logic explicitly.

Matching logic avoids the limitations of both approaches above,
by interpreting its terms/formulae as {\em sets} of values.
Matching logic's formulae, called {\em patterns}, are built using
variables, symbols from a signature, and FOL connectives and quantifiers.
%
We can think of matching logic as collapsing the function and predicate
symbols of FOL, allowing patterns to be simultaneously regarded {\em both} as
terms and as predicates.
When regarded as terms they build structure, when regarded as predicates
they express constraints.
Semantically, the matching logic models are similar to the FOL models,
except that the symbols in the signature are interpreted as functions
returning sets of values instead of single values.
Patterns are then also interpreted as sets of values, where conjunction is
interpreted as intersection, negation as complement, and the existential
quantifier as union over all compatible valuations.
The name ``matching logic'' was inspired from the case when the model
is that of terms, common in the context of language semantics, where
terms represent (fragments of) program configurations.
There, a pattern is interpreted as the set of terms that match it.

The (grayed) specifications in Figure~\ref{fig:matchC-example} show 
examples of matching logic patterns, over the signature used to
define the semantics of C \cite{hathhorn-ellison-rosu-2015-pldi}.
The signature includes symbols corresponding to the syntax of the
language, to semantic constructs such as $\kall{code}{\_}$ holding the
remaining code fragment, $\kall{heap}{\_}$ holding the current
heap as a map, and $\kall{in}{\_}$ and $\kall{out}{\_}$ holding
the current input and resp.\ output buffers as sequences,
among many others.
Let us discuss the invariant pattern of the first loop (second grayed area).
It says that the pattern
$\constant{list}(\constant{x},\constant{\alpha})$ is matched somewhere
in the heap, and that the sequence $\constant{\beta}$ of size
$\constant{n}-\constant{i}$ is available at the beginning of the input
buffer such that
$\constant{A}$ is the reverse of the sequence that $\constant{x}$ points
to, $\constant{rev}(\constant{\alpha})$, concatenated with $\constant{\beta}$.
The ellipses ``$\cdots$'' are syntactic sugar for existentially quantified
variables, which we call ``structural frame variables''.
Note how symbols from the signature are mixed with logical constructs,
and how variables can range over any data stored in configurations,
including over heap fragments.
In addition to the implicit existential quantifiers for ``$\cdots$'',
the sequence $\beta$ under the $\kall{in}{\_}$ symbol is conjuncted with
logical constraints about its length; also, the pair consisting of the
$\kall{in}{\_}$ and $\kall{heap}{\_}$ patterns at the top, which is itself
a configuration pattern, is conjuncted with the equality constraint
$\constant{A} = \constant{rev}(\constant{\alpha}) @ \constant{\beta}$.
While such mixes of symbols and logical connectives are disallowed in other
logics, such as FOL or separation logic, they are not only well-formed but
also strongly encouraged to be used in matching logic;
besides succinctness of specifications, they also allow for local reasoning.
This is discussed in detail shortly, in the example in
Section~\ref{sec:example}.

Matching logic is particularly well-suited for reasoning about programs
when their language has an operational semantics.
That is because its patterns give us full access to all the details
in a program configuration, at the same time allowing us to hide
irrelevant detail using existential quantification (e.g., the ``...''
framing variables in Figure~\ref{fig:matchC-example}) or separately defined
abstractions (e.g., the $\constant{list}(\constant{x},\alpha)$ pattern
in Figure~\ref{fig:matchC-example}).
Also, both the operational semantics of a language and its reachability
properties can be encoded as rules $\varphi \Ra \varphi'$ between patterns,
called {\em reachability rules} in \cite{stefanescu-park-yuwen-li-rosu-2016-oopsla,stefanescu-ciobaca-mereuta-moore-serbanuta-rosu-2014-rta,rosu-stefanescu-ciobaca-moore-2013-lics,rosu-stefanescu-2012-oopsla}, and one generic,
language-independent proof system can be used both for executing programs
and for proving them correct.
In both cases, the operational semantics rules are used to advance
the computation.
When executing programs the pattern to reduce is ground and the
application of the semantic steps becomes conventional term rewriting.
When verifying reachability properties, the pattern to reduce is symbolic
and typically contains constraints and abstractions, so matching logic
reasoning is used in-between semantic rewrite rule applications
to re-arrange the configuration so that semantic rules match or assertions
can be proved.
We refer the interested reader to
\cite{stefanescu-park-yuwen-li-rosu-2016-oopsla} for full details on
our recommended verification approach using matching logic.

Although we favor the verification approach above, which led to the development
of matching logic, there is nothing to limit the use of matching logic with
other verification approaches, as an intuitive and succinct notation for
encoding state properties.
For example, Proposition~\ref{prop:SL} tells us that any separation logic
formula is a matching logic pattern {\em as is}.
So one can, for example, take an existing separation logic semantics of a
language, regard it as a matching logic semantics and then extend
it to also consider structures in the configuration that separation logic was
not meant to directly reason about, such as function/exception/break-continue
stacks,
input/output buffers, etc.
For this reason, we here present matching logic as a stand-alone
logic, without favoring any particular use of it.

This paper is an extended version of the RTA'15 conference
paper~\cite{rosu-2015-rta}, which was the first to allow
the unrestricted mix of symbols and logical quantifiers in patterns.
A much simpler variant of matching logic was introduced in
2010 in \cite{rosu-ellison-schulte-2010-amast} as a state specification logic,
and has been used since then in several verification efforts 
\cite{rosu-stefanescu-2011-nier-icse,rosu-stefanescu-2012-oopsla,rosu-stefanescu-2012-fm,rosu-stefanescu-ciobaca-moore-2013-lics,stefanescu-ciobaca-mereuta-moore-serbanuta-rosu-2014-rta,stefanescu-park-yuwen-li-rosu-2016-oopsla},
and implemented in MatchC~\cite{rosu-stefanescu-2012-oopsla} by reduction to
Maude~\cite{maude-book} (for matching) and to Z3~\cite{z3-solver}
(for domain reasoning).
However, that matching logic variant shares only the basic intuition of
``terms as formulae'' with the logic presented in this paper, and was only
syntactic sugar for first-order logic (FOL) with equality in a fixed model,
essentially allowing only term patterns $t$ and regarding them as syntactic
sugar for equalities $\square=t$ (see Section~\ref{sec:related-work}).

Section~\ref{sec:matching-logic} introduces the syntax and semantics
of matching logic, as well as some basic properties.
Sections~\ref{sec:propositional-calculus} and \ref{sec:predicate-logic}
show how propositional calculus and, respectively, pure predicate logic
fall as instances of matching logic.
Section~\ref{sec:useful-symbols} shows how several important mathematical
concepts can be defined in matching logic, such as definedness, equality,
membership, and functions.
Using these, Sections~\ref{sec:alg-spec}, \ref{sec:FOL}, \ref{sec:modal-logic}
and \ref{sec:sep-logic} then show how
algebraic specifications,
first-order logic,
modal logic and, respectively, separation logic also fall as instances of
matching logic.
Section~\ref{sec:PL-reduction} shows that, like FOL,
matching logic also reduces to pure predicate logic with equality.
Section~\ref{sec:deduction} introduces our sound and complete
proof system for matching logic.
Section~\ref{sec:related-work} discusses related work and
Section~\ref{sec:conclusion} concludes.


\section{Matching Logic: Basic Notions}
\label{sec:matching-logic}

We assume the reader is familiar with many-sorted sets, functions, and
first-order logic (FOL).
For any given set of sorts $S$, we assume $\Var$ is an $S$-sorted set
of variables, sortwise infinite and disjoint.
We may write $x:s$ instead of $x\in\Var_s$, and when the sort of
$x$ is irrelevant we just write $x \in \Var$.
We let ${\cal P}(M)$ denote the powerset of a many-sorted set $M$, which is
itself many-sorted.
We only treat the many-sorted case here, but we see no inherent limitations
in extending the constructions and results in this paper to the
order-sorted case.

\subsection{Patterns}
We start by defining the syntax of patterns.

\begin{defi}
\label{dfn:ML-patterns}
Let $(S,\Sigma)$ be a many-sorted signature of \textbf{symbols}.
Matching logic $(S,\Sigma)$-\textbf{formulae}, also called
$(S,\Sigma)$-\textbf{patterns}, or just (matching logic)
\textbf{formulae} or \textbf{patterns} when $(S,\Sigma)$ is
understood from context,
are inductively defined as follows for all sorts $s\in S$:
$$
\begin{array}{rrl@{\ \ \ //\ \ }l}
\varphi_s & ::= & x\in\Var_s & \mbox{Variable}\\
& \mid & \sigma(\varphi_{s_1},...,\varphi_{s_n}) \ \mbox{ with }\sigma\in\Sigma_{s_1...s_n,s} \textrm{ \ \ (written $\Sigma_{\lambda,s}$ when $n=0$)} & \mbox{Structure}\\
& \mid & \neg \varphi_s & \mbox{Complement} \\
& \mid & \varphi_s \wedge \varphi_s & \mbox{Intersection} \\
& \mid & \exists x \,.\, \varphi_s \mbox{ with } x \in\! \Var \textrm{ \ \ (of any sort)} & \mbox{Binding}
\end{array}
$$
Let \Pattern be the $S$-sorted set of patterns.
By abuse of language, we refer to the symbols in $\Sigma$ also as
patterns: think of $\sigma\in\Sigma_{s_1 \ldots s_n,s}$ as
the pattern $\sigma(x_1\!:\!s_1,\ldots,x_n\!:\!s_n)$.
\end{defi}

We argue that the syntax of patterns above is necessary in order to express
meaningful patterns, and at the same time it is minimal.
Indeed, variable patterns allow us to extract the matched elements or
structure and possibly use them in other places in more complex patterns.
Forming new patterns from existing patterns by adding more structure/symbols
to them is standard and the very basic operation used to construct terms,
which are the simplest patterns.
Complementing and intersecting patterns allows us to reason with patterns
the same way we reason with logical propositions and formulae.
Finally, the existential binder serves a dual role.
On the one hand, it allows us to abstract away irrelevant parts of the
matched structure, which is particularly useful when defining and reasoning
about program invariants or structural framing.
On the other hand, it allows us to define complex patterns with binders in them,
such as $\lambda$-, $\mu$-, or $\nu$-bound terms/patterns
(to be presented elsewhere).

To ease notation, $\varphi\in\Pattern$ means
$\varphi$ is a pattern, while $\varphi_s\in\Pattern$ or
$\varphi\in\Pattern_s$ that it has sort $s$.
We adopt the following derived constructs (``syntactic sugar''): 
$$\begin{array}{rcl@{\hspace*{10ex}}rcl}
\top_{\!\!s} & \equiv & \exists x\!:\!s \,.\, x &
\varphi_1 \ra \varphi_2 & \equiv & \neg\varphi_1 \vee \varphi_2 \\
\bot_{s} & \equiv & \neg \top_{\!\!s} &
\varphi_1 \lra \varphi_2 & \equiv & (\varphi_1 \ra \varphi_2) \wedge
 (\varphi_2 \ra \varphi_1) \\
\varphi_1 \vee \varphi_2 & \equiv & \neg(\neg\varphi_1\wedge\neg\varphi_2) &
\forall x . \varphi & \equiv & \neg(\exists x . \neg\varphi) \\
\end{array}
$$
Intuitively, $\top$ is a pattern that is matched by all elements, $\bot$ is matched
by no elements, $\varphi_1 \vee \varphi_2$ is matched by all elements matching
$\varphi_1$ or $\varphi_2$, and so on.
We will shortly formalize this intuition.
We assume the usual precedence of the FOL-like constructs, with $\neg$ binding
tighter than $\wedge$ tighter than $\vee$ tighter than $\ra$ tighter than
$\lra$ tighter than the quantifiers.

We adapt from first-order logic the notions of
free variable, (variable capture free) substitution, and
variable renaming, briefly recalled below.
Let $\FV(\varphi)$ denote the {\em free variables} of $\varphi$, defined
as follows: $\FV(x) = \{x\}$,
$\FV(\sigma(\varphi_{s_1},...,\varphi_{s_n})) = 
\FV(\varphi_{s_1}) \cup \cdots \cup \FV(\varphi_{s_n})$, 
$\FV(\neg \varphi) = \FV(\varphi)$,
$\FV(\varphi_1 \andx \varphi_2) = \FV(\varphi_1) \cup \FV(\varphi_2)$,
and
$\FV(\exists x . \varphi) = \FV(\varphi) \ \backslash\ \{x\}$.
Similarly, the usual variable capture free {\em substitution}:
$x[\varphi/x]=\varphi$ and $y[\varphi/x]=y$ when variable $y$ is different
from $x$,
$\sigma(\varphi_{s_1},...,\varphi_{s_n})[\varphi/x] =
\sigma(\varphi_{s_1}[\varphi/x],...,\varphi_{s_n}[\varphi/x])$,
$(\varphi_1 \wedge \varphi_2)[\varphi/x] =
\varphi_1[\varphi/x] \wedge \varphi_2[\varphi/x]$,
$(\neg\varphi')[\varphi/x]=\neg(\varphi'[\varphi/x])$,
and
$(\exists x.\varphi')[\varphi/x] = \exists x.\varphi'$
and
$(\exists y.\varphi')[\varphi/x] = \exists y.(\varphi'[\varphi/x])$
when variable $y$ is different from $x$ and $y\not\in\FV(\varphi)$
(to avoid variable capture).
And {\em variable renaming},
$\exists x . \varphi \equiv \exists y . (\varphi[y/x])$, which can be
used to avoid variable capture.

\subsection{Example}
\label{sec:example}

There are many examples of patterns throughout the paper resulting from
formulae in various other logics that are captured by matching logic, such as 
propositional logic (Section~\ref{sec:propositional-calculus}),
predicate logic (Section~\ref{sec:predicate-logic}),
algebraic specifications (Section~\ref{sec:alg-spec})
first-order logic (Section~\ref{sec:FOL})
modal logic (Section~\ref{sec:modal-logic}),
and separation logic (Section~\ref{sec:sep-logic}).
We will discuss them in their respective sections, showing that
formulae in these logics can be regarded as matching logic patterns.
Here, instead, we discuss an example inspired from programming language
semantics, which is the area that motivated the development of matching logic.

Consider the operational semantics of a real language like C, whose configuration
has more than 100 semantic components
\cite{ellison-rosu-2012-popl,hathhorn-ellison-rosu-2015-pldi,stefanescu-park-yuwen-li-rosu-2016-oopsla}.
The semantic components, here called ``cells'' and written using symbols
$\kall{cell}{...}$, can be nested
and their grouping (symbol) is governed by associativity and commutativity
axioms.
There is a top cell $\kall{cfg}{...}$ holding subcells
$\kall{code}{...}$,
$\kall{heap}{...}$, $\kall{in}{...}$, $\kall{out}{...}$ among many others,
holding the current code fragment, heap, input buffer, output buffer,
respectively.
We cannot show the signature of all the symbols defining the configuration of 
a language like C for space reasons, but encourage the interested reader
to check the aforementioned papers.
We only show a small subset of symbols that is sufficient to write interesting
patterns for illustration purposes, mentioning that nothing changes in the
subsequent developments of matching logic as the signature grows or changes.
That is, we do not have a matching logic for C, another for Java, another
for JavaScript, etc.; all these languages have their respective signatures
and patterns, and the same matching logic machinery applies to all of them
in the same way.

To motivate certain patterns below, we will refer to results that are introduced
later in the paper.
The purpose of this example, however, is to illustrate and discuss various kinds
of patterns,
and especially to show that it is useful to mix symbols with logical
connectives.
The hasty reader can only skim the patterns and their descriptions below
for now, and revisit the example later as other results back-reference it.

\newcommand{\Id}{\textit{Id}}
\newcommand{\Exp}{\textit{Exp}}
\newcommand{\Stmt}{\textit{Stmt}}
\newcommand{\Pgm}{\textit{Pgm}}
\newcommand{\CfgCell}{\textit{CfgCell}}
\newcommand{\CodeCell}{\textit{CodeCell}}
\newcommand{\EnvCell}{\textit{EnvCell}}
\newcommand{\HeapCell}{\textit{HeapCell}}
\newcommand{\InCell}{\textit{InCell}}
\newcommand{\OutCell}{\textit{OutCell}}

\begin{figure}
$$
\begin{array}{rcl@{\hspace*{-6ex}}r}
\multicolumn{4}{l}{
\begin{array}{rcl@{\hspace*{8ex}}r@{}}
S & = & \{\ \ \Id,\Exp, \Stmt, ... & \textrm{// code synactic categories} \\
  &   & \ \ \ \Bool, \Nat, \Int, ... & \textrm{// basic domains} \\ 
  &   & \ \ \ \Seq_\Int, \Map_{\Id,\Nat}, \Map_{\Nat,\Int}, ... & \textrm{// more domains} \\ 
  &   & \ \ \ \CfgCell, \Cfg, & \textrm{// top cell and contents} \\ 
  &   & \ \ \ \CodeCell, \EnvCell, \HeapCell, \InCell, \OutCell, & \textrm{// config cells} \\
  &   & ... \ \ \ \}
\end{array}
} \\
\Sigma_{\Id\,\Exp,\, \Stmt} & = & \{\_\texttt{=}\_\texttt{;},\ ...\} & \textrm{// assignment, ...} \\
\Sigma_{\Exp\,\Stmt\,\Stmt,\, \Stmt} & = & \{\texttt{if(}\_\texttt{)\{}\_\texttt{\}else\{}\_\texttt{\}},\ ...\} & \textrm{// conditional, ...} \\
\Sigma_{\Exp\,\Stmt,\, \Stmt} & = & \{\texttt{while(}\_\texttt{)\{}\_\texttt{\}},\ ...\} & \textrm{// while loop, ...} \\
\Sigma_{\Stmt\,\Stmt,\,\Stmt} & = & \{\ \_\_ \ \} & \textrm{// sequential composition} \\
\multicolumn{4}{l}{... \ \ \ \ \ \textrm{// other syntactic language constructs} } \\
\Sigma_{\Stmt,\,\Cfg} & = & \{\ \kall{code}{\_}\ \} & \textrm{// cell holding the code} \\
\Sigma_{\lambda,\,\Map_{\Id,\Nat}} & = & \{\ . \ \} & \textrm{// empty environment map} \\
\Sigma_{\Id\,\Nat,\Map_{\Id,\Nat}} & = & \{\ \_\mapsto\_ \ \} & \textrm{// one-binding environment map} \\
\Sigma_{\Map_{\Id,\Nat}\,\Map_{\Id,\Nat},\,\Map_{\Id,\Nat}} & = & \{\ \_,\_ \ \} & \textrm{// environment map merge} \\
\Sigma_{\Map_{\Id,\Nat},\,\Cfg} & = & \{\ \kall{env}{\_}\ \} & \textrm{// cell holding the environment map} \\
\multicolumn{4}{l}{ ... \ \ \ \ \ \Map_{\Nat,\Int} \textrm{ symbols defined similarly to } \Map_{\Id,\Nat} } \\
\Sigma_{\Map_{\Nat,\Int},\,\Cfg} & = & \{\ \kall{heap}{\_}\ \} & \textrm{// cell holding the heap map} \\
\Sigma_{\lambda,\,\Seq_\Int} & = & \{\ \epsilon \ \} & \textrm{// empty sequence} \\
\Sigma_{\Int,\,\Seq_\Int} & = & \{\ \_\ \} & \textrm{// one-integer sequence} \\
\Sigma_{\Seq_\Int\,\Seq_\Int,\,\Seq_\Int} & = & \{\ \_@\_\ \} & \textrm{// sequence concatenation} \\
\Sigma_{\Seq_\Int,\,\Nat} & = & \{\ \constant{len} \ \} & \textrm{// sequence length} \\
\Sigma_{\Seq_\Int,\,\Seq_\Int} & = & \{\ \constant{rev} \ \} & \textrm{// sequence reverse} \\
\Sigma_{\Seq_\Int,\,\Cfg} & = & \{\ \kall{in}{\_}\ \} & \textrm{// input buffer} \\
\Sigma_{\Seq_\Int,\,\Cfg} & = & \{\ \kall{out}{\_}\ \} & \textrm{// output buffer} \\
\Sigma_{\lambda,\,\Cfg} & = & \{\ . \ \} & \textrm{// empty configuration contents} \\
\Sigma_{\Cfg,\,\Cfg} & = & \{\ \_\_\ \} & \textrm{// merging configuration contents} \\
\Sigma_{\Cfg,\,\CfgCell} & = & \{\ \kall{cfg}{\_}\ \} & \textrm{// top configuration cell}
\end{array}
$$
\caption{\label{fig:signature-C}Signature for building program configurations in a C-like language}
\end{figure}

Consider the signature $(S,\Sigma)$ in Figure~\ref{fig:signature-C},
consisting of symbols needed to construct semantic configurations for a C-like
language.
Usual terms are already patterns, in particular the
first while loop in the program in Figure~\ref{fig:matchC-example}, say
\texttt{LOOP}.
So are terms with variables, e.g.:
$$
\kall{cfg}{
\kall{code}{
\texttt{LOOP}\,k}
\
\kall{env}{
\texttt{x}\mapsto x,\,\texttt{n}\mapsto n,\,\texttt{i}\mapsto i,\,e}
\
\kall{heap}{
x\mapsto a,\,x+1\mapsto y,\,h
}
\
\kall{in}{
\beta}
\
\kall{out}{
\epsilon}
}
$$
The intuition for this pattern is that it matches all the configurations whose
code starts with \texttt{LOOP} ($k$, the ``code frame'', matches the rest of
the code), whose environment binds program identifiers \texttt{n} and
\texttt{i} to values $n$ and $i$, respectively, and \texttt{x} to location
$x$ ($e$, the ``environment frame'', matches the rest of the environment map)
such that both $x$ and $x+1$ are allocated and bound to some values in the
heap ($h$, the ``heap frame'', matches the rest of the heap), whose input
buffer contains some sequence ($\beta$) and whose output buffer contains the
empty sequence.
This intuition will be formalized shortly.
Also, we will show how the various symbols can be constrained or defined
axiomatically, like in algebraic (Section~\ref{sec:alg-spec}) or FOL
(Section~\ref{sec:FOL}) specifications; for example, sequences are associative and
have $\epsilon$ as unit, maps and $\Cfg$ are both associative and commutative
with ``$.$'' as unit, $\constant{len}(i@\alpha)=1+\constant{len}(\alpha)$, etc.

The interesting patterns are those combining symbols and logical connectives.
For example, suppose that we want to restrict the pattern above to only match 
configurations where $i \leq n$.
As discussed later in the paper (Section~\ref{sec:equality}), equality can be
axiomatized in matching logic and used in any sort context.
Also, due to their ubiquity, Boolean expressions are allowed to be used in
any sort context unchanged, with the meaning that they equal $\ttrue$; that is,
we write just $b$ instead of $b=\ttrue$ (Section~\ref{sec:builtins}).
With these, we can restrict the pattern above as follows (note the top-level
conjuction):
$$
\kall{cfg}{
\kall{code}{
\texttt{LOOP}\,k}
\
\kall{env}{
\texttt{x}\mapsto x,\,\texttt{n}\mapsto n,\,\texttt{i}\mapsto i,\,e}
\
\kall{heap}{
x\mapsto a,\,x+1\mapsto y,\,h
}
\
\kall{in}{
\beta}
\
\kall{out}{
\epsilon}
}
\ \andx\ i \leq n
$$
Quantifiers can be used, for example, to abstract away irrelevant parts of the
pattern.
Suppose, for example, that we work in a context where the code and the output
cells are irrelevant, and so are the frames of the environment and heap cells.
Then we can ``hide'' them to the context as follows:
$$
(\exists c\,.\,\exists e\,.\,\exists h\,.\,
\kall{cfg}{
\kall{env}{
\texttt{x}\mapsto x,\,\texttt{n}\mapsto n,\,\texttt{i}\mapsto i,\,e}
\
\kall{heap}{
x\mapsto a,\,x+1\mapsto y,\,h
}
\
\kall{in}{
\beta}
\
c
})
\ \andx\ i \leq n
$$
Following a notational convention proposed and implemented in \K
(\url{http://kframework.org}
\cite{rosu-serbanuta-2010-jlap,rosu-serbanuta-2013-k}), we use 
``...'' as syntactic sugar for such
existential quantifiers used for framing:
$$
\kprefix{cfg}{
\kprefix{env}{
\texttt{x}\mapsto x,\,\texttt{n}\mapsto n,\,\texttt{i}\mapsto i}
\
\kprefix{heap}{
x\mapsto a,\,x+1\mapsto y
}
\
\kall{in}{
\beta}
}
\ \andx\ i \leq n
$$
It is often the case that program identifiers are bound to default
mathematical variables (their symbolic values) in the environment,
and then the mathematical variables are used in many other parts of
the configuration pattern to state additional logical or structural
constraints.
For that reason, we typically want to match the program identifiers
to their (symbolic) values once and for all with a separate, default
(sub)pattern, which is then not mentioned anymore in subsequent
patterns:
$$
\begin{array}{l}
\kprefix{cfg}{
\kprefix{env}{
\texttt{x}\mapsto x,\,\texttt{n}\mapsto n,\,\texttt{i}\mapsto i}
} \ \ \ \ \ \ \ \ \ \ \textrm{// assumed by default below}

\\
\ \andx \ 
\kprefix{cfg}{
\kprefix{heap}{
x\mapsto a,\,x+1\mapsto y
}
\
\kall{in}{
\beta}
}
\ \andx\ i \leq n
\end{array}$$
Note that the pattern above contains two (top-level) $\kall{cfg}{...}$
sub-pattern constraints and one logical constraint, $i \leq n$.
This pattern will be matched by precisely those configurations that
match both sub-patterns and satisfy the constraint, which are the
same configurations that match the previous pattern.
Therefore, the last two patterns are equal
(pattern equality is formalized in Section~\ref{sec:equality}; see
Notation~\ref{notation:equality} and Proposition~\ref{prop:equality})).

Now suppose that we want to state that location $x$ in the heap points
to a linked list over the list data-structure in the program in
Figure~\ref{fig:matchC-example}, which comprises a mathematical sequence
of integers $\alpha$.
The precise locations of the various nodes in the list are irrelevant.
Such a linked-list pattern can be defined by adding a symbol representing it
to the signature, say $\llist\in\Sigma_{\Nat\,\Seq_\Int,\Map_{\Nat,\Int}}$,
together with two axioms
(similar to those in separation logic, Section~\ref{sec:sep-logic}); it is
shown in Section~\ref{sec:map-patterns} that the pattern $\llist(x,\alpha)$
is matched by precisely all the (infinitely many) linked lists starting with
location $x$ and containing the sequence of elements $\alpha$.
This shows why we want pattern symbols to be interpreted into power-set
domains, so they can evaluate to sets of elements (all those that match them)
instead of just elements.
Matching logic also allows us to axiomatically state that a symbol is to be
interpreted as a function (Section~\ref{sec:functions}); in fact, in this simple
example we assume all the symbols of our signature $\Sigma$ above to be
constrained to be functions, except for those of $\Map$ results.
We can now refine the pattern above as follows:
$$
\begin{array}{l}
\kprefix{cfg}{
\kprefix{heap}{
\llist(x,\alpha)
}
\
\kall{in}{
\beta}
}
\ \andx\ i \leq n
\end{array}$$
Inspired from the invariant of the first loop in
Figure~\ref{fig:matchC-example}, let us add some more constraints:
$$
\begin{array}{l}
\kprefix{cfg}{
\kprefix{heap}{
\llist(x,\alpha)
}
\
\kprefix{in}{
\beta}
}
\ \andx \ \constant{len}(\beta) = n - i
\ \andx\ i \leq n
\ \andx\ A = \constant{rev}(\alpha)@\beta
\end{array}$$
The pattern above is additionally stating that the
$\kall{in}{...}$ cell starts with a prefix of size equal to $n-i$
which appended to the reverse of the sequence that $x$ points to in the heap
equals the original input sequence $A$.
We can arrange the pattern to better localize the logical constraints to the
sub-patterns for which they are relevant.
For example, the first two constraints are relevant for the sequence $\beta$, so
we can move them to their place:
$$
\begin{array}{l}
\kprefix{cfg}{
\kprefix{heap}{
\llist(x,\alpha)
}
\
\kprefix{in}{
\beta
\ \andx \ \constant{len}(\beta) = n - i
\ \andx\ i \leq n
}
}
\ \andx\ A = \constant{rev}(\alpha)@\beta
\end{array}$$
The above transformation is indeed correct, thanks to
Proposition~\ref{prop:constraint-propagation} (constraint propagation).
Similarly, the remaining constraint can be localized to the two
cells that need it.
Using also the fact that cell concatenation is commutative, we rewrite the pattern
into:
$$
\begin{array}{l}
\kprefix{cfg}{
\kprefix{in}{
\beta
\ \andx \ \constant{len}(\beta) = n - i
\ \andx\ i \leq n
}
\
\kprefix{heap}{
\llist(x,\alpha)
}
\ \andx\ A = \constant{rev}(\alpha)@\beta
}
\end{array}$$
The pattern above is very similar to the first invariant in
Figure~\ref{fig:matchC-example}; the latter does not mention
the top $\kall{cfg}{...}$ cell because our implementation
adds it automatically.
The top cell is not necessary anyway, we added it mostly for
uniformity in our notation for configurations.

So constraints can be propagated up and down a pattern to where they are
needed.
But how are the constraints generated?
One way to generate
constraints is through reasoning using language semantic rules, such as the
case analysis and consequence rules in \cite{stefanescu-park-yuwen-li-rosu-2016-oopsla}.
Another way to generate constraints is by local reasoning about patterns.
For example, using the axioms of $\llist$ in Section~\ref{sec:map-patterns},
we can infer $\llist(x,\alpha) \ra \varphi_1 \orx \varphi_2$, where
$\varphi_1$ is $\cdot \andx (\alpha = \epsilon)$ (the empty map ``$\cdot$'' with constraint
``$\alpha$ is the empty sequence'') and $\varphi_2$ is
$\exists a\,.\,\exists\gamma\,.\,\alpha=a@\gamma \andx \exists y\,.\,(x\mapsto a,\,x+1\mapsto y,\,\llist(y,\gamma))$.
By Proposition~\ref{prop:structural-framing} (structural framing) and propositional
reasoning we can then infer the following pattern:
$$
\begin{array}{rl}
 & \kprefix{cfg}{
\kprefix{in}{
\beta
\ \andx \ \constant{len}(\beta) = n - i
\ \andx\ i \leq n
}
\
\kprefix{heap}{
\llist(x,\alpha)
}
\ \andx\ A = \constant{rev}(\alpha)@\beta
} \\
\ra & 
\kprefix{cfg}{
\kprefix{in}{
\beta
\ \andx \ \constant{len}(\beta) = n - i
\ \andx\ i \leq n
}
\
\kprefix{heap}{
\varphi_1 \orx \varphi_2
}
\ \andx\ A = \constant{rev}(\alpha)@\beta
}
\end{array}$$
Since symbol application distributes over $\orx$ (Proposition~\ref{prop:symbol-distributivity}),
the pattern to the right of $\ra$ above becomes (again via propositional reasoning):
$$
\begin{array}{rl}
 & \kprefix{cfg}{
\kprefix{in}{
\beta
\ \andx \ \constant{len}(\beta) = n - i
\ \andx\ i \leq n
}
\
\kprefix{heap}{
\varphi_1
}
\ \andx\ A = \constant{rev}(\alpha)@\beta
}
 \\
\orx & 
\kprefix{cfg}{
\kprefix{in}{
\beta
\ \andx \ \constant{len}(\beta) = n - i
\ \andx\ i \leq n
}
\
\kprefix{heap}{
\varphi_2
}
\ \andx\ A = \constant{rev}(\alpha)@\beta
}
\end{array}$$
We can now propagate the constraints of each of $\varphi_1$ and $\varphi_2$ up into
their respective disjunct above, to be used in combination with the other constrains
on sequences.

We stop here with our example.
Note that we made no effort above to construct a signature
that does not allow junk configurations (for example, there is nothing to stop us from
adding two or more heaps in a configuration); such junk configurations can be dismissed
either by adding stronger sorting or by well-formedness predicates/patterns.
Also, our syntax for empty maps (``$\cdot$'') and for map merging (``$\_,\_$'') above is
different from that in Section~\ref{sec:map-patterns}.
The syntax above is close to the one we use in our \K implementation, while the syntax
in Section~\ref{sec:map-patterns} was specifically chosen to be similar to that
of separation logic in order to support the subsequent results in
Section~\ref{sec:SLasML}.

\subsection{Semantics}

In their simplest form, as terms with variables, patterns are usually matched
by other terms that have more structure, possibly by ground terms.
However, sometimes we may need to do the matching modulo some background
theories or modulo some existing domains, for example integers where
addition is commutative or $2+3 = 1+4$, etc.
For maximum generality, we prefer to impose no theoretical restrictions on the
models in which patterns are interpreted, or matched, leaving such restrictions
to be dealt with in implementations (for example, one may limit to free models,
or to ones for which decision procedures exist, etc.).
This has the additional benefit that it yields complete deduction
(Section~\ref{sec:deduction}).

\begin{defi}
\label{dfn:ML-model}
A \textbf{matching logic $(S,\Sigma)$-model} $M$, or just a
\textbf{$\Sigma$-model} when $S$ is understood, or simply a \textbf{model}
when both $S$ and $\Sigma$ are understood, consists of:
\begin{enumerate}
\item An $S$-sorted set $\{M_s\}_{s\in S}$, where each set $M_s$,
called the \textbf{carrier of sort $s$ of $M$}, is assumed
non-empty; and
\item A function $\sigma_M:M_{s_1}\times\cdots\times M_{s_n}
\rightarrow {\cal P}(M_s)$ for each symbol
$\sigma\in\Sigma_{s_1\ldots s_n,s}$, called the \textbf{interpretation}
of $\sigma$ in $M$.
\end{enumerate}
\end{defi}

Note that symbols are interpreted as relations, and that the usual
$(S,\Sigma)$-algebra models are a special case of matching logic
models, where $|\sigma_M(m_1,\ldots,m_n)|=1$ for any $m_1\in M_{s_1}$,
\dots, $m_n \in M_{s_n}$.
Similarly, partial $(S,\Sigma)$-algebra models also fall as special case,
where $|\sigma_M(m_1,\ldots,m_n)|\leq 1$, since we can capture the
undefinedness of $\sigma_M$ on $m_1$, \dots, $m_n$ with
$\sigma_M(m_1,\ldots,m_n) = \emptyset$.
We tacitly use the same notation $\sigma_M$ for its extension
to argument sets,
${\cal P}(M_{s_1})\times\cdots\times {\cal P}(M_{s_n}) \ra {\cal P}(M_s)$, that is,
$$\sigma_M(A_1,\ldots,A_n) =
 \textstyle\bigcup\{\sigma_M(a_1,\ldots,a_n) \mid a_1 \in A_1,\ldots, a_n \in A_n\}$$
where $A_1 \subseteq M_{s_1}, \ldots, A_n \subseteq M_{s_n}$.

\begin{defi}
\label{def:rho-bar}
Given a model $M$ and a map
$\rho:\Var\rightarrow M$, called an \textbf{$M$-valuation}, let its extension
$\overline{\rho}:\Pattern\ra{\cal P}(M)$
be inductively defined as follows:
\begin{itemize}
\item $\overline{\rho}(x) = \{\rho(x)\}$, for all $x\in\Var_s$
\item $\overline{\rho}(\sigma(\varphi_{1},\ldots,\varphi_{n}))=
\sigma_M(\overline{\rho}(\varphi_1),\ldots \overline{\rho}(\varphi_n))$ for all
$\sigma\in\Sigma_{s_1...s_n,s}$ and appropriate $\varphi_1,...,\varphi_n$
\item $\overline{\rho}(\neg\varphi) = M_s \ \backslash\ \overline{\rho}(\varphi)$ for all
$\varphi\in\Pattern_s$
\item $\overline{\rho}(\varphi_1 \wedge \varphi_2) =
\overline{\rho}(\varphi_1) \cap \overline{\rho}(\varphi_2)$
for all $\varphi_1, \varphi_2$ patterns of the same sort
\item $\overline{\rho}(\exists x . \varphi) =
\bigcup\{\overline{\rho'}(\varphi) \mid \rho':\Var\rightarrow M,\ 
\rho'\!\!\upharpoonright_{\Var\backslash\{x\}} =
\rho\!\!\upharpoonright_{\Var\backslash\{x\}}\}
= \bigcup_{a\in M} \overline{\rho[a/x]}(\varphi)
$
\end{itemize}
where `` $\backslash$'' is set difference,
``$\rho\!\!\upharpoonright_V$'' is
$\rho$ restricted to $V \subseteq \Var$,
and ``$\rho[a/x]$'' is map $\rho'$ with $\rho'(x)=a$ and $\rho'(y)=\rho(y)$ if
$y\neq x$.
If $a\in \overline{\rho}(\varphi)$ then we say $a$ \textbf{matches}
$\varphi$ (with witness $\rho$).
\end{defi}

It is easy to see that the usual notion of term matching is an instance of the above;
indeed, if $\varphi$ is a term with variables and $M$ is the ground term model, then
a ground term $a$ matches $\varphi$ iff there is some substitution $\rho$ such that
$\rho(\varphi)=a$.
It may be insightful to note that patterns can also be regarded as predicates, when
we think of ``$a$ matches pattern $\varphi$'' as ``predicate $\varphi$ holds in $a$''.
But matching logic allows more complex patterns than terms or predicates, and models
which are not necessarily conventional (term) algebras.

The extension of $\rho$ works as expected with the derived constructs:
\begin{itemize}
\item $\overline{\rho}(\top_{\!\!s}) = M_s$ and $\overline{\rho}(\bot_s) = \emptyset$
\item $\overline{\rho}(\varphi_1 \vee \varphi_2) =
\overline{\rho}(\varphi_1) \cup \overline{\rho}(\varphi_2)$
\item $\overline{\rho}(\varphi_1 \ra \varphi_2) =
\{m \in M_s \mid m\in\overline{\rho}(\varphi_1) \mbox{ implies }
m\in\overline{\rho}(\varphi_2)\} = 
M_s \ \backslash\ (\overline{\rho}(\varphi_1) \ \backslash\ \overline{\rho}(\varphi_2))
$
\item $\overline{\rho}(\varphi_1 \lra \varphi_2) =
\{m \in M_s \mid m\in\overline{\rho}(\varphi_1) \mbox{ iff }
m\in\overline{\rho}(\varphi_2)\} = 
M_s \ \backslash\ (\overline{\rho}(\varphi_1) \ \Delta\ \overline{\rho}(\varphi_2))
$ \\ \hfill (``$\Delta$'' is the set symmetric difference operation)
\item $\overline{\rho}(\forall x . \varphi) =
\bigcap\{\overline{\rho'}(\varphi) \mid \rho':\Var\rightarrow M,\ 
\rho'\!\!\upharpoonright_{\Var\backslash\{x\}} =
\rho\!\!\upharpoonright_{\Var\backslash\{x\}}
\} = \bigcap_{a\in M} \overline{\rho[a/x]}(\varphi)$
\end{itemize}

\begin{figure}
\begin{tikzpicture}
\filldraw[fill=lightgray] (-2,-2) rectangle (3,2);
\scope 
\fill[white] (0,0) circle (1);
\fill[lightgray] (1,0) circle (1);
\endscope
\draw (0,0) circle (1) node{$\hspace*{-5ex}\varphi_1$}
      (1,0) circle (1) node{$\hspace*{5ex}\varphi_2$};
\draw (0.5,-1.5) node{\mbox{gray area matches } $\varphi_1 \ra \varphi_2$};
\end{tikzpicture}
\hspace*{10ex}
\begin{tikzpicture}[fill=white]
\filldraw[fill=lightgray] (-2,-2) rectangle (3,2);
\scope
\clip (-2,-2) rectangle (2,2)
      (1,0) circle (1);
\fill (0,0) circle (1);
\endscope
\scope
\clip (-2,-2) rectangle (2,2)
      (0,0) circle (1);
\fill (1,0) circle (1);
\endscope
\draw (0,0) circle (1) node{$\hspace*{-5ex}\varphi_1$}
      (1,0) circle (1) node{$\hspace*{5ex}\varphi_2$};
\draw (0.5,-1.5) node{\mbox{gray area matches } $\varphi_1 \lra \varphi_2$};
\end{tikzpicture}
\caption{\label{fig:diagram}
Matching logic semantics of pattern implication and equivalence}
\end{figure}

Interpreting formulae as sets of elements in models is reminiscent of
modal logic, where they are interpreted as the ``worlds'' in which they hold,
and of separation logic, where they are interpreted as the ``heaps'' they match.
We discuss the relationship between matching logic and these logics in depth
in Sections~\ref{sec:modal-logic} and, respectively, \ref{sec:sep-logic}.

Therefore, the matching logic interpretation of the logical connectives is not
two-valued like in classical logics.
In particular, the interpretation of $\varphi_1 \ra \varphi_2$ is the set of
all the elements that if matched by $\varphi_1$ then are also matched by
$\varphi_2$.
One should be careful when reasoning with such non-classical logics, as
basic intuitions may deceive.
For example, the interpretation of $\varphi_1 \ra \varphi_2$ is the total set
(i.e., same as $\top$) iff all elements matching $\varphi_1$ also match
$\varphi_2$, but it is the empty set iff $\varphi_2$ is matched by no elements
(same as $\bot$) while $\varphi_1$ is matched by all elements
(same as $\top$).
If in doubt, thanks to the set-theoretical interpretation of the matching logic
connectives, we can always draw diagrams to enhance our intuition; for
example, Figure~\ref{fig:diagram} depicts the semantics of pattern implication
and of pattern equivalence.

When doing logical reasoning with patterns, we sometimes want to think of a
pattern exclusively as a ``predicate'', that is, as something which is either
true or false.
To avoid using quotes in such situations, we introduce the following:

\begin{defi}
\label{dfn:predicates}
Pattern $\varphi_s$ is an \textbf{$M$-predicate}, or a
\textbf{predicate in $M$}, iff for any
\textbf{$M$-valuation} $\rho:\Var\rightarrow M$, it is the case that
$\overline{\rho}(\varphi_s)$
is either $M_s$ (it holds) or $\emptyset$ (it does not hold).
Pattern $\varphi_s$ is a \textbf{predicate} iff it is a predicate in all
models $M$.
\end{defi}

Note that $\top_s$ and $\bot_s$ are predicates, and if $\varphi$, $\varphi_1$
and $\varphi_2$ are predicates then so are $\neg\varphi$,
$\varphi_1 \wedge \varphi_2$, and $\exists x\,.\,\varphi$.
That is, the logical connectives of matching logic preserve the predicate
nature of patterns.
Section~\ref{sec:useful-symbols} will introduce several useful predicate
constructs.

\begin{defi}
\label{dfn:satisfaction}
$M$ \textbf{satisfies} $\varphi_s$, written
${M}\models \varphi_s$, iff
$\overline{\rho}(\varphi_s) = M_s$ for all $\rho:\Var\rightarrow M$.
\end{defi}

\begin{prop}
\label{prop:simple}
Unless otherwise stated, assume the default pattern sort to be $s$.
Then:
\begin{enumerate}
\item If $\rho_1,\rho_2:\Var\rightarrow M$,
$\rho_1\!\!\upharpoonright_{\FV(\varphi)}=\rho_2\!\!\upharpoonright_{\FV(\varphi)}$
then $\overline{\rho_1}(\varphi)=\overline{\rho_2}(\varphi)$
\item If $x\in\Var_s$ then ${M} \models x$ iff $|M_s| = 1$
\item If $\sigma\in\Sigma_{s_1\ldots s_n,s}$ and $\varphi_1,\ldots,\varphi_n$
are patterns of sorts $s_1,\ldots,s_n$, respectively, then we have
${M} \models \sigma(\varphi_1,\ldots,\varphi_n)$
iff 
$\sigma_M(\overline{\rho}(\varphi_1),\ldots, \overline{\rho}(\varphi_n)) = M_s$
for any $\rho:\Var\rightarrow M$
\item ${M}\models \neg \varphi$ iff $\overline{\rho}(\varphi)=\emptyset$
for any $\rho:\Var\rightarrow M$
\item ${M} \models \varphi_1 \wedge \varphi_2$ iff 
${M} \models \varphi_1$ and ${M} \models \varphi_2$
\item If $\exists x . \varphi_s$ closed,
${M} \models \exists x . \varphi_s$ iff
$\bigcup\{\overline{\rho}(\varphi_s) \mid \rho:\Var\rightarrow M\} = M_s$;
hence, ${M} \models \exists x . x$
\item ${M} \models \varphi_1 \ra \varphi_2$ iff 
$\overline{\rho}(\varphi_1) \subseteq \overline{\rho}(\varphi_2)$
for all $\rho:\Var\rightarrow M$
\item ${M} \models \varphi_1 \lra \varphi_2$ iff 
$\overline{\rho}(\varphi_1) = \overline{\rho}(\varphi_2)$
for all $\rho:\Var\rightarrow M$
\item ${M} \models \forall x . \varphi$ iff
${M} \models \varphi$
\end{enumerate}
\end{prop}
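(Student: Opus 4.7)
The plan is to dispatch the nine items by direct unfolding of Definition~\ref{def:rho-bar} (and the derived-connective semantics already listed in the excerpt) together with Definition~\ref{dfn:satisfaction}, with the sole exception of Item~1, which I would prove first by structural induction on $\varphi$ since it is the lemma that implicitly underpins the well-definedness of the ``$\exists x.\varphi$ closed'' simplification in Item~6 and the $\forall$ reduction in Item~9.

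For Item~1, the variable case $\varphi=x$ is immediate: $\overline{\rho_1}(x)=\{\rho_1(x)\}=\{\rho_2(x)\}=\overline{\rho_2}(x)$ since $x\in\FV(x)$. The symbol case follows by applying $\sigma_M$ to the induction hypothesis on each subpattern (whose free variables are contained in those of $\sigma(\varphi_1,\ldots,\varphi_n)$), and the $\neg$ and $\wedge$ cases are immediate. The only step requiring care is $\exists y.\varphi'$: here $\FV(\exists y.\varphi')=\FV(\varphi')\setminus\{y\}$, so if $\rho_1$ and $\rho_2$ agree on this set, then for every $a\in M$ the valuations $\rho_1[a/y]$ and $\rho_2[a/y]$ agree on the whole of $\FV(\varphi')$; the induction hypothesis then yields $\overline{\rho_1[a/y]}(\varphi')=\overline{\rho_2[a/y]}(\varphi')$ and equality of the unions over $a$ follows.

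Items 2--9 are then one-line consequences once the corresponding clause of $\overline{\rho}$ is unfolded and compared with $M_s$. Concretely, Item~2 observes that $\{\rho(x)\}=M_s$ for every $\rho$ is equivalent to $|M_s|=1$; Item~3 is literally the clause for $\sigma$; Item~4 uses $M_s\setminus A=M_s$ iff $A=\emptyset$; Item~5 uses $A\cap B=M_s$ iff $A=B=M_s$ (both being subsets of $M_s$); Item~6 uses Item~1 to note that in the closed pattern $\exists x.\varphi_s$, the value $\overline{\rho[a/x]}(\varphi_s)$ depends only on $a$, so the family of unions ranged over $\rho$ collapses to the single union $\bigcup_{a\in M}\overline{\rho[a/x]}(\varphi_s)$, and the degenerate instance $\exists x.x$ then yields $\bigcup_{a\in M_s}\{a\}=M_s$. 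Items~7 and~8 unfold the $\ra$ and $\lra$ clauses and use $M_s\setminus(A\setminus B)=M_s$ iff $A\subseteq B$, respectively $M_s\setminus(A\,\Delta\, B)=M_s$ iff $A=B$. Item~9 uses the dual identity $\bigcap_{a\in M}X_a=M_s$ iff each $X_a=M_s$, combined with Item~1 to see that $\overline{\rho[a/x]}(\varphi)=M_s$ for some $\rho$ and all $a$ is equivalent to $\overline{\rho'}(\varphi)=M_s$ for all $\rho'$.

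The main and really only obstacle is the $\exists$ case of Item~1; the rest is bookkeeping about when a subset of $M_s$ equals $M_s$. Once Item~1 is in place, no induction is needed for the remaining items, because Definition~\ref{dfn:satisfaction} already quantifies over all valuations $\rho$, so each item becomes a purely set-theoretic identity over $\mathcal{P}(M_s)$.
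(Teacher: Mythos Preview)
Your proposal is correct and follows essentially the same approach as the paper: structural induction on $\varphi$ for Item~1 (with the $\exists$-case being the only nontrivial step), followed by direct unfolding of Definition~\ref{def:rho-bar} and Definition~\ref{dfn:satisfaction} for the remaining items. Your treatment of the $\exists$-case via the $\rho[a/y]$ formulation is slightly more direct than the paper's use of the $\rho'\!\!\upharpoonright_{\Var\backslash\{x\}}$ formulation, but the underlying argument is the same.
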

\begin{proof}
The proof of each of the properties is below:
\begin{enumerate}

\item
Structural induction on $\varphi$.
The only interesting case is when $\varphi$ has the form $\exists x.\varphi'$,
so $\FV(\varphi)=\FV(\varphi')\ \backslash \ \{x\}$.
Then
$$
\begin{array}{@{}rcl}
\overline{\rho_1}(\exists x . \varphi') & = &
\bigcup\{\overline{\rho_1'}(\varphi') \mid \rho_1':\Var\rightarrow M,\ 
\rho_1'\!\!\upharpoonright_{\Var\backslash\{x\}} =
\rho_1\!\!\upharpoonright_{\Var\backslash\{x\}}
\} \\
&& \mbox{(by Definition~\ref{def:rho-bar})} \\
& = &
\bigcup\{\overline{\rho_1'}(\varphi') \mid \rho_1':\Var\rightarrow M,\ 
\rho_1'\!\!\upharpoonright_{\FV(\varphi)} =
\rho_1\!\!\upharpoonright_{\FV(\varphi)}
\} \\
&& \mbox{(by the induction hypothesis)} \\
& = &
\bigcup\{\overline{\rho_2'}(\varphi') \mid \rho_2':\Var\rightarrow M,\ 
\rho_2'\!\!\upharpoonright_{\FV(\varphi)} =
\rho_2\!\!\upharpoonright_{\FV(\varphi)}
\} \\
&& \mbox{(since $\rho_1\!\!\upharpoonright_{\FV(\varphi)}=\rho_2\!\!\upharpoonright_{\FV(\varphi)}$)} \\
& = &
\bigcup\{\overline{\rho_2'}(\varphi') \mid \rho_2':\Var\rightarrow M,\ 
\rho_2'\!\!\upharpoonright_{\Var\backslash\{x\}} =
\rho_2\!\!\upharpoonright_{\Var\backslash\{x\}}
\} \\
&& \mbox{(by the induction hypothesis)} \\
& = &
\overline{\rho_2}(\exists x . \varphi')
\end{array}
$$

\item
${M} \models x$ iff
$\overline{\rho}(x)=M_s$ for all $\rho:\Var\ra M$, 
iff
$\{\rho(x)\}=M_s$ for all $\rho:\Var\ra M$,
iff $M_s$ has only one element.

\item
${M} \models \sigma(\varphi_1,\ldots,\varphi_n)$
iff
$\overline{\rho}(\sigma(\varphi_1,\ldots,\varphi_n))=M_s$ for all valuations
$\rho:\Var\ra M$,
iff
$\sigma_M(\overline{\rho}(\varphi_1),\ldots \overline{\rho}(\varphi_n)) = M_s$
for any $\rho:\Var\rightarrow M$.

\item
${M}\models \neg \varphi$ iff $\overline{\rho}(\neg\varphi)=M_s$
for any $\rho:\Var\rightarrow M$,
iff $M_s \ \backslash\ \overline{\rho}(\varphi)=M_s$
for any $\rho:\Var\rightarrow M$,
iff $\overline{\rho}(\varphi)=\emptyset$
for any $\rho:\Var\rightarrow M$.

\item
${M} \models \varphi_1 \wedge \varphi_2$
iff $\overline{\rho}(\varphi_1 \wedge \varphi_2)=M_s$
for any $\rho:\Var\rightarrow M$,
iff $\overline{\rho}(\varphi_1) \cap \overline{\rho}(\varphi_2)=M_s$
for any $\rho:\Var\rightarrow M$,
iff $\overline{\rho}(\varphi_1)=M_s$ and $\overline{\rho}(\varphi_2)=M_s$
for any $\rho:\Var\rightarrow M$,
iff ${M} \models \varphi_1$ and ${M} \models \varphi_2$.

\item
${M} \models \exists x . \varphi_s$
iff $\overline{\rho}(\exists x . \varphi_s) = M_s$
for any $\rho:\Var\rightarrow M$,
iff 
$
\bigcup\{\overline{\rho'}(\varphi_s) \mid \rho':\Var\rightarrow M,\ 
\rho'\!\!\upharpoonright_{\Var\backslash\{x\}} =
\rho\!\!\upharpoonright_{\Var\backslash\{x\}}
\}=M_s$
for any $\rho:\Var\rightarrow M$,
iff (by the first property in this proposition, since $\FV(\varphi_s)\subseteq\{x\}$)
$
\bigcup\{\overline{\rho'}(\varphi_s) \mid \rho':\Var\rightarrow M\}=M_s$
for any $\rho:\Var\rightarrow M$,
iff
$\bigcup\{\overline{\rho}(\varphi_s) \mid \rho:\Var\rightarrow M\}=M_s$.
In particular, if $\varphi_s=x$ then
$\bigcup\{\overline{\rho}(x) \mid \rho:\Var\rightarrow M\}=M_s$,
so ${M} \models \exists x . x$.

\item
${M} \models \varphi_1 \ra \varphi_2$ iff 
$\overline{\rho}(\varphi_1 \ra \varphi_2) = M$
for all $\rho:\Var\rightarrow M$,
iff $\overline{\rho}(\neg(\varphi_1 \wedge \neg \varphi_2)) = M$
for all $\rho:\Var\rightarrow M$,
iff $\overline{\rho}(\varphi_1 \wedge \neg \varphi_2) = \emptyset$
for all $\rho:\Var\rightarrow M$,
iff $\overline{\rho}(\varphi_1) \cap (M\ \backslash\ \overline{\rho}(\varphi_2)) = \emptyset$
for all $\rho:\Var\rightarrow M$,
iff 
$\overline{\rho}(\varphi_1) \subseteq \overline{\rho}(\varphi_2)$
for all $\rho:\Var\rightarrow M$.

\item
Follows from the previous similar properties for $\wedge$ and $\ra$.

\item 
${M} \models \forall x . \varphi$ iff
$\overline{\rho}(\forall x . \varphi) =
\bigcap\{\overline{\rho'}(\varphi) \mid \rho':\Var\rightarrow M,\ 
\rho'\!\!\upharpoonright_{\Var\backslash\{x\}} =
\rho\!\!\upharpoonright_{\Var\backslash\{x\}}
\}=M$
for all $\rho:\Var\rightarrow M$,
iff $\overline{\rho'}(\varphi) = M$ for all
$\rho, \rho':\Var\rightarrow M$ with
$\rho'\!\!\upharpoonright_{\Var\backslash\{x\}} =
\rho\!\!\upharpoonright_{\Var\backslash\{x\}}$,
iff $\overline{\rho}(\varphi) = M$ for all
$\rho:\Var\rightarrow M$,  iff
${M} \models \varphi$.
\end{enumerate}
Therefore, all properties hold.
\end{proof}
Since $\exists x . x$ is satisfied by all models (by (6) above), we could have
also defined $\top$ as $\exists x . x$ instead of as $x \vee \neg x$.
Properties (9) and (2) in Proposition~\ref{prop:simple} imply that the pattern
$\forall x . x$ is satisfied precisely by the models whose carrier of the
sort of $x$ contains only one element.

Note that property
``if $\varphi$ closed then ${M}\models \neg\varphi$
iff ${M} \not \models \varphi$'', which holds in classical
logics like FOL, does not hold in matching logic.
This is because $M \models \neg\varphi$ means $\neg\varphi$ is
matched by all elements, i.e., $\varphi$ is matched by no element,
while $M \not\models \varphi$ means $\varphi$ is not matched by some elements.
These two notions are different when patterns can have more than two
interpretations, which happens when $M$ can have more than one element.

\begin{defi}
\label{dfn:validity}
Pattern $\varphi$ is \textbf{valid}, written $\models \varphi$,
iff ${M} \models \varphi$ for all ${M}$.
If $F\subseteq\Pattern$ then ${M} \models F$ iff
${M} \models \varphi$ for all $\varphi\in F$.
$F$ \textbf{entails} $\varphi$, written $F \models \varphi$,
iff for each ${M}$, ${M} \models F$ implies ${M} \models \varphi$.
A \textbf{matching logic specification} is a triple $(S,\Sigma,F)$
with $F\subseteq\Pattern$.
\end{defi}

\subsection{Basic Properties}

A natural question is how to formally reason about patterns.
Although they can be inductively built with symbols, like terms are,
the following result says that pure predicate
logic reasoning is sound for matching logic when we regard
patterns as predicates.
By {\em pure} predicate logic we mean predicate logic with
just predicate symbols, without constants or function symbols.
As shown in Section~\ref{sec:deduction}, the Substitution axiom
of non-pure predicate logics
($\forall x\,.\,\varphi) \ra \varphi[t/x]$ is not sound when $t$
is an arbitrary matching logic pattern
(it needs to be modified to only allow patterns which interpret to
singletons).

\begin{prop}
\label{prop:PL-validity}
The following properties hold for patterns of any sort $s \in S$,
so the Hilbert-style axioms and proof rules that are sound and complete for pure
predicate logic \cite{Godel1930}, are also sound for matching logic, for any sort
(more axioms and proof rules are needed for completeness, as shown in
Section~\ref{sec:deduction}):
\begin{enumerate}
\item $\models \varphi$, where $\varphi$ is a propositional tautology over patterns of sort $s$.
\item Modus ponens: $\models \varphi_1$ and $\models \varphi_1 \ra \varphi_2$
imply $\models\varphi_2$.
\item $\models (\forall x\,.\,\varphi_1\ra\varphi_2) \ra (\varphi_1 \ra \forall x\,.\,\varphi_2)$
when $x\not\in\FV(\varphi_1)$.
\item Universal generalization: $\models\varphi$ implies $\models \forall x\,.\,\varphi$.
\item Substitution: $\models (\forall x\,.\,\varphi) \ra \varphi[y/x]$, with variable
$y\not\in\FV(\forall x\,.\,\varphi)$ of same sort as $x$.
\end{enumerate}
\end{prop}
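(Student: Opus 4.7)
My plan is to verify each of the five properties directly from the semantic definitions, leveraging Proposition~\ref{prop:simple} wherever possible. The uniform strategy is: unfold $M\models\varphi$ to a statement about $\overline{\rho}(\varphi)$ and $M_s$, then reason set-theoretically using the inductive definition of $\overline{\rho}$ and the set-theoretic interpretations of $\neg$, $\wedge$, and $\exists$.

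For property (1), I would observe that if $\varphi$ is a propositional combination of ``atoms'' $\psi_1,\ldots,\psi_k$ (where an atom is any subpattern not of the form $\neg\psi$ or $\psi \wedge \psi'$), then for a fixed $\rho$ and each element $a\in M_s$, the Boolean values of the conditions $a\in\overline{\rho}(\psi_i)$ determine whether $a\in\overline{\rho}(\varphi)$ exactly as the truth-table for $\varphi$ dictates, since $\overline{\rho}(\neg\psi)=M_s\setminus\overline{\rho}(\psi)$ and $\overline{\rho}(\psi\wedge\psi')=\overline{\rho}(\psi)\cap\overline{\rho}(\psi')$. If $\varphi$ is a tautology, every Boolean assignment yields true, so $a\in\overline{\rho}(\varphi)$ for all $a$, giving $\overline{\rho}(\varphi)=M_s$. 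Property (2) is immediate from Proposition~\ref{prop:simple}(7): $\models\varphi_1\ra\varphi_2$ gives $\overline{\rho}(\varphi_1)\subseteq\overline{\rho}(\varphi_2)$, and $\models\varphi_1$ gives $\overline{\rho}(\varphi_1)=M_s$, so $\overline{\rho}(\varphi_2)=M_s$. Property (4) is a direct consequence of the derived semantics of $\forall$: $\overline{\rho}(\forall x.\varphi)=\bigcap_{a\in M}\overline{\rho[a/x]}(\varphi)$, and each term in the intersection equals $M_s$ by the hypothesis $\models\varphi$.

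For property (3), I would fix $M$ and $\rho$, pick an arbitrary $a\in\overline{\rho}(\forall x.\varphi_1\ra\varphi_2)$, and show $a\in\overline{\rho}(\varphi_1\ra\forall x.\varphi_2)$. Unfolding, the hypothesis says that for every $b\in M$, if $a\in\overline{\rho[b/x]}(\varphi_1)$ then $a\in\overline{\rho[b/x]}(\varphi_2)$. Since $x\notin\FV(\varphi_1)$, Proposition~\ref{prop:simple}(1) gives $\overline{\rho[b/x]}(\varphi_1)=\overline{\rho}(\varphi_1)$. Thus, assuming $a\in\overline{\rho}(\varphi_1)$, we derive $a\in\overline{\rho[b/x]}(\varphi_2)$ for every $b$, which is exactly $a\in\overline{\rho}(\forall x.\varphi_2)$.

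Property (5) is where the real work lies. The key lemma I would prove first, by structural induction on $\varphi$, is the \emph{substitution lemma}: if $y$ has the same sort as $x$ and $y\notin\FV(\forall x.\varphi)$, then $\overline{\rho}(\varphi[y/x])=\overline{\rho[\rho(y)/x]}(\varphi)$ for every $M$-valuation $\rho$. Given this, property (5) follows easily: take any $a\in\overline{\rho}(\forall x.\varphi)=\bigcap_{b\in M}\overline{\rho[b/x]}(\varphi)$; instantiating $b:=\rho(y)$ yields $a\in\overline{\rho[\rho(y)/x]}(\varphi)=\overline{\rho}(\varphi[y/x])$, so $\overline{\rho}((\forall x.\varphi)\ra\varphi[y/x])=M_s$ by Proposition~\ref{prop:simple}(7). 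The main obstacle is the quantifier case of the substitution lemma: when $\varphi$ has the form $\exists z.\varphi'$, I must argue by cases ($z=x$ versus $z\neq x$) and invoke the side-condition $y\notin\FV(\varphi)\setminus\{x\}$ to rule out variable capture, possibly combined with the variable-renaming convention $\exists z.\varphi'\equiv\exists z'.\varphi'[z'/z]$ stated earlier in the paper. Apart from this delicate bookkeeping, all remaining cases of the induction are routine.
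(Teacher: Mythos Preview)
Your proposal is correct and follows essentially the same approach as the paper. The only minor stylistic differences are: for (1) the paper invokes the fact that powersets form Boolean algebras (and hence satisfy all propositional tautologies) rather than arguing element-by-element; for (3) the paper proves the stronger equality $\overline{\rho}(\forall x.\varphi_1\ra\varphi_2)=\overline{\rho}(\varphi_1\ra\forall x.\varphi_2)$ via a chain of set identities rather than picking an element; and for (5) the paper simply asserts the substitution lemma $\overline{\rho}(\varphi[y/x])=\overline{\rho[\rho(y)/x]}(\varphi)$ as a known fact (attributing it to Proposition~\ref{prop:simple}(1)), whereas you plan to prove it by structural induction---your version is more thorough here.
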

\begin{proof}
Indeed,
\begin{enumerate}
\item 
Let $\psi$ be a propositional tautology over propositional variables
$p_1$, ..., $p_n$, such that $\varphi$ is obtained from $\psi$ by
substituting patterns $\varphi_1$, ..., $\varphi_n$ of sort $s$ for
propositional variables $p_1$, ..., $p_n$, respectively.
Let $M$ be any matching logic model, whose carrier of sort $s$ is
$M_s$, and let $\rho$ be any $M$-valuation.
It is well-known that power-sets are Boolean algebras, in our case
$({\cal P}(M_s),\neg, \cap)$ with $\neg$ the complement w.r.t.\ $M_s$,
and that all Boolean algebras are models of propositional calculus.
Therefore, no matter how we interpret the variables $p_1$, ..., $p_n$
as subsets of $M_s$, in particular as $\overline{\rho}(\varphi_1)$, ...,
$\overline{\rho}(\varphi_n)$, respectively, the interpretation of
$\psi$ is the entire set $M_s$.
Hence, $\models \varphi$.
\item
If $\rho:\Var\ra M$ is a matching logic model valuation such that
$\overline{\rho}(\varphi_1) = M_s$ and
$\overline{\rho}(\varphi_1)\subseteq\overline{\rho}(\varphi_2)$, then
it must be that $\overline{\rho}(\varphi_2) = M_s$.
\item
By (7) in Proposition~\ref{prop:simple}, it suffices to show that
$\overline{\rho}(\forall x\,.\,\varphi_1\ra\varphi_2) \subseteq
\overline{\rho}(\varphi_1 \ra \forall x\,.\,\varphi_2)$
for any valuation $\rho:\Var\ra M$.
A stronger result (equality) holds, as expected:
$$
\begin{array}{@{}rcl}
\overline{\rho}(\forall x\,.\,\varphi_1\ra\varphi_2) & = &
\bigcap\{\overline{\rho'}(\varphi_1 \ra \varphi_2) \mid \rho':\Var\ra M,\ 
\rho'\!\!\upharpoonright_{\Var\backslash\{x\}} =
\rho\!\!\upharpoonright_{\Var\backslash\{x\}}
\} \\
& = &
\bigcap\{
M_s \ \backslash\ (\overline{\rho'}(\varphi_1) \ \backslash\ \overline{\rho'}(\varphi_2))
\mid \rho':\Var\ra M,\ 
\rho'\!\!\upharpoonright_{\Var\backslash\{x\}} =
\rho\!\!\upharpoonright_{\Var\backslash\{x\}}
\} \\
& = &
\bigcap\{
M_s \ \backslash\ (\overline{\rho}(\varphi_1) \ \backslash\ \overline{\rho'}(\varphi_2))
\mid \rho':\Var\ra M,\ 
\rho'\!\!\upharpoonright_{\Var\backslash\{x\}} =
\rho\!\!\upharpoonright_{\Var\backslash\{x\}}
\} \\
&& \mbox{(by (1) in Proposition~\ref{prop:simple}, because $x\not\in\FV(\varphi_1)$)} \\
& = &
M_s \ \backslash\ (\overline{\rho}(\varphi_1) \ \backslash\ \bigcap\{
\overline{\rho'}(\varphi_2))
\mid \rho':\Var\ra M,\ 
\rho'\!\!\upharpoonright_{\Var\backslash\{x\}} =
\rho\!\!\upharpoonright_{\Var\backslash\{x\}}
\} \\
&& \mbox{(set theory properties of relative complements)} \\
& = &
M_s \ \backslash\ (\overline{\rho}(\varphi_1) \ \backslash\
\overline{\rho}(\forall x\,.\,\varphi_2) \\
& = &
\overline{\rho}(\varphi_1 \ra \forall x\,.\,\varphi_2)
\end{array}
$$
\item
Immediate by (9) in Proposition~\ref{prop:simple}.
\item
Follows by (7) and (1) in Proposition~\ref{prop:simple}, because for any
valuation $\rho:\Var\ra M$, we have
$\overline{\rho}(\varphi[y/x])=\overline{\rho'}(\varphi)$ where
$\rho':\Var\ra M$ is defined as
$\rho'\!\!\upharpoonright_{\Var\backslash\{x\}} =
\rho\!\!\upharpoonright_{\Var\backslash\{x\}}$
and $\rho'(x) = \rho(y)$ (this holds also when $y=x$), while
$\overline{\rho}(\forall x\,.\,\varphi)$ is the intersection of all
$\overline{\rho''}(\varphi)$ for all
$\rho'':\Var\ra M$ with
$\rho''\!\!\upharpoonright_{\Var\backslash\{x\}} =
\rho\!\!\upharpoonright_{\Var\backslash\{x\}}$
and $\rho'$ is one of these $\rho''$.
\end{enumerate}
Therefore, pure predicate logic reasoning can also be used to
reason about patterns.
\end{proof}


Proposition \ref{prop:PL-validity} tells us that the proof
system of pure predicate logic is actually sound for matching logic, {\em unchanged}.
That is, we do not need to attempt to translate patterns to predicate logic
formulae in order to reason about them, we can simply regard them as
predicates the way they are.
Section~\ref{sec:deduction} shows that a few additional proof rules yield
a sound and complete proof system for matching logic, similarly to how (term)
Substitution together with the
other four proof rules of pure predicate logic brings complete deduction to FOL.

Sometimes we can show that patterns are two-valued:
\begin{defi}
\label{dfn:spec-predicates}
Pattern $\varphi$ is called a \textbf{predicate} in
$(S,\Sigma,F)$, or simply a \textbf{predicate} when $(S,\Sigma,F)$ is
understood, iff it is an $M$-predicate (Definition~\ref{dfn:predicates})
in all models $M$ with $M \models F$.
\end{defi}
However, note that Proposition~\ref{prop:PL-validity} applies to any patterns,
not only to predicates.
Moreover, there are also interesting properties that appear to be very specific
to patterns and their dual logical-structural nature, and not to predicates,
such as the following:
\begin{prop}
\label{prop:structural-framing}
\textbf{(Structural Framing)}
If $\sigma\in\Sigma_{s_1\ldots s_n,s}$ and
$\varphi_i,\varphi'_i\in \Pattern_{s_i}$ such that
$\models \varphi_i \ra \varphi'_i$ for all $i\in 1\ldots n$, then
$\models \sigma(\varphi_1,\ldots,\varphi_n) \ra
 \sigma(\varphi'_1,\ldots,\varphi'_n)$.
\end{prop}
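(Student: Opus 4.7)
The plan is to unfold the semantics on both sides and reduce the statement to the monotonicity of the set-valued interpretation of $\sigma$. By Proposition~\ref{prop:simple}(7), the hypothesis $\models \varphi_i \ra \varphi'_i$ is equivalent to the semantic inclusion $\overline{\rho}(\varphi_i) \subseteq \overline{\rho}(\varphi'_i)$ for every model $M$ and every valuation $\rho : \Var \ra M$, and the desired conclusion is equivalent to $\overline{\rho}(\sigma(\varphi_1,\ldots,\varphi_n)) \subseteq \overline{\rho}(\sigma(\varphi'_1,\ldots,\varphi'_n))$ for every such $M$ and $\rho$. So I would fix $M$ and $\rho$ arbitrarily from the outset and work purely at the level of subsets of $M$.

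Using Definition~\ref{def:rho-bar}, both sides of the target inclusion rewrite as $\sigma_M(\overline{\rho}(\varphi_1),\ldots,\overline{\rho}(\varphi_n))$ and $\sigma_M(\overline{\rho}(\varphi'_1),\ldots,\overline{\rho}(\varphi'_n))$, where $\sigma_M$ is the pointwise-extended operation on argument sets defined immediately after Definition~\ref{dfn:ML-model} by $\sigma_M(A_1,\ldots,A_n) = \bigcup\{\sigma_M(a_1,\ldots,a_n) \mid a_i \in A_i\}$. The crux of the proof is therefore the observation that this extension is monotone in each argument: if $A_i \subseteq A'_i$ for all $i$, then every tuple $(a_1,\ldots,a_n)$ with $a_i \in A_i$ is also a tuple with $a_i \in A'_i$, so the union on the left is taken over a subset of the index set of the union on the right, yielding $\sigma_M(A_1,\ldots,A_n) \subseteq \sigma_M(A'_1,\ldots,A'_n)$.

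Applying this monotonicity with $A_i := \overline{\rho}(\varphi_i)$ and $A'_i := \overline{\rho}(\varphi'_i)$, which satisfy $A_i \subseteq A'_i$ by the unfolded hypothesis, gives exactly $\overline{\rho}(\sigma(\varphi_1,\ldots,\varphi_n)) \subseteq \overline{\rho}(\sigma(\varphi'_1,\ldots,\varphi'_n))$. Since $M$ and $\rho$ were arbitrary, Proposition~\ref{prop:simple}(7) concludes $\models \sigma(\varphi_1,\ldots,\varphi_n) \ra \sigma(\varphi'_1,\ldots,\varphi'_n)$.

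There is essentially no obstacle here: the statement is a structural monotonicity property, and the set-based semantics of matching logic makes the argument a one-line appeal to the monotonicity of unions. The only mild care needed is to remember to use the \emph{extended} interpretation of $\sigma_M$ over argument sets, rather than the pointwise interpretation over elements, and to invoke Proposition~\ref{prop:simple}(7) to move between the semantic inclusion form and the $\models \cdot \ra \cdot$ form.
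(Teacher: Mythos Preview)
Your proposal is correct and follows exactly the paper's approach: the paper's proof is a one-liner invoking (7) in Proposition~\ref{prop:simple} and the monotonicity of the extension of $\sigma_M$ to ${\cal P}(M_{s_1})\times\cdots\times {\cal P}(M_{s_n}) \ra {\cal P}(M_s)$, which is precisely what you have unpacked in detail.
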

\begin{proof}
Immediate by (7) in Proposition~\ref{prop:simple}, because for any model
$M$, the extension of $\sigma_M$ as a function
${\cal P}(M_{s_1})\times\cdots\times {\cal P}(M_{s_n}) \ra {\cal P}(M_s)$
is monotone.
\end{proof}
This structural framing property generalizes to {\em positive},
or {\em monotone} contexts: if $\models \varphi \ra \varphi'$ then
$\models C[\varphi] \ra C[\varphi']$ for any positive context $C$.
By a positive/monotone context we mean a context with no negation
on the path to the placeholder.\footnote{In the context of programming
language semantics, reasoning typically happens in semantic cells in the
program configuration and the program configuration is typically a term
with variables, possibly domain-constrained, so requiring a context to be
positive is not a strong requirement.}
Indeed, except for $\neg$, the matching logic constructs are interpreted
as monotone functions over powerset domains.
Structural framing is crucial for localizing reasoning.
Consider, for example, the property 
$$\models \ (1 \mapsto 5 \SLstar 2 \mapsto 0 \SLstar 7 \mapsto 9 \SLstar 8 \mapsto 1)
\ra \llist(7,9\cdot 5)$$
proved in Section~\ref{sec:deduction} for the matching logic
specifications of maps
(which captures separation logic: Section~\ref{sec:sep-logic}).
Taking $\sigma$ as the map/heap merge operation $\_\SLstar\_$,
Proposition~\ref{prop:structural-framing} implies
$$\models \ (1 \mapsto 5 \SLstar 2 \mapsto 0 \SLstar 7 \mapsto 9 \SLstar 8 \mapsto 1 \SLstar h)
\ra \llist(7,9\cdot 5) \SLstar h$$
where $h$ is a free map/heap variable.
So the property we ``locally'' proved can be ``framed'' within any map/heap.
Of course, one can go further and ``globalize'' the property in any positive
context.
For example, consider the operational semantics of a real language like C,
whose configuration was partly discussed in the example in Section~\ref{sec:example}.
Recall from Section~\ref{sec:example} that semantic cells, written using symbols
$\kall{cell}{...}$, can be nested and their grouping (symbol) is governed by
associativity and commutativity axioms.
Also, there is a top cell $\kall{cfg}{...}$ holding a subcell
$\kall{heap}{...}$ among many others.
Proposition~\ref{prop:simple} then implies
$$
\models \ \ \kall{cfg}{\kall{heap}{1 \mapsto 5 \SLstar 2 \mapsto 0 \SLstar 7 \mapsto 9 \SLstar 8 \mapsto 1\SLstar h}\ c}
\ \ra \ 
\kall{cfg}{\kall{heap}{\llist(7,9\cdot 5)\SLstar h}\ c}
$$
where $h$ and $c$ are free variables (the ``heap'' and,
respectively, ``configuration'' frames).

As discussed in the example in Section~\ref{sec:example}, sometimes it is
useful to move the logical connectives from inside terms to the top level,
or viceversa.
While disjunction and existential quantification can be propagated both
ways through symbol applications ($\lra$), conjunction and universal
quantification weaken the pattern as they are propagated from the inside
to the outside of a symbol application ($\ra$), and negation appears to
not be movable at all:

\begin{prop}
\label{prop:symbol-distributivity}
\textbf{(Distributivity of symbol application)}
Let $\sigma\in\Sigma_{s_1\ldots s_n,s}$ and
$\varphi_i\in \Pattern_{s_i}$ for all $1\leq i \leq n$.
Pick a particular $1\leq i \leq n$.
Let $\varphi'_i\in \Pattern_{s_i}$ be another pattern of sort $s_i$ and let
$C_{\sigma,i}[\square]$ be the context
$\sigma(\varphi_1,\ldots,\varphi_{i-1},\square,\varphi_{i+1},\ldots\varphi_n)$
(a context $C[\square]$ is a pattern with one occurrence of a free variable,
``$\square$'', and $C[\varphi]$ is $C[\varphi/\square]$).
Then:
\begin{enumerate}
\item
$\models C_{\sigma,i}[\varphi_i \orx \varphi'_i] \lra
C_{\sigma,i}[\varphi_i] \orx C_{\sigma,i}[\varphi'_i]$
\item
$\models C_{\sigma,i}[\exists x\,.\,\varphi_i] \lra
\exists x\,.\,C_{\sigma,i}[\varphi_i]$, where $x\not\in\FV(C_{\sigma,i}[\square])$
\item 
$\models C_{\sigma,i}[\varphi_i \andx \varphi'_i] \ra
C_{\sigma,i}[\varphi_i] \andx C_{\sigma,i}[\varphi'_i]$
\item
$\models C_{\sigma,i}[\forall x\,.\,\varphi_i] \ra
\forall x\,.\,C_{\sigma,i}[\varphi_i]$, where $x\not\in\FV(C_{\sigma,i}[\square])$
\end{enumerate}
\end{prop}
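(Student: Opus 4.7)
The plan is to reduce each of the four claims to a set-theoretic identity (or inclusion) about the extension of $\sigma_M$ to powerset arguments, and then invoke parts (7) and (8) of Proposition~\ref{prop:simple} to lift inclusion/equality of interpretations to validity of $\ra$/$\lra$. The single fact that does all the work is that, by the very definition of the extension
$$\sigma_M(A_1,\ldots,A_n) = \textstyle\bigcup\{\sigma_M(a_1,\ldots,a_n) \mid a_j \in A_j\},$$
the operation $\sigma_M$ distributes \emph{exactly} over arbitrary unions in each argument (from which monotonicity immediately follows), but only one-way over intersections. The four claims are the four obvious consequences of this asymmetry once we translate $\orx,\exists$ into unions and $\andx,\forall$ into intersections via Definition~\ref{def:rho-bar}.

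Concretely, fix a model $M$ and a valuation $\rho:\Var\ra M$, and write $A_j = \overline{\rho}(\varphi_j)$ and $A'_i=\overline{\rho}(\varphi'_i)$. For (1), unfolding yields
$$\overline{\rho}(C_{\sigma,i}[\varphi_i\orx\varphi'_i]) = \sigma_M(A_1,\ldots,A_i\cup A'_i,\ldots,A_n),$$
which by the union-distributivity of the extension equals $\sigma_M(\ldots,A_i,\ldots)\cup\sigma_M(\ldots,A'_i,\ldots) = \overline{\rho}(C_{\sigma,i}[\varphi_i])\cup\overline{\rho}(C_{\sigma,i}[\varphi'_i])$, giving the $\lra$. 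For (2), the same union-distributivity, now applied to the possibly infinite family $\{\overline{\rho[a/x]}(\varphi_i)\}_{a\in M}$, turns the interpretation of the left-hand side into $\bigcup_{a\in M}\sigma_M(A_1,\ldots,\overline{\rho[a/x]}(\varphi_i),\ldots,A_n)$. Here I use the side condition $x\notin\FV(C_{\sigma,i}[\square])$ together with (1) of Proposition~\ref{prop:simple} to conclude that $\overline{\rho[a/x]}(\varphi_j)=A_j$ for $j\neq i$, so the displayed union is exactly $\overline{\rho}(\exists x\,.\,C_{\sigma,i}[\varphi_i])$.

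For (3) and (4) the situation is strictly weaker: since $A_i\cap A'_i\subseteq A_i$ and $A_i\cap A'_i\subseteq A'_i$, monotonicity of the extension gives
$$\sigma_M(\ldots,A_i\cap A'_i,\ldots)\subseteq\sigma_M(\ldots,A_i,\ldots)\cap\sigma_M(\ldots,A'_i,\ldots),$$
which is exactly what (3) asserts after unfolding $\overline{\rho}$. Part (4) is identical with an arbitrary intersection $\bigcap_{a\in M}\overline{\rho[a/x]}(\varphi_i)$ in place of $A_i\cap A'_i$, again using the side condition and (1) of Proposition~\ref{prop:simple} to freeze the interpretations of the other $\varphi_j$. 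In all four cases, applying (8) (for the $\lra$ claims) or (7) (for the $\ra$ claims) of Proposition~\ref{prop:simple} finishes the argument.

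The only subtlety I foresee is the quantifier cases, where one must be careful that pushing the union/intersection outside $\sigma_M$ does not accidentally re-interpret the other arguments $\varphi_j$ under a perturbed valuation; this is handled cleanly by the hypothesis $x\notin\FV(C_{\sigma,i}[\square])$ together with the free-variable lemma. Everything else is a direct calculation from Definitions~\ref{dfn:ML-model} and~\ref{def:rho-bar}, and no new reasoning principles beyond those already established are needed.
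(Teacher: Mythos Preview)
Your proposal is correct and takes essentially the same approach as the paper, which simply invokes the basic set-theoretic facts that the powerset extension $f(A)=\bigcup\{f(a)\mid a\in A\}$ of any $f:X\ra{\cal P}(Y)$ satisfies $f(\bigcup_i A_i)=\bigcup_i f(A_i)$ and $f(\bigcap_i A_i)\subseteq\bigcap_i f(A_i)$. Your treatment is in fact more explicit than the paper's (which calls the result ``trivial''), particularly in spelling out how the side condition $x\notin\FV(C_{\sigma,i}[\square])$ together with (1) of Proposition~\ref{prop:simple} is needed to keep the other arguments fixed in the quantifier cases.
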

\begin{proof}
Trivial, using the basic set properties that for any function
$f : X \ra {\cal P}(Y)$ 
(i.e., relation in $X \times Y$), if $\{A_i\}_{i\in {\cal I}}$ is a family
of subsets of $X$, i.e., $A_i\subseteq X$ for all $i \in {\cal I}$, then
$f(\bigcup\{A_i \mid i \in {\cal I}\}) = \bigcup\{f(A_i) \mid i \in {\cal I}\}$
and $f(\bigcap\{A_i \mid i \in {\cal I}\}) \subseteq \bigcap\{f(A_i) \mid i \in {\cal I}\}$,
where $f(A) = \bigcup\{f(a) \mid a \in A\}$.
Note the inclusion for intersection, as opposed to equality for disjunction.
The inclusion for intersection becomes equality when $f$ is injective as a
relation, that is, when $f(a)\cap f(a') \neq \emptyset$ implies $a=a'$.
\end{proof}

The other implications in (3) and (4) above in
Proposition~\ref{prop:symbol-distributivity} do not hold in general.
Consider a signature $\Sigma$ containing only one sort, two constants
$a$ and $b$, and a binary symbol $f$.
Consider also a model $M$ containing only two elements, $a_M$ and $b_M$,
with constants $a$ and $b$ interpreted as $\{a_M\}$ and $\{b_M\}$,
respectively, and with $f$ interpreted as the injective function
$f_M(a_M,a_M) = \{a_M\}$,
$f_M(b_M,a_M) = \{b_M\}$,
$f_M(a_M,b_M) = \{b_M\}$,
$f_M(b_M,b_M) = \{a_M\}$.
Let $C_{f,2}[\square]$ be the context $f(a \vee b, \square)$ and let
$\varphi_2$ and $\varphi_2'$ be $a$ and $b$, respectively.
Then the pattern $C_{f,2}[a] \wedge C_{f,2}[b]$, that is
$f(a\vee b, a) \wedge f(a \vee b, b)$, is interpreted by any valuation
to $M$ as the (total) set $\{a_M,b_M\}$, while
$C_{f,2}[a \wedge b]$, that is $f(a \vee b, a \wedge b)$, as the empty set
(because $a \wedge b$ is interpreted as the empty set).
Therefore,
$\not\models C_{f,2}[a] \wedge C_{f,2}[b] \ra C_{f,2}[a \wedge b]$.
Similarly,
$\not\models \forall x\,.\,C_{f,2}[x] \ra C_{f,2}[\forall x\,.\,x]$
because $\forall x\,.\,C_{f,2}[x]$ and $C_{f,2}[\forall x\,.\,x]$
are interpreted as $\{a_M,b_M\}$ and $\emptyset$, respectively,
by any valuation to $M$.

The reason for which the counter-examples above worked was that the context
$C_{f,2}[\square]$, that is $f(a \vee b, \square)$, did not yield
an injective relation in $M$: indeed, it was not the case that the
interpretations of $f(a \vee b, x)$ and $f(a \vee b, y)$ were disjoint
whenever $x$ and $y$ were interpreted as distinct elements.
We can define a general notion of injectivity, for any context
$C_{\sigma,i}[\square]$, which generalizes the usual notion of injectivity
of a function or relation:

\begin{defi}
\label{dfn:injectivity}
With the notation in Proposition~\ref{prop:symbol-distributivity},
$C_{\sigma,i}[\square]$ is \textbf{injective} in specification
$(S,\Sigma,F)$ iff
$F \models C_{\sigma,i}[x] \wedge C_{\sigma,i}[y] \ra C_{\sigma,i}[x \wedge y]$,
where $x,y\in\Var_{s_i}$ are distinct variables which do not occur in
$C_{\sigma,i}[\square]$.
We drop $(S,\Sigma,F)$ when understood.
Symbol $\sigma$ is \textbf{injective on position $i$} iff 
$C_{\sigma,i}[\square]$ is injective with $\varphi_1$, ..., $\varphi_{i-1}$,
$\varphi_{i+1}$,...,$\varphi_n$ chosen as distinct variables.
\end{defi}

It is easy to check that $\sigma$ is injective on position $i$ iff
for any model $M$ with $M \models F$,
$\sigma_M$ is injective on position $i$ as a relation in $M$.
Recall that functions are particular relations, and that injectivity is
a property of relations in general:
$R\subseteq M_{s_1}\times M_{s_n}\times M_s$ is injective
on position $1\leq i \leq$ iff
$(a_1,...,a_{i-1},a_i,a_{i+1},...,a_n,b) \in R$
and
$(a_1,...,a_{i-1},a_i',a_{i+1},...,a_n,b) \in R$
implies $a_i = a_i'$.
Regarding $\sigma_M$ as such a relation, its injectivity on position
$i$ means that
$\sigma_M(a_1,...,a_{i-1},a_i,a_{i+1},...,a_n) \cap
\sigma_M(a_1,...,a_{i-1},a_i',a_{i+1},...,a_n) \neq \emptyset$
implies $a_i = a_i'$.

\begin{prop}
\label{prop:injective-symbol-distributivity}
\textbf{(Distributivity of injective symbol application)}
With the notation in Definition~\ref{dfn:injectivity}, if
$C_{\sigma,i}[\square]$ is injective in $(S,\Sigma,F)$ and
$\varphi_i,\varphi_i' \in\Pattern_{s_i}$ then:
\begin{enumerate}
\item 
$F \models C_{\sigma,i}[\varphi_i] \andx C_{\sigma,i}[\varphi'_i] \ra
C_{\sigma,i}[\varphi_i \andx \varphi'_i]$
\item
$F \models \forall x\,.\,C_{\sigma,i}[\varphi_i] \ra
C_{\sigma,i}[\forall x\,.\,\varphi_i]$, where $x\not\in\FV(C_{\sigma,i}[\square])$
\end{enumerate}
Together with Proposition~\ref{prop:symbol-distributivity}, this implies
the full distributivity of injective contexts w.r.t. the matching logic
constructs $\wedge$, $\vee$, $\forall$, $\exists$ (but not $\neg$).
\end{prop}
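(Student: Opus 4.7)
The plan is to unfold the matching logic semantics on both sides of each implication and then exploit the injectivity hypothesis to identify the "position $i$ witness" in two different memberships. Fix a model $M$ with $M \models F$, a valuation $\rho:\Var\ra M$, and write $A_j = \overline{\rho}(\varphi_j)$ for $j\neq i$. The context interpretation is then $\overline{\rho}(C_{\sigma,i}[\psi]) = \sigma_M(A_1,\ldots,A_{i-1},\overline{\rho}(\psi),A_{i+1},\ldots,A_n)$ for any pattern $\psi$ of sort $s_i$.

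The key preliminary step is to restate injectivity of $C_{\sigma,i}[\square]$ semantically. Instantiating $x$ to $a$ and $y$ to $a'$ in the defining implication $C_{\sigma,i}[x]\andx C_{\sigma,i}[y]\ra C_{\sigma,i}[x\andx y]$ and observing that $\overline{\rho[a/x,a'/y]}(x\andx y)=\emptyset$ whenever $a\neq a'$, one obtains
$$\sigma_M(A_1,\ldots,\{a\},\ldots,A_n)\ \cap\ \sigma_M(A_1,\ldots,\{a'\},\ldots,A_n)\ =\ \emptyset \quad \textrm{for all } a\neq a' \textrm{ in } M_{s_i}.$$
This is the semantic content of injectivity that drives both parts of the proposition.

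For part~(1), let $B=\overline{\rho}(\varphi_i)$, $B'=\overline{\rho}(\varphi_i')$, and suppose $m\in\overline{\rho}(C_{\sigma,i}[\varphi_i])\cap\overline{\rho}(C_{\sigma,i}[\varphi_i'])$. Expanding the unions inside $\sigma_M$, there exist $b\in B$ and $b'\in B'$ with $m\in\sigma_M(A_1,\ldots,\{b\},\ldots,A_n)$ and $m\in\sigma_M(A_1,\ldots,\{b'\},\ldots,A_n)$. By the disjointness reformulation above, we must have $b=b'$, so $b\in B\cap B'=\overline{\rho}(\varphi_i\andx\varphi_i')$, giving $m\in\sigma_M(A_1,\ldots,B\cap B',\ldots,A_n)=\overline{\rho}(C_{\sigma,i}[\varphi_i\andx\varphi_i'])$. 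Since $\rho$ was arbitrary, (7) of Proposition~\ref{prop:simple} yields $F\models C_{\sigma,i}[\varphi_i]\andx C_{\sigma,i}[\varphi_i']\ra C_{\sigma,i}[\varphi_i\andx\varphi_i']$.

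For part~(2), the same idea applies uniformly over all values of $x$. Because $x\notin\FV(C_{\sigma,i}[\square])$, the sets $A_j$ are independent of $x$, and for each $a\in M$ we set $B_a=\overline{\rho[a/x]}(\varphi_i)$. Assume $m\in\overline{\rho}(\forall x\,.\,C_{\sigma,i}[\varphi_i])=\bigcap_{a\in M}\sigma_M(A_1,\ldots,B_a,\ldots,A_n)$. For each $a$ pick a witness $b_a\in B_a$ with $m\in\sigma_M(A_1,\ldots,\{b_a\},\ldots,A_n)$; applying the disjointness reformulation to every pair $a,a'$ forces $b_a=b_{a'}$, so there is a single $b$ lying in every $B_a$, i.e.\ $b\in\bigcap_{a\in M}B_a=\overline{\rho}(\forall x\,.\,\varphi_i)$. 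Hence $m\in\sigma_M(A_1,\ldots,\overline{\rho}(\forall x\,.\,\varphi_i),\ldots,A_n)=\overline{\rho}(C_{\sigma,i}[\forall x\,.\,\varphi_i])$, as required. The main obstacle in both arguments is the step where one passes from the genuine witnesses $\sigma_M(a_1,\ldots,b,\ldots,a_n)$ to the enlarged sets $\sigma_M(A_1,\ldots,\{b\},\ldots,A_n)$ so that the semantic injectivity statement applies cleanly; monotonicity of $\sigma_M$ in its arguments makes this transition straightforward, but care is needed so that the $A_j$'s on the two occurrences really coincide (which they do because both invocations use the same valuation $\rho$).
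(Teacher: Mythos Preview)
Your proof is correct and follows essentially the same approach as the paper's own proof: pick an element in the left-hand side, extract witnesses in position $i$, and use injectivity to force those witnesses to coincide. The only organizational difference is that you first isolate the semantic content of injectivity as the disjointness condition $\sigma_M(A_1,\ldots,\{a\},\ldots,A_n)\cap\sigma_M(A_1,\ldots,\{a'\},\ldots,A_n)=\emptyset$ for $a\neq a'$ and then apply it, whereas the paper invokes the injectivity hypothesis directly (via a modified valuation $\rho'=\rho[a/x][a'/y]$) inside each of the two parts; the underlying argument is the same.
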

\begin{proof}
Let $M$ be a model with $M \models F$ and let $\rho:\Var \ra M$ be a
valuation.

To prove the first property, let
$b \in \overline{\rho}(C_{\sigma,i}[\varphi_i] \andx C_{\sigma,i}[\varphi'_i])$,
that is,
$b \in \overline{\rho}(C_{\sigma,i}[\varphi_i])$ and
$b \in \overline{\rho}( C_{\sigma,i}[\varphi'_i])$.
Then there are $a,a' \in M_{s_i}$ such that
$a \in \overline{\rho}(\varphi_i)$ and
$b\in \overline{\rho[a/\square]}(C_{\sigma,i}[\square])$, and
$a' \in \overline{\rho}(\varphi_i')$ and
$b\in \overline{\rho[a'/\square]}(C_{\sigma,i}[\square])$.
Let $x,y\in\Var_{s_i}$ be two distinct variables that do not occur in
$C_{\sigma,i}[\square]$ and let $\rho'$ be the valuation
$\rho[a/x][a'/y]$.
Then we have $b \in \overline{\rho'}(C_{\sigma,i}[x])$ and
$b \in \overline{\rho'}(C_{\sigma,i}[y])$, that is,
$b \in \overline{\rho'}(C_{\sigma,i}[x] \wedge C_{\sigma,i}[y])$.
The injectivity hypothesis then implies
$b \in \overline{\rho'}(C_{\sigma,i}[x \wedge y])$.
Therefore, $\overline{\rho'}(x \wedge y)$ is non-empty, that is,
$\{a\} \cap \{a'\}$ is non-empty, that is, $a=a'$.
Since $a \in \overline{\rho}(\varphi_i)$ and
$a' \in \overline{\rho}(\varphi_i')$,
it follows that
$a \in \overline{\rho}(\varphi_i \wedge \varphi_i')$.
Since $b\in \overline{\rho[a/\square]}(C_{\sigma,i}[\square])$,
it follows that
$b\in \overline{\rho}(C_{\sigma,i}[\varphi_i \wedge \varphi_i'])$.

For the second, let
$b\in\overline{\rho}(\forall x\,.\,C_{\sigma,i}[\varphi_i])$, that is,
$b\in\overline{\rho[v/x]}(C_{\sigma,i}[\varphi_i])$ for all
$v\in M_{{\it sort}(x)}$, that is,
for any $v\in M_{{\it sort}(x)}$
there is some $a_v\in\overline{\rho[v/x]}(\varphi_i)$
such that
$b\in \overline{\rho[a_v/\square]}(C_{\sigma,i}[\square])$
(because $x\not\in\FV(C_{\sigma,i}[\square])$,
so 
$\overline{\rho[v/x][a_v/\square]}(C_{\sigma,i}[\square])
=
\overline{\rho[a_v/\square]}(C_{\sigma,i}[\square])$).
The injectivity of $C_{\sigma,i}[\square]$ implies that 
all such $a_v$ elements are equal.
Indeed,
let $v,v'\in M_{{\it sort}(x)}$ and
$a_v\in\overline{\rho[v/x]}(\varphi_i)$
and
$a_{v'}\in\overline{\rho[v'/x]}(\varphi_i)$
such that
$b\in \overline{\rho[a_v/\square]}(C_{\sigma,i}[\square])$
and 
$b\in \overline{\rho[a_{v'}/\square]}(C_{\sigma,i}[\square])$.
Let $z,y\in\Var_{s_i}$ be two distinct variables that do not occur
in $C_{\sigma,i}[\square]$, like in the Definition~\ref{dfn:injectivity}
of injectivity (but with $z$ instead of $x$ to avoid name collision),
and note that the above implies
$b\in\overline{\rho[a_v/z][a_{v'}/y]}(C_{\sigma,i}[z] \wedge C_{\sigma,i}[y])$.
Then Definition~\ref{dfn:injectivity} implies
$b\in\overline{\rho[a_v/z][a_{v'}/y]}(C_{\sigma,i}[z \wedge y])$.
Therefore $\overline{\rho[a_v/z][a_{v'}/y]}(z \wedge y) \neq \emptyset$,
that is, $a_v = a_{v'}$.
Since all the elements $a_v\in\overline{\rho[v/x]}(\varphi_i)$ for all
$v \in M_{{\it sort}(x)}$ are equal, it follows that there is some element
$a\in\overline{\rho}(\forall x\,.\,\varphi_i)$ such that
$a_v = a$ for all $a_v$ as above.
Moreover, 
$b\in \overline{\rho[a/\square]}(C_{\sigma,i}[\square])$,
that is,
$b\in \overline{\rho}(C_{\sigma,i}[\forall x\,.\,\varphi_i])$.
\end{proof}

The notion of context injectivity in Definition~\ref{dfn:injectivity}
is the weakest theoretical condition we were able to find in order
for the (bidirectional) distributivity of conjunction and
universal quantification to hold.
In practice, stronger conditions are met.
For example, Section~\ref{sec:constructors} discusses constructors,
which are symbols whose interpretations are injective in all their
arguments at the same time
(i.e., $\sigma_M(a_1,...,a_n) \cap \sigma_M(a_1',...,a_n')\neq\emptyset$ implies
$a_1=a_1'$, ..., $a_n=a_n'$).
Contexts corresponding to constructors are injective in the sense of
Definition~\ref{dfn:injectivity}.

We next demonstrate the usefulness of matching logic
by a series of other examples.

\section{Instance: Propositional Calculus}
\label{sec:propositional-calculus}

In Section~\ref{sec:matching-logic}, (1) in Proposition~\ref{prop:PL-validity},
we showed that propositional reasoning is sound for matching logic.
Here we go one step further and show that we can can instantiate matching
logic to become precisely propositional calculus, without any translation
needed in any direction.
The idea is to add a special sort for propositions, say $\Prop$, then to use
the already existing syntax of matching logic to build propositions as we know
them, and then to show that the existing semantics of matching logic, given by
$\models$, yields the expected semantics of propositions as we know it in
propositional calculus (let us refer to it as $\models_\Prop$).

We build a matching logic signature as follows:
$S$ contains only one sort, $\Prop$, and
$\Sigma$ is empty.
Let us also drop the existential quantifier, so that the resulting syntax of
patterns becomes exactly that of propositional calculus:
$$
\begin{array}{rrl}
\varphi & ::= & \Var_\Prop \\
& \mid & \neg\varphi \\
& \mid & \varphi \wedge \varphi
\end{array}
$$
Then the default matching logic semantics endows the resulting syntax of
propositions with the desired propositional calculus semantics:
\begin{prop}
\label{prop:propositional}
For any proposition $\varphi$, the following holds:
$\models_{\it Prop}\varphi$ iff $\models \varphi$.
\end{prop}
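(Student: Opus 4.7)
The plan is to prove the two directions separately. The forward direction ($\models_\Prop \varphi$ implies $\models \varphi$) is essentially immediate: if $\varphi$ is a propositional tautology built from propositional variables by $\neg$ and $\wedge$, then (1) of Proposition~\ref{prop:PL-validity} applied with sort $s = \Prop$ tells us that $\models \varphi$ in matching logic as well. This is the easy direction.

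The converse direction ($\models \varphi$ implies $\models_\Prop \varphi$) requires a bit more care, because matching logic interprets a propositional variable $x$ under a valuation $\rho$ as the singleton $\{\rho(x)\}$, rather than as a truth value, so not every propositional truth assignment is directly realized by a matching logic valuation in any given model. I would argue by contrapositive: assume $\tau : \Var_\Prop \to \{\mathit{true},\mathit{false}\}$ is a propositional truth assignment under which $\varphi$ evaluates to false, and exhibit a matching logic model $M$ and valuation $\rho_\tau$ such that $\overline{\rho_\tau}(\varphi) \neq M_\Prop$.

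The model I would pick is the two-element one, $M_\Prop = \{0,1\}$ (there are no symbols in $\Sigma$ to interpret, so this is a legitimate matching logic model). Define $\rho_\tau(x) = 1$ if $\tau(x) = \mathit{true}$ and $\rho_\tau(x) = 0$ otherwise. The heart of the proof is then the following lemma, proved by structural induction on $\varphi$ using the restricted syntax (variables, $\neg$, $\wedge$):
\[
\tau(\varphi) = \mathit{true} \quad\text{iff}\quad 1 \in \overline{\rho_\tau}(\varphi).
\]
The base case is immediate from $\overline{\rho_\tau}(x) = \{\rho_\tau(x)\}$; the negation case uses $\overline{\rho_\tau}(\neg\varphi) = M_\Prop \setminus \overline{\rho_\tau}(\varphi)$, so $1$ is in the complement iff $1$ is not in $\overline{\rho_\tau}(\varphi)$; the conjunction case uses that $\cap$ commutes with membership of a single element. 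Given this lemma, if $\tau(\varphi) = \mathit{false}$ then $1 \notin \overline{\rho_\tau}(\varphi)$, hence $\overline{\rho_\tau}(\varphi) \subsetneq M_\Prop$, so $M \not\models \varphi$ and therefore $\not\models \varphi$.

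The main obstacle, as noted above, is the semantic mismatch between propositional truth values and matching logic's powerset interpretation of variables; the resolution is precisely the choice of a two-element model together with the observation that membership of a fixed element (here, $1$) in the interpretation behaves exactly like classical propositional truth. No other subtleties arise, because the existential quantifier has been dropped from the syntax and $\Sigma$ is empty, so the inductive cases are exhausted by variables, $\neg$, and $\wedge$.
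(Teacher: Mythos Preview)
Your proposal is correct and essentially identical to the paper's own proof: both directions use the same ingredients, namely Proposition~\ref{prop:PL-validity}(1) for the forward implication and, for the converse, a two-element model $M_\Prop$ together with the inductively proved lemma that $\tau(\psi)=\mathit{true}$ iff the distinguished element lies in $\overline{\rho_\tau}(\psi)$. The only cosmetic difference is that you phrase the converse by contrapositive while the paper argues it directly, but the substantive construction and lemma are the same.
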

\begin{proof}
The implication ``$\models_{\it Prop}\varphi$ implies $\models \varphi$''
follows by (1) in Proposition~\ref{prop:PL-validity}.
For the other implication, let us suppose that $\models \varphi$ and
let $\theta : \Var_\Prop \ra \{\ttrue,\ffalse\}$ be an arbitrary propositional
valuation (it is often called a ``model'' in the literature, but we refrain
from using that terminology to avoid confusion with our notion of model).
All we have to do is show that $\theta(\varphi) = \ttrue$.
Let $M$ be the matching logic model with $M_\Prop=\{\ttrue,\ffalse\}$
and let $\rho_\theta : \Var \ra M$ be the matching logic valuation where
$\rho_\theta(x)=\theta(x)$ for each $x \in \Var_\Prop$.

Note that, unlike in propositional calculus where propositions $\psi$
evaluate to precisely one of $\ttrue$ or $\ffalse$ for any given valuation $\theta$,
in matching logic $\overline{\rho_\theta}(\psi)$ can be any of the four subsets
of $\{\ttrue,\ffalse\}$.
For example, if $x$ and $y$ are variables such that $\theta(x)=\ttrue$ 
and $\theta(y)=\ffalse$, then
$\overline{\rho_\theta}(x)=\{\ttrue\}$,
$\overline{\rho_\theta}(y)=\{\ffalse\}$,
$\overline{\rho_\theta}(\neg x)=\{\ffalse\}$,
$\overline{\rho_\theta}(\neg y)=\{\ttrue\}$,
$\overline{\rho_\theta}(x \wedge y)=\emptyset$,
$\overline{\rho_\theta}(x \vee y)=\{\ttrue,\ffalse\}$.
Nevertheless, we can inductively show that the propositional validity of a
proposition $\psi$ is dictated by the membership of $\ttrue$ to its matching logic
evaluation as a set: $\theta(\psi)=\ttrue$ iff $\ttrue\in\overline{\rho_\theta}(\psi)$.
Indeed:
if $\psi$ is a variable $x$ then $\overline{\rho_\theta}(x)=\{\theta(x)\}$,
so the property holds; 
if $\psi$ is $\neg\psi'$ then
$\overline{\rho_\theta}(\neg\psi')=\{\ttrue,\ffalse\}\backslash\overline{\rho_\theta}(\psi')$,
so $\ttrue\in\overline{\rho_\theta}(\neg\psi')$ iff
$\ttrue\not\in\overline{\rho_\theta}(\psi')$,
iff (by the induction hypothesis)
$\theta(\psi')\not=\ttrue$,
iff (by the two-valued semantics of propositional calculus)
$\theta(\neg\psi')=\ttrue$;
finally, if $\psi$ is $\psi_1 \wedge \psi_2$ then
$\ttrue\in\overline{\rho_\theta}(\psi_1 \wedge \psi_2)$ iff
$\ttrue\in\overline{\rho_\theta}(\psi_1)$ and
$\ttrue\in\overline{\rho_\theta}(\psi_2)$,
iff (by the induction hypothesis)
$\theta(\psi_1)=\ttrue$ and $\theta(\psi_2)=\ttrue$,
iff $\theta(\psi_1 \wedge \psi_2)=\ttrue$.

Now $\models \varphi$ implies $\overline{\rho_\theta}(\varphi)=\{\ttrue,\ffalse\}$,
so $\ttrue \in \overline{\rho_\theta}(\varphi)$.
By the result proved inductively above we conclude that $\theta(\varphi)=\ttrue$.
\end{proof}

An alternative way to capture propositional logic is to add a constant symbol
(i.e., a symbol without any arguments) to $\Sigma$ for each propositional
variable, like we do for modal logic in Section~\ref{sec:modal-logic}.
This is similar to how predicate logic captures propositional calculus,
namely by associating a predicate without arguments to each propositional
variable.
We leave the details as an exercise to the interested reader.

\section{Instance: (Pure) Predicate Logic}
\label{sec:predicate-logic}

Recall from Section~\ref{sec:matching-logic}, Proposition~\ref{prop:PL-validity}
and the discussion preceding it, that by {\em pure} predicate logic in this paper
we mean predicate logic or first-order logic (FOL) with only predicate symbols
(no function and no constant symbols).
Note that some works call the fragment of FOL with only constant
(i.e., zero-argument function) symbols ``predicate logic'',
others use ``predicate logic'' as a synonym for FOL.
We do not discuss the fragment of FOL with only constant symbols
in this paper, so from here on we take the liberty to refer to
``pure predicate logic'' as just ``predicate logic''.
Proposition~\ref{prop:PL-validity} showed that predicate logic reasoning is
sound for matching logic.
Similarly to propositional calculus in Section~\ref{sec:propositional-calculus},
here we go one step further and show that we can can instantiate matching
logic to become precisely predicate logic; the FOL case will be discussed in
Section~\ref{sec:FOL}.
We follow the same approach like for propositional calculus: add a special
sort for predicates, say $\Pred$, then use the already existing syntax of
matching logic to build formulae as we know them in predicate logic, and
then show that the existing semantics of matching logic, given by $\models$,
yields the expected semantics of pure predicate logic.
We let $\models_\PL$ denote the predicate logic satisfaction.

Recall that predicate logic is the fragment of first-order logic with
just predicate symbols, that is, with no function (including no constant)
and no equality symbols.
We consider only the many-sorted case here.
Formally, if $S$ is a sort set and $\Pi$ is a set of predicate symbols,
the syntax of pure predicate logic formulae is
$$
\begin{array}{rrl}
\varphi & ::= & \pi(x_1,\ldots,x_n) \mbox{ with }\pi\in\Pi_{s_1 \ldots s_n},\ x_1\in\Var_{s_1},\ ...,\ x_n\in\Var_{s_n} \\ 
& \mid & \neg \varphi \\
& \mid & \varphi \wedge \varphi \\
& \mid & \exists x \,.\, \varphi
\end{array}
$$

Without loss of generality, suppose that we can pick a fresh sort name, $\Pred$;
that is, $\Pred\not\in S$.
Let us now construct the matching logic signature
$(S\mathrel{\cup}\{{\Pred}\},\Sigma)$, where
$\Sigma_{s_1\ldots s_n,{\Pred}}=\Pi_{s_1\ldots s_n}$
are the only symbols in $\Sigma$; that is, $\Sigma$ contains precisely the
predicate symbols of the predicate logic signature, but regarded as pattern
symbols of result sort $\Pred$.
Suppose also that we disallow any variables of sort $\Pred$ in patterns.
Then the matching logic patterns of sort $\Pred$ are precisely the predicate
logic formulae, without any translation in any direction.
Moreover, the following result shows that the default matching logic semantics
endows these patterns with their desired predicate logic semantics:

\begin{prop}
\label{prop:pure-predicate}
For any predicate logic formula $\varphi$, the following holds:
$\models_\PL\varphi$ iff $\models \varphi$.
\end{prop}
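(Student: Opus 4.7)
The plan is to mirror Proposition~\ref{prop:propositional}, lifting its argument from the two-valued propositional setting to the many-valued matching-logic semantics by parametrizing on an arbitrary element $b \in M_\Pred$ rather than fixing $\ttrue$. Note that, since no variables of sort $\Pred$ are allowed, matching-logic valuations $\rho : \Var \ra M$ restrict to honest predicate-logic valuations into the $S$-sorted carriers of $M$, with no information lost.

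For the $(\Leftarrow)$ direction, given a predicate-logic structure $A$ and valuation $\theta : \Var \ra A$, I would build a matching-logic model $M$ with $M_s = A_s$ for $s \in S$, $M_\Pred = \{\ttrue,\ffalse\}$, and $\pi_M(a_1,\ldots,a_n) = \{\ttrue\}$ if $(a_1,\ldots,a_n) \in \pi_A$ and $\{\ffalse\}$ otherwise; take $\rho_\theta = \theta$. A structural induction on $\varphi$ then yields $\ttrue \in \overline{\rho_\theta}(\varphi)$ iff $A,\theta \models_\PL \varphi$, with the atomic case by construction of $\pi_M$, the negation case via complementation inside $M_\Pred$, the conjunction case via intersection, and the existential case via the union formula in Definition~\ref{def:rho-bar}. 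From $\models \varphi$ one obtains $\overline{\rho_\theta}(\varphi) = M_\Pred$, hence $\ttrue \in \overline{\rho_\theta}(\varphi)$, so $A,\theta \models_\PL \varphi$.

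For the $(\Rightarrow)$ direction, fix an arbitrary matching-logic model $M$, valuation $\rho : \Var \ra M$, and element $b \in M_\Pred$. Associate to $b$ the predicate-logic structure $A^b$ with $A^b_s = M_s$ for $s \in S$ and $\pi_{A^b} = \{(a_1,\ldots,a_n) \mid b \in \pi_M(a_1,\ldots,a_n)\}$. A completely analogous induction on $\varphi$ gives $b \in \overline{\rho}(\varphi)$ iff $A^b,\rho \models_\PL \varphi$. Assuming $\models_\PL \varphi$, the right-hand side holds for every choice of $b$, so every $b \in M_\Pred$ belongs to $\overline{\rho}(\varphi)$; thus $\overline{\rho}(\varphi) = M_\Pred$ for all $\rho$, i.e.\ $M \models \varphi$, and since $M$ was arbitrary, $\models \varphi$.

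The only step that needs any real care is the existential clause: the exclusion of $\Pred$-sorted variables ensures that the bound variable $x$ lies in some $s \in S$, so the union $\bigcup_{a \in M_s} \overline{\rho[a/x]}(\varphi')$ ranges over the same carrier used by the predicate-logic quantifier, which is what makes the induction go through uniformly in $b$. Everything else is routine bookkeeping with the set-theoretic meanings of $\neg$ and $\wedge$ from Definition~\ref{def:rho-bar}, so no new machinery beyond the $A^b$ construction is needed.
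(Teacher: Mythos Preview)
Your proof is correct, and it takes a genuinely different route from the paper's.

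For the direction $\models_\PL\varphi \Rightarrow \models\varphi$, the paper does \emph{not} argue semantically: it invokes the completeness of pure predicate logic together with Proposition~\ref{prop:PL-validity}, observing that every rule of a complete Hilbert system for PL is sound for matching logic, so $\models_\PL\varphi$ implies $\vdash_\PL\varphi$ implies $\models\varphi$. Your $A^b$ construction, by contrast, is a direct semantic argument: for each $b\in M_\Pred$ you manufacture a PL structure whose satisfaction tracks membership of $b$ in the pattern's denotation. This is more elementary---it avoids the detour through G\"odel's completeness theorem---and is arguably the cleaner argument.

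For the direction $\models\varphi \Rightarrow \models_\PL\varphi$, the paper builds the matching-logic model with a \emph{singleton} carrier $M_\Pred=\{\star\}$, so that $\overline{\rho}(\varphi)$ is forced to be either $\{\star\}$ or $\emptyset$ and one can show directly that $\overline{\rho}(\varphi)=\{\star\}$ iff $M^{\PL},\rho\models_\PL\varphi$. You instead take $M_\Pred=\{\ttrue,\ffalse\}$ and track membership of $\ttrue$, reusing the device from Proposition~\ref{prop:propositional}. Both work; the singleton choice makes the negation case trivially two-valued, whereas your version needs the (still routine) observation that $\ttrue\in M_\Pred\setminus\overline{\rho}(\varphi')$ iff $\ttrue\notin\overline{\rho}(\varphi')$.
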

\begin{proof}
That $\models_\PL\varphi$ implies $\models \varphi$
follows by Proposition~\ref{prop:PL-validity}: each
of the proof rules of the complete proof system of
(pure) predicate logic \cite{Godel1930} is sound for matching logic.
For the other implication, note that we can associate to any predicate
logic model
$M^{\it PL}=(\{M_s^{\it PL}\}_{s\in S},\{\pi_{M^{\it PL}}\}_{\pi\in\Pi})$
a matching logic model
$M^{\it ML}=(\{M_s^{\it ML}\}_{s\in S\cup\{\Pred\}},\{\pi_{M^{\it ML}}\}_{\pi\in\Sigma})$,
where $M^{\it ML}_s=M^{\it PL}_s$ for all $s\in S$ and
$M_\Pred^{\it ML}=\{\star\}$ (with $\star$ some arbitrary but fixed element) and
$\pi_{M^{\it ML}}(a_1,\ldots,a_n)=\{\star\}$ iff $\pi_{M^{\it PL}}(a_1,\ldots,a_n)$ holds,
and $\pi_{M^{\it ML}}(a_1,\ldots,a_n)=\emptyset$ otherwise.
Furthermore, we can show that for any PL formula $\varphi$, we have
$M^{\it PL}\models_\PL\varphi$ iff $M^{\it ML}\models_\ML\varphi$.
Since $\varphi$ does not contain any variables of sort $\Pred$, by (1) in
Proposition~\ref{prop:simple} it suffices to show that for any
$\rho : \Var \ra M^{\it PL}$, it is the case that
$M^{\it PL},\rho\models_\PL\varphi$ iff $\overline{\rho}(\varphi)=\{\star\}$.
We can easily show this property by structural induction on $\varphi$.
The only relatively non-trivial case is the complement construct, which shows
why it was important for $M_\Pred^{\it ML}$ to contain precisely one element:
$M^{\it PL},\rho\models_\PL\neg\varphi$ iff
$M^{\it PL},\rho\not\models_\PL\varphi$ iff
(by the induction hypothesis)
$\overline{\rho}(\varphi)\neq\{\star\}$
iff 
$\overline{\rho}(\varphi)=\emptyset$
iff
$\overline{\rho}(\neg\varphi)=\{\star\}$.

Therefore, $M^{\it PL}\models_\PL\varphi$ iff $M^{\it ML}\models_\ML\varphi$.
Since the predicate logic model $M^{\it PL}$ was chosen arbitrarily,
it follows that $\models \varphi$ implies $\models_\PL\varphi$.
\end{proof}

\section{Matching Logic: Useful Symbols and Notations}
\label{sec:useful-symbols}

Here we show how to define, in matching logic, several mathematical
instruments of practical importance, such as equality, membership, and
functions.
We also introduce appropriate notations for them, because they will be used
frequently and tacitly in the rest of the paper.

The role of this section is twofold.
On the one hand, it illustrates the expressiveness of matching logic.
Indeed, we can define all the crucial mathematical notions above as matching
logic specifications or as syntactic sugar, without any changes to the matching
logic itself
(recall, for example, that equality cannot be defined in first-order logics;
the logic itself needs to be modified into
``first-order logic with equality''---more details in
Section~\ref{sec:equality}).
On the other hand, it shows that despite the apparently non-conventional
interpretation of patterns as sets of values in matching logic, the
conventional mathematical machinery used to reason about program states is
still available, with its expected meaning.

Unless otherwise mentioned, for the rest of this section we assume an
arbitrary but fixed matching logic specification $(S,\Sigma,F)$.

\subsection{Definedness and Totality}
\label{sec:definedness}

In classical logics, the interpretation of a formula under a given valuation
is either true or false, and there is only one syntactic category for formulae
while multiple syntactic categories for data.
In contrast, matching logic patterns are interpreted as sets of values, those
that match them, where the total set corresponds to the intution of ``true'',
or $\top$, and the empty set corresponds to ``false'', or $\bot$.
Also, each matching logic syntactic category, or sort, admits both data constructs
and its own logical connectives and quantifiers.
These leave two questions open:
\begin{enumerate}
\item How can we interpret patterns in a conventional, two-valued way?
Are the patterns matched by proper (i.e., neither total nor empty)
subsets of elements true, or false?
\item How can we lift reasoning within syntactic category $s_1$ to syntactic
category $s_2$?
\end{enumerate}
These questions are particularly important when attempting to combine matching
logic reasoning with classical reasoning or provers for existing mathematical
domains.

It turns out that the above can be methodologically achieved by adding some
symbols and defining patterns for them to the matching logic specification
$(S,\Sigma,F)$.
Specifically, for any pair of sorts of interest $s_1,s_2 \in S$, which need not 
be distinct, we can add a symbol $\lceil\_\rceil_{s_1}^{s_2}$ to
$\Sigma_{s_1,s_2}$ and an axiom pattern to $F$ that makes
$\lceil\_\rceil_{s_1}^{s_2}$ behave like a {\em definedness predicate}
for any pattern of sort $s_1$, with two-valued result of sort $s_2$:
$\lceil\varphi\rceil_{s_1}^{s_2}$ is either $\bot_{s_2}$ when $\varphi$ is
$\bot_{s_1}$, or $\top_{s_2}$ otherwise (i.e., if $\varphi$ is matched by
some values of sort $s_1$).
The pattern that we can add to $F$ in order to achieve the above is in fact
unexpectedly simple: $\lceil x\!:\! s_1 \rceil_{s_1}^{s_2}$.

Although we do not need it for many of the subsequent results, to simplify the
overall presentation of the rest of the paper, from here on we tacitly work
under the following:

\begin{asm}
\label{assumption:definedness}
For any (not necessarily distinct) sorts $s_1,s_2\in S$, assume the following:
$$
\begin{array}{ll}
\lceil\_\rceil_{s_1}^{s_2} \in \Sigma_{s_1,s_2}
& \mbox{// Definedness symbol} \\
\lceil x\!:\! s_1 \rceil_{s_1}^{s_2} \in F & \mbox{// Definedness pattern}
\end{array}
$$
We call the symbols $\lceil\_\rceil_{s_1}^{s_2}$ \textbf{definedness symbols}.
\end{asm}
We next show that the definedness symbol indeed has the expected meaning:
\begin{prop}
\label{prop:definedness}
If $\varphi \in \Pattern_{s_1}$ then $\lceil\varphi\rceil_{s_1}^{s_2}$
is a predicate (Definition~\ref{dfn:spec-predicates}).  Specifically,
if $\rho : \Var \ra M$ is any valuation
then $\overline{\rho}(\lceil\varphi\rceil_{s_1}^{s_2})$ is either $\emptyset$
(i.e., $\overline{\rho}(\bot_{s_2})$)
when $\overline{\rho}(\varphi) = \emptyset$
(i.e., $\varphi$ undefined in $\rho$), or is $M_{s_2}$
(i.e., $\overline{\rho}(\top_{s_2})$) 
when $\overline{\rho}(\varphi) \neq \emptyset$ (i.e., $\varphi$ defined).
\end{prop}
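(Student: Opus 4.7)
The plan is to unfold the semantics of $\lceil\varphi\rceil_{s_1}^{s_2}$ using Definition~\ref{def:rho-bar} and the set extension of symbol interpretations given after Definition~\ref{dfn:ML-model}, and then use the single axiom $\lceil x\!:\!s_1\rceil_{s_1}^{s_2}\in F$ to pin down the value of the interpretation of $\lceil\_\rceil_{s_1}^{s_2}$ on individual elements. Concretely, for any model $M$ with $M\models F$ and any valuation $\rho$,
$$\overline{\rho}(\lceil\varphi\rceil_{s_1}^{s_2}) \;=\; (\lceil\_\rceil_{s_1}^{s_2})_M(\overline{\rho}(\varphi)) \;=\; \textstyle\bigcup\{(\lceil\_\rceil_{s_1}^{s_2})_M(a)\mid a\in\overline{\rho}(\varphi)\},$$
so the task reduces to understanding $(\lceil\_\rceil_{s_1}^{s_2})_M(a)$ for $a\in M_{s_1}$.

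The easy half is the ``undefined'' case. If $\overline{\rho}(\varphi)=\emptyset$, then the union displayed above is vacuous, hence $\overline{\rho}(\lceil\varphi\rceil_{s_1}^{s_2})=\emptyset$, which is $\overline{\rho}(\bot_{s_2})$. This half needs nothing about the axiom.

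The substantive half is the ``defined'' case, and here the axiom does the real work. From $\lceil x\!:\!s_1\rceil_{s_1}^{s_2}\in F$ and $M\models F$, property (3) of Proposition~\ref{prop:simple} gives $(\lceil\_\rceil_{s_1}^{s_2})_M(\overline{\rho'}(x))=M_{s_2}$ for every valuation $\rho'$; ranging $\rho'(x)$ over $M_{s_1}$ (nonempty by Definition~\ref{dfn:ML-model}) yields $(\lceil\_\rceil_{s_1}^{s_2})_M(\{a\})=M_{s_2}$ for every $a\in M_{s_1}$. Now if $\overline{\rho}(\varphi)\neq\emptyset$, pick any $a\in\overline{\rho}(\varphi)$; since $(\lceil\_\rceil_{s_1}^{s_2})_M(a)=M_{s_2}\subseteq\overline{\rho}(\lceil\varphi\rceil_{s_1}^{s_2})\subseteq M_{s_2}$, equality follows. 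Combining both halves shows $\overline{\rho}(\lceil\varphi\rceil_{s_1}^{s_2})\in\{\emptyset,M_{s_2}\}$ for every such $\rho$, so $\lceil\varphi\rceil_{s_1}^{s_2}$ is a predicate in the sense of Definition~\ref{dfn:spec-predicates}.

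The only delicate point worth flagging is bridging between the one-argument form $(\lceil\_\rceil_{s_1}^{s_2})_M(a)$ appearing implicitly in Definition~\ref{def:rho-bar} and the set-lifted form $(\lceil\_\rceil_{s_1}^{s_2})_M(A)$ used after Definition~\ref{dfn:ML-model}; but these are identified by the tacit convention $\sigma_M(A)=\bigcup\{\sigma_M(a)\mid a\in A\}$, so this is a notational matter rather than a real obstacle. No structural induction on $\varphi$ is needed: the argument treats $\overline{\rho}(\varphi)$ as a black-box subset of $M_{s_1}$ and only cares whether it is empty or not.
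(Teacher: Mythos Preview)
Your proof is correct and follows essentially the same approach as the paper's: unfold $\overline{\rho}(\lceil\varphi\rceil_{s_1}^{s_2})=(\lceil\_\rceil_{s_1}^{s_2})_M(\overline{\rho}(\varphi))$, use the axiom $\lceil x\rceil$ to conclude $(\lceil\_\rceil_{s_1}^{s_2})_M(m)=M_{s_2}$ for every element $m$, and split on whether $\overline{\rho}(\varphi)$ is empty. You are simply more explicit than the paper about the set-lifting convention and about invoking Proposition~\ref{prop:simple}(3), but the argument is the same.
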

\begin{proof}
By Definition~\ref{def:rho-bar},
$\overline{\rho}(\lceil\varphi\rceil_{s_1}^{s_2}) = 
(\lceil\_\rceil_{s_1}^{s_2})_M(\overline{\rho}(\varphi))$.
The definedness pattern axiom states that $\lceil x:s_1\rceil_{s_1}^{s_2}$
is valid (Assumption~\ref{assumption:definedness}), which implies
$(\lceil\_\rceil_{s_1}^{s_2})_M(m_1)=M_{s_2}$ 
for all $m_1 \in M_{s_1}$, so if there is any $m_1\in\overline{\rho}(\varphi)$
then $(\lceil\_\rceil_{s_1}^{s_2})_M(\overline{\rho}(\varphi))$ can only be $M_{s_2}$.
On the other hand, if $\overline{\rho}(\varphi) = \emptyset$ then 
$(\lceil\_\rceil_{s_1}^{s_2})_M(\overline{\rho}(\varphi)) = \emptyset$.
\end{proof}

\begin{nota}
\label{notation:totality}
We also define \textbf{totality}, $\lfloor\_\rfloor_{s_1}^{s_2}$, as a derived
construct dual to definedness:
$$
\begin{array}{lcl}
\lfloor\varphi\rfloor_{s_1}^{s_2}
& \equiv &
\neg\lceil\neg\varphi\rceil_{s_1}^{s_2}
\end{array}
$$
\end{nota}

The totality construct states that the enclosed pattern must be matched by
all values:

\begin{prop}
\label{prop:totality}
If $\varphi \in \Pattern_{s_1}$ then $\lfloor\varphi\rfloor_{s_1}^{s_2}$
is a predicate (Definition~\ref{dfn:spec-predicates}).
Specifically, if $\rho : \Var \ra M$ is any valuation
then $\overline{\rho}(\lfloor\varphi\rfloor_{s_1}^{s_2})$ is either $\emptyset$
(i.e., $\overline{\rho}(\bot_{s_2})$)
when $\overline{\rho}(\varphi) \neq M_{s_1}$
(i.e., $\varphi$ not total in $\rho$), or is $M_{s_2}$
(i.e., $\overline{\rho}(\top_{s_2})$) 
when $\overline{\rho}(\varphi) = M_{s_1}$
(i.e., $\varphi$ total).
\end{prop}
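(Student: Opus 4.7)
The plan is to reduce the claim about totality directly to the already-established claim about definedness, via the syntactic identity $\lfloor\varphi\rfloor_{s_1}^{s_2} \equiv \neg\lceil\neg\varphi\rceil_{s_1}^{s_2}$ from Notation~\ref{notation:totality}. Since we already know from Proposition~\ref{prop:definedness} that $\lceil\psi\rceil_{s_1}^{s_2}$ is a two-valued predicate whose value in any valuation $\rho$ depends only on whether $\overline{\rho}(\psi)$ is empty or not, applying this to $\psi := \neg\varphi$ immediately yields a two-valued predicate, and negating a two-valued predicate again gives a two-valued predicate. So the predicate status in the sense of Definition~\ref{dfn:spec-predicates} is essentially free.

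Concretely, I would fix an arbitrary model $M$ with $M \models F$ (so Proposition~\ref{prop:definedness} applies) and an arbitrary valuation $\rho : \Var \ra M$. First I would compute
\[
\overline{\rho}(\lfloor\varphi\rfloor_{s_1}^{s_2})
= \overline{\rho}(\neg\lceil\neg\varphi\rceil_{s_1}^{s_2})
= M_{s_2}\ \backslash\ \overline{\rho}(\lceil\neg\varphi\rceil_{s_1}^{s_2}),
\]
using the clause for $\neg$ in Definition~\ref{def:rho-bar}. Then I would split into two cases based on whether $\overline{\rho}(\neg\varphi)$ is empty or not, using the fact that $\overline{\rho}(\neg\varphi) = M_{s_1} \setminus \overline{\rho}(\varphi)$: the former holds precisely when $\overline{\rho}(\varphi) = M_{s_1}$, and the latter precisely when $\overline{\rho}(\varphi) \neq M_{s_1}$.

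In the first case ($\overline{\rho}(\varphi) = M_{s_1}$), Proposition~\ref{prop:definedness} gives $\overline{\rho}(\lceil\neg\varphi\rceil_{s_1}^{s_2}) = \emptyset$, so its complement in $M_{s_2}$ is all of $M_{s_2}$, matching the claim. In the second case ($\overline{\rho}(\varphi) \neq M_{s_1}$), Proposition~\ref{prop:definedness} gives $\overline{\rho}(\lceil\neg\varphi\rceil_{s_1}^{s_2}) = M_{s_2}$, whose complement is $\emptyset$, again matching the claim. Since $\rho$ was arbitrary, this simultaneously establishes both the predicate property and the explicit characterization.

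There is no real obstacle here: the work is essentially bookkeeping, and the only thing one needs to be careful about is ensuring that Proposition~\ref{prop:definedness} is invoked with the pattern $\neg\varphi$ (of sort $s_1$), not with $\varphi$ itself, and that the two cases of $\overline{\rho}(\neg\varphi)$ being empty versus nonempty are correctly lined up with $\overline{\rho}(\varphi)$ being total versus non-total.
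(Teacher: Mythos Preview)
Your proposal is correct and follows essentially the same approach as the paper's own proof: unfold $\lfloor\varphi\rfloor_{s_1}^{s_2}$ as $\neg\lceil\neg\varphi\rceil_{s_1}^{s_2}$, compute the complement, and invoke Proposition~\ref{prop:definedness} on $\neg\varphi$ to reduce the two cases to whether $\overline{\rho}(\varphi)$ equals $M_{s_1}$ or not. The paper presents this as two short iff-chains rather than an explicit case split, but the content is identical.
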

\begin{proof}
$\overline{\rho}(\lfloor\varphi\rfloor_{s_1}^{s_2}) =
 \overline{\rho}(\neg\lceil\neg\varphi\rceil_{s_1}^{s_2}) = M_{s_2} \backslash
 \overline{\rho}(\lceil\neg\varphi\rceil_{s_1}^{s_2})$.
So $\overline{\rho}(\lfloor\varphi\rfloor_{s_1}^{s_2}) = \emptyset$ iff
$\overline{\rho}(\lceil\neg\varphi\rceil_{s_1}^{s_2}) = M_{s_2}$ iff
$\overline{\rho}(\neg\varphi) \neq \emptyset$ (by Proposition~\ref{prop:definedness})
iff $\overline{\rho}(\varphi) \neq M_{s_1}$.
Similarly,
$\overline{\rho}(\lfloor\varphi\rfloor_{s_1}^{s_2}) = M_{s_2}$ iff
$\overline{\rho}(\lceil\neg\varphi\rceil_{s_1}^{s_2}) = \emptyset$ iff
$\overline{\rho}(\neg\varphi) = \emptyset$ (by Proposition~\ref{prop:definedness})
iff $\overline{\rho}(\varphi) = M_{s_1}$.
\end{proof}

Totality is useful, for example, to define pattern equality
as the totality of the pattern equivalence relation; this is discussed
in depth shortly (Section~\ref{sec:equality}).
It is also useful when there is a need to restrict a pattern context,
say $\varphi$ of sort $s_2$, to only instances where pattern $\varphi_1$
of sort $s_1$ implies pattern $\varphi_2$ of sort $s_1$:
$\varphi \wedge \lfloor \varphi_1 \ra \varphi_2 \rfloor_{s_1}^{s_2}$.
Indeed,
$\overline{\rho}(\varphi \wedge \lfloor \varphi_1 \ra \varphi_2 \rfloor_{s_1}^{s_2})$
is  $\overline{\rho}(\varphi)$ iff
$\overline{\rho}(\varphi_1) \subseteq \overline{\rho}(\varphi_2)$,
and it is $\emptyset$ otherwise.
For example, $\exists x\,.\,x \wedge\lfloor \varphi_1 \ra \varphi_2 \rfloor_{s_1}^{s_2}$
defines the set of all values of $x$ with the property that if they match
$\varphi_1$ then they also match $\varphi_2$.
A concrete instance of this is the definition of ``magic wand''
in separation logic (Section~\ref{sec:sep-logic}).

The totality constructs satisfy, in a more general sorted setting, some of
the basic properties of modal logic operators, such as
\textbf{(N)}, \textbf{(K)}, \textbf{(M)} and \textbf{(5)}
\cite{Becker1930,kripke1959,Goldblatt:2003:MML:969657.969658}:
\begin{cor}
\label{cor:totality-NKM5}
If $s_1,s_2 \in S$ and $\varphi$, $\varphi_1$ and $\varphi_2$ are patterns of sort $s_1$, then:
\begin{itemize}[label=\rm\textbf{(M)}]
\item[\rm\textbf{(N)}] If $\models \varphi$ then $\models \lfloor \varphi \rfloor_{s_1}^{s_2}$
\item[\rm\textbf{(K)}] $\models \lfloor \varphi_1 \ra \varphi_2 \rfloor_{s_1}^{s_2}
\ra (\lfloor \varphi_1 \rfloor_{s_1}^{s_2} \ra \lfloor \varphi_2 \rfloor_{s_1}^{s_2})$
\item[\rm\textbf{(M)}] $\models \lfloor \varphi \rfloor_{s_1}^{s_1} \ra \varphi$
\item[\rm\textbf{(5)}] $\models \lceil \varphi \rceil_{s_1}^{s_2} \ra
\lfloor \lceil \varphi \rceil_{s_1}^{s_2} \rfloor_{s_2}^{s_2}$
\end{itemize}
\end{cor}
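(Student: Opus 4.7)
The plan is to reduce every part of the corollary to a two-case analysis on the value of a totality or definedness pattern, exploiting the fact (Propositions~\ref{prop:definedness} and~\ref{prop:totality}) that $\lceil\varphi\rceil_{s_1}^{s_2}$ and $\lfloor\varphi\rfloor_{s_1}^{s_2}$ are $M$-predicates, so under any valuation $\rho:\Var\ra M$ each evaluates to either $\emptyset$ or the full carrier $M_{s_2}$. Combined with item (7) of Proposition~\ref{prop:simple} (semantics of $\ra$ in terms of inclusion), all four claims become immediate.

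For \textbf{(N)}, if $\models\varphi$ then $\overline{\rho}(\varphi)=M_{s_1}$ for every $M$ and $\rho$; Proposition~\ref{prop:totality} then gives $\overline{\rho}(\lfloor\varphi\rfloor_{s_1}^{s_2})=M_{s_2}$, i.e.\ $\models\lfloor\varphi\rfloor_{s_1}^{s_2}$. For \textbf{(K)}, fix $M$ and $\rho$ and use item (7) of Proposition~\ref{prop:simple} twice. If $\overline{\rho}(\lfloor\varphi_1\ra\varphi_2\rfloor_{s_1}^{s_2})=\emptyset$ the inclusion is trivial; otherwise it equals $M_{s_2}$, and Proposition~\ref{prop:totality} forces $\overline{\rho}(\varphi_1\ra\varphi_2)=M_{s_1}$, i.e.\ $\overline{\rho}(\varphi_1)\subseteq\overline{\rho}(\varphi_2)$. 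Then if $\overline{\rho}(\lfloor\varphi_1\rfloor_{s_1}^{s_2})=M_{s_2}$ we have $\overline{\rho}(\varphi_1)=M_{s_1}$, hence $\overline{\rho}(\varphi_2)=M_{s_1}$, hence $\overline{\rho}(\lfloor\varphi_2\rfloor_{s_1}^{s_2})=M_{s_2}$; otherwise $\overline{\rho}(\lfloor\varphi_1\rfloor_{s_1}^{s_2})=\emptyset$ and the implication holds vacuously.

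For \textbf{(M)}, since $s_2=s_1$, we need $\overline{\rho}(\lfloor\varphi\rfloor_{s_1}^{s_1})\subseteq\overline{\rho}(\varphi)$; but when the left side is $M_{s_1}$ Proposition~\ref{prop:totality} yields $\overline{\rho}(\varphi)=M_{s_1}$, and when it is $\emptyset$ the inclusion is automatic. For \textbf{(5)}, if $\overline{\rho}(\lceil\varphi\rceil_{s_1}^{s_2})=\emptyset$ the inclusion is trivial; otherwise $\lceil\varphi\rceil_{s_1}^{s_2}$ evaluates to the full carrier $M_{s_2}$, so it is total as a pattern of sort $s_2$, and Proposition~\ref{prop:totality} applied with sorts $s_2,s_2$ gives $\overline{\rho}(\lfloor\lceil\varphi\rceil_{s_1}^{s_2}\rfloor_{s_2}^{s_2})=M_{s_2}$.

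There is no real obstacle: each item is a mechanical case split licensed by the predicate nature of the totality/definedness constructs, and the only subtlety is book-keeping the sort indices (particularly in \textbf{(M)}, which would not type-check without $s_1=s_2$, and in \textbf{(5)}, where the outer totality must use $\lfloor\_\rfloor_{s_2}^{s_2}$ because its argument already has sort $s_2$).
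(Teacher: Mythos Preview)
Your proposal is correct and follows essentially the same approach as the paper's own proof: a two-case analysis on the value of the totality/definedness predicate, using Propositions~\ref{prop:definedness} and~\ref{prop:totality} together with the inclusion characterization of $\ra$ from Proposition~\ref{prop:simple}(7). Your closing remark on the sort bookkeeping in \textbf{(M)} and \textbf{(5)} is a nice addition that the paper leaves implicit.
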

\begin{proof}
The \textbf{(N)} property is an immediate corollary of
Proposition~\ref{prop:totality}.
For the \textbf{(K)} property, let $M$ be a model and $\rho : \Var \ra M$
a valuation.
By Proposition~\ref{prop:totality} and the discussion in the paragraph
following it,
$\overline{\rho}(\lfloor \varphi_1 \ra \varphi_2 \rfloor_{s_1}^{s_2})$ is
either $\emptyset$ or $M_{s_2}$, the latter happening iff
$\overline{\rho}(\varphi_1) \subseteq \overline{\rho}(\varphi_2)$.
The first case makes our property vacuously hold.
In the second case, we have to show that
$\overline{\rho}(\lfloor \varphi_1 \rfloor_{s_1}^{s_2} \ra \lfloor \varphi_2 \rfloor_{s_1}^{s_2}) = M_{s_2}$,
that is, that
$\overline{\rho}(\lfloor \varphi_1 \rfloor_{s_1}^{s_2})
\subseteq \overline{\rho}(\lfloor \varphi_2 \rfloor_{s_1}^{s_2})$,
which follows by Proposition~\ref{prop:totality} from 
$\overline{\rho}(\varphi_1) \subseteq \overline{\rho}(\varphi_2)$.
To show \textbf{(M)}, we have to show 
$\overline{\rho}(\lfloor \varphi \rfloor_{s_1}^{s_1})
\subseteq \overline{\rho}(\varphi)$ for any $\rho : \Var \ra M$.
By Proposition~\ref{prop:totality}, we only need to consider the case where
$\overline{\rho}(\lfloor \varphi \rfloor_{s_1}^{s_1}) = M_{s_1}$; but this
can only happen when $\overline{\rho}(\varphi) = M_{s_1}$, so the property
holds.
For \textbf{(5)}, let $\rho : \Var \ra M$ be such that
$\overline{\rho}(\lceil \varphi \rceil_{s_1}^{s_2}) = M_{s_2}$
(by Proposition~\ref{prop:definedness}, the only other case is
$\overline{\rho}(\lceil \varphi \rceil_{s_1}^{s_2}) = \emptyset$, so the
property holds vacuously for that case).
Then by Proposition~\ref{prop:totality} it follows that
$\overline{\rho}(\lfloor \lceil \varphi \rceil_{s_1}^{s_2} \rfloor_{s_2}^{s_2}) = M_{s_2}$,
so
$\overline{\rho}(\lceil \varphi \rceil_{s_1}^{s_2}) \subseteq
\overline{\rho}(\lfloor \lceil \varphi \rceil_{s_1}^{s_2} \rfloor_{s_2}^{s_2})$.
\end{proof}
In Section~\ref{sec:modal-logic} we show that the modal logic S5 is
equivalent to a matching logic specification, where the definedness and totality
constructs play the role of the $\Diamond$ and $\Box$ modalities.

\begin{nota}
\label{notation:polymorphic-definedness}
Since $s_1$ and $s_2$ can usually be inferred from context,
we write $\lceil\_\rceil$ or $\lfloor\_\rfloor$ instead of
$\lceil\_\rceil_{s_1}^{s_2}$ or $\lfloor\_\rfloor_{s_1}^{s_2}$, respectively.
If the sort decorations cannot be inferred from context, then we assume
the stated property/axiom/rule holds for all such sorts.
\end{nota}

For example, the generic pattern axiom ``$\lceil x \rceil$ where $x\in\Var$''
replaces all the axioms $\lceil x\!:\!s_1 \rceil_{s_1}^{s_2}$ above for all
the definedness symbols for all the sorts $s_1$ and $s_2$.

\begin{nota}
\label{notation:predicates-bracket}
If $\varphi$ is a predicate (Definition~\ref{dfn:spec-predicates},
then we write $[\varphi]$ instead of $\lceil \varphi \rceil$
or $\lfloor \varphi \rfloor$.
This notation is justified, because if $\varphi$ is a predicate then
$\models \lceil \varphi \rceil \lra \lfloor \varphi \rfloor$.
\end{nota}

As Proposition~\ref{prop:constraint-propagation} will shortly show,
if $\varphi$ is a predicate, then by ``wrapping'' it with square brackets,
as $[\varphi]$, we can propagate it through the configuration symbols
and conjunctive constraints to wherever it is needed,
to facilitate local reasoning.

\subsection{Equality}
\label{sec:equality}

Here we show that, unlike in predicate logic or FOL, equality can be
defined in matching logic.
Before that, let us recall why
{\em equality cannot be defined in FOL}.
We only give a short intuitive explanation here;
the interested reader is referred to authoritative FOL textbooks for full
details, e.g., \cite{nla.cat-vn2062435,Harrison:2009:HPL:1540610}.
Suppose that equality were definable in FOL, that is, that there existed some
FOL specification in which a formula $\textit{Eq}(x,y)$ could only be
interpreted as equality in models.
Then we could use such a formula to state that all models have singleton
carriers: $\forall x.\forall y\,.\,\textit{Eq}(x,y)$.
However, FOL is not expressive enough to define models of fixed carrier size.
In FOL, if a specification admits a model of non-empty carrier $A$ then it
also admits a model whose carrier is $A\cup\{b\}$, where $b$ is some element
that is not already in $A$.
Indeed, pick some arbitrary element $a\in A$ and extend all the operations
and predicates in the model to behave on $b$ exactly the same as on $a$.
Since the operation and predicate interpretations cannot distinguish between
$a$ and $b$, the model of carrier $A$ and the model of carrier $A \cup\{b\}$
satisfy exactly the same formulae.
In particular, no FOL specification can admit only models of singleton carrier.
One can define equivalence and congruence relations, but not actual equality.
Since precise equality is sometimes desirable, extensions of FOL
{\em with equality} have been
proposed~\cite{nla.cat-vn2062435,Harrison:2009:HPL:1540610},
where a special binary predicate ``$=$'' is added to the logic together with
axioms like equality introduction ``$t = t$'' and elimination
``$(t_1=t_2) \mathrel\wedge \varphi[t_1/x] \ra \varphi[t_2/x]$'',
and interpreted as the equality/identity relation in models.

Let us first discuss why we cannot use $\lra$ as equality in matching logic.
Indeed, since ${M} \models \varphi_1 \lra \varphi_2$ iff 
$\overline{\rho}(\varphi_1) = \overline{\rho}(\varphi_2)$
for all $\rho:\Var\rightarrow M$, one may be tempted to use $\lra$ as
equality.
E.g., given a signature with one sort and one unary
symbol $f$, one may think that the pattern $\exists y \,.\,f(x) \lra y$
defines precisely the models where $f$ is a function
(because a function evaluates to only one value for any given argument,
and the interpretation of variable pattern $y$ has precisely one value).
Unfortunately, that is not true.
Consider model  $M$ with $M=\{1,2\}$ and 
$f_M$ the non-functional relation $f_M(1)=\{1,2\}$, $f_M(2)=\emptyset$.
Let $\rho:\Var \ra M$ be any $M$-valuation;
recall (Definition~\ref{def:rho-bar}) that $\rho$'s extension
$\overline{\rho}$ to patterns interprets ``$\exists$'' as union
and ``$\lra$'' as the complement of the symmetric difference.
If $\rho(x)=1$ then
$\overline{\rho}(\exists y \,.\,f(x) \lra y) =
(M \backslash (\{1,2\} \Delta \{1\})) \cup
(M \backslash (\{1,2\} \Delta \{2\})) = \{1,2\} = M
$.
If $\rho(x)=2$ then
$\overline{\rho}(\exists y \,.\,f(x) \lra y) =
(M \backslash (\emptyset\Delta \{1\})) \cup
(M \backslash (\emptyset\Delta \{2\})) = \{1,2\} = M
$.
Hence, ${M} \models \exists y \,.\,f(x) \lra y$, yet $f_M$ is not a function,
so $\lra$ fails to capture the pattern equality.

The problem above is that the interpretation of $\varphi_1 \lra \varphi_2$,
depicted in Figure~\ref{fig:diagram}, is not two-valued ($\top$ or $\bot$),
as we are used to think in classical logics.
Specifically, $\overline{\rho}(\varphi_1) \neq \overline{\rho}(\varphi_2)$
does not suffice for $\overline{\rho}(\varphi_1 \lra \varphi_2) = \emptyset$
to hold.
Indeed, $\overline{\rho}(\varphi_1 \lra \varphi_2) = M \ \backslash
 \ (\overline{\rho}(\varphi_1) \ \Delta \ \overline{\rho}(\varphi_2))$
and there is nothing to prevent, e.g.,
$\overline{\rho}(\varphi_1) \cap \overline{\rho}(\varphi_2) \neq \emptyset$,
in which case $\overline{\rho}(\varphi_1) \ \Delta \ \overline{\rho}(\varphi_2) \neq M$.
What we would like is a proper equality over patterns, $\varphi_1 = \varphi_2$,
which behaves as a two-valued predicate:
$\overline{\rho}(\varphi_1 = \varphi_2) = \emptyset$
when $\overline{\rho}(\varphi_1) \neq \overline{\rho}(\varphi_2)$, and
$\overline{\rho}(\varphi_1 = \varphi_2) = M$
when $\overline{\rho}(\varphi_1) = \overline{\rho}(\varphi_2)$.
Moreover, we want equalities to be used with patterns of any sort $s_1$
and in contexts of any sort $s_2$, similarly to the definedness and totality
constructs in Section~\ref{sec:definedness}.

Equality can be defined quite compactly using the pattern totality and
equivalence constructs, which were themselves defined using the assumed
definedness symbols (Assumption~\ref{assumption:definedness},
Section~\ref{sec:definedness}) and, respectively, the
core $\wedge$ and $\neg$ constructs (Section~\ref{sec:matching-logic}).
Specifically,

\begin{nota}
\label{notation:equality}
For each pair of sorts $s_1$ (for the compared patterns) and
$s_2$ (for the context in which the equality is used), we define
$\_=_{s_1}^{s_2}\_$ as the following derived construct:
$$
\begin{array}{@{}rcll}
\varphi =_{s_1}^{s_2} \varphi' & \ \ \ \ \ \equiv \ \ \ \ \ &
\lfloor\varphi \lra \varphi'\rfloor_{s_1}^{s_2}
& \ \ \ \ \ \mbox{where } \varphi,\varphi' \in \Pattern_{s_1}
\end{array}
$$
\end{nota}

Intuitively, $\varphi \lra \varphi'$ matches the grey area in
the diagram depicting pattern equivalence in Figure~\ref{fig:diagram}
(complement of the symmetric difference),
so $\lfloor\varphi \lra \varphi'\rfloor_{s_1}^{s_2}$ is interpreted as
$M_{s_2}$ iff the white area is empty, iff the two patterns match exactly
the same elements.
Formally,

\begin{prop}
\label{prop:equality}
Let $\varphi,\varphi' \in \Pattern_{s_1}$.  Then:
\begin{enumerate}
\item $\overline{\rho}(\varphi =_{s_1}^{s_2} \varphi') = \emptyset$
iff $\overline{\rho}(\varphi) \neq \overline{\rho}(\varphi')$, for any $\rho : \Var \ra M$
\item $\overline{\rho}(\varphi =_{s_1}^{s_2} \varphi') = M_{s_2}$
iff $\overline{\rho}(\varphi) = \overline{\rho}(\varphi')$, for any $\rho : \Var \ra M$
\item ${M} \models \varphi =_{s_1}^{s_2} \varphi'$ iff
${M} \models \varphi \lra \varphi'$, for any model $M$
\item $\models \varphi =_{s_1}^{s_2} \varphi'$ iff $\models \varphi \lra \varphi'$
\end{enumerate}
\end{prop}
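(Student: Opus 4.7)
The plan is to chase the definition: $\varphi =_{s_1}^{s_2} \varphi'$ abbreviates $\lfloor \varphi \lra \varphi' \rfloor_{s_1}^{s_2}$, so everything reduces to combining the semantics of $\lra$ (from Proposition~\ref{prop:simple}(8)) with the two-valued behavior of totality established in Proposition~\ref{prop:totality}. Each of the four statements is a direct unfolding once these two facts are in hand, so there is no real obstacle here—the only care needed is to keep track of which sort's carrier ($M_{s_1}$ or $M_{s_2}$) plays which role.

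First I would establish (1). Fix a valuation $\rho : \Var \ra M$. By Proposition~\ref{prop:totality}, $\overline{\rho}(\lfloor \varphi \lra \varphi' \rfloor_{s_1}^{s_2}) = \emptyset$ iff $\overline{\rho}(\varphi \lra \varphi') \neq M_{s_1}$. By the interpretation of $\lra$ (Proposition~\ref{prop:simple}(8), specialized to the valuation $\rho$), $\overline{\rho}(\varphi \lra \varphi') = M_{s_1}$ iff $\overline{\rho}(\varphi) = \overline{\rho}(\varphi')$. Chaining the two equivalences yields (1). Property (2) follows by the dual case of Proposition~\ref{prop:totality}: $\overline{\rho}(\lfloor \varphi \lra \varphi' \rfloor_{s_1}^{s_2}) = M_{s_2}$ iff $\overline{\rho}(\varphi \lra \varphi') = M_{s_1}$, which by the same application of Proposition~\ref{prop:simple}(8) is equivalent to $\overline{\rho}(\varphi) = \overline{\rho}(\varphi')$.

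For (3), unfold satisfaction: $M \models \varphi =_{s_1}^{s_2} \varphi'$ means $\overline{\rho}(\varphi =_{s_1}^{s_2} \varphi') = M_{s_2}$ for every $\rho : \Var \ra M$. By (2), this holds iff $\overline{\rho}(\varphi) = \overline{\rho}(\varphi')$ for every $\rho$, which by Proposition~\ref{prop:simple}(8) is exactly $M \models \varphi \lra \varphi'$. Finally, (4) is obtained by quantifying (3) over all models $M$ and invoking the definition of validity (Definition~\ref{dfn:validity}).

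In short, the proposition is a bookkeeping corollary of the definedness/totality apparatus developed in Section~\ref{sec:definedness} together with the basic set-theoretic semantics of $\lra$, and I would present it as four short chains of equivalences, each anchored by an explicit citation to Proposition~\ref{prop:totality} and Proposition~\ref{prop:simple}(8).
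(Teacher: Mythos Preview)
Your proof is correct and follows essentially the same route as the paper: both arguments unfold $\varphi =_{s_1}^{s_2} \varphi'$ to $\lfloor \varphi \lra \varphi'\rfloor_{s_1}^{s_2}$, use the two-valuedness of totality/definedness, and reduce to the set-theoretic semantics of $\lra$; the only cosmetic difference is that the paper unfolds totality all the way down to the definedness symbol and computes with the symmetric difference explicitly, whereas you invoke Proposition~\ref{prop:totality} as a black box. One minor citation nit: the pointwise fact you need, namely $\overline{\rho}(\varphi \lra \varphi') = M_{s_1}$ iff $\overline{\rho}(\varphi) = \overline{\rho}(\varphi')$, is the computation recorded immediately after Definition~\ref{def:rho-bar} rather than Proposition~\ref{prop:simple}(8) itself (which is the ``for all $\rho$'' version), so you may want to adjust that reference.
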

\begin{proof}
Recall that 
$\varphi =_{s_1}^{s_2} \varphi'$ stands for
$\lfloor\varphi \lra \varphi'\rfloor_{s_1}^{s_2}$, which stands for
$\neg\lceil\neg(\varphi\lra\varphi')\rceil_{s_1}^{s_2}$.
\begin{enumerate}
\item
Therefore, $\overline{\rho}(\varphi =_{s_1}^{s_2} \varphi')$ is equal to
$M_{s_2}\ \backslash\ (\lceil\_\rceil_{s_1}^{s_2})_M(M_{s_1}\ \backslash\ 
(M_{s_1} \ \backslash\ (\overline{\rho}(\varphi_1) \ \Delta\ \overline{\rho}(\varphi_2))))$,
which is further equal to
$M_{s_2}\ \backslash\ (\lceil\_\rceil_{s_1}^{s_2})_M(
\overline{\rho}(\varphi_1) \ \Delta\ \overline{\rho}(\varphi_2))$.
So $\overline{\rho}(\varphi =_{s_1}^{s_2} \varphi') = \emptyset$
iff $(\lceil\_\rceil_{s_1}^{s_2})_M(
\overline{\rho}(\varphi_1) \ \Delta\ \overline{\rho}(\varphi_2))=M_{s_2}$,
iff $\overline{\rho}(\varphi_1) \ \Delta\ \overline{\rho}(\varphi_2)\neq\emptyset$,
iff $\overline{\rho}(\varphi) \neq \overline{\rho}(\varphi')$.

\item
Similarly to the above, we have $\overline{\rho}(\varphi =_{s_1}^{s_2} \varphi') = M_{s_2}$
iff $(\lceil\_\rceil_{s_1}^{s_2})_M(
\overline{\rho}(\varphi_1) \ \Delta\ \overline{\rho}(\varphi_2))=\emptyset$,
iff $\overline{\rho}(\varphi_1) \ \Delta\ \overline{\rho}(\varphi_2)=\emptyset$,
iff $\overline{\rho}(\varphi) = \overline{\rho}(\varphi')$.

\item
$M \models \varphi =_{s_1}^{s_2} \varphi'$
iff $\overline{\rho}(\varphi =_{s_1}^{s_2} \varphi')=M_{s_2}$ for any
$\rho:\Var\ra M$,
iff $\overline{\rho}(\varphi) = \overline{\rho}(\varphi')$
for any $\rho:\Var\ra M$,
iff (by Proposition~\ref{prop:simple})
$M \models \varphi \lra \varphi'$.

\item
$\models \varphi =_{s_1}^{s_2} \varphi'$ iff
$M \models \varphi =_{s_1}^{s_2} \varphi'$ for any model $M$,
iff (by the above)
$M \models \varphi \lra \varphi'$ for any model $M$,
iff $\models \varphi \lra \varphi'$.
\end{enumerate}
Therefore, pattern equality satisfies all these properties.
\end{proof}

Note that (4) in the proposition above is not in conflict with the
discussion at the beginning of this section concluding that we cannot
use equivalence instead of equality.
The example there illustrated an equivalence which was nested under
a quantifier ($\exists y \,.\,f(x) \lra y$), while (4) above says that
equivalence and equality are interchangeable at the pattern top.

Like for definedness and totality (Section~\ref{sec:definedness}), where we
decided to drop the sorts $s_1$ and $s_2$ from $\lceil\_\rceil_{s_1}^{s_2}$
and instead write $\lceil\_\rceil$ because the sort of the enclosed pattern
and that of the context dictate $s_1$ and $s_2$, we also take the freedom
to drop the sort embellishments of $=_{s_1}^{s_2}$ and instead
write just $=$.
Like for definedness and totality, $s_1$ and $s_2$ can typically be inferred
from context, and, if ambiguity arises, then we assume all instances.
For example, ``$x = x$'' means
``$x =_{s_1}^{s_2} x$'' for any $s_1,s_2\in S$ and $x\in\Var_{s_1}$.
Note that the equality symbol in algebraic specifications and in FOL
(with equality) is also implicitly indexed by the sort of the two terms,
although that sort is typically not mentioned as subscript; but one needs
to exercise more care in matching logic, because equality patterns can be
nested now.
For example, the pattern in Section~\ref{sec:map-patterns} defining
linked list data-structures within maps,
$$
\llist(x) = (x = 0 \wedge \SLemp \vee \exists z \,.\,x \mapsto z \SLstar \llist(z))
$$
is a sugared variant of the explicit patterns (one for each
``equality context'' sort $s$),
$$
\llist(x) =_{\Map}^{s} (x =_{\Nat}^{\Map} 0 \wedge \SLemp \vee \exists z \,.\,x \mapsto z \SLstar \llist(z))
$$
To minimize the number of disambiguation parentheses, we assumed that
equality ($=$) binds tighter than conjunction ($\wedge$).
We also assume that negation ($\neg$) binds tighter than equality ($=$).
To avoid confusion, we may use disambiguation parentheses even if not
strictly needed.

Despite the fact that patterns evaluate to any set of values
and thus are more general than both terms (which evaluate to only one
value) and predicates (which evaluate to one of two values), and despite
the fact that Boolean combinations of patterns and quantification yield other
patterns which can be used under any symbol in $\Sigma$, as we saw in
Proposition~\ref{prop:PL-validity}, the proof rule/axiom
schemas of (pure) predicate logic continue to be sound for matching logic.
Now that we have equality, a natural question is whether the equality
proof rule/axiom schemas of FOL with
equality~\cite{nla.cat-vn2062435,Harrison:2009:HPL:1540610} are also sound.
For example, in FOL with equality, ``equality elimination'' states
that terms can be substituted with equal terms in any context.
A similar result holds for matching logic, where terms are replaced with
arbitrary patterns:\newpage
\begin{prop}
\label{prop:FOL-equality}
The following hold:
\begin{enumerate}
\item Equality introduction: $\models \varphi = \varphi$
\item Equality elimination: $\models (\varphi_1=\varphi_2) \mathrel\wedge \varphi[\varphi_1/x] \ra \varphi[\varphi_2/x]$
\end{enumerate}
\end{prop}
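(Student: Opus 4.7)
The plan is to dispatch (1) in one line and then reduce (2) to a standard substitution-invariance argument. For equality introduction, note that $\varphi = \varphi$ unfolds to $\lfloor\varphi \lra \varphi\rfloor$. Since $\overline{\rho}(\varphi) = \overline{\rho}(\varphi)$ holds trivially for every valuation $\rho : \Var \ra M$, Proposition~\ref{prop:equality}(2) immediately gives $\overline{\rho}(\varphi = \varphi) = M_{s_2}$, which is exactly $\models \varphi = \varphi$.

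For equality elimination, I would fix a model $M$ and a valuation $\rho : \Var \ra M$, and split on whether $\overline{\rho}(\varphi_1) = \overline{\rho}(\varphi_2)$ holds. If it does not, Proposition~\ref{prop:equality}(1) yields $\overline{\rho}(\varphi_1 = \varphi_2) = \emptyset$, so the left-hand conjunct of the implication evaluates to $\emptyset$ and the implication is total by (7) of Proposition~\ref{prop:simple}. Otherwise Proposition~\ref{prop:equality}(2) gives $\overline{\rho}(\varphi_1 = \varphi_2) = M_s$ (for the appropriate context sort $s$), and it suffices to prove $\overline{\rho}(\varphi[\varphi_1/x]) = \overline{\rho}(\varphi[\varphi_2/x])$.

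The core step will be a substitution-invariance lemma proved by structural induction on $\varphi$, stated with a strengthened hypothesis: if $\overline{\rho'}(\varphi_1) = \overline{\rho'}(\varphi_2)$ for every valuation $\rho'$ agreeing with $\rho$ on $\FV(\varphi_1) \cup \FV(\varphi_2)$, then $\overline{\rho'}(\varphi[\varphi_1/x]) = \overline{\rho'}(\varphi[\varphi_2/x])$ for every such $\rho'$. The variable base case $y = x$ applies the hypothesis directly; $y \neq x$ is trivial since both sides are $\{\rho'(y)\}$; and the $\sigma$, $\neg$, and $\andx$ cases propagate the set equality through the corresponding clauses of Definition~\ref{def:rho-bar}. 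The only delicate case is $\exists y . \varphi'$: capture-free substitution ensures $y \not\in \FV(\varphi_1) \cup \FV(\varphi_2)$, so (1) of Proposition~\ref{prop:simple} guarantees that for any $a\in M$ the extension $\rho'[a/y]$ still satisfies $\overline{\rho'[a/y]}(\varphi_1) = \overline{\rho'[a/y]}(\varphi_2)$, letting the induction hypothesis apply uniformly in $a$ and preserving the union in the semantics of $\exists$.

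The main obstacle is precisely this binder case: the induction hypothesis must be phrased so that it is closed under extending the valuation by $a/y$, rather than merely asserting the pattern-equality at the original $\rho$. Once the hypothesis is set up this way and the capture-avoidance condition is invoked via Proposition~\ref{prop:simple}(1), the rest of the cases are routine set-theoretic manipulations of the clauses in Definition~\ref{def:rho-bar}.
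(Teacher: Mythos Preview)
Your proposal is correct and follows the same overall strategy as the paper: a trivial dispatch of (1), and for (2) the same case split on whether $\overline{\rho}(\varphi_1)=\overline{\rho}(\varphi_2)$, reducing the nontrivial case to $\overline{\rho}(\varphi[\varphi_1/x])=\overline{\rho}(\varphi[\varphi_2/x])$.

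The only difference is granularity. For (1) the paper appeals to Proposition~\ref{prop:equality}(4) together with the propositional tautology $\varphi\lra\varphi$ (Proposition~\ref{prop:propositional}), whereas you go directly through Proposition~\ref{prop:equality}(2); these are interchangeable. For (2) the paper compresses your substitution-invariance lemma into a single sentence, asserting that $\overline{\rho}(\varphi[\varphi_i/x])$ ``is a function of $\overline{\rho}(\varphi_i)$'' and hence the two sides coincide. Your explicit induction with the strengthened hypothesis quantifying over all $\rho'$ that agree with $\rho$ on $\FV(\varphi_1)\cup\FV(\varphi_2)$ is exactly what is needed to justify that sentence rigorously, in particular to pass under $\exists y$ via capture-avoidance and Proposition~\ref{prop:simple}(1). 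So your write-up is a faithful, more detailed unfolding of the paper's argument rather than a different route.
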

\begin{proof}
(1) follows by (4) in Proposition~\ref{prop:equality} and by Proposition~\ref{prop:propositional}.
For (2), let $M$ be some model and $\rho:\Var\ra M$.
By Proposition~\ref{prop:simple}, it suffices to show
$\overline{\rho}(\varphi_1=\varphi_2) \cap
\overline{\rho}(\varphi[\varphi_1/x]) \subseteq
\overline{\rho}(\varphi[\varphi_2/x])$.
If $\overline{\rho}(\varphi_1) \neq \overline{\rho}(\varphi_2)$ then
$\overline{\rho}(\varphi_1=\varphi_2) =\emptyset$ by
Proposition~\ref{prop:equality}, so the inclusion holds.
Now suppose that 
$\overline{\rho}(\varphi_1) = \overline{\rho}(\varphi_2)$, which
implies $\overline{\rho}(\varphi_1=\varphi_2) = M$ by
Proposition~\ref{prop:equality}, so it suffices to show
$\overline{\rho}(\varphi[\varphi_1/x]) \subseteq
\overline{\rho}(\varphi[\varphi_2/x])$.
The stronger result 
$\overline{\rho}(\varphi[\varphi_1/x]) = 
\overline{\rho}(\varphi[\varphi_2/x])$ in fact holds, because
the first element is a function of $\overline{\rho}(\varphi_1)$,
the second element is the same function but of $\overline{\rho}(\varphi_2)$,
and $\overline{\rho}(\varphi_1)=\overline{\rho}(\varphi_2)$.
\end{proof}

\begin{nota}
\label{notation:negation}
From here on in the rest of the paper we write
$\varphi \neq \varphi'$ instead of $\neg(\varphi = \varphi')$.
\end{nota}

One may wonder what really made it possible to define equality in matching
logic, which is not possible in predicate or first-order logic.
Let us consider the simplest instance of equality, $x=y$ between two
variables, which is sugar for $\neg\lceil \neg(x \lra y) \rceil$.
After all, definedness-like predicates can also be defined in predicate
logic; following the translation in Section~\ref{sec:PL-reduction}, for
example, the unary matching logic symbols $\lceil\_\rceil$ are
associated binary predicates $\pi_{\lceil\_\rceil}$, and the
definedness pattern axioms $\lceil x \rceil$ are translated into
formula axioms $\pi_{\lceil\_\rceil}(x,r)$.
So the definedness symbol is not the key.
The key is the capability to allow logical connectives between ``terms'',
which is not allowed in first-order logic.
For example, $x \lra y$ already tells us whether $x$ and $y$ are
interpreted as the same value or not: for any valuation $\rho$,
it is indeed the case that $\overline{\rho}(x \lra y)$ is the
total set iff $\rho(x)=\rho(y)$ (see Proposition~\ref{prop:simple}).

Equality elimination (Proposition~\ref{prop:FOL-equality}) allows us to
replace patterns by equal patterns in any context.
Further, Proposition~\ref{prop:equality} allows us to replace any
top-level $\lra$ with $=$.
In particular, the equivalences in
Proposition~\ref{prop:symbol-distributivity} become
$\models C_{\sigma,i}[\varphi_i \orx \varphi'_i] =
C_{\sigma,i}[\varphi_i] \orx C_{\sigma,i}[\varphi'_i]$
and 
$\models C_{\sigma,i}[\exists x\,.\,\varphi_i] =
\exists x\,.\,C_{\sigma,i}[\varphi_i]$, respectively,
meaning that we can propagate disjunction and existential quantification
through symbols in any context, not only at the top level.
Because of the stronger nature of equality, from here on we state
properties in terms of equality instead of $\lra$ whenever possible.
Below is an important such property:

\begin{prop}
\label{prop:constraint-propagation}
\textbf{(Constraint propagation)}
Assume the same hypothesis as in
Proposition~\ref{prop:symbol-distributivity}:
$\sigma\in\Sigma_{s_1\ldots s_n,s}$ and
$\varphi_i\in \Pattern_{s_i}$ for all $1\leq i \leq n$,
a particular $1\leq i \leq n$, and let
$C_{\sigma,i}[\square]$ be the context
$\sigma(\varphi_1,\ldots,\varphi_{i-1},\square,\varphi_{i+1},\ldots\varphi_n)$.
Then for any pattern $\varphi$:
\begin{enumerate}
\item
$
\models C_{\sigma,i}[\varphi_i \andx \lceil\varphi\rceil]
 = C_{\sigma,i}[\varphi_i] \andx \lceil\varphi\rceil
$
\item
$
\models C_{\sigma,i}[\varphi_i \andx \lfloor\varphi\rfloor]
 = C_{\sigma,i}[\varphi_i] \andx \lfloor\varphi\rfloor
$
\item
$
\models C_{\sigma,i}[\varphi_i \andx [\varphi]]
 = C_{\sigma,i}[\varphi_i] \andx [\varphi]
$ if $\varphi$ is a predicate
(Definition~\ref{dfn:spec-predicates} and
Notation~\ref{notation:predicates-bracket}).
\end{enumerate}
\end{prop}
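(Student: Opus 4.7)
The plan is to prove all three items by a single underlying observation: the patterns $\lceil\varphi\rceil$, $\lfloor\varphi\rfloor$, and, under the hypothesis of (3), $[\varphi]$, are all predicates (by Propositions~\ref{prop:definedness} and \ref{prop:totality} and Notation~\ref{notation:predicates-bracket}). That is, under any valuation they evaluate to either $\emptyset$ or the full carrier of the relevant sort, and moreover their ``truth value'' (empty vs.\ full) depends only on $\overline{\rho}(\varphi)$, not on the ambient sort. This two-valued behavior is what makes the claimed pattern equality hold, and in view of Proposition~\ref{prop:equality}(4) it suffices to establish the equivalences $C_{\sigma,i}[\varphi_i \andx \psi] \lra C_{\sigma,i}[\varphi_i] \andx \psi$ pointwise in every model, i.e., to show that both sides have the same $\overline{\rho}$-interpretation for every valuation $\rho$.

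Fix a model $M$, a valuation $\rho:\Var\ra M$, and let $\psi$ denote one of $\lceil\varphi\rceil$, $\lfloor\varphi\rfloor$, or $[\varphi]$. I would then do a case split on whether $\psi$ ``holds'' at $\rho$. In the negative case, $\overline{\rho}(\psi)=\emptyset$ at both sorts in which $\psi$ appears (the sort of $\varphi_i$ on the left and the sort $s$ of $C_{\sigma,i}[\varphi_i]$ on the right), so on the left $\overline{\rho}(\varphi_i \andx \psi)=\emptyset$, and then the set-theoretic definition of the extension of $\sigma_M$ (as $\bigcup\{\sigma_M(a_1,\ldots,a_n)\mid a_j\in A_j\}$) immediately forces $\overline{\rho}(C_{\sigma,i}[\varphi_i \andx \psi])=\emptyset$, while on the right $\overline{\rho}(C_{\sigma,i}[\varphi_i])\cap\emptyset=\emptyset$. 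In the positive case, $\overline{\rho}(\psi)$ equals the full carrier of its sort, so on the left $\overline{\rho}(\varphi_i \andx \psi)=\overline{\rho}(\varphi_i)$, whence $\overline{\rho}(C_{\sigma,i}[\varphi_i \andx \psi])=\overline{\rho}(C_{\sigma,i}[\varphi_i])$, and on the right intersecting with the full sort carrier also gives $\overline{\rho}(C_{\sigma,i}[\varphi_i])$.

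With this pointwise equality established for each $\rho$, the equivalence $C_{\sigma,i}[\varphi_i \andx \psi]\lra C_{\sigma,i}[\varphi_i]\andx\psi$ is valid by Proposition~\ref{prop:simple}(8), and Proposition~\ref{prop:equality}(4) upgrades this to validity of the corresponding equality, giving (1) and (2) directly. Part (3) then follows immediately by applying (1) or (2) to $[\varphi]$, which is justified because Notation~\ref{notation:predicates-bracket} tells us that for a predicate $\varphi$ the bracket notation stands for $\lceil\varphi\rceil$ (equivalently, $\lfloor\varphi\rfloor$).

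I do not foresee a substantial obstacle here; the only minor point to be careful about is the implicit sort bookkeeping, since the $\lceil\varphi\rceil$ appearing inside the context has target sort $s_i$ (matching $\varphi_i$) while the $\lceil\varphi\rceil$ appearing at the top level has target sort $s$ (matching $C_{\sigma,i}[\varphi_i]$). Both instances are predicates and both agree on their truth value because, by Proposition~\ref{prop:definedness}, that value depends solely on whether $\overline{\rho}(\varphi)$ is empty, not on the ambient target sort. This is what makes the case split above legitimate despite the notation-level sort polymorphism allowed by Notation~\ref{notation:polymorphic-definedness}.
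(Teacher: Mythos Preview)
Your proposal is correct and follows essentially the same route as the paper: fix a model and valuation, use the fact that $\lceil\varphi\rceil$ (resp.\ $\lfloor\varphi\rfloor$, $[\varphi]$) is a predicate whose empty-or-total value is independent of the ambient sort, and do a two-case split showing both sides coincide. The paper only spells out case (1) and declares (2) and (3) similar; your unified treatment via a generic predicate $\psi$ and your explicit remark on the sort polymorphism are welcome but do not constitute a different argument.
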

\begin{proof}
We only show (1), because (2) and (3) are similar.
Let $\rho:\Var \ra M$ and let
$\overline{\rho}(C_{\sigma,i}):M_{s_i} \ra {\cal P}(M_s)$ be defined as
$\overline{\rho}(C_{\sigma,i})(a)
= \sigma_M(\overline{\rho}(\varphi_1),\ldots,\overline{\rho}(\varphi_{i-1}),a,
\overline{\rho}(\varphi_{i+1}),\ldots,\overline{\rho}(\varphi_n))$.
Then $\overline{\rho}(C_{\sigma,i}[\varphi_i \andx \lceil\varphi\rceil])
= \overline{\rho}(C_{\sigma,i})(\overline{\rho}(\varphi_i) \cap \overline{\rho}(\lceil\varphi\rceil))$ and
$\overline{\rho}(C_{\sigma,i}[\varphi_i] \andx \lceil\varphi\rceil) =
\overline{\rho}(C_{\sigma,i})(\overline{\rho}(\varphi_i)) \cap \overline{\rho}(\lceil\varphi\rceil])$.
By Proposition~\ref{prop:definedness}, $\overline{\rho}(\lceil\varphi\rceil])$ is either
the empty set or the total set, regardless of the result sort context ($s_i$ vs. $s$).
If the empty set, then
$\overline{\rho}(C_{\sigma,i}[\varphi_i \andx \lceil\varphi\rceil]) =
\overline{\rho}(C_{\sigma,i})(\overline{\rho}(\varphi_i) \cap \emptyset) = \emptyset$
and
$\overline{\rho}(C_{\sigma,i}[\varphi_i] \andx \lceil\varphi\rceil) =
\overline{\rho}(C_{\sigma,i})(\overline{\rho}(\varphi_i)) \cap \emptyset = \emptyset$.
If the total set, then
$\overline{\rho}(C_{\sigma,i}[\varphi_i \andx \lceil\varphi\rceil]) =
\overline{\rho}(C_{\sigma,i})(\overline{\rho}(\varphi_i) \cap M_{s_i}) = 
\overline{\rho}(C_{\sigma,i})(\overline{\rho}(\varphi_i))$
and
$\overline{\rho}(C_{\sigma,i}[\varphi_i] \andx \lceil\varphi\rceil) =
\overline{\rho}(C_{\sigma,i})(\overline{\rho}(\varphi_i)) \cap M_s = 
\overline{\rho}(C_{\sigma,i})(\overline{\rho}(\varphi_i))$.
Therefore,
$\overline{\rho}(C_{\sigma,i}[\varphi_i \andx \lceil\varphi\rceil]) =
\overline{\rho}(C_{\sigma,i}[\varphi_i] \andx \lceil\varphi\rceil)$.
\end{proof}

Constraint propagation allows us to propagate, through symbols, any logical
constraints that appear in a conjunctive context.
Indeed, as seen in the rest of this section
(in particular in Section~\ref{sec:builtins}) and in Section~\ref{sec:FOL},
domain constraints can be expressed as equalities or as FOL predicates,
and both of these are instances of matching logic predicates.
Recall from Definition~\ref{dfn:spec-predicates} that (matching logic)
predicates are patterns which interpret to either the empty or the total
set of their carrier.
The definedness symbol applied to a predicate, the square brackets in
$[\varphi]$ (Notation~\ref{notation:predicates-bracket}), does not change
the semantics of the predicate, but thanks to its polymorphic
nature (Notation~\ref{notation:polymorphic-definedness}) we can
syntactically move $\varphi$ from the sort context of the argument pattern
($s_i$) of $\sigma$ to the sort context of $\sigma$'s result ($s$).

Proposition~\ref{prop:constraint-propagation} (constraint propagation)
and Proposition~\ref{prop:structural-framing} (structural framing) are
particularly useful to localize proof efforts, as illustrated in the example
in Section~\ref{sec:example}.

\subsection{Membership}
\label{sec:membership}

Since in matching logic a pattern $\varphi$ evaluates to a set of values
while a variable (pattern) $x$ evaluates to just a (set containing only one)
value, the membership question, ``does $x \in \varphi$ hold?'', is natural.
As seen later in Section~\ref{sec:deduction}, membership in fact plays a
key role in proving the completeness of matching logic reasoning.
Fortunately, membership can be quite easily defined as
a derived construct in matching logic, making use of the definedness
symbol (Section~\ref{sec:definedness}), in a similar way to equality
(Section~\ref{sec:equality}):
\begin{nota}
\label{notation:membership}
If $x\in\Var_{s_1}$, $\varphi\in\Pattern_{s_1}$ and $s_2\in S$, then
we introduce the notation
$$
\begin{array}{@{}rcll}
x \in_{s_1}^{s_2} \varphi & \ \ \ \ \ \equiv \ \ \ \ \ &
\lceil x \wedge \varphi\rceil_{s_1}^{s_2}
& 
\end{array}
$$
Like for definedness, totality and equality, there is a membership construct
for each pair of sorts $s_1$ (for variable and pattern) and $s_2$
(for context); we take the freedom to omit them.
\end{nota}

\begin{prop}
\label{prop:membership}
With the above, the following hold:
\begin{enumerate}
\item $\overline{\rho}(x \in_{s_1}^{s_2} \varphi) = \emptyset$
iff $\rho(x) \not\in \overline{\rho}(\varphi)$, for any $\rho:\Var\ra M$
\item $\overline{\rho}(x \in_{s_1}^{s_2} \varphi) = M_{s_2}$
iff $\rho(x) \in \overline{\rho}(\varphi)$, for any $\rho:\Var\ra M$
\item $\models (x\in_{s_1}^{s_2} \varphi) =_{s_2}^{s_3} (x \wedge \varphi =_{s_1}^{s_2} x)$,
for any sort $s_3$
\end{enumerate}
\end{prop}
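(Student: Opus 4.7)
The plan is to unfold the abbreviation $x \in_{s_1}^{s_2} \varphi \equiv \lceil x \wedge \varphi \rceil_{s_1}^{s_2}$ and feed the result into the semantic facts already established for definedness and equality, so that all three claims become book-keeping. The single computation driving everything is that, for any model $M$ and valuation $\rho : \Var \ra M$,
$$
\overline{\rho}(x \andx \varphi) = \overline{\rho}(x) \cap \overline{\rho}(\varphi) = \{\rho(x)\} \cap \overline{\rho}(\varphi),
$$
which equals $\{\rho(x)\}$ when $\rho(x) \in \overline{\rho}(\varphi)$ and $\emptyset$ otherwise. In particular, $x \andx \varphi$ is satisfied by at least one element (namely $\rho(x)$) exactly when $\rho(x) \in \overline{\rho}(\varphi)$.

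For claims (1) and (2), I would invoke Proposition~\ref{prop:definedness}: $\overline{\rho}(\lceil x \andx \varphi \rceil_{s_1}^{s_2})$ is $\emptyset$ when $\overline{\rho}(x \andx \varphi) = \emptyset$ and $M_{s_2}$ otherwise. Combining this with the observation above immediately gives (1) in the ``empty'' case and (2) in the ``total'' case; these exhaust the possibilities, so both biconditionals hold.

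For claim (3), I would show that the two sides of the outer equality agree in every valuation and then appeal to Proposition~\ref{prop:equality}. Both $x \in_{s_1}^{s_2} \varphi$ and $x \andx \varphi =_{s_1}^{s_2} x$ are predicates (each is either $\emptyset$ or $M_{s_2}$ under any $\rho$), the first by (1)--(2) just proved and the second by Proposition~\ref{prop:equality}. By Proposition~\ref{prop:equality}(2), the inner equality $x \andx \varphi =_{s_1}^{s_2} x$ evaluates to $M_{s_2}$ iff $\overline{\rho}(x \andx \varphi) = \overline{\rho}(x) = \{\rho(x)\}$, which by the displayed computation is iff $\rho(x) \in \overline{\rho}(\varphi)$; by (2) this is exactly when $x \in_{s_1}^{s_2} \varphi$ evaluates to $M_{s_2}$. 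Hence the two predicates coincide under every $\rho$, and one more application of Proposition~\ref{prop:equality} lifts this equivalence to the asserted top-level equality of sort $s_3$. No step presents a real obstacle; the only mild care needed is tracking the three sort parameters $s_1, s_2, s_3$ so that each application of definedness or equality uses the intended instance.
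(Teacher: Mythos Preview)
Your proposal is correct and follows essentially the same approach as the paper's own proof: unfold the definition of membership, compute $\overline{\rho}(x \wedge \varphi) = \{\rho(x)\} \cap \overline{\rho}(\varphi)$, invoke Proposition~\ref{prop:definedness} for (1) and (2), and for (3) observe that both sides are predicates agreeing on when they equal $M_{s_2}$, then lift via Proposition~\ref{prop:equality}. The paper spells out both the $M_{s_2}$ and $\emptyset$ cases explicitly in (3), whereas you correctly note that one case suffices since both patterns are two-valued; otherwise the arguments are identical.
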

\begin{proof}
Recall that $x \in_{s_1}^{s_2} \varphi$ is $[x \wedge \varphi]_{s_1}^{s_2}$.
\begin{enumerate}
\item
Therefore, we have
$\overline{\rho}(x \in_{s_1}^{s_2} \varphi)=
(\lceil\_\rceil_{s_1}^{s_2})_M(\{\rho(x)\}\cap\overline{\rho}(\varphi))$,
so $\overline{\rho}(x \in_{s_1}^{s_2} \varphi) = \emptyset$
iff $\{\rho(x)\}\cap\overline{\rho}(\varphi) = \emptyset$, that is,
iff $\rho(x) \not\in \overline{\rho}(\varphi)$.

\item
Similarly to above,  $\overline{\rho}(x \in_{s_1}^{s_2} \varphi) = M_{s_2}$
iff $\{\rho(x)\}\cap\overline{\rho}(\varphi) \neq \emptyset$, that is,
iff $\rho(x) \in \overline{\rho}(\varphi)$.

\item
Let $M$ be some model and $\rho:\Var\ra M$.
By Proposition~\ref{prop:equality}, the property holds iff we can show
$\overline{\rho}(x\in_{s_1}^{s_2} \varphi) =
\overline{\rho}(x \wedge \varphi =_{s_1}^{s_2} x)$.
Since the membership and equality patterns are predicates
(Definition~\ref{dfn:predicates}), and thus they evaluate either to the
entire set or to the empty set, the following completes the proof:
by (2) we have $\overline{\rho}(x\in_{s_1}^{s_2} \varphi) = M_{s_2}$ iff
$\rho(x) \in \overline{\rho}(\varphi)$, iff
$\{\rho(x)\} \cap \overline{\rho}(\varphi) = \{\rho(x)\}$, iff,
by (2) in Proposition~\ref{prop:equality},
$\overline{\rho}(x \wedge \varphi =_{s_1}^{s_2} x) = M_{s_2}$;
and
by (1) we have $\overline{\rho}(x\in_{s_1}^{s_2} \varphi) = \emptyset$ iff
$\rho(x) \not\in \overline{\rho}(\varphi)$, iff
$\{\rho(x)\} \cap \overline{\rho}(\varphi) \not= \{\rho(x)\}$, iff,
by (1) in Proposition~\ref{prop:equality},
$\overline{\rho}(x \wedge \varphi =_{s_1}^{s_2} x) = \emptyset$;

\end{enumerate}
Therefore, these basic properties hold.
\end{proof}
Property (3) in Proposition~\ref{prop:membership} suggests that
the equality $x \wedge \varphi = x$ can be regarded as an alternative
definition of membership $x \in \varphi$, but we prefer
$\lceil x \wedge \varphi\rceil$ because is simpler (the other one
requires an additional sort, $s_3$, for the context of the equality).

Proposition~\ref{prop:PL-validity} showed that some of the proof rule/axiom
schemas of FOL with equality are already sound for matching logic, namely
the rules corresponding to (pure) predicate logic.
Proposition~\ref{prop:FOL-equality} further showed that the equality-related
rules/axioms are also sound.
The soundness of several other rule/axiom schemas are shown below,
essentially completing the soundness of the matching logic proof system
(discussed later in Section~\ref{sec:deduction}), except for one rule,
Substitution, which needs more discussion and we postpone it to
Section~\ref{sec:deduction}:

\begin{prop}
\label{prop:FOL-membership}
The following hold:
\begin{enumerate}
\item $\models \forall x\,.\,x\in\varphi$ iff $\models \varphi$
\item $\models (x \in y) = (x=y)$ when $x,y\in\Var$
\item $\models (x\in\neg\varphi) = \neg(x\in\varphi)$
\item $\models (x\in\varphi_1\wedge\varphi_2) = (x\in\varphi_1) \wedge (x\in\varphi_2)$
\item $\models (x\in\exists y . \varphi) = \exists y.(x\in\varphi)$, with $x$ and $y$ distinct
\item $\models x\in\sigma(\varphi_1,...,\varphi_{i-1},\varphi_i,\varphi_{i+1},...,\varphi_n)
= \exists y . (y\in\varphi_i \mathrel\wedge x \in\sigma(\varphi_1,...,\varphi_{i-1},y,\varphi_{i+1},...,\varphi_n))$
\end{enumerate}
\end{prop}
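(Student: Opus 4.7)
The plan is to reduce every claim to an elementary set-theoretic identity by exploiting Proposition~\ref{prop:membership}: for any valuation $\rho:\Var\ra M$, $\overline{\rho}(x\in\varphi)$ equals $M_{s_2}$ when $\rho(x)\in\overline{\rho}(\varphi)$ and equals $\emptyset$ otherwise. Observe first that every pattern appearing on either side of each stated identity is a predicate---the definedness symbol forms the outer layer of membership, and the core connectives $\neg,\wedge,\exists$ preserve the predicate property (see Section~\ref{sec:definedness} and the remark after Definition~\ref{dfn:predicates}). Hence, by Proposition~\ref{prop:equality}, for each of (2)--(6) it suffices to show that both sides are interpreted as $M_{s_2}$ under exactly the same valuations; the ``emptiness'' case then follows by two-valuedness.

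For (1), I would first strip the outer $\forall x$ via (9) in Proposition~\ref{prop:simple}, reducing the statement to $\models x\in\varphi$ iff $\models\varphi$. The forward direction is immediate: $\overline{\rho}(\varphi)=M_s$ forces $\rho(x)\in\overline{\rho}(\varphi)$, so $\overline{\rho}(x\in\varphi)=M_{s_2}$. The converse proceeds by letting $\rho(x)$ range over $M_s$: assuming $x$ is not free in $\varphi$ (the technical side condition that makes the statement work), (1) in Proposition~\ref{prop:simple} gives $\overline{\rho[a/x]}(\varphi)=\overline{\rho}(\varphi)$, so $a=\rho[a/x](x)\in\overline{\rho}(\varphi)$ for every $a\in M_s$, yielding $\overline{\rho}(\varphi)=M_s$.

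Properties (2)--(5) then follow by direct unfolding of Definition~\ref{def:rho-bar} combined with Proposition~\ref{prop:membership}. For (2), $\overline{\rho}(x\in y)=M_{s_2}$ iff $\rho(x)\in\{\rho(y)\}$ iff $\rho(x)=\rho(y)$, which is exactly the condition for $\overline{\rho}(x=y)=M_{s_2}$ established in Proposition~\ref{prop:equality}. For (3), $\rho(x)\in M_s\setminus\overline{\rho}(\varphi)$ iff $\rho(x)\notin\overline{\rho}(\varphi)$. For (4), $\rho(x)\in\overline{\rho}(\varphi_1)\cap\overline{\rho}(\varphi_2)$ iff $\rho(x)$ lies in both, matching the conjunction on the right. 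For (5), $\rho(x)\in\bigcup_{b}\overline{\rho[b/y]}(\varphi)$ iff some $b$ witnesses the membership; the distinctness of $x$ and $y$ is essential so that $\rho[b/y](x)=\rho(x)$, which keeps the witness on the left and the witness for the existential on the right in agreement.

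The main obstacle lies in (6), which requires carefully handling the set-valued extension of $\sigma_M$. I would expand $\overline{\rho}(x\in\sigma(\varphi_1,\ldots,\varphi_n))=M_{s_2}$ to $\rho(x)\in\sigma_M(A_1,\ldots,A_n)$ where $A_j=\overline{\rho}(\varphi_j)$, and then invoke the set-valued definition of $\sigma_M$ given after Definition~\ref{dfn:ML-model} to rewrite the union by singling out the $i$-th coordinate: $\sigma_M(A_1,\ldots,A_n)=\bigcup_{a_i\in A_i}\sigma_M(A_1,\ldots,\{a_i\},\ldots,A_n)$. Thus $\rho(x)\in\sigma_M(A_1,\ldots,A_n)$ iff there exists $a_i\in A_i$ with $\rho(x)\in\sigma_M(A_1,\ldots,\{a_i\},\ldots,A_n)$. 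Choosing $y$ fresh with respect to all $\varphi_j$ and $x$, this existential witness corresponds to the valuation $\rho[a_i/y]$, under which (by (1) in Proposition~\ref{prop:simple}) the interpretation of $\sigma(\varphi_1,\ldots,y,\ldots,\varphi_n)$ equals $\sigma_M(A_1,\ldots,\{a_i\},\ldots,A_n)$. The delicate part of the argument is the bookkeeping around freshness: one must ensure that substituting $y$ into the $i$-th argument of $\sigma$ is capture-free and that the semantic witness $a_i$ aligns with the syntactic existential binder on the right-hand side.
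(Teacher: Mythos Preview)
Your proposal is correct and follows essentially the same approach as the paper: reduce each item to an elementary set-theoretic identity via Proposition~\ref{prop:membership} and the two-valuedness of predicates, then invoke Proposition~\ref{prop:equality} to collapse the equality to a single ``totality'' check. Your explicit attention to freshness---requiring $x\notin\FV(\varphi)$ in (1) and $y$ fresh for $x$ and the $\varphi_j$ in (6)---is well-placed; the paper's proof tacitly relies on these side conditions (indeed, (1) fails for $\varphi\equiv x$ without it), so you have in fact been slightly more careful than the original.
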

\begin{proof}
The proofs below make repetitive use of
Propositions~\ref{prop:equality} and ~\ref{prop:membership}:
\begin{enumerate}
\item
Let $M$ be a model.
Then $M \models \forall x\,.\,x\in\varphi$ iff $M \models x\in\varphi$
(Proposition~\ref{prop:simple}), iff
$\overline{\rho}(x\in\varphi) = M$ for any $\rho:\Var\ra M$,
iff $\rho(x)\in\overline{\rho}(\varphi)$ for any $\rho:\Var\ra M$,
iff $\overline{\rho}(\varphi)=M$ for any $\rho:\Var\ra M$,
iff $M\models \varphi$.

\item
It suffices to show $\overline{\rho}(x\in y) = M$ iff
$\overline{\rho}(x = y) = M$ for any model $M$ and any $\rho:\Var\ra M$,
that is, that $\rho(x) \in \{{\rho}(y)\}$ iff 
$\rho(x) = {\rho}(y)$, which obviously holds.

\item
It suffices to show
$\overline{\rho}(x\in \neg\varphi) = M$ iff
$\overline{\rho}(x\in \varphi) = \emptyset$ for any model $M$ and any
$\rho:\Var\ra M$,
that is, that
$\rho(x) \in M\backslash\overline{\rho}(\varphi)$ iff
$\rho(x) \not\in \overline{\rho}(\varphi)$, which obviously holds.

\item
It suffices to show
$\rho(x)\in\overline{\rho}(\varphi_1)\cap\overline{\rho}(\varphi_2)$ iff
$\rho(x)\in\overline{\rho}(\varphi_1)$ and $\rho(x)\in\overline{\rho}(\varphi_2)$
for any model $M$ and any $\rho:\Var\ra M$, which obviously holds.

\item
It suffices to show
for any model $M$ and any $\rho:\Var\ra M$, that
$\rho(x)\in\bigcup\{\overline{\rho'}(\varphi) \mid \rho':\Var\ra M,\ 
\rho'\!\!\upharpoonright_{\Var\backslash\{y\}} =
\rho\!\!\upharpoonright_{\Var\backslash\{y\}}\}$ iff
$\bigcup\{\overline{\rho'}(x\in\varphi) \mid \rho':\Var\ra M,\ 
\rho'\!\!\upharpoonright_{\Var\backslash\{y\}} =
\rho\!\!\upharpoonright_{\Var\backslash\{y\}}\}=M$.
It is easy to see that each of the two statements holds iff
there exists some $\rho':\Var\ra M$ with
$\rho'\!\!\upharpoonright_{\Var\backslash\{y\}} =
\rho\!\!\upharpoonright_{\Var\backslash\{y\}}$ such that
$\rho(x)\in\overline{\rho'}(\varphi)$.

\item
It suffices to prove for any model $M$ and
any valuation $\rho:\Var\ra M$, that
$$\rho(x)\in\sigma_M(\overline{\rho}(\varphi_1),\ldots,\overline{\rho}(\varphi_{i-1}),\overline{\rho}(\varphi_i),\overline{\rho}(\varphi_{i+1}),\ldots,\overline{\rho}(\varphi_n))$$
iff there exists a $\rho':\Var\ra M$ with
$\rho'\!\!\upharpoonright_{\Var\backslash\{y\}} =
\rho\!\!\upharpoonright_{\Var\backslash\{y\}}$ such that
$\rho'(y)\in\overline{\rho}(\varphi_i)$ and
$$\rho(x)\in\sigma_M(\overline{\rho}(\varphi_1),\ldots,
\overline{\rho}(\varphi_{i-1}),\{\rho'(y)\},\overline{\rho}(\varphi_{i+1}),\ldots,
\overline{\rho}(\varphi_n)),$$
which obviously holds.
\end{enumerate}
The proof is complete.
\end{proof}

We next define several common relations using patterns, such as functions.

\subsection{Functions}
\label{sec:functions}

Matching logic makes no distinction between function and predicate
symbols, treating all symbols uniformly as pattern symbols which are
interpreted relationally.
A natural question is whether there is any way, in matching logic, to
state that a symbol is to be interpreted as a function in all models.
We show a more general result, namely that there is a way to state
that any pattern, not only a symbol, has a functional interpretation.

\begin{defi}
\label{dfn:functional}
Pattern $\varphi$ is \textbf{functional in a model $M$} iff 
$|\overline{\rho}(\varphi)|=1$ for any valuation $\rho:\Var \ra M$.
Furthermore, $\varphi$ is \textbf{functional in} $F\subseteq\Pattern$,
or simply \textbf{functional} when $F$ is understood, 
iff it is functional in all models $M$ with $M \models F$.
\end{defi}

Recall from the preamble of Section~\ref{sec:useful-symbols} that
$(S,\Sigma,F)$ was assumed to be an arbitrary but fixed matching logic
specification.
Therefore $F$ is understood, so we take the freedom to just say
``$\varphi$ is functional'' instead of ``$\varphi$ is functional in $F$''.

The following trivial result relates functional patterns to (total) functions:

\begin{prop}
\label{prop:functions}
If $\sigma\in\Sigma_{s_1\ldots s_n,s}$ and $M$ is a $\Sigma$-model,
then pattern $\sigma(x_1,\ldots,x_n)$ is functional in $M$ iff
$\sigma_M:M_{s_1}\times\cdots\times M_{s_n} \ra M_s$ is a total function
in $M$, that is, iff $\sigma_M(a_1,...,a_n)$ contains
precisely one element for any elements $a_1\in M_{s_1}$, ..., $a_n\in M_{s_n}$.
\end{prop}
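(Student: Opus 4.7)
The plan is to simply unfold Definition~\ref{dfn:functional} and Definition~\ref{def:rho-bar} and observe that ranging over all $M$-valuations $\rho$ is equivalent to ranging over all element tuples $(a_1,\ldots,a_n) \in M_{s_1}\times\cdots\times M_{s_n}$.

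First, I would apply Definition~\ref{dfn:functional}: $\sigma(x_1,\ldots,x_n)$ is functional in $M$ iff $|\overline{\rho}(\sigma(x_1,\ldots,x_n))|=1$ for every $\rho:\Var\ra M$. Next, by Definition~\ref{def:rho-bar}, we have $\overline{\rho}(\sigma(x_1,\ldots,x_n)) = \sigma_M(\overline{\rho}(x_1),\ldots,\overline{\rho}(x_n)) = \sigma_M(\{\rho(x_1)\},\ldots,\{\rho(x_n)\})$, which by the extension-to-sets convention established right after Definition~\ref{dfn:ML-model} equals $\sigma_M(\rho(x_1),\ldots,\rho(x_n))$. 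Hence functionality of the pattern in $M$ amounts to $|\sigma_M(\rho(x_1),\ldots,\rho(x_n))|=1$ for every valuation $\rho$.

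For the final step, I would argue that as $\rho$ ranges over all $M$-valuations, the tuple $(\rho(x_1),\ldots,\rho(x_n))$ ranges over all of $M_{s_1}\times\cdots\times M_{s_n}$. This uses the assumption from Definition~\ref{dfn:ML-model} that each carrier $M_{s_i}$ is non-empty (so a valuation always exists) together with the fact that $x_1,\ldots,x_n$ can be taken as distinct variables of the respective sorts, so their values can be assigned independently. Therefore the condition $|\sigma_M(\rho(x_1),\ldots,\rho(x_n))|=1$ for all $\rho$ is equivalent to $|\sigma_M(a_1,\ldots,a_n)|=1$ for all $a_1\in M_{s_1}$, \ldots, $a_n\in M_{s_n}$, which is precisely the statement that the relation $\sigma_M$ is a total function $M_{s_1}\times\cdots\times M_{s_n}\ra M_s$.

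There is no real obstacle here; the only subtle point is being explicit about the implicit assumption that $x_1,\ldots,x_n$ are (or may be chosen to be) distinct variables, together with the non-emptiness of carriers, which together guarantee the surjectivity of $\rho \mapsto (\rho(x_1),\ldots,\rho(x_n))$ onto the product of carriers.
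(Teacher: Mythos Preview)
Your proposal is correct and follows the same approach as the paper. The paper's proof is essentially a two-line unfolding of Definition~\ref{dfn:functional} followed by the observation that $|\overline{\rho}(\sigma(x_1,\ldots,x_n))|=1$ for all $\rho$ iff $|\sigma_M(a_1,\ldots,a_n)|=1$ for all $a_i$; you have simply made explicit the intermediate step via Definition~\ref{def:rho-bar} and the justification (distinct variables, non-empty carriers) for why valuations range over all tuples, which the paper leaves tacit.
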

\begin{proof}
Pattern $\sigma(x_1,\ldots,x_n)$ is functional in $M$ iff
$|\overline{\rho}(\sigma(x_1,\ldots,x_n))|=1$ for any valuation $\rho:\Var \ra M$
(by Definition~\ref{dfn:functional}),
iff $|\sigma_M(a_1,...,a_n)|=1$ for any
$a_1\in M_{s_1}$, ..., $a_n\in M_{s_n}$.
\end{proof}

The following proposition gives an axiomatic characterization of
functional patterns:

\begin{prop}
\label{prop:functional}
Pattern $\varphi$ is functional in model $M$ iff $M \models \exists y\,.\,(\varphi = y)$,
where variable $y$ is chosen so that
$y\not\in\FV(\varphi)$.
Therefore, $\varphi$ is functional iff $F \models \exists y\,.\,(\varphi = y)$.
\end{prop}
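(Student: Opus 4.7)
The plan is to unfold the semantics of $\exists y\,.\,(\varphi = y)$ using the previously established facts about $\exists$ and equality, and observe that it collapses to exactly the condition that $\overline{\rho}(\varphi)$ is a singleton.

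First I would fix a model $M$ and a valuation $\rho:\Var\ra M$, and let $s_1$ be the sort of $\varphi$ (so also of $y$), with $s_2$ the sort context in which the equality is taken. Using Definition~\ref{def:rho-bar} for the existential quantifier,
$$\overline{\rho}(\exists y\,.\,(\varphi = y)) \;=\; \bigcup_{a \in M_{s_1}} \overline{\rho[a/y]}(\varphi =_{s_1}^{s_2} y).$$
By Proposition~\ref{prop:equality}, each term of this union is either $M_{s_2}$ or $\emptyset$, and it equals $M_{s_2}$ precisely when $\overline{\rho[a/y]}(\varphi) = \overline{\rho[a/y]}(y) = \{a\}$. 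Since $y\not\in\FV(\varphi)$, Proposition~\ref{prop:simple}(1) gives $\overline{\rho[a/y]}(\varphi)=\overline{\rho}(\varphi)$, so the $a$-th disjunct yields $M_{s_2}$ iff $\overline{\rho}(\varphi)=\{a\}$.

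Next I would observe that therefore the union equals $M_{s_2}$ iff there exists some $a\in M_{s_1}$ with $\overline{\rho}(\varphi)=\{a\}$, which is exactly the statement that $|\overline{\rho}(\varphi)|=1$; otherwise the union is $\emptyset$. Quantifying over all $\rho$ and invoking Definition~\ref{dfn:satisfaction}, this gives
$$M \models \exists y\,.\,(\varphi = y) \;\iff\; |\overline{\rho}(\varphi)|=1 \text{ for all } \rho:\Var\ra M,$$
which by Definition~\ref{dfn:functional} is exactly the statement that $\varphi$ is functional in $M$. The second assertion, that $\varphi$ is functional iff $F \models \exists y\,.\,(\varphi = y)$, then follows immediately by quantifying the first equivalence over all models $M$ with $M \models F$ and applying Definitions~\ref{dfn:functional} and~\ref{dfn:validity}.

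I do not anticipate any real obstacle here: the statement is essentially a direct translation between the semantic definition of functionality and the semantics of the formula $\exists y\,.\,(\varphi = y)$, and every intermediate fact (interpretation of $\exists$, two-valued character of $=$, freeness of $y$ in $\varphi$) has already been recorded above. The only point deserving care is making sure the sort bookkeeping for the equality's context sort $s_2$ does not interfere, which is handled by noting that each disjunct of the union lives in $M_{s_2}$ uniformly.
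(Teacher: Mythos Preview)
Your proposal is correct and follows essentially the same approach as the paper: both unfold the semantics of $\exists y\,.\,(\varphi=y)$, invoke the two-valued nature of equality (Proposition~\ref{prop:equality}) together with $y\not\in\FV(\varphi)$ via Proposition~\ref{prop:simple}(1), and identify the resulting condition with $|\overline{\rho}(\varphi)|=1$ for all $\rho$. Your version is just slightly more explicit about the sort bookkeeping and writes out the union indexed by $a\in M_{s_1}$, whereas the paper phrases it as a chain of ``iff'' steps through an auxiliary valuation $\rho'$.
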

\begin{proof}
$\varphi$ is functional in $M$ iff $|\overline{\rho}(\varphi)|=1$
for any $\rho:\Var \ra M$ (by Definition~\ref{dfn:functional}), iff
for any $\rho:\Var \ra M$ there is some $a\in M$ such that
$\overline{\rho}(\varphi)=\{a\}$, iff
for any $\rho:\Var \ra M$ there is some $\rho':\Var\ra M$
with
$\rho'\!\!\upharpoonright_{\Var\backslash\{y\}} =
\rho\!\!\upharpoonright_{\Var\backslash\{y\}}$
such that
$\overline{\rho'}(\varphi)=\overline{\rho'}(y)$
(by (1) in Proposition~\ref{prop:simple}), iff
$M \models \exists y\,.\,(\varphi = y)$ (by Definition~\ref{def:rho-bar} and Proposition~\ref{prop:equality}).
\end{proof}

\begin{cor}
\label{cor:variables}
Variables are functional: $\models \exists y\,.\,x=y$ for any variable $x$.
\end{cor}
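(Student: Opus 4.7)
The plan is to derive this directly from Proposition~\ref{prop:functional}, which characterizes functionality via the existence of an equal variable. Specifically, I would instantiate that proposition with $\varphi := x$ (requiring only that the witness variable $y$ be chosen distinct from $x$, so that $y \not\in \FV(x) = \{x\}$), and then it suffices to show that the variable pattern $x$ is functional in every model $M$.

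Showing $x$ is functional in $M$ is essentially immediate: by Definition~\ref{def:rho-bar}, $\overline{\rho}(x) = \{\rho(x)\}$ for any valuation $\rho : \Var \ra M$, which is a singleton by construction. Hence $|\overline{\rho}(x)| = 1$, so $x$ meets the condition in Definition~\ref{dfn:functional} vacuously across all models, without any reliance on $F$.

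Combining these two observations, Proposition~\ref{prop:functional} yields $F \models \exists y\,.\,(x = y)$ for any $F$, and since the functionality of $x$ holds in every model whatsoever, we in fact obtain the stronger statement $\models \exists y\,.\,x = y$, as desired. There is no real obstacle here; the corollary is essentially a one-line specialization of Proposition~\ref{prop:functional} to the simplest possible functional pattern, a variable, whose singleton interpretation is built into the definition of $\overline{\rho}$.
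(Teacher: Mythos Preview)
Your proposal is correct and matches the paper's own proof essentially line for line: both invoke Definition~\ref{dfn:functional} and Proposition~\ref{prop:functional}, using the fact that $\overline{\rho}(x) = \{\rho(x)\}$ is a singleton for any valuation $\rho$.
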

\begin{proof}
Immediate consequence of Definition~\ref{dfn:functional} and
Proposition~\ref{prop:functional}, because variables are interpreted as
singletons: $\overline{\rho}(x)=\{\rho(x)\}$ for any valuation
$\rho:\Var \ra M$.
\end{proof}

We have seen in the discussion at the beginning of
Section~\ref{sec:equality} that if $f$ is a one-argument symbol, the
pattern $\exists y \,.\,f(x) \lra y$ is not strong enough to enforce $f(x)$
to be functional.
However, thanks to Proposition~\ref{prop:functional}, using equality instead
of equivalence works:

\begin{cor}
\label{cor:functions}
If $\sigma\in\Sigma_{s_1\ldots s_n,s}$ and $M$ is a $\Sigma$-model, then
$\sigma_M$ is a total function iff
$M\models\exists y\,.\,\sigma(x_1,\ldots,x_n) = y$.
\end{cor}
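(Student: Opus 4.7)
The plan is to obtain Corollary~\ref{cor:functions} as an immediate consequence of the two preceding results, Proposition~\ref{prop:functions} and Proposition~\ref{prop:functional}, by chaining their ``iff'' characterizations applied to the specific pattern $\sigma(x_1,\ldots,x_n)$.

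First, I would invoke Proposition~\ref{prop:functions} with $M$ as the given $\Sigma$-model: this directly gives that $\sigma_M$ is a total function $M_{s_1}\times\cdots\times M_{s_n} \ra M_s$ if and only if the pattern $\sigma(x_1,\ldots,x_n)$ is functional in $M$ in the sense of Definition~\ref{dfn:functional}. Nothing more than quoting the proposition is required here, since the hypothesis of Corollary~\ref{cor:functions} matches its hypothesis exactly.

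Second, I would apply Proposition~\ref{prop:functional} to the particular pattern $\varphi \equiv \sigma(x_1,\ldots,x_n)$. We need a variable $y$ with $y\notin \FV(\sigma(x_1,\ldots,x_n)) = \{x_1,\ldots,x_n\}$; since $\Var$ is sortwise infinite, such a $y$ of sort $s$ exists, and without loss of generality we can assume the $y$ in the statement of the corollary is chosen this way (or rename if necessary via variable renaming). Proposition~\ref{prop:functional} then yields that $\sigma(x_1,\ldots,x_n)$ is functional in $M$ iff $M \models \exists y\,.\,(\sigma(x_1,\ldots,x_n) = y)$.

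Chaining the two equivalences gives the desired biconditional. There is essentially no obstacle to overcome: the corollary is a direct specialization of Proposition~\ref{prop:functional} combined with the translation between ``functional pattern'' and ``total function interpretation'' provided by Proposition~\ref{prop:functions}. The only minor bookkeeping point, worth mentioning explicitly in the write-up, is the freshness of $y$ with respect to $\{x_1,\ldots,x_n\}$, which ensures the application of Proposition~\ref{prop:functional} is legitimate.
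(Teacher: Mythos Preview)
Your proposal is correct and matches the paper's own proof exactly: the paper simply writes ``By Propositions~\ref{prop:functions} and \ref{prop:functional}.'' Your additional remark about the freshness of $y$ is a reasonable piece of bookkeeping that the paper leaves implicit.
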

\begin{proof}
By Propositions~\ref{prop:functions} and \ref{prop:functional}.
\end{proof}

Hence, we can state that a symbol $\sigma\in\Sigma_{s_1\ldots s_n,s}$
is a function in all models by requiring
$\sigma(x_1,\ldots,x_n)$ to be a functional pattern, which by
Proposition~\ref{prop:functional} is equivalent to stating that
the pattern $\exists y\,.\,\sigma(x_1,\ldots,x_n) = y$ holds
(i.e., it is entailed by $F$), where $x_1$, ..., $x_n$ are free
variables.
The simplest way to ensure this is to add this pattern directly
to $F$, as an axiom.
To avoid manually writing such trivial pattern axioms for lots of symbols
which are meant to be interpreted as functions, we adopt
the following notation and terminology:

\begin{defi}
\label{dfn:functional-notation}
For a symbol $\sigma\in\Sigma_{s_1\ldots s_n,s}$, the notation
$$\sigma : s_1 \times \cdots \times s_n \ra s$$
is syntactic sugar for stating that $F$ contains the pattern
$
\exists y\,.\,\sigma(x_1,\ldots,x_n) = y
$.
If $\sigma\in\Sigma_{s_1\ldots s_n,s}$ is a symbol such that
$\sigma : s_1 \times \cdots \times s_n \ra s$, then we call
$\sigma$ a \textbf{function symbol}.
Patterns built with only function symbols are called
\textbf{term patterns}, or simply just \textbf{terms}.
\end{defi}

Definition~\ref{dfn:functional-notation} is instrumental to capturing
algebraic specifications and first-order logic as instances of
matching logic; full details are given in Sections~\ref{sec:alg-spec}
and \ref{sec:FOL}.

\begin{cor}
\label{cor:terms}
Term patterns are functional: $\models \exists y\,.\,t=y$ for any term
pattern $t$.
\end{cor}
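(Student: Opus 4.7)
The plan is to prove this by structural induction on the term pattern $t$, leveraging the earlier characterization of functional patterns (Proposition~\ref{prop:functional}) and the axiom schema built into Definition~\ref{dfn:functional-notation}.

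\textbf{Base cases.} If $t$ is a variable, the claim is exactly Corollary~\ref{cor:variables}. If $t$ is a $0$-argument function symbol $\sigma \in \Sigma_{\lambda,s}$, then Definition~\ref{dfn:functional-notation} places the pattern $\exists y\,.\,\sigma = y$ directly in $F$, so $F \models \exists y\,.\,\sigma = y$ and we conclude by Proposition~\ref{prop:functional} that $\sigma$ is functional.

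\textbf{Inductive step.} Suppose $t = \sigma(t_1,\ldots,t_n)$ where $\sigma \in \Sigma_{s_1\ldots s_n,s}$ is a function symbol and each $t_i$ is a term pattern of sort $s_i$. By the induction hypothesis, each $t_i$ is functional, so for any model $M$ with $M \models F$ and any valuation $\rho:\Var\ra M$, there exist unique elements $a_i \in M_{s_i}$ with $\overline{\rho}(t_i) = \{a_i\}$. Then by Definition~\ref{def:rho-bar} and the extension convention on argument sets,
$$\overline{\rho}(\sigma(t_1,\ldots,t_n)) = \sigma_M(\overline{\rho}(t_1),\ldots,\overline{\rho}(t_n)) = \sigma_M(\{a_1\},\ldots,\{a_n\}) = \sigma_M(a_1,\ldots,a_n).$$
Since $\sigma$ is a function symbol, Corollary~\ref{cor:functions} (combined with Proposition~\ref{prop:functions}) tells us that $\sigma_M$ is a total function on elements, so $|\sigma_M(a_1,\ldots,a_n)| = 1$. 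Hence $|\overline{\rho}(t)| = 1$, which by Definition~\ref{dfn:functional} means $t$ is functional, and one more application of Proposition~\ref{prop:functional} yields $F \models \exists y\,.\,t = y$.

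\textbf{Main obstacle.} The only subtlety is bookkeeping in the inductive step: one must verify that ``functional'' composes with function-symbol application, i.e., that substituting singleton interpretations into $\sigma_M$ (via its extension to argument sets) yields a singleton. This is immediate from the formula $\sigma_M(A_1,\ldots,A_n) = \bigcup\{\sigma_M(a_1,\ldots,a_n)\mid a_i\in A_i\}$ when each $A_i$ is a singleton and $\sigma_M$ is a total function, so no deeper argument is needed. Everything else is direct appeal to Proposition~\ref{prop:functional} and Corollary~\ref{cor:functions}.
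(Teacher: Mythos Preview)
Your proof is correct and follows essentially the same approach as the paper's: structural induction on $t$, with the variable base case handled by Corollary~\ref{cor:variables} and the inductive step using the induction hypothesis to get singleton interpretations of the subterms, then Corollary~\ref{cor:functions} to conclude $\sigma_M$ yields a singleton, and finally Proposition~\ref{prop:functional} to translate back to validity of the equality pattern. Your treatment is slightly more explicit (separating out the $0$-argument case and spelling out the set-extension computation), but the structure and key lemmas invoked are identical.
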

\begin{proof}
Structural induction on term pattern $t$.
Obvious when $t$ is a variable.
Let $t$ be $\sigma(t_1,\ldots,t_n)$ with $\sigma : s_1 \times \ldots \times s_n \ra s$ and
$t_1,\ldots,t_n$ term patterns of sorts $s_1$, \ldots, $s_n$, respectively, and
let $\rho : \Var \ra M$.
Then $\overline{\rho}(t) = \sigma_M(\overline{\rho}(t_1),\ldots,\overline{\rho}(t_n))$.
By the induction hypothesis, $t_1,\ldots,t_n$ are functional, that is,
$\overline{\rho}(t_1)$, \ldots, $\overline{\rho}(t_n)$ are singleton sets
(Definition~\ref{dfn:functional}), say
$\{m_1\}$, \ldots, $\{m_n\}$, respectively.
Then $\overline{\rho}(t) = \sigma_M(m_1,\ldots,m_n)$, which is also a singleton
set, say $\{m\}$, as $\sigma_M$ is a function
(Corollary~\ref{cor:functions}).
The rest follows by Proposition~\ref{prop:functional}.
\end{proof}

In FOL, the Substitution axiom ($(\forall x\!:\!s\,.\,\varphi) \ra \varphi[t/x]$)
allows for universally quantified variables to be substituted with any terms.
Together with the proof rules and axioms of predicate logic,
Substitution makes FOL deduction complete.
An important property of term patterns in matching logic is that the
Substitution axiom of FOL holds for them:

\begin{cor}
\label{cor:FOL-substitution}
If $\varphi$ is any pattern and $t$ is a term pattern of sort $s$, then
\begin{quote}
Term Substitution: $\models (\forall x\!:\!s\,.\,\varphi) \ra \varphi[t/x]$
\end{quote}
\end{cor}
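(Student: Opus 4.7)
The plan is to reduce the claim to a standard substitution-and-evaluation lemma, exploiting that $t$ is a functional pattern (Corollary~\ref{cor:terms}). Fix a model $M$ and a valuation $\rho:\Var\ra M$. By Proposition~\ref{prop:functional} there is a unique $a_\rho\in M_s$ with $\overline{\rho}(t)=\{a_\rho\}$. Using (7) in Proposition~\ref{prop:simple}, it suffices to show $\overline{\rho}(\forall x\!:\!s\,.\,\varphi)\subseteq\overline{\rho}(\varphi[t/x])$; and since
\[
\overline{\rho}(\forall x\!:\!s\,.\,\varphi)=\bigcap_{a\in M_s}\overline{\rho[a/x]}(\varphi)\subseteq\overline{\rho[a_\rho/x]}(\varphi),
\]
the whole argument reduces to establishing the substitution lemma $\overline{\rho}(\varphi[t/x])=\overline{\rho[a_\rho/x]}(\varphi)$.

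I would prove this substitution lemma by structural induction on $\varphi$, following the clauses of Definition~\ref{def:rho-bar}. The variable cases $\varphi=x$ and $\varphi=y$ (with $y\neq x$) are immediate from the definitions of substitution and of $\overline{\rho}$: both sides equal $\{a_\rho\}$ in the first case and $\{\rho(y)\}$ in the second. The cases for $\sigma$, $\neg$, and $\wedge$ are routine: capture-free substitution commutes syntactically with these constructs, and $\overline{\rho}$ distributes over them according to Definition~\ref{def:rho-bar}, so the induction hypothesis yields the equality.

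The main obstacle is the existential case $\varphi=\exists y\,.\,\psi$, which requires care with variable capture. When $y=x$, substitution is vacuous and the desired equality follows from the identity $\rho[a_\rho/x][b/x]=\rho[b/x]$ for every $b\in M$. When $y\neq x$, the capture-avoiding convention (possibly after a variable renaming of $y$) ensures $y\notin\FV(t)$ and $(\exists y\,.\,\psi)[t/x]=\exists y\,.\,(\psi[t/x])$. The key observation is that for every $b\in M$ one has $\overline{\rho[b/y]}(t)=\overline{\rho}(t)=\{a_\rho\}$ by (1) in Proposition~\ref{prop:simple} (since $y\notin\FV(t)$), so the same witness $a_\rho$ serves each shifted valuation; the induction hypothesis applied to $\rho[b/y]$, together with the commutation $\rho[b/y][a_\rho/x]=\rho[a_\rho/x][b/y]$ (valid because $x\neq y$), then gives $\overline{\rho[b/y]}(\psi[t/x])=\overline{\rho[a_\rho/x][b/y]}(\psi)$, and taking the union over $b$ produces $\overline{\rho[a_\rho/x]}(\exists y\,.\,\psi)$ as required. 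Assembling the substitution lemma with the opening reduction yields $\overline{\rho}(\forall x\!:\!s\,.\,\varphi)\subseteq\overline{\rho}(\varphi[t/x])$ for every $\rho$, proving the validity of Term Substitution.
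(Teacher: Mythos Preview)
Your proof is correct and follows essentially the same approach as the paper's: both reduce to showing $\overline{\rho}(\forall x\,.\,\varphi)\subseteq\overline{\rho[a_\rho/x]}(\varphi)=\overline{\rho}(\varphi[t/x])$ via the functionality of $t$ (Corollary~\ref{cor:terms}). You go further by actually proving the substitution lemma $\overline{\rho}(\varphi[t/x])=\overline{\rho[a_\rho/x]}(\varphi)$ by structural induction, whereas the paper simply asserts it in one line.
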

\begin{proof}
Let $\rho:\Var \ra M$ be any valuation.
Then
$$\overline{\rho}(\forall x\,.\,\varphi)=\bigcap\{ \overline{\rho'}(\varphi) \mid
\rho'\!\!\upharpoonright_{\Var\backslash\{x\}} =
\rho\!\!\upharpoonright_{\Var\backslash\{x\}}\} \subseteq
\overline{\rho''}(\varphi)$$
where $\rho'':\Var \ra M$ is such that
$\rho''\!\!\upharpoonright_{\Var\backslash\{x\}} =
\rho\!\!\upharpoonright_{\Var\backslash\{x\}}$ and
$\{\rho''(x)\}=\overline{\rho}(t)$.
Such a $\rho''$ exists thanks to Corollary~\ref{cor:terms} and can only be
$\rho''=\rho[m/x]$ where $\overline{\rho}(t)=\{m\}$,
so $\overline{\rho''}(\varphi)=\overline{\rho}(\varphi[t/x])$.
\end{proof}

Note Corollary~\ref{cor:FOL-substitution} generalizes
Corollary~\ref{cor:terms}: pick $\varphi$ to be
$\exists y\,.\,x=y$; then $\forall x:s\,.\,\varphi$
is a tautology and $\varphi[t/x]$ is $\exists y\,.\,t=y$, which
by Proposition~\ref{prop:functional} implies that $t$ is functional.
Corollary~\ref{cor:FOL-substitution} also generalizes (5) in
Proposition~\ref{prop:PL-validity}, because variables are particular terms.

Corollary~\ref{cor:FOL-substitution}, Proposition~\ref{prop:FOL-equality}
and Proposition~\ref{prop:PL-validity} imply that FOL reasoning with or without
equality is sound for matching logic, provided that the Substitution axiom of
FOL is only applied when $t$ is a term pattern.
To avoid confusing the FOL Substitution axiom schema with the matching logic
variant in Corollary~\ref{cor:FOL-substitution}, we called the later Term Substitution.
As shown in Section~\ref{sec:deduction}, Term Substitution can be generalized
a bit into Functional Substitution, which takes functional patterns instead
of term patterns $t$, but in general it is not sound for arbitrary patterns instead of $t$.

Since functional patterns evaluate to singleton sets for any valuation, the
conjunction of two functional patterns evaluate either to the empty set when
the two patterns evaluate to different singleton sets, or to the same singleton
set when the two patterns evaluate to the same singleton set.
Formally,

\begin{prop}
\label{prop:functional-conjunction}
If $\varphi$ and $\varphi'$ are functional patterns of the same sort, then:
\begin{enumerate}
\item $\models ((\varphi \wedge \varphi') = \bot) = (\varphi \neq \varphi')$
\item $\models ((\varphi \wedge \varphi') \neq \bot) = (\varphi = \varphi')$
\item $\models (\varphi \wedge \varphi') = (\varphi \wedge (\varphi = \varphi'))$
\end{enumerate}
\end{prop}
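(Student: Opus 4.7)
The plan is to reduce each of the three identities to a case analysis on the singletons matched by the two functional patterns. Fix a model $M$ and an arbitrary valuation $\rho:\Var\ra M$. Since $\varphi$ and $\varphi'$ are functional of the same sort $s$, Definition~\ref{dfn:functional} gives $\overline{\rho}(\varphi)=\{a\}$ and $\overline{\rho}(\varphi')=\{a'\}$ for some $a,a'\in M_s$, so $\overline{\rho}(\varphi\wedge\varphi')=\{a\}\cap\{a'\}$ is either $\{a\}$ (when $a=a'$) or $\emptyset$ (when $a\neq a'$). All three identities above are top-level equalities between predicate patterns, so by Proposition~\ref{prop:equality} it suffices to show that, under every $\rho$, both sides of each identity take the same value in $\{\emptyset,M_s\}$; equivalently, that they ``hold'' or ``fail'' together.

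For (1), by Proposition~\ref{prop:equality} the outer equality $(\varphi\wedge\varphi')=\bot$ evaluates to the total set under $\rho$ iff $\overline{\rho}(\varphi\wedge\varphi')=\overline{\rho}(\bot)=\emptyset$, iff $a\neq a'$; and the right-hand side $\varphi\neq\varphi'$, which is $\neg(\varphi=\varphi')$, evaluates to the total set under $\rho$ iff $\overline{\rho}(\varphi)\neq\overline{\rho}(\varphi')$, iff $\{a\}\neq\{a'\}$, iff $a\neq a'$. Since predicates evaluate only to $\emptyset$ or $M_s$, matching behavior in the total case forces matching behavior in the empty case too, so the two sides are equal in $M$ for every $\rho$, hence by Proposition~\ref{prop:equality} the identity is valid. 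Item (2) follows in exactly the same way, with the roles of ``holds'' and ``fails'' swapped (or, more directly, by negating both sides of (1) and appealing to propositional reasoning of Proposition~\ref{prop:PL-validity}).

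For (3), I again split on whether $a=a'$. If $a=a'$, then $\overline{\rho}(\varphi\wedge\varphi')=\{a\}$, and by Proposition~\ref{prop:equality} $\overline{\rho}(\varphi=\varphi')=M_s$, so $\overline{\rho}(\varphi\wedge(\varphi=\varphi'))=\{a\}\cap M_s=\{a\}$. If $a\neq a'$, then $\overline{\rho}(\varphi\wedge\varphi')=\emptyset$, and $\overline{\rho}(\varphi=\varphi')=\emptyset$, so $\overline{\rho}(\varphi\wedge(\varphi=\varphi'))=\{a\}\cap\emptyset=\emptyset$. In both cases the two sides of (3) coincide pointwise, so by Proposition~\ref{prop:equality} the identity holds in $M$; since $M$ was arbitrary, it is valid.

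The only point requiring care, rather than a true obstacle, is bookkeeping of the implicit sort subscripts on the equality symbols (Notation~\ref{notation:equality}): the inner equality $\varphi=\varphi'$ lives at sort $s$ but is used inside a context of sort $s$ (for (3)) and as a subpattern of an outer equality (for (1) and (2)); by Notation~\ref{notation:polymorphic-definedness} we interpret each unannotated $=$ at whatever pair of sorts makes the expression well-formed, and the argument above is uniform in that choice.
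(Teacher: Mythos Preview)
Your proof is correct and follows exactly the approach the paper takes: the paper's proof is simply ``Trivial: pick a model $M$ and a valuation $\rho:\Var \ra M$, and apply the definitions,'' and you have faithfully carried out precisely that unpacking via the singleton case analysis on $a=a'$ versus $a\neq a'$.
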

\begin{proof}
Trivial: pick a model $M$ and a valuation $\rho:\Var \ra M$, and apply the
definitions.
\end{proof}

Particular functions with particular properties, such as injective of surjective
functions, can be defined in a conventional way using conventional FOL.
For example, pattern (one-argument functions only,
for simplicity)
$$f(x) = f(y) \ra x = y$$
states that $f$ is injective and pattern
$$
\exists x\,.\,f(x) = y
$$
states that $f$ is surjective.
We only show the former:
if $(M,f_M:M \ra M)$ is any model satisfying $f(x) = f(y) \ra x = y$,
then $f_M$ must be injective.
Indeed, let $a,b\in M$ such that $a\neq b$ and $f_M(a)=f_M(b)$.
Pick $\rho:\Var\ra M$ such that $\rho(x)=a$ and $\rho(y)=b$.
Since $M$ satisfies the axiom above, we get
$\overline{\rho}(f(x) = f(y)) \subseteq \overline{\rho}(x=y)$.
But Proposition~\ref{prop:equality} implies that
$\overline{\rho}(x=y)=\emptyset$ and $\overline{\rho}(f(x) = f(y))=M$,
which is a contradiction.
We can also show that any model whose $f$ is injective satisfies the axiom.
Let $(M,f_M:M \ra M)$ be any model such that $f_M$ is injective.
It suffices to show
$\overline{\rho}(f(x) = f(y)) \subseteq \overline{\rho}(x=y)$
for any $\rho:\Var\ra M$, which follows by Proposition~\ref{prop:equality}:
if $\rho(x)=\rho(y)$ then $\overline{\rho}(f(x) = f(y)) = \overline{\rho}(x=y) = M$,
and if $\rho(x)\neq\rho(y)$ then
$\overline{\rho}(f(x) = f(y)) = \overline{\rho}(x=y) = \emptyset$
because $f_M$ is injective.

With the notation $\varphi \neq \varphi'$ for $\neg(\varphi = \varphi')$
introduced in Section~\ref{sec:equality},
$
x \neq y \ra f(x) \neq f(y)
$
is an alternative way to capture the
injectivity of $f$.

\subsection{Partial Functions}
\label{sec:partial-functions}

In FOL, operation symbols are interpreted as {\em total} functions by default,
meaning that they are defined on all the elements in their domain.
Interpreting function symbols as partial functions leads to a completely
different logic, called {\em partial FOL} in the literature
(see, e.g., \cite{DBLP:journals/sLogica/FarmerG00}), which has many different
axioms to properly capture the desired properties of definedness and
undefinedness.
Our interpretations of symbols into powersets allows us not only to elegantly
define definedness (Section~\ref{sec:definedness}), but also to define
partial functions without a need to develop a different logic.
Specifically, the pattern 
$$
\neg \sigma(x_1,\ldots,x_n) \ \
\vee \ \ \exists y\,.\,\sigma(x_1,\ldots,x_n) = y,
$$
where $\neg \sigma(x_1,\ldots,x_n)$ can be equivalently replaced with
$\sigma(x_1,\ldots,x_n) = \bot_s$, states that
$\sigma\in\Sigma_{s_1\ldots s_n,s}$ is a partial function.
From now on we use the notation (note the ``$\rightharpoonup$'' symbol)
$$
\sigma : s_1 \times \cdots \times s_n \rightharpoonup s
$$
to automatically assume a pattern like the above for $\sigma$.
For example, a division partial function which is undefined
iff the denominator is 0 can be specified as:
$$
\begin{array}{l}
\_\,/\_ : \Nat \times \Nat \rightharpoonup \Nat
\\
x/y = \bot \lra y = 0
\end{array}
$$
which means a symbol $\_\,/\!\_\in\Sigma_{\Nat\times\Nat,\Nat}$ with
pattern axioms $\neg(x/y) \vee \exists z \, .\, x/y = z$ and
$x/y = \bot \lra y = 0$; the latter is equivalent to
$\lceil x/y\rceil = (y\neq0)$ and to
$\lfloor\neg(x/y)\rfloor = (y=0)$.

\subsection{Total Relations}
\label{sec:total-relations}

Recall from Section~\ref{sec:functions} that we can define total functions
using patterns of the form $\exists y\,.\,\sigma(x_1,\ldots,x_n) = y$,
stating not only that the interpretation of $\sigma$ in model $M$,
$\sigma_M$, is defined in any of its arguments, but also that it
has only one value.
We sometimes want to state that relations, not only functions, are total.
All we have to do is to say that the relation is non-empty for any
arguments, which can be easily stated with a pattern of the form
$
\lceil\sigma(x_1,\ldots,x_n)\rceil_s^s
$, equivalent to $\sigma(x_1,\ldots,x_n)\neq\bot_s$.
We write
$$
\sigma : s_1 \times \cdots \times s_n \Ra s
$$
to automatically state that $\sigma$ is a total relation.

\subsection{Constructors, Unification, Anti-Unification}
\label{sec:constructors}

Constructors can be used to build programs, data, as well as semantic
structures to define and reason about languages and programs.
Hence, constructors have been extensively studied in the literature,
using various approaches and logical formalisms.
We believe that classic approaches to constructors can also be
adapted to our matching logic setting, either directly by redefining
the corresponding concepts (e.g., the matching logic analogous to
initial algebras~\cite{Goguen:1977:IAS:321992.321997}, etc.) or
indirectly by translating them together with their underlying logic
to matching logic (e.g., following the translations of FOL and algebraic
specifications to matching logic in Sections~\ref{sec:FOL} and
\ref{sec:alg-spec}, respectively).
Here we discuss a different approach.
Specifically, we show how the dual nature of patterns to specify both
structure and constraints can make some of the definitions and notions
related to constructors more elegant and appealing.
For example, unification and anti-unification can be seen as
conjunction and, respectively, disjunction of patterns.

One main property of constructors is that they collectively can
construct all the elements of their target domain; i.e., the target domain
has ``no junk''~\cite{Goguen:1977:IAS:321992.321997}.
One simple pattern stating that a unary symbol $f$ is to be interpreted as
a surjective relation in every model is $\exists x\,.\,f(x)$.
Generalizing this idea to an arbitrary set of $n$ symbols
$$\{c_i \in \Sigma_{s_i^1...s_i^{m_i},s_i} \ \mid \ 1 \leq i \leq n\}$$
that we want to be constructors for target sort $s$, we get the
following ``no junk'' pattern:
$$
\bigvee_{s_i=s} \exists x_i^1:s_i^1\dots\exists x_i^{m_i}:s_i^{m_i}\ .\ 
c_i(x_i^1,\dots,x_i^{m_i})
$$
For example, applied to the usual $0$ and $\it succ$ constructors of natural
numbers, the above becomes
$0 \ \vee \ \exists x:\Nat \,.\, {\it succ}(x)$.
Indeed, the interpretation of this pattern in the model of natural numbers
is the entire domain; or said differently, any number matches this pattern.

The other main property of constructors is that they yield a unique way to
construct each element in the target domain; i.e., the target domain has
``no confusion''~\cite{Goguen:1977:IAS:321992.321997}.
That means two separate types of conditions, each specifiable with patterns.
First, each constructor $c_i$ builds a set of elements distinct from
that of any other constructor $c_j$ with $s_j=s_i$:
$$
\neg(c_i(x_i^1,\dots,x_i^{m_i}) \wedge c_j(x_j^1,\dots,x_j^{m_j}))
$$
Recall that, by convention, free variables in pattern axioms are assumed
universally quantified.
For our $0$ and $\it succ$ example, the above becomes
$\neg(0 \wedge {\it succ}(x))$, stating that ${\it succ}(x)$
is different from $0$ for any $x$.
Second, each constructor $c_i$ is injective in all its arguments at once
(regarded as a tuple), which can be specified with a pattern as follows:
$$
c_i(x_i^1,\dots,x_i^{m_i}) \wedge
c_i(y_i^1,\dots,y_i^{m_i}) \ra
c_i(x_i^1 \wedge y_i^1,\dots,x_i^{m_i}\wedge y_i^{m_i})
$$
Indeed, the above pattern ensures that in any model $M$, if
$$(c_i)_M(a_i^1,\dots,a_i^{m_i}) \cap
(c_i)_M(b_i^1,\dots,b_i^{m_i}) \neq 0$$
then it must be that $a_i^1=b_i^1$, ..., $a_i^{m_i}=b_i^{m_i}$.

Putting all the above together, below we formally introduce constructors:

\begin{defi}
\label{dfn:constructors}
Given a specification $(S,\Sigma,F)$, the symbols in set
$$\{c_i \in \Sigma_{s_i^1...s_i^{m_i},s_i} \ \mid \ 1 \leq i \leq n\}$$
are called \textbf{constructors} iff they have the following properties:
\begin{description}

\item[No junk]
For any sort $s \in \{s_i\mid 1 \leq i \leq n\}$,
$$
F \models
\bigvee_{s_i=s} \exists x_i^1:s_i^1\dots\exists x_i^{m_i}:s_i^{m_i}\ .\ 
c_i(x_i^1,\dots,x_i^{m_i})
$$
\item[No confusion, different constructors]
For any $1 \leq i \neq j \leq n$ with $s_j=s_i$,
$$
F \models \neg(c_i(x_i^1,\dots,x_i^{m_i}) \wedge c_j(x_j^1,\dots,x_j^{m_j}))
$$
\item[No confusion, same constructor]
For any $1\leq i\leq n$,
$$
F \models
c_i(x_i^1,\dots,x_i^{m_i}) \wedge
c_i(y_i^1,\dots,y_i^{m_i}) \ra
c_i(x_i^1 \wedge y_i^1,\dots,x_i^{m_i}\wedge y_i^{m_i})
$$
\end{description}
Additionally, if each $c_i$ is functional, then we call them
\textbf{functional constructors}.
The usual way to define a set of constructors is to have $F$
include all the patterns above.
\end{defi}

It is easy to see that if the symbol $\sigma$ that occurs in the context
$C_{\sigma,i}[\square]$ in Definition~\ref{dfn:injectivity} is a constructor,
then $C_{\sigma,i}[\square]$ is injective, and thus, by
Propositions~\ref{prop:injective-symbol-distributivity} and
\ref{prop:symbol-distributivity}, it enjoys full distributivity
w.r.t. the matching logic constructs $\wedge$, $\vee$, $\forall$ and
$\exists$.
Thanks to Propositions~\ref{prop:equality} and \ref{prop:FOL-equality},
we can therefore apply these distributivity properties of constructors
in any context.
In addition to the distributivity properties, the following equality
properties of constructors turned out to also be very useful in program
verification efforts:

\begin{prop}
\label{prop:constructors}
Given a set of constructors
$\{c_i \in \Sigma_{s_i^1...s_i^{m_i},s_i} \mid 1 \leq i \leq n\}$
for a specification $(S,\Sigma,F)$, the following hold:
\begin{description}
\item[Case analysis] If $\varphi$ is a pattern of sort
$s \in \{s_i \mid 1 \leq i \leq n\}$, then
$$
F \models 
\varphi = \bigvee_{s_i=s} \exists x_i^1:s_i^1\dots\exists x_i^{m_i}:s_i^{m_i}\ .\ 
\varphi \wedge c_i(x_i^1,\dots,x_i^{m_i})
$$
Additionally, if the constructors and $\varphi$ are all functional, then
$$
F \models 
\varphi = \bigvee_{s_i=s} \exists x_i^1:s_i^1\dots\exists x_i^{m_i}:s_i^{m_i}\ .\ 
c_i(x_i^1,\dots,x_i^{m_i}) \wedge (\varphi = c_i(x_i^1,\dots,x_i^{m_i}))
$$
\item[Different constructors]
If $1\leq i \neq j \leq n$ with $s_i=s_j$, and 
$\varphi_i^1 \in \Pattern_{s_i^1}$,
...,
$\varphi_i^{m_i} \in \Pattern_{s_i^{m_i}}$,
and
$\varphi_j^1 \in \Pattern_{s_j^1}$,
...,
$\varphi_j^{m_j} \in \Pattern_{s_j^{m_j}}$,
then
$$
F \models c_i(\varphi_i^1,\dots,\varphi_i^{m_i})
\wedge c_j(\varphi_j^1,\dots,\varphi_j^{m_j}) = \bot
$$
\item[Same constructor]
If 
$\varphi_i^1,\psi_i^1 \in \Pattern_{s_i^1}$,
...,
$\varphi_i^{m_i},\psi_i^{m_i} \in \Pattern_{s_i^{m_i}}$,
then
$$
F \models
c_i(\varphi_i^1,\dots,\varphi_i^{m_i})
\wedge
c_i(\psi_i^1,\dots,\psi_i^{m_i})
=
c_i(\varphi_i^1\wedge\psi_i^1,\dots,\varphi_i^{m_i}\wedge\psi_i^{m_i})
$$
\end{description}
\end{prop}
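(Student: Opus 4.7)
The plan is to prove each of the three statements semantically: fix a model $M$ with $M \models F$ and a valuation $\rho : \Var \ra M$, unfold both sides of the claimed equality using the set extension of $\sigma_M$ defined after Definition~\ref{dfn:ML-model}, and then invoke the corresponding constructor axiom. Throughout, I will freely use Proposition~\ref{prop:equality}, which lets me reduce each ``$=$'' to showing the two sides have equal $\overline{\rho}$-interpretations in every model.

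For \emph{Case analysis}, I would start from the no-junk axiom, which by Proposition~\ref{prop:simple}(6) means the disjunction $\bigvee_{s_i=s} \exists \bar{x}_i\,.\,c_i(\bar{x}_i)$ evaluates to $M_s$ under every valuation. Intersecting with $\overline{\rho}(\varphi)$ yields $\overline{\rho}(\varphi)$, and the intersection can be pushed inside $\vee$ (a propositional tautology via Proposition~\ref{prop:PL-validity}) and inside the $\exists$-quantifiers (after renaming the bound $x_i^j$ to be fresh for $\varphi$, so they disappear through the standard scoping argument used throughout Proposition~\ref{prop:simple}). For the functional variant, I would apply Proposition~\ref{prop:functional-conjunction}(3) inside each disjunct: since both $\varphi$ and $c_i(x_i^1,\dots,x_i^{m_i})$ are functional under the extra assumption, $\varphi \wedge c_i(\bar{x}_i) = c_i(\bar{x}_i) \wedge (\varphi = c_i(\bar{x}_i))$.

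For \emph{Different constructors}, the no-confusion axiom, read via Proposition~\ref{prop:simple}(4), says that for all $a_1,\dots,a_{m_i}$ and $b_1,\dots,b_{m_j}$ we have $(c_i)_M(a_1,\dots,a_{m_i}) \cap (c_j)_M(b_1,\dots,b_{m_j}) = \emptyset$. Then expanding $\overline{\rho}(c_i(\varphi_i^1,\dots,\varphi_i^{m_i}))$ and $\overline{\rho}(c_j(\varphi_j^1,\dots,\varphi_j^{m_j}))$ as unions of such $(c_i)_M$ and $(c_j)_M$ values over the respective argument sets, the intersection of two such unions is empty because each pairwise intersection is empty.

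For \emph{Same constructor}, the $\supseteq$ inclusion follows immediately from structural framing (Proposition~\ref{prop:structural-framing}) applied to $\varphi_i^k \wedge \psi_i^k \ra \varphi_i^k$ and $\varphi_i^k \wedge \psi_i^k \ra \psi_i^k$. For $\subseteq$, I would expand the left-hand side as the intersection of two unions of $(c_i)_M$-values; any $y$ in the intersection arises from tuples $(a_1,\dots,a_{m_i})$ with $a_k \in \overline{\rho}(\varphi_i^k)$ and $(b_1,\dots,b_{m_i})$ with $b_k \in \overline{\rho}(\psi_i^k)$ such that $y \in (c_i)_M(a_1,\dots,a_{m_i}) \cap (c_i)_M(b_1,\dots,b_{m_i})$. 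The same-constructor axiom, interpreted in $M$, forces $a_k = b_k$ for every $k$, so each $a_k$ lies in $\overline{\rho}(\varphi_i^k) \cap \overline{\rho}(\psi_i^k) = \overline{\rho}(\varphi_i^k \wedge \psi_i^k)$, placing $y$ in the right-hand side. The main obstacle is precisely this step: extracting the joint-injectivity content of the axiom from its surface form, where the conjunction $x_i^k \wedge y_i^k$ inside $c_i$ silently distinguishes the case $a_k = b_k$ (where it interprets as $\{a_k\}$) from $a_k \neq b_k$ (where it interprets as $\emptyset$, making $(c_i)_M(\dots)$ empty by monotonicity).
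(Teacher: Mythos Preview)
Your proposal is correct and follows essentially the same approach as the paper's proof: both reduce equality to $\lra$ via Proposition~\ref{prop:equality}, handle Case analysis by distributing $\wedge$ over the no-junk disjunction (with Proposition~\ref{prop:functional-conjunction} for the functional variant), handle Different constructors by unfolding the interpretations as unions and appealing to the pointwise disjointness from the no-confusion axiom, and handle Same constructor with structural framing for $\supseteq$ and the element-extraction-plus-injectivity argument you describe for $\subseteq$. The paper phrases Different constructors as a proof by contradiction rather than a direct union-of-empty-intersections computation, but that is cosmetic.
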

\begin{proof}
Case analysis follows by Proposition~\ref{prop:equality}, which reduces equality
($=$) to double implication ($\lra$).
The latter follows by the propositional distributivity of $\wedge$
over $\vee$ and, of course, the ``no junk'' requirement of constructors
in Definition~\ref{dfn:constructors}.
The part where the constructors and $\varphi$ are functional is an immediate
corollary, making use of Proposition~\ref{prop:functional-conjunction}.

Different constructors:
Suppose that there is some model $M$ and valuation
$\rho : \Var \ra M$ such that
$\overline{\rho}(c_i(\varphi_i^1,\dots,\varphi_i^{m_i})
\wedge c_j(\varphi_j^1,\dots,\varphi_j^{m_j})) \neq \emptyset$.
Then there are some elements
$a_i^1 \in \overline{\rho}(\varphi_i^1)$,
...,
$a_i^{m_i} \in \overline{\rho}(\varphi_i^{m_i})$,
and
$a_j^1 \in \overline{\rho}(\varphi_j^1)$,
...,
$a_j^{m_j} \in \overline{\rho}(\varphi_j^{m_j})$
such that
$(c_i)_M(a_i^1,\dots,a_i^{m_i}) \cap (c_j)_M(a_j^1,\dots,a_j^{m_j})
\neq \emptyset$, which contradicts the ``no confusion'' requirement for
different constructors in Definition~\ref{dfn:constructors}.

Same constructor:
By Proposition~\ref{prop:equality}, we can replace $=$ with $\lra$.
The $\leftarrow$ implication is immediate by structural framing,
Proposition~\ref{prop:structural-framing}.
For the $\ra$ implication, let 
$M$ be a model, $\rho : \Var \ra M$ a valuation, and
$b \in
\overline{\rho}(c_i(\varphi_i^1,\dots,\varphi_i^{m_i})
\wedge
c_i(\psi_i^1,\dots,\psi_i^{m_i}))$.
Then there are some elements
$u_i^1 \in \overline{\rho}(\varphi_i^1)$,
...,
$u_i^{m_i} \in \overline{\rho}(\varphi_i^{m_i})$,
and
$v_i^1 \in \overline{\rho}(\psi_i^1)$,
...,
$v_i^{m_i} \in \overline{\rho}(\psi_i^{m_i})$
such that
$b \in
(c_i)_M(u_i^1,\dots,u_i^{m_i})
\wedge
(c_i)_M(v_i^1,\dots,v_i^{m_i}))$,
that is,
$b\in\overline{\rho'}(c_i(x_i^1,\dots,x_i^{m_i}) \wedge
c_i(y_i^1,\dots,y_i^{m_i}))$, where
$\rho':\Var \ra M$ is some valuation that takes
$x_i^1$ to $u_i^1$, ..., $x_i^{m_i}$ to $u_i^{m_i}$,
and
$y_i^1$ to $v_i^1$, ..., $y_i^{m_i}$ to $v_i^{m_i}$.
By the ``no confusion'' requirement for the same constructor
in Definition~\ref{dfn:constructors} we conclude that
$b\in\overline{\rho'}(c_i(x_i^1\wedge y_i^1,\dots,x_i^{m_i}\wedge y_i^{m_i}))$,
that is,
$b\in(c_i)_M(\{u_i^1\}\wedge\{v_i^1\},\dots,\{u_i^{m_i}\}\wedge\{v_i^{m_i}\})$.
This can only happen when $u_i^1 = v_i^1$, ..., $u_i^{m_i} = v_i^{m_i}$.
And in that case it can only be that 
$b \in
\overline{\rho}(c_i(\varphi_i^1 \wedge \psi_i^1,\dots,
\varphi_i^{m_i} \wedge \psi_i^{m_i}))$.
\end{proof}

The case analysis property is useful when additional constraints are
needed on a pattern in order to reason with it.
For example, if $b$ is a Boolean (symbolic) expression in a given
positive context, $C[b]$, then we can replace $b$ with
$\ttrue \wedge (b=\ttrue) \vee \ffalse \wedge (b=\ffalse)$, and then by
Propsitions~\ref{prop:symbol-distributivity} (distributivity) and
\ref{prop:constraint-propagation} (constraint propagation) we can reduce
$C[b]$ to 
$C[\ttrue] \wedge (b=\ttrue) \vee C[\ffalse] \wedge (b=\ffalse)$.
Similarly, if $e$ is a $\Nat$ (symbolic) expression in a positive context 
$C[e]$, then we can reduce the pattern $C[e]$ to the pattern
$C[0] \wedge (e=0)\vee\exists x\,.\,C[\textit{succ}(x)]\wedge(e = succ(x))$,
which may be further reducible.
The other two properties in Proposition~\ref{prop:constructors} are
self-explanatory and clearly useful for the same reasons why constructors
are useful, but we found them particularly useful when attempting to do
symbolic execution using the operational semantics rules of a language.
As explained in \cite{stefanescu-park-yuwen-li-rosu-2016-oopsla}, the main
technical instrument there is unification: indeed, in order to check if
a symbolic program configuration $\varphi$ can be executed with an operational
rule $\textit{left} \Ra \textit{right}$, a unification of $\varphi$ and
$\textit{left}$ is attempted.
If it fails then the rule cannot be applied; if it succeeds then the rule can
be applied and $\varphi$ is advanced to $\textit{right} \wedge \psi$, where
$\psi$ is the constraints resulting from unifying $\varphi$ and $\textit{left}$.
As discussed below, unification can be achieved in matching logic
by pattern conjunction, which makes the last properties in
Proposition~\ref{prop:constraint-propagation} indispensable.

We next show how unification and anti-unification can be explained
as conjunction and, respectively, disjunction of matching logic patterns.
To fall into the conventional setting, for the reminder of this section
assume that all the symbols are functional constructors and all the
starting patterns are term patterns (Definition~\ref{dfn:functional-notation})
built with such constructors.

Let us re-think unification in terms of matching logic and pattern matching.
Consider two patterns $\varphi_1$ and $\varphi_2$ having the same sort.
Each of them is matched by a potentially infinite set of elements.
We are interested in the elements that match both $\varphi_1$ and
$\varphi_2$.
Moreover, we are interested in some pattern $\varphi$ that captures all
these elements.
Clearly $\varphi \ra \varphi_1$ and $\varphi \ra \varphi_2$, and we would
like the most general such $\varphi$, that is, if $\varphi' \ra \varphi_1$
and $\varphi' \ra \varphi_2$ then $\varphi' \ra \varphi$.
We do not have to search for such a $\varphi$ any further, because it is,
by definition, $\varphi_1 \wedge \varphi_2$.
All we have to do then is to simplify $\varphi_1 \wedge \varphi_2$ to a
convenient form, say a term pattern constrained with equalities telling
how $\varphi_1$ and $\varphi_2$ fall as instances,
using matching logic reasoning.
Let us exemplify this when $\varphi_1 \equiv f(g(x),x)$ and
$\varphi_2 \equiv f(y,0)$ where $f$ is a binary symbol (functional construct),
$g$ is unary, and $0$ is a constant:
$$
\begin{array}{rcr}
f(g(x),x) \wedge f(y,0) & = &
\hspace*{10ex} \textrm{(by Proposition~\ref{prop:constructors})}
\\
f(g(x) \wedge y, x \wedge 0) & = & 
\textrm{(by Proposition~\ref{prop:functional-conjunction})}
\\
f(g(x) \wedge (y=g(x)), x \wedge (x = 0)) & = & 
\textrm{(by Proposition~\ref{prop:constraint-propagation})}
\\
f(g(x),x) \wedge (y = g(x)) \wedge (x = 0) & = &
\textrm{(by Proposition~\ref{prop:FOL-equality})}
\\
f(g(0),0) \wedge (y = g(0)) \wedge (x = 0)
\end{array} 
$$
Therefore, using matching logic reasoning we obtained both
the most general unifier of the two term patterns, encoded as
a conjunction of equalities $(y = g(0)) \wedge (x = 0)$, and
the unifying term pattern $f(g(0),0)$.
We believe that unification algorithms should not be difficult to adapt
into matching logic proof search heuristics capable of producing
proofs like above, thus narrowing the gap between tools and certifiable
program verification.

Anti-unification, or generalization, can be dually regarded as disjunction
of patterns\footnote{The author thanks Traian Florin \c{S}erb\u{a}nu\c{t}\u{a}
for observing this.}.
Let us briefly recall Plotkin's original two-rule algorithm \cite{Plotkin:1970}:
\begin{enumerate}
\item
$ f(u_{1},\dots ,u_{n})\sqcup f(v_{1},\ldots ,v_{n})
\rightsquigarrow
f(u_{1}\sqcup v_{1},\ldots ,u_{n}\sqcup v_{n})$,
\item
$u \sqcup v \rightsquigarrow x_{u,v}$ otherwise,
where $x_{u,v}$ is a variable uniquely determined by $u$ and $v$.
\end{enumerate}
Given terms $s$ and $t$, the term obtained by applying this algorithm
to $s \sqcup t$ is their
anti-unification; the corresponding substitutions instantiating 
it to $s$ and, respectively, $t$ are obtained by
instantiating each variable $x_{u,v}$ to $u$ and, respectively, $v$.
For example, $f(g(0),0) \sqcup f(g(1),1) \rightsquigarrow^*
f(g(x_{0,1}),x_{0,1})$.
Indeed, the term $f(g(x_{0,1}),x_{0,1})$ containing one variable, $x_{0,1}$,
is the least general term that is more general than both 
$f(g(0),0)$ and $f(g(1),1)$.
Also, the two original terms can be recovered by substituting $x_{0,1}$ with
$0$ and, respectively, $1$.

Plotkin's algorithm can be mimicked with inference in matching logic.
For the example above, the following matching logic proof blindly follows the
application of Plotkin's algorithm (all proof steps correspond to applications of
proof rules of FOL with equality):
$$
\begin{array}{rcr}
f(g(0),0) \vee f(g(1),1) & = & 
\\
\exists x\,.\,\exists y\,.\,f(x,y) \wedge ( x=g(0) \wedge y = 0 \vee x=g(1) \wedge y=1) & = &
\\
\exists x\,.\,\exists y\,.\,\exists z\,.\,f(x,y) \wedge x = g(z) \wedge
( z=0 \wedge y = 0 \vee z=1 \wedge y=1) & = &
\\
\exists x\,.\,\exists y\,.\,\exists z\,.\,f(x,y) \wedge x = g(z) \wedge
( z=0 \wedge y = z \vee z=1 \wedge y=z) & = &
\\
\exists x\,.\,\exists y\,.\,\exists z\,.\,f(x,y) \wedge x = g(z) \wedge
y = z \wedge (z=0 \vee z=1) & = &
\\
\exists z\,.\,f(g(z),z) \wedge (z=0 \vee z=1)
\end{array}
$$
The resulting pattern contains both the generalization,
$f(g(z),z)$, and the two witness substitutions that can recover
the original terms (encoded as a disjunction of equalities).

\subsection{Built-in Domains}
\label{sec:builtins}

Dedicated solvers and decision procedures specialized for particular but
important mathematical domains abound in the literature.
Some domains may not even have finite descriptions in certain
logics; a classic example is the domain of natural numbers, which does
not admit a finite, not even a recursively enumerable axiomatization
in FOL (G\"odel's incompleteness \cite{Godel1930,Godel1931}).
Therefore, to reason about certain properties that involve natural
numbers, we need to leave FOL.
The standard approach is to assume some oracle for the domain of
interest, which is capable of answering validity questions within
that domain.
At the practical level, such an oracle may be implemented using
specialized procedures and algorithms for that domain, such as those
provided by Z3~\cite{z3-solver}, Yices~\cite{yices},
CVC~\cite{DBLP:conf/cav/BarrettT07,DBLP:conf/cav/BarrettCDHJKRT11}, etc.
At the theoretical level, the set of models of the FOL specification
in question is restricted to those that inherit the desired model
for the built-in data-types, and thus we can assume all the valid
FOL properties of that domain in the rest of the proof even if
those properties are not provable using FOL.

Reasoning with built-in domains can be done exactly the same way
in matching logic: assume the desired sorts and symbols for the built-in
domains, together with all their valid patterns.
Due to their ubiquitous nature, from now on we tacitly assume
definitions of integers and of natural numbers, as well
as of Boolean values, with common operations on them.
We assume that these come with three sorts, $\Int$, $\Nat$ and $\Bool$,
and the operations on them use the conventional syntax;
e.g.,
$\_+\_ : \Int\times\Int \ra \Int$, 
$\_+\_ : \Nat\times\Nat \ra \Nat$, 
$\_>\_ : \Nat\times\Nat \ra \Bool$, 
$\_{\it and}\_ : \Bool \times \Bool \ra \Bool$,
${\it not}\_ : \Bool \ra \Bool$, etc.

Boolean expressions are frequently used in matching logic specifications
to constrain variables that occur in patterns of possibly other sorts.
For example, suppose that in some domain $\Real$ of real numbers
we want to refer to all numbers of the form $1/x$ where $x$ is a
strictly positive integer.
These numbers are precisely matched by the pattern
$1/x \wedge (x > 0 =_\Bool^\Real {\it true})$.
However, this pattern is too verbose.
For the sake of a more compact and easy to read notation, we introduce the
following:

\begin{nota}
\label{notation:bool}
If $b$ is a \underline{proper} Boolean expression, that is, a term pattern of
sort $\Bool$ (Definition~\ref{dfn:functional-notation}), then we will write
just $b$ instead of $b={\it true}$ in any \underline{other} sort context.
\end{nota}

Notation~\ref{notation:bool} allows us to use Boolean expressions
in any sort context, thanks to the additional notational conventions for
equality in Section~\ref{sec:equality}.
For example, we can write $1/x \wedge x>0$ instead of
$1/x \wedge (x > 0 =_\Bool^\Real {\it true})$.

To avoid confusion or even introducing inconsistencies, we urge the reader
to respect the underlined words \underline{proper} and \underline{other} in
Notation~\ref{notation:bool}.
That's because $\Bool$ expressions, when regarded as patterns,
evaluate to one of the singleton sets $\{t\}$ (the true value) or
$\{f \}$ (the false value), while patterns of sort $\Bool$ can evaluate
to any of the four sets $\emptyset$, $\{t\}$, $\{f\}$, or $\{t,f\}$.
For example, consider the $\Bool$ patterns $\top_\Bool$ and $\bot_\Bool$,
which are not proper $\Bool$ expressions and evaluate
to the sets $\{t,f\}$ and $\emptyset$, respectively, and an equality pattern
$\top_\Bool = \bot_\Bool$ which is obviously $\bot$
(regardless of the sort context).
If we carelessly apply the notation above to this pattern we get
$(\top_\Bool = {\it true}) = (\bot_\Bool = {\it true})$,
that is, $\top$;
that's because both $\top_\Bool = {\it true}$ and
$\bot_\Bool = {\it true}$ are $\bot$, and $\bot = \bot$ is $\top$.
So it is important to apply the notation
``$b$ as a shortcut for $b = {\it true}$''
only when it is guaranteed that $b$ evaluates to either $\{t\}$ or $\{f\}$,
such as when $b$ is a proper Boolean expression term.
It is also important to apply the notation only when $b$ occurs in sort
contexts \underline{other} than \Bool.
For example, consider the Boolean expression
$b \equiv ({\it true \,or\, false})$.
If we carelessly apply the notation above to the Boolean sub-expressions
$\it true$ and $\it false$, which occur in $b$ above in $\Bool$ contexts, then
we get $({\it true} = {\it true})\,{\it or}\,({\it false} = {\it true})$,
which is $\top_\Bool\,{\it or}\,\bot_\Bool$, which is $\bot_\Bool$
(by the second item in Definition~\ref{def:rho-bar}).
On the other hand, $b = {\it true}$ is $\top_\Bool$.

When reasoning with matching logic patterns, it is often the case
that various Boolean expression constraints come from various sub-patterns.
We would like to combine them into larger Boolean expressions, which we can
then send to SMT solvers.
The following result allows us to do that:
\begin{prop}
\label{prop:combine-bool}
If $b$, $b_1$ and $b_2$ are proper $\Bool$ expressions, then
\begin{itemize}
\item $\models (b_1 = {\it true} \wedge b_2 = {\it true}) = (b_1\,{\it and}\,b_2 = {\it true})$
\item $\models \neg(b = {\it true}) = ({\it not}\,b = {\it true})$
\end{itemize}
Other similar properties for derived Boolean constructs can be derived from these.
\end{prop}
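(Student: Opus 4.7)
The plan is to unfold both sides of each claimed equality using Proposition~\ref{prop:equality}, reducing the problem to showing, in an arbitrary model $M$ (respecting the assumed built-in $\Bool$ axioms, so $M_\Bool = \{t,f\}$ with $\it and$ and $\it not$ interpreted in the standard way) and under an arbitrary valuation $\rho:\Var\ra M$, that the two sides evaluate to the same subset of $M_\Bool$. The crucial enabling fact is that $b$, $b_1$, $b_2$ are \emph{proper} Boolean expressions, i.e.\ term patterns of sort $\Bool$, hence functional by Corollary~\ref{cor:terms}: $\overline{\rho}(b) = \{v\}$, $\overline{\rho}(b_i) = \{v_i\}$ for unique $v, v_1, v_2 \in \{t,f\}$.

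For the first property, I would first compute $\overline{\rho}(b_i = {\it true})$ for $i=1,2$ using (2) in Proposition~\ref{prop:equality}: it equals $M_\Bool$ iff $\overline{\rho}(b_i) = \{t\}$ (i.e.\ $v_i = t$), and $\emptyset$ otherwise. Intersecting (since $\wedge$ is interpreted as $\cap$) yields $M_\Bool$ iff $v_1 = t$ \emph{and} $v_2 = t$, and $\emptyset$ otherwise. Next I would compute the right-hand side: because $\it and$ is a function symbol, $\overline{\rho}(b_1\,{\it and}\,b_2) = \{v_1\,{\it and}\,v_2\}$, so $\overline{\rho}(b_1\,{\it and}\,b_2 = {\it true})$ equals $M_\Bool$ iff $v_1\,{\it and}\,v_2 = t$, which by the built-in truth-table for $\it and$ is equivalent to $v_1 = v_2 = t$. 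Hence both sides yield exactly the same (two-valued) set in every valuation, and Proposition~\ref{prop:equality} then lifts this pointwise agreement to the top-level equality.

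For the second property the argument is parallel. By (1)--(2) of Proposition~\ref{prop:equality}, $\overline{\rho}(b = {\it true})$ is $M_\Bool$ if $v = t$ and $\emptyset$ if $v = f$; since $\neg$ is interpreted as complement in $M_\Bool$, $\overline{\rho}(\neg(b={\it true}))$ is $\emptyset$ when $v = t$ and $M_\Bool$ when $v = f$. For the right-hand side, functionality of the term $\it not\,b$ gives $\overline{\rho}({\it not}\,b) = \{{\it not}\,v\}$, so $\overline{\rho}({\it not}\,b = {\it true})$ equals $M_\Bool$ iff ${\it not}\,v = t$, iff $v = f$. The two sides again match, and Proposition~\ref{prop:equality} closes the argument.

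I do not foresee any real obstacle here: the proof is essentially an exercise in chasing definitions once the singleton-evaluation of proper Boolean terms is invoked. The one subtlety worth flagging explicitly (to justify Notation~\ref{notation:bool} being used safely on the \emph{outer} equalities, which are themselves predicates) is that each of $b_i = {\it true}$, $b_1\,{\it and}\,b_2 = {\it true}$, $\neg(b={\it true})$ and ${\it not}\,b = {\it true}$ is a matching logic predicate in the sense of Definition~\ref{dfn:spec-predicates}, so top-level agreement of interpretations really is the same as the asserted equality.
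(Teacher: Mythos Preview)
Your proposal is correct and follows essentially the same approach as the paper, which gives only a one-line proof: ``Trivial: each of $\overline{\rho}(b)$, $\overline{\rho}(b_1)$, and $\overline{\rho}(b_2)$ can only be $\{t\}$ or $\{f\}$, for any valuation $\rho$.'' You have simply spelled out in detail the case analysis that the paper leaves implicit. One small notational slip: the inner equalities $b_i = {\it true}$ need not live in sort $\Bool$ (they are $=_{\Bool}^{s_2}$ for an arbitrary context sort $s_2$), so the sets involved are subsets of $M_{s_2}$ rather than $M_{\Bool}$; but since equality patterns are predicates this changes nothing in your argument.
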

\begin{proof}
Trivial: each of
$\overline{\rho}(b)$,
$\overline{\rho}(b_1)$,
and $\overline{\rho}(b_2)$
can only be $\{t\}$ or $\{f\}$,
for any valuation $\rho$.
\end{proof}

\newcommand{\Truth}{{\it Truth}}

\section{Instance: Algebraic Specifications and Beyond}
\label{sec:alg-spec}

An algebraic specification is a many-sorted signature
$(S,\Sigma)$ together with a set of
equations\footnote{We only consider unconditional equations here.}
$E$ over $\Sigma$-terms with variables.
The variables are assumed universally quantified over the entire equation.
The models, called $\Sigma$-algebras, are first-order $\Sigma$-models
without predicates where equality is interpreted as the identity relation.
We let $\models_{\it Alg}$ denote the algebraic specification satisfaction
relation; in particular, $E \models_{\it Alg} e$ means that any model
that satisfies all the equations in $E$ also satisfies $e$.

Algebraic specifications play a crucial role in theoretical computer
science and program reasoning, because they are often used to define
abstract data types (ADTs) \cite{Liskov:1974:PAD:942572.807045,goguen1976initial}.
Some common ADTs, which have proved useful in many applications,
are lists (or sequences), sets, multisets, maps, multimaps, graphs, stacks,
queues, priority queues, double-ended queues, double-ended priority queues,
etc.\ \cite{wikiADT}.
These ADTs can be found a variety of formal definitions using algebraic
specifications in the literature, not necessarily always equivalent,
and are easily definable or already pre-defined in algebraic specification
languages such as Maude~\cite{maude-book},
CASL~\cite{mosses_caslReferenceManual_2004},
CafeOBJ~\cite{caferep98},
OBJ~\cite{iobj},
Clear~\cite{Burstall:1977:PTT:1622943.1623045}, etc.,
as well as in many other languages with support for ADTs.

Here we show not only that algebraic specifications can be regarded as
matching logic specifications, but also that the use of matching logic often
allows for more expressive, concise and intuitive specifications of ADTs.
To capture conventional ADTs as matching logic specifications, we need
to do almost nothing besides recalling the conventions and notations
introduced in Section~\ref{sec:useful-symbols}.
Specifically, algebraic equations $t=t'$ in $E$ are regarded as matching
logic equality patterns $t=t'$ (Section~\ref{sec:equality}),
algebraic symbols in $\Sigma$ are interpreted as functional symbols
(Section~\ref{sec:functions}, Definition~\ref{dfn:functional-notation}),
and no other patterns but equations are allowed in specifications.
The resulting matching logic specifications are then precisely
the algebraic specifications not only syntactically, but also semantically:

\begin{prop}
\label{prop:alg-spec}
Let $(S,\Sigma,F)$ be the matching logic specification associated
to the algebraic specification $(S,\Sigma,E)$ as above, that is,
$F$ contains an equality pattern for each equation in $E$, as well
as all the patterns stating that the symbols in $\Sigma$ are
interpreted as functions (see Definition~\ref{dfn:functional-notation}).
Then for any $\Sigma$-equation $e$, we have
$E \models_{\it Alg} e$ iff $F \models e$.
\end{prop}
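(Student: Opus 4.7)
The plan is to establish a bijective correspondence between $\Sigma$-algebras and those matching logic $\Sigma$-models that satisfy the functionality axioms in $F$, and then to show that this correspondence preserves the satisfaction of equations. Given a $\Sigma$-algebra $A$ with carriers $\{A_s\}_{s\in S}$ and operations $\{\sigma_A\}_{\sigma\in\Sigma}$, I would associate to it the matching logic model $M_A$ with the same carriers and with $\sigma_{M_A}(m_1,\ldots,m_n) = \{\sigma_A(m_1,\ldots,m_n)\}$ for every symbol. Conversely, any matching logic model $M$ satisfying the functionality axioms $\exists y\,.\,\sigma(x_1,\ldots,x_n)=y$ in $F$ has, by Corollary~\ref{cor:functions}, singleton interpretations for every symbol, so we can read off a $\Sigma$-algebra $A_M$ by taking the unique element of each singleton. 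These two assignments are mutually inverse.

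The central lemma to prove by induction on the term structure is that, for any term $t$ of sort $s$ and any valuation $\rho:\Var\ra M$ (equivalently, any assignment into $A$), we have $\overline{\rho}(t) = \{t_A(\rho)\}$, where $t_A(\rho)$ denotes the usual algebraic interpretation of $t$ in $A$ under $\rho$. The base case (variables) is immediate from Definition~\ref{def:rho-bar}, and the inductive step uses that $\sigma_M$ extended to sets acts as $\sigma_M(\{m_1\},\ldots,\{m_n\}) = \{\sigma_A(m_1,\ldots,m_n)\}$. Combined with Proposition~\ref{prop:equality}(2), this yields that $M \models t = t'$ (in the matching logic sense) iff $t_A(\rho) = t'_A(\rho)$ for all $\rho$, iff $A \models_{\it Alg} t = t'$.

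With this machinery, the two implications follow symmetrically. For the $(\Rightarrow)$ direction, suppose $E \models_{\it Alg} e$ and let $M$ be any matching logic model with $M \models F$. Since $F$ contains the functionality axioms, $M = M_{A_M}$ for a uniquely determined $\Sigma$-algebra $A_M$; the lemma above then shows $A_M \models_{\it Alg} e'$ for each equation $e' \in E$ (since $M$ satisfies the corresponding equality patterns in $F$), so $A_M \models_{\it Alg} e$ by hypothesis, and applying the lemma once more gives $M \models e$. For the $(\Leftarrow)$ direction, suppose $F \models e$ and let $A$ satisfy $E$. Then $M_A$ satisfies the functionality axioms by construction and, by the lemma, satisfies every equality pattern corresponding to $E$, hence $M_A \models F$; the hypothesis then gives $M_A \models e$, which translates back to $A \models_{\it Alg} e$.

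I expect the only mildly delicate point to be the careful bookkeeping in the inductive lemma, specifically verifying that the set-valued extension $\sigma_M(A_1,\ldots,A_n) = \bigcup\{\sigma_M(a_1,\ldots,a_n) \mid a_i \in A_i\}$ behaves correctly on singletons in a model where $\sigma_M$ is forced to be singleton-valued. Everything else is routine, since the matching logic equality pattern $t = t'$ was defined via totality of $t \lra t'$ precisely so as to coincide with the classical, two-valued notion of equality whenever both sides are functional (which they are here, by Corollary~\ref{cor:terms}).
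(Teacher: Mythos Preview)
Your proposal is correct and follows essentially the same approach as the paper: both establish the bijection between $\Sigma$-algebras and matching logic models satisfying the functionality axioms, with equation satisfaction transferring across the bijection. Your version is in fact more explicit than the paper's, which merely states the bijection and asserts that ``everything else follows from it''; you spell out the inductive lemma $\overline{\rho}(t) = \{t_A(\rho)\}$ and the appeal to Proposition~\ref{prop:equality} that the paper leaves implicit.
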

\begin{proof}
The key observation is that, in a similar style to the proof of
Proposition~\ref{prop:pure-predicate}, there is a bijection between
the matching logic models $M$ satisfying $F$ and the
$(S,\Sigma)$-algebras $M'$ satisfying $E$, such that
${M} \models e$ iff ${M}' \models_{\it Alg} e$
for any $\Sigma$-equation $e$.
The model bijection is defined as follows:
\begin{itemize}
\item $M'_s=M_s$ for each sort $s \in S$;
\item ${\sigma}_{M'} : M_{s_1}\times \cdots \times M_{s_n} \ra M_s$ with
$\sigma_{M'}(a_1,\ldots,a_n)=a$ iff 
$\sigma_M:M_{s_1}\times \cdots \times M_{s_n} \ra {\cal P}(M_s)$ with
$\sigma_M(a_1,\ldots,a_n)=\{a\}$.
Note that this is well-defined because $F$ includes all the patterns stating
that all the symbols are functional, so $\sigma_M$ is a function.
\end{itemize}
This model relationship is easy to prove a bijection, and everything else
follows from it.
\end{proof}

Using the notations introduced so far, Peano natural numbers, for example,
can be defined as the following matching logic specification:
$$
\begin{array}{l@{\ \ \ \ \ \ \ \ \ }l}
0: \ \ra\PNat \\
{\it succ}:\PNat\ra\PNat
& {\it plus}(0,y) = y \\
{\it plus}:\PNat\times\PNat\ra\PNat
& {\it plus}({\it succ}(x),y) = {\it succ}({\it plus}(x,y))
\end{array}
$$
This looks identical to the conventional algebraic specification definition,
which is precisely the point and justifies our notation conventions in
Section~\ref{sec:useful-symbols}.
In particular, the functional notation
(Definition~\ref{dfn:functional-notation})
for the three symbols ensures that they will be interpreted as
functions in the matching logic models.
Also, as seen in the proof of Proposition~\ref{prop:alg-spec}, there is a
bijective correspondence between the matching logic models of the specification
above and the conventional models of the Peano algebraic specification
(we only discuss the loose semantics here, not the initial-algebra semantics
\cite{Goguen:1977:IAS:321992.321997}).

Note, however, that matching logic allows more than just
equational patterns.
For example, we can add to $F$ the pattern
$0 \ \vee \ \exists x \,.\, {\it succ}(x)$
stating that any number is either 0 or the successor of another number.
Nevertheless, since matching logic ultimately has the same expressive
power as predicate logic (Proposition~\ref{prop:mlTopred}), we cannot
finitely axiomatize in matching logic any mathematical domains that
do not already admit finite FOL axiomatizations.
As indicated in Section~\ref{sec:builtins}, we follow the standard
approach to deal with built-in domains, namely we assume them theoretically
presented with potentially infinitely many axioms but implemented using
specialized decision procedures.
Indeed, our matching logic implementation prototype in \K
defers the solving of all domain constraints to Z3~\cite{z3-solver}.

%

\subsection{Sequences, Multisets and Sets}
\label{sec:collections}

Sequences, multisets and sets are typical ADTs.
Matching logic enables, however, some useful developments and shortcuts.
For simplicity, we only discuss collections over \Nat, and name the
corresponding sorts $\Seq$, $\MSet$, and $\Set$, respectively.
Ideally, we would build upon an order-sorted algebraic signature setting,
e.g. following the approach in \cite{GOGUEN1992217},
so that we can regard $x\!:\!\Nat$ not only as an element of sort $\Nat$,
but also as one of sort $\Seq$ (a one-element sequence), as one of sort
$\MSet$, as well as one of sort $\Set$.
Extending matching logic to an order-sorted setting is beyond the scope of
this paper.
The reader who is not familiar with order-sorted concepts can safely
assume that elements of sort
$\Nat$ used in a $\Seq$, $\MSet$, or $\Set$ context are wrapped with
injection symbols.
The symbols below can have many equivalent definitions.

Sequences can be defined with two symbols and corresponding equations:
$$
\begin{array}{l@{\hspace*{8ex}}l}
\epsilon : \ \ra \Seq & \epsilon \cdot x = x \\
\_\cdot\_ : \Seq \times \Seq \ra \Seq & 
x \cdot\epsilon = x \\
& (x \cdot y) \cdot z = x \cdot (y \cdot z)
\end{array}
$$
We assume that lowercase variables have sort $\Nat$ and uppercase
variables have the appropriate collection sort.
To avoid adding initiality constraints on models, yet be able to do proofs
by case analysis and elementwise equality, we can add the following patterns:
$$
\begin{array}{l}
\epsilon \mathrel\vee \exists x\,.\,\exists S.\,x\cdot S
\\
(x\cdot S = x'\cdot S') = (x = x') \wedge (S = S')
\end{array}
$$
The second axiom above holds strictly for sequences, but not for
commutative collections, so it needs to be removed later
when we add the commutativity axiom to define multisets and sets.
We next define two common operations on sequences:
$$
\begin{array}{l@{\ \ \ \ \ \ \ \ \ \ \ \ \ }l}
\reverse : \Seq \ra \Seq &
\_\in\_ : \Nat \times \Seq \ra \Bool
\\
\reverse(\epsilon) = \epsilon & 
x \in \epsilon = \ffalse \hfill \\ 
\reverse(x \cdot S) = \reverse(S) \cdot x &
 x \in y \cdot S = (x = y \wedge \ttrue) \mathrel{\textit{or}} x \in S
\end{array}
$$

To illustrate the flexibility of matching logic, we next define up-to
and Fibonacci sequences of natural numbers.
$$
\begin{array}{l@{\hspace*{10ex}}l}
\upto : \Nat \ra \Seq &
\upto(n) = (n = 0 \wedge \epsilon \vee n > 0 \wedge \upto(n-1) \cdot n)
\end{array}
$$
This specification needs to be explained.
Let $M$ be a model satisfying the above.
First recall Notation~\ref{notation:bool}. 
For notational simplicity, assume that $M_\Nat$ and $M_\Seq$ are the sets
of natural numbers and of comma-separated sequences of natural numbers,
respectively.
We show by induction on $m$ that $\upto_M(m)=\{1 \cdot 2 \cdots m\}$.
If $m=0$ then the second disjunct of the axiom is $\emptyset$ and thus the
first disjunct ensures that $\upto_M(0)=\epsilon_M=\epsilon$.
If $m>0$ then the first disjunct is $\emptyset$ and thus the second disjunct,
with $\rho:\Var\ra M$ such that
$\rho(n)=m$, yields
$\overline{\rho}(\upto(n)) = \overline{\rho}(\upto(n-1) \cdot n)$,
that is,
$\upto_M(m) =
 \upto_M(m-1) \cdot m$,
which by the induction hypothesis implies
$\upto_M(m)= \{1 \cdot 2 \cdots m-1\} \cdot m = \{1 \cdot 2 \cdots m\}$.

Similarly, we can specify a function that defines the sequence of Fibonacci
numbers up to a given number $n>0$:
$$
\begin{array}{l@{\hspace*{10ex}}l}
{\it fib} : \Nat \ra \Seq &
{\it fib}(n) = (n = 0 \wedge 0 \vee n > 0 \wedge {\it aux}(n,\ 0 \cdot 1)) \\
{\it aux} : \Nat \times \Seq \rightharpoonup \Seq &
{\it aux}(1,\ S) = S \\
& n >1 \ra {\it aux}(n,\ S \cdot x \cdot y) = {\it aux}(n-1,\ S \cdot x \cdot y \cdot x+y)
\end{array}
$$

We conclude the discussion on sequences with an elegant
means to sort sequences following a bubble-sort methodology:
$$
(x \cdot y \wedge x>y) = y \cdot x
$$ 
Since equations are symmetric, the above effectively allows
to prove (so far only semantically, in a model) a sequence equal to any of
its permutations, i.e., sequences become multisets.
If the equations were oriented, like they are in reachability
logic~\cite{rosu-stefanescu-ciobaca-moore-2013-lics},
then the above would yield a sequence sorting procedure.

We can transform sequences into multisets adding the equality axiom
$
x \cdot y = y \cdot x
$,
%
and into sets by also including $x \cdot x = \bot$ or $x \cdot x = x$.
Here is one way to axiomatize intersection:
$$
\begin{array}{@{}l@{\hspace*{7ex}}l}
\_\cap\_ : \Set \times \Set \ra \Set &
\epsilon \cap S_2 = \epsilon \\
& (x \cdot S_1) \cap S_2 = ((x\in S_2 \ra x)\wedge(\neg(x\in S_2)\ra\epsilon)) \cdot (S_1 \cap S_2)
\end{array}
$$


\subsection{Maps}
\label{sec:maps}

Finite-domain maps are also a typical ADT.
Due to their ubiquity in defining algebraic specifications, maps are
usually predefined in languages like those mentioned in the preamble
of this section (Section~\ref{sec:alg-spec}).
For example, below is a snippet of Maude's \verb|MAP|
module~\cite{maude-book}, which is
parametric in both the source and the target domains:

\begin{quote}
\footnotesize
\begin{lstlisting}
fmod MAP{X :: TRIV, Y :: TRIV} is
  sorts Entry{X,Y} Map{X,Y} .
  subsorts Entry{X,Y} < Map{X,Y} .
  op _|->_ : X$Elt Y$Elt -> Entry{X,Y} [ctor] .
  op empty : -> Map{X,Y} [ctor] .
  op _,_ : Map{X,Y} Map{X,Y} -> Map{X,Y} [assoc comm id: empty ctor] .
...
endfm
\end{lstlisting}
\end{quote}

\noindent
Note that the map concatenation symbol, ``\verb|_,_|'', is
associative (attribute ``\verb|assoc|''),
commutative (``\verb|comm|''),
and has the ``\verb|empty|'' map as identity (``\verb|id: empty|'').
The attribute ``\verb|ctor|'' states that the corresponding symbols
are constructors.
Additional axioms, not shown here, ensure that maps are always
well-formed, that is, maps with multiple bindings of the same key are
disallowed.
When instantiated with natural numbers for both the domain and the target,
this \verb|MAP| module defines well-formed finite-domain maps such as
``\verb#1 |-> 5,  2 |-> 0, 7 |-> 9, 8 |-> 1#''.
In Section~\ref{sec:sep-logic}, to show how separation logic can be framed
as a matching logic theory with essentially zero representational/encoding
distance, we will pick an instance of maps with natural numbers as both
the domain and the co-domain,
and will rename \verb|empty| to $\SLemp$ and \verb|_,_| to
$\_\SLstar\_$.

\section{Instance: First-Order Logic}
\label{sec:FOL}

First-order logic (FOL) extends predicate logic with
function symbols and allows predicates to apply to terms
with variables built using the function symbols.
Recall from Section~\ref{sec:predicate-logic} that by
``predicate logic'' in this paper we mean what is also called
``pure predicate logic'' in the literature, namely FOL
without any function or constant symbols.

Formally, given a FOL signature $(S,\Sigma,\Pi)$,
the syntax of its (many-sorted) formulae is:
$$
\begin{array}{rrl}
t_s & ::= & x \in \Var_s \\
& \mid & \sigma(t_{s_1},\ldots,t_{s_n}) \ 
\mbox{ with } \sigma\in\Sigma_{s_1 \ldots s_n,s} \\
\varphi & ::= & \pi(t_{s_1},\ldots,t_{s_n}) \ 
\mbox{ with }\pi\in\Pi_{s_1 \ldots s_n} \\
& \mid & 
 \neg \varphi \\
& \mid & \varphi \wedge \varphi \\
& \mid & \exists x . \varphi
\end{array}
$$
Compare the above with the syntax of matching logic in
Section~\ref{sec:matching-logic}.
Unlike FOL, matching logic does not distinguish between the
term and predicate syntactic categories, reason for which
its syntax is in fact more compact than FOL's.
Moreover, matching logic allows logical constructs
over all the syntactic categories, not only over predicates,
and locally where they are needed instead of only at the top,
predicate level.
Also, matching logic allows quantification over any sorts,
including over sorts of symbols thought of as predicates.

Like with predicate logic (Section~\ref{sec:predicate-logic}),
we can instantiate matching logic to capture FOL as is, modulo
the notational conventions in Section~\ref{sec:useful-symbols}
but without any translations from one logic to the other.
Like in predicate logic, we add a {\Pred} sort and regard
the FOL predicate symbols as matching logic symbols of result
{\Pred}, and disallow variables of sort {\Pred} and restrict
the use of logical connectives and quantifiers to only patterns
of sort $\Pred$.
Then there is a one-to-one correspondence between FOL formulae
and matching logic patterns of sort $\Pred$; we let $\varphi$ range
over them.
Moreover, following the approach in Section~\ref{sec:functions},
we constrain each FOL operational symbol
$\sigma\in\Sigma_{s_1 \ldots s_n,s}$ to be interpreted as
a function, that is, with the notation introduced in
Section~\ref{sec:functions}, we write the symbols meant to be
functions as $\sigma : s_1 \ldots s_n \ra s$.
Formally, let $(S^{\it ML},\Sigma^{\it ML})$ be the matching
logic signature with $S^{\it ML}=S\mathrel\cup\{{\Pred}\}$ and
$\Sigma^{\it ML}=\Sigma \mathrel\cup
\{\pi : s_1\ldots s_n \ra {\Pred} \mid \pi\in\Pi_{s_1\ldots s_n}\}$,
and let $F$ be
$
\{
  \exists z\!:\!s \,.\,\sigma(x_1\!:\!s_1,\ldots,x_n\!:\!s_n) = z
  \mid
  \sigma\in\Sigma_{s_1\ldots s_n,s}
\}
$
stating that each symbol in $\Sigma$ is a function.

\begin{prop}
\label{prop:FOL}
For any FOL formula $\varphi$, we have
$\models_{\it FOL}\varphi$ iff $F \models \varphi$.
\end{prop}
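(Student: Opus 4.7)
The plan is to mirror the strategy used in the proof of Proposition~\ref{prop:pure-predicate}, establishing a bijection between FOL models of a specification and matching logic models of the corresponding $F$, under which satisfaction is preserved. The extra ingredient compared to the pure predicate logic case is the presence of function symbols in $\Sigma$, and for these we have to verify that the functionality axioms in $F$ exactly capture the usual FOL interpretation of symbols as total functions.

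First I would set up the model correspondence. Given a FOL model $M^{\it FOL}=(\{M^{\it FOL}_s\}_{s\in S},\{\sigma_{M^{\it FOL}}\},\{\pi_{M^{\it FOL}}\})$, define a matching logic model $M^{\it ML}$ over $(S^{\it ML},\Sigma^{\it ML})$ by taking $M^{\it ML}_s = M^{\it FOL}_s$ for $s\in S$, $M^{\it ML}_\Pred = \{\star\}$ for an arbitrary fixed $\star$, $\sigma_{M^{\it ML}}(a_1,\ldots,a_n)=\{\sigma_{M^{\it FOL}}(a_1,\ldots,a_n)\}$ for each $\sigma\in\Sigma$, and $\pi_{M^{\it ML}}(a_1,\ldots,a_n)=\{\star\}$ if $\pi_{M^{\it FOL}}(a_1,\ldots,a_n)$ holds, else $\emptyset$. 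By Corollary~\ref{cor:functions}, the singleton-valued interpretation of each $\sigma$ guarantees $M^{\it ML}\models F$. Conversely, from an $M^{\it ML}\models F$ one recovers $M^{\it FOL}$ by letting $\sigma_{M^{\it FOL}}(a_1,\ldots,a_n)$ be the unique element of the singleton $\sigma_{M^{\it ML}}(a_1,\ldots,a_n)$ (well-defined precisely because of the functionality axioms in $F$), and letting $\pi_{M^{\it FOL}}(a_1,\ldots,a_n)$ hold iff $\pi_{M^{\it ML}}(a_1,\ldots,a_n)\neq\emptyset$; these operations are mutual inverses.

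Next I would prove, by structural induction on FOL terms, that for every term $t$ of sort $s\in S$ and every valuation $\rho:\Var\ra M^{\it ML}$ (equivalently $\rho:\Var\ra M^{\it FOL}$ on the shared carriers), $\overline{\rho}(t)=\{t_{M^{\it FOL},\rho}\}$, where $t_{M^{\it FOL},\rho}$ is the usual FOL denotation. The base case for variables follows from Definition~\ref{def:rho-bar}, and the inductive case follows from the singleton-valued definition of $\sigma_{M^{\it ML}}$ together with the induction hypothesis. With this in hand, one proves by structural induction on FOL formulae $\varphi$ that $M^{\it FOL},\rho\models_{\it FOL}\varphi$ iff $\overline{\rho}(\varphi)=\{\star\}$. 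The atomic case $\pi(t_1,\ldots,t_n)$ uses the term lemma and the definition of $\pi_{M^{\it ML}}$; conjunction is intersection of subsets of $\{\star\}$; and the existential case uses the fact that quantifiers in the encoded fragment range only over sorts in $S$ (since we disallowed variables of sort $\Pred$), so the matching logic quantifier semantics agrees with FOL's.

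The one delicate case, and the main obstacle, will be negation, which is exactly the same subtle point as in Proposition~\ref{prop:pure-predicate}: in matching logic $\overline{\rho}(\neg\varphi) = M^{\it ML}_\Pred \setminus \overline{\rho}(\varphi)$ can in principle be neither $\emptyset$ nor the full carrier, but because we engineered $M^{\it ML}_\Pred=\{\star\}$ and because every sort-$\Pred$ subformula is built without $\Pred$-variables, an induction shows $\overline{\rho}(\varphi)\in\{\emptyset,\{\star\}\}$; hence negation flips between the two values, matching classical two-valued semantics. Putting the pieces together, $M^{\it FOL}\models_{\it FOL}\varphi$ iff $M^{\it ML}\models_{\it ML}\varphi$. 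Since the model correspondence is a bijection between FOL models and models satisfying $F$, quantifying over all such models yields $\models_{\it FOL}\varphi$ iff $F\models\varphi$.
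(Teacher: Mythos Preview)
Your approach is mostly sound, but there is a genuine gap in the direction $\models_{\it FOL}\varphi \Rightarrow F\models\varphi$. The claimed bijection is not a bijection: your map from FOL models to matching-logic models always produces $M^{\it ML}_\Pred=\{\star\}$, but an arbitrary matching-logic model $M^{\it ML}\models F$ may have $M^{\it ML}_\Pred$ of any cardinality, and $\pi_{M^{\it ML}}(a_1,\ldots,a_n)$ can be an arbitrary subset of that carrier (nothing in $F$ constrains the predicate symbols to be two-valued; $F$ only contains functionality axioms for the symbols in $\Sigma$). So the composite ML $\to$ FOL $\to$ ML does not return the original model, and your satisfaction-preservation lemma---whose negation case explicitly relies on $M^{\it ML}_\Pred=\{\star\}$---does not apply to such models. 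For the forward direction you must show $M^{\it ML}\models\varphi$ for \emph{every} $M^{\it ML}\models F$, including those where your two-valued argument fails.

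The paper avoids this by handling that direction via FOL completeness: $\models_{\it FOL}\varphi$ gives $\vdash_{\it FOL}\varphi$, and each FOL proof rule is sound for matching logic over $F$ (Proposition~\ref{prop:PL-validity} and Corollary~\ref{cor:FOL-substitution}), hence $F\models\varphi$. The model construction is used only for the converse direction, where your argument is fine. Your purely semantic route can be salvaged too: for each $m\in M^{\it ML}_\Pred$ build an FOL model by declaring $\pi$ to hold on $(a_1,\ldots,a_n)$ iff $m\in\pi_{M^{\it ML}}(a_1,\ldots,a_n)$, and show by induction that $m\in\overline{\rho}(\varphi)$ iff that FOL model satisfies $\varphi$ under $\rho$; then conclude pointwise. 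Either fix works, but as written the bijection claim is false and the forward implication is unproved.
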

\begin{proof}
The proof is similar to that of Proposition~\ref{prop:pure-predicate}.
Like there, the implication 
``$\models_\FOL \varphi$ implies $F \models \varphi$''
follows by the completeness of 
FOL.
Indeed, it is well-known that the properties in
Proposition~\ref{prop:PL-validity} and Corollary~\ref{cor:FOL-substitution},
when regarded as proof rules for deriving FOL formulae $\varphi$,
yield a sound and complete proof system for FOL \cite{Godel1930}.
That is, ``$\models_\FOL \varphi$ iff $\vdash_\FOL\varphi$''.
However, since Proposition~\ref{prop:PL-validity} and
Corollary~\ref{cor:FOL-substitution} show that these proof rules are also
sound for matching logic in the context of $F$
(Corollary~\ref{cor:FOL-substitution} requires that $t$ is a term pattern in
the substitution rule, which holds in the context of $F$), we conclude
that ``$\vdash_\FOL\varphi$ implies $F \models \varphi$''.
Therefore, ``$\models_\FOL \varphi$ implies $F \models \varphi$''.

For the other implication, we also follow the idea in
Proposition~\ref{prop:pure-predicate}.
From any FOL model 
$M^{\it FOL}=(\{M_s^{\it FOL}\}_{s\in S},\{\sigma_{M^{\it FOL}}\}_{\sigma\in\Sigma},\{\pi_{M^{\it PL}}\}_{\pi\in\Pi})$
we can construct a matching logic model
$M^{\it ML}=(\{M_s^{\it ML}\}_{s\in S\cup\{\Pred\}},\{\sigma_{M^{\it ML}}\}_{\sigma\in\Sigma}\cup\{\pi_{M^{\it ML}}\}_{\pi\in\Pi})$,
where $M^{\it ML}_s=M^{\it FOL}_s$ for all sorts $s\in S$ and
$M_\Pred^{\it ML}=\{\star\}$ (with $\star$ some arbitrary but fixed element),
$\sigma_{M^{\it ML}}(a_1,\ldots,a_n)=\{\sigma_{M^{\it FOL}}(a_1,\ldots,a_n)\}$,
and
$\pi_{M^{\it ML}}(a_1,\ldots,a_n)=\{\star\}$ iff
$\pi_{M^{\it PL}}(a_1,\ldots,a_n)$ holds,
and otherwise $\pi_{M^{\it ML}}(a_1,\ldots,a_n)=\emptyset$.
Note that $M^{\it ML} \models F$: indeed,
$\sigma_{M^{\it ML}}(a_1,\ldots,a_n)$ contains precisely one element
for any $\sigma\in\Sigma_{s_1\ldots s_n,s}$ and any
$a_1\in M^{\it ML}_{s_1}$, ..., $a_n\in M^{\it ML}_{s_n}$,
namely $\sigma_{M^{\it FOL}}(a_1,\ldots,a_n)$.
It therefore suffices to show, for any FOL formula $\varphi$, that 
$M^{\it FOL}\models_\FOL\varphi$ iff $M^{\it ML}\models_\ML\varphi$.
Like in Proposition~\ref{prop:pure-predicate}, we can show by structural
induction on $\varphi$ that for any $\rho : \Var \ra M^{\it FOL}$, it is the
case that $M^{\it FOL},\rho\models_\FOL\varphi$ iff
$\overline{\rho}(\varphi)=\{\star\}$.
The induction proof differs from that in Proposition~\ref{prop:pure-predicate}
only in the base case, where we need to notice that term patterns are
functional in $M^{\it ML}$, thanks to Corollary~\ref{cor:terms}, and that
$\overline{\rho}(t) = a$ for a term $t$ in the FOL context
(with $\rho : \Var \ra M^{\it FOL}$) iff
$\overline{\rho}(t) = \{a\}$ in the matching logic context
(where $\rho$ is regarded as $\rho : \Var \ra M^{\it ML}$, which is possible
as we disallow $\Pred$ variables).
\end{proof}

Consequently, FOL is also a methodological fragment of matching logic.
Moreover, since the rules of FOL where we replace all the
predicate meta-variables with pattern meta-variables are sound
for matching logic, we can use off-the-shelf decision procedures
and solvers for FOL or fragments of it {\em unchanged} when doing
matching logic reasoning.
The only thing we have to be careful about is to distinguish the
term patterns from arbitrary patterns, and make sure that the
Substitution rule of FOL is only applied with $t$ a term pattern,
otherwise it may be unsound.
Section~\ref{sec:deduction} discusses this in detail.

Predicate logic and FOL with equality also fall as methodological
fragments of matching logic.
In addition to the constructions in Section~\ref{sec:predicate-logic}
and, respectively, above in this section, all we have to do is to
is to make use of the definedness symbols that we assume by convention
included in all signatures (Section~\ref{sec:definedness}), which give
us equality as an alias as described in Section~\ref{sec:equality}.
We leave the details as an exercise for the interested reader.

Like Boolean expressions, FOL is also frequently used in matching logic
specifications to constrain variables that occur in patterns of possibly
other sorts.
Consider the same example we discussed before and after
Notation~\ref{notation:bool}, where we want to refer to all real numbers
of the form $1/x$ with $x$ a strictly positive integer, but this time
using a given predicate ${\it positive?}(x)$ that tells whether $x$ is
positive.
We can use the pattern
$1/x \wedge ({\it positive?}(x) =_\Pred^\Real \top_\Pred)$,
but that is too verbose.
We would like to just write
$1/x \wedge {\it positive?}(x)$.
Following Notation~\ref{notation:bool}, we introduce the following
similar notation for predicates instead of Boolean expressions:

\begin{nota}
\label{notation:pred}
If $\varphi$ is a FOL formula,
we take the freedom to write $\varphi$ instead of $\varphi=\top_\Pred$.
\end{nota}

Since both the FOL formulae and the patterns of $\Pred$ sort evaluate
to only two possible values, $\emptyset$ or $\{\star\}$, unlike
Notation~\ref{notation:bool} we can freely apply the notation above
in any contexts, including of sort $\Pred$.
Note, however, that care must be exercised to ensure that $\varphi$ is indeed
a FOL formula.
For example, if one extends FOL with additional formula constructs,
like separation logic does for example (Section~\ref{sec:sep-logic}),
then the above notation may lead to inconsistencies.
As discussed in Section~\ref{sec:sep-logic} in detail, matching logic
has a different way to deal with such extensions (allowing different sorts
of ``predicates''), without polluting the universe of FOL formulae and thus
allowing the notation above to still apply.

Same as with Boolean expressions in Proposition~\ref{prop:combine-bool},
we sometimes need to combine various FOL constraints resulting from
various sub-patterns in order to make appropriate calls to FOL provers,
e.g., SMT solvers.
The following result allows us to do that:
\begin{prop}
\label{prop:combine-prop}
If $p$, $p_1$ and $p_2$ are FOL formulae, then
\begin{itemize}
\item $\models (p_1 = \top_\Pred \wedge p_2 = \top_\Pred) = (p_1 \wedge p_2 = \top_\Pred)$
\item $\models \neg(p = \top_\Pred) = (\neg p = \top_\Pred)$
\end{itemize}
Other similar properties for derived FOL constructs can be derived from these.
\end{prop}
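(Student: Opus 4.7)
The plan is to prove each identity by showing that the two sides have the same interpretation under every valuation, invoking Proposition~\ref{prop:equality} to convert the top-level equalities $=$ to the assertion that the two interpretations coincide. The crucial structural fact, inherited from the embedding of FOL in matching logic described immediately above in Section~\ref{sec:FOL}, is that for any FOL formula $p$ and any valuation $\rho:\Var\to M$ into a model $M$ with $M\models F$, the set $\overline{\rho}(p)$ is either $\emptyset$ or the singleton $M_\Pred=\{\star\}$. Therefore each side of both claimed equalities is a $\{\emptyset,M_{s}\}$-valued predicate, and I only need to check that they are empty on the same valuations.

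For the first identity, I would fix $\rho$ and observe: the left-hand side $\overline{\rho}((p_1=\top_\Pred)\wedge(p_2=\top_\Pred))$ is the total set iff, by Proposition~\ref{prop:equality} applied twice, both $\overline{\rho}(p_1)=\{\star\}$ and $\overline{\rho}(p_2)=\{\star\}$. The right-hand side $\overline{\rho}(p_1\wedge p_2=\top_\Pred)$ is the total set iff $\overline{\rho}(p_1)\cap\overline{\rho}(p_2)=\{\star\}$. Since each $\overline{\rho}(p_i)$ is either $\emptyset$ or $\{\star\}$, the intersection equals $\{\star\}$ exactly when both factors do, so the two conditions match; in every other case both sides are $\emptyset$. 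Thus the two patterns coincide on every valuation, and validity follows.

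For the second identity, I would reason similarly: $\overline{\rho}(\neg(p=\top_\Pred))=M_s$ iff $\overline{\rho}(p=\top_\Pred)=\emptyset$, iff $\overline{\rho}(p)\neq\{\star\}$, which (since $\overline{\rho}(p)\in\{\emptyset,\{\star\}\}$) is equivalent to $\overline{\rho}(p)=\emptyset$; meanwhile $\overline{\rho}(\neg p=\top_\Pred)=M_s$ iff $\overline{\rho}(\neg p)=\{\star\}$, iff $M_\Pred\setminus\overline{\rho}(p)=\{\star\}$, iff $\overline{\rho}(p)=\emptyset$. So again both sides flip between $\emptyset$ and $M_s$ in lockstep.

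There is no genuine obstacle here; the only point requiring care is invoking the right fact that makes the identities trivial, namely that $M_\Pred$ is a singleton, so the two-valued behaviour of FOL formulae in matching logic coincides with the two-valued behaviour of $\Pred$-sorted patterns. The parenthetical remark that other properties for derived FOL constructs follow from these is immediate once one notes that implication, disjunction, and universal quantification can be expressed via $\neg$ and $\wedge$ on the FOL side and via the already-proven interpretations of these connectives on the matching logic side, so an entirely analogous two-case analysis handles each derived connective without further work.
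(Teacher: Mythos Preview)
Your proposal is correct and follows exactly the paper's approach: the paper's proof is a single line stating that each of $\overline{\rho}(p)$, $\overline{\rho}(p_1)$, $\overline{\rho}(p_2)$ can only be $\emptyset$ or $\{\star\}$, which is precisely the structural fact you identify and then unpack in detail.
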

\begin{proof}
Trivial: each of
$\overline{\rho}(p)$,
$\overline{\rho}(p_1)$,
and $\overline{\rho}(p_2)$
can only be $\emptyset$ or $\{\star\}$,
for any valuation $\rho$.
\end{proof}

\section{Instance: Modal Logic}
\label{sec:modal-logic}

It turns out that the vanilla matching logic over just one sort with
(countably many) constants and definedness (as defined in
Section~\ref{sec:definedness}) captures one of the most popular
modal logics,
S5~\cite{Becker1930,kripke1959,Goldblatt:2003:MML:969657.969658}.
At the end of this section we briefly discuss how other modal logics can
also be framed as matching logic instances, but until there we only discuss
S5 and thus take the liberty to implicitly mean the ``S5 modal logic''
whenever we say ``modal logic''.

We start by giving the syntax and semantics of modal logic.
Let $\Var_\Prop$ be a countable set of {\em propositional variables}
$p$, $q$, etc.
Then the modal logic syntax is defined as follows:
$$
\begin{array}{rrl}
\varphi & ::= & \Var_\Prop \\
& \mid & \neg \varphi \\
& \mid & \varphi \ra \varphi \\
& \mid & \Box \varphi
\end{array}
$$
The remaining propositional constructs $\wedge$, $\vee$ and $\lra$, can
be defined as derived constructs.
Therefore, syntactically, modal logic adds the $\Box$ construct to
propositional logic, which is called {\em necessity}: $\Box \varphi$ is read
``it is necessary that $\varphi$''.
The dual {\em possibility} construct can be defined as a derived construct:
$\Diamond \varphi \equiv \neg \Box \neg \varphi$ is read
``it is possible that $\varphi$''.
Semantically, the truth value of a formula is relative to a ``world''.
Propositions can be true in some worlds but false in others, and thus formulae
can also be true in some worlds but not in others:

\newcommand{\World}{{\it World}}

\begin{defi}
\label{dfn:modal-logic}
Let $W$ be a set of \textbf{worlds}.
Mappings $v : \Var_\Prop \times W \ra \{\ttrue,\ffalse\}$, called
\textbf{(modal logic) $W$-valuations}, state that each proposition
only holds in a given (possibly empty or total) subset of worlds.
Valuations extend to modal logic formulae:
\begin{itemize}
\item $v(\neg \varphi, w) = \ttrue$ iff $v(\varphi, w) = \ffalse$
\item $v(\varphi_1 \ra \varphi_2, w) = \ttrue$ iff $v(\varphi_1, w) = \ffalse$ or $v(\varphi_2, w) = \ttrue$
\item $v(\Box \varphi, w) = \ttrue$ iff $v(\varphi, w') = \ttrue$ for every $w' \in W$
\end{itemize}
Formula $\varphi$ is \textbf{valid in $W$}, written $W \models_{\rm S5} \varphi$,
iff $v(\varphi, w) = \ttrue$ for any $W$-valuation $v$ and any $w \in W$.
Formula $\varphi$ is \textbf{valid}, written $\models_{\rm S5} \varphi$, iff
$W \models_{\rm S5} \varphi$ for all $W$.
\end{defi}

Modal logic (S5) admits the following sound and complete
proof system~\cite{Becker1930,kripke1959}:
\begin{itemize}[label=\rm\textbf{(M)}]
\item[\rm\textbf{(N)}] Rule: If $\varphi$ derivable then $\Box \varphi$ derivable
\item[\rm\textbf{(K)}] Axiom: $\Box(\varphi_1 \ra \varphi_2) \ra (\Box \varphi_1 \ra \Box \varphi_2)$
\item[\rm\textbf{(M)}] Axiom: $\Box \varphi \ra \varphi$
\item[\rm\textbf{(5)}] Axiom: $\Diamond \varphi \ra \Box\Diamond\varphi$
\end{itemize}

We next show that we can define a matching logic specification $(S,\Sigma,F)$
which faithfully captures modal logic, both syntactically and semantically.
The idea is quite simple: $S$ contains precisely one sort, say $\World$;
$\Sigma$ contains one constant symbol $p\in\Sigma_{\lambda,\World}$ for
each propositional variable $p\in\Var_\Prop$,
plus a unary symbol $\Diamond\in\Sigma_{\World,\World}$;
and $F$ contains precisely one axiom stating
that $\Diamond$ is the definedness symbol (Section~\ref{sec:definedness}),
namely $\Diamond x\!:\!\World$ ($x$ is a free $\World$ variable in this
pattern).
Then we let $\Box \varphi$ be the totality construct
(Notation~\ref{notation:totality}), that is, syntactic sugar for
$\neg\Diamond\neg\varphi$.
Note that any modal logic formula $\varphi$ can be regarded, as is,
as a ground matching logic pattern over this signature;
by ``ground'' we mean a pattern without variables, so the other implication
is also true, because disallowing variables includes disallowing quantifiers.
Moreover, Corollary~\ref{cor:totality-NKM5} implies that the modal logic proof
system above is sound for the resulting matching logic specification,
so $\models_{\rm S5} \varphi$ implies $\models \varphi$.
We show the stronger result that the world/valuation models of modal logic
are essentially identical the the matching logic $(S,\Sigma,F)$-models, and
thus:

\begin{prop}
\label{prop:modal-logic}
For any modal logic formula $\varphi$, we have
$\models_{\rm S5} \varphi$ iff $\models \varphi$.
\end{prop}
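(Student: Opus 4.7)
The plan is to follow the same pattern of proof used for Propositions~\ref{prop:propositional}, \ref{prop:pure-predicate}, and \ref{prop:FOL}: establish a bijection between modal logic $W$-valuations and matching logic $(S,\Sigma,F)$-models, and then show by structural induction on the modal formula $\varphi$ that its modal logic interpretation (as a set of worlds where it holds) coincides with its matching logic interpretation. Since modal formulae contain no variables, they are already ground patterns and no translation is needed in either direction.

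More concretely, given a modal logic $W$-valuation $v$, I would construct a matching logic model $M$ with $M_\World = W$, interpreting each propositional constant $p \in \Sigma_{\lambda,\World}$ as the set $p_M = \{w \in W \mid v(p,w) = \ttrue\}$, and interpreting $\Diamond_M : W \ra \cal{P}(W)$ as the total relation $\Diamond_M(w) = W$ for all $w$. Then $M \models F$, because $F = \{\Diamond x\}$ is exactly the definedness pattern axiom of Assumption~\ref{assumption:definedness}, so by Proposition~\ref{prop:definedness} the extension of $\Diamond_M$ to subsets satisfies $\Diamond_M(A) = W$ if $A \neq \emptyset$ and $\Diamond_M(\emptyset) = \emptyset$. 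Conversely, any matching logic model $M$ satisfying $F$ determines a $W$-valuation on $W = M_\World$ via $v(p,w) = \ttrue$ iff $w \in p_M$, and the axiom $\Diamond x$ forces $\Diamond_M$ to behave as the definedness symbol.

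The induction itself is straightforward for propositional variables (by definition of $p_M$), for negation (since complement w.r.t.\ $M_\World$ matches the modal logic clause), and for implication (which unfolds into negation and conjunction / disjunction). The interesting case, and the one that makes the correspondence work, is $\Box\varphi \equiv \neg\Diamond\neg\varphi$. By the induction hypothesis, $\overline{\rho}(\varphi) = \{w \in W \mid v(\varphi,w) = \ttrue\}$, hence $M_\World \setminus \overline{\rho}(\varphi)$ is empty iff $\varphi$ is true at every world. Using Proposition~\ref{prop:definedness}, $\overline{\rho}(\Diamond\neg\varphi)$ is $M_\World$ when some world falsifies $\varphi$ and $\emptyset$ otherwise, so $\overline{\rho}(\Box\varphi) = M_\World$ precisely when $\varphi$ is true at every world, and $\overline{\rho}(\Box\varphi) = \emptyset$ otherwise. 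This exactly matches the S5 clause $v(\Box\varphi,w) = \ttrue$ iff $v(\varphi,w') = \ttrue$ for every $w'\in W$, and also explains why the match is tight for S5 specifically: the definedness operator implicitly encodes a universal accessibility relation.

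Assembling the pieces, $\models_{\rm S5}\varphi$ iff for every $W$ and every $W$-valuation $v$, the set of worlds where $\varphi$ holds equals $W$; by the bijection and the induction this is equivalent to $\overline{\rho}(\varphi) = M_\World$ for every $(S,\Sigma,F)$-model $M$ and every valuation $\rho$, i.e., to $\models\varphi$. The main obstacle, if any, is checking that the $\Box$ case really aligns the two-valued modal semantics with matching logic's set-valued interpretation; the key observation is that $\Box\varphi$, being built from the definedness symbol, is forced to be a predicate in the sense of Definition~\ref{dfn:predicates} in every such model, which is exactly what the S5 axiom $(5)$ (already verified in Corollary~\ref{cor:totality-NKM5}) demands.
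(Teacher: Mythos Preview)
Your proposal is correct and follows essentially the same approach as the paper's proof: both set up the bijection $(W,v)\leftrightarrow M$ with $p_M=\{w\mid v(p,w)=\ttrue\}$ and $\Diamond_M(w)=W$, observe that modal formulae are ground so $\overline{\rho}(\varphi)$ is independent of $\rho$, and prove by structural induction that $v(\varphi,w)=\ttrue$ iff $w\in\varphi_M$, with the $\Box$ case handled via the definedness/totality behavior of $\Diamond$. The only cosmetic difference is that the paper invokes Proposition~\ref{prop:totality} directly for $\Box\varphi$ whereas you unfold $\Box\equiv\neg\Diamond\neg$ and appeal to Proposition~\ref{prop:definedness}, which amounts to the same thing.
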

\begin{proof}
For any world $W$ and $W$-valuation
$v : \Var_\Prop \times W \ra \{\ttrue,\ffalse\}$
(Definition~\ref{dfn:modal-logic}),
let $M_{W,v}$ be the matching logic $(S,\Sigma,F)$-model
whose carrier is $W$, whose constant symbols
$p \in \Sigma_{\lambda,\World}$
(i.e., $p\in\Var_\Prop$) are interpreted as the sets of worlds
$p_{M_{W,v}} = \{w \in W \mid v(p,w) = \ttrue\}$, and
$\Diamond w$ is the total set $W$ for each $w\in W$.
Similarly, for each matching $(S,\Sigma,F)$-model $M$ let
$W_M = M_\World$ be its carrier and let
$v_M : \Var_\Prop \times W_M \ra \{\ttrue,\ffalse\}$
be defined as $v(p,w) = \ttrue$ iff $w\in p_M$.
It is clear that the two mappings defined above,
$(W,v) \mapsto M_{W,v}$ and respectively $M \mapsto (W_M,v_M)$, are inverse
to each other.

Since a modal logic formula $\varphi$ can be regarded as a matching logic
pattern with no variables,
$\overline{\rho}(\varphi)$ only depends on the model $M$ but not on any
particular valuation $\rho:\Var\ra M$ (by (1) in Proposition~\ref{prop:simple}).
Let us then use the notation $\varphi_M$ for the (unique) interpretation of
$\varphi$ in matching logic model $M$; note that
$\varphi_M \subseteq M_\World$.

We show that for any $W$ and any $W$-valuation $v$,
we have $v(\varphi,w) = \ttrue$ iff $w \in \varphi_{M_{W,v}}$.
We show it by structural induction on $\varphi$.
The cases when $\varphi$ is a propositional symbol or a logical connective are
trivial.
For the necessity modal construct $\Box$, we have
$v(\Box\varphi,w) = \ttrue$ iff
$v(\varphi,w') = \ttrue$ for all $w' \in W$ (Definition~\ref{dfn:modal-logic}),
iff $w' \in \varphi_{M_{W,v}}$ for all $w'\in W$ (induction hypothesis), iff
$\varphi_{M_{W,v}} = W = (M_{W,v})_\World$, iff
$(\Box\varphi)_{M_{W,v}} = W$ (by Proposition~\ref{prop:totality}),
iff $w \in (\Box\varphi)_{M_{W,v}}$ (by Proposition~\ref{prop:totality}).
Therefore, $v(\varphi,w) = \ttrue$ iff $w \in \varphi_{M_{W,v}}$.

We are now ready to prove the main result:
$\models_{\rm S5} \varphi$ iff 
$v(\varphi,w) = \ttrue$ for any $W$ and $W$-valuation $v$ and $w \in W$
(Definition~\ref{dfn:modal-logic}),
iff
$w \in \varphi_{M_{W,v}}$ for any $W$ and $W$-valuation $v$ and
$w \in W$ (by the property proved above by structural induction),
iff
$\varphi_{M_{W,v}}=W$ for any $W$ and $W$-valuation $v$,
iff
$M_{W,v} \models \varphi$ for any $W$ and $W$-valuation $v$
(by Proposition~\ref{dfn:satisfaction}), iff
$M \models \varphi$ for any matching logic $(S,\Sigma,F)$-model $M$
(because of the bijective correspondence between pairs $(W,v)$ and
$(S,\Sigma,F)$-models $M$ proved above), 
iff
$\models \varphi$.
\end{proof}

The result above, together with the general translation of matching
logic to predicate logic with equality discussed in
Section~\ref{sec:PL-reduction}, will also give us a translation of
modal logic to predicate logic with equality.
Translations from modal logic to various types of first-order (or
second-order or other even more expressive) logics are well-known in the
literature, one of them to predicate logic being called the ``standard
translation''~\cite{vanBenthem1983-VANMLA,Blackburn:2001:ML:381193}.
Our goal in this section was not to propose yet another translation,
but to show how modal logic can be framed as a matching logic specification
{\em without any translation}.

There are many variants of modal logic~\cite{vanBenthem1983-VANMLA,Blackburn:2001:ML:381193,Goldblatt:2003:MML:969657.969658}.
One may naturally wonder if all of them can be similarly regarded as
matching logic theories.
While systematically investigating each and everyone of them seems tedious
and likely not worth the effort, it is nevertheless interesting to note that
there is an immediate connection between one of the most general variants
of modal logic, called {\em multi-dimensional} or {\em polyadic modal logic}
\cite{Blackburn:2001:ML:381193}, and
matching logic.
Instead of particular unary modal operators like $\Box$ and $\Diamond$,
polyadic modal logic allows arbitrary operators taking any number of formula
arguments; if $\Delta$ is such an operator of $n$ arguments and
$\varphi_1$, ..., $\varphi_n$ are formulae, then
$\Delta(\varphi_1,...,\varphi_n)$ is also a formula.
In models, called {\em general frames}, each such operator $\Delta$ is
associated a relation $R_\Delta$ of $n+1$ arguments.
Propositional variables are also interpreted as sets (of ``worlds in which
they hold'') by valuations, and given 
set of worlds $W$, valuation $v : \Var_\Prop \times W \ra \{\ttrue,\ffalse\}$
and world $w \in W$,
we have
$v(\Delta(\varphi_1,...,\varphi_n),w) = \ttrue$
iff there are
$w_1,...,w_n \in W$ such that
$v(\varphi_1,w_1) = \ttrue$,
...,
$v(\varphi_n,w_n) = \ttrue$
and
$R_\Delta(w,w_1,...,w_n)$.

It is easy to associate a matching logic specification $(S,\Sigma,F)$
to any polyadic modal logic.
Like for $S5$, we let $S$ contain precisely one sort, $\World$, and
$\Sigma$ contain one constant symbol $p\in\Sigma_{\lambda,\World}$ for
each propositional variable $p\in\Var_\Prop$.
Further, we add a symbol
$\Delta \in \Sigma_{\World \times \cdots \times\World,\World}$ of $n$ arguments
for each polyadic modal operator $\Delta$ taking $n$ arguments.
Then any polyadic modal logic formula $\varphi$ can be regarded without
any change/translation as a matching logic formula.
Further, any axioms/schemas in polyadic modal logic can be added as matching
logic axioms/schemas in $F$.
Then we can extend Proposition~\ref{prop:modal-logic} to show $\models \varphi$
in the polyadic modal logic iff $F \models \varphi$ in matching logic;
the key technical insight here is that there is a bijective correspondence
between relations of $n+1$ arguments and functions of $n$ arguments returning
sets of elements.

When compared to polyadic modal logic, matching logic has a couple of
advantages which, in our view, make it more appealing to use in practice.
First, it has sorts.
Thus, unlike polyadic modal logic which only has ``formulae'', matching logic
allows us to have patterns of various types.
For example, in Section~\ref{sec:example} we show how heap patterns interact
with program patterns and how all can be put together in configuration patterns;
while possible in theory, it would be quite inconvenient to force all patterns
to have the same sort.
Second and more importantly, modal logic and matching logic have a
different interpretation for ``variables''.
In modal logic (propositional) variables are interpreted as sets and we are not
allowed to quantify over them,
while in matching logic variables are interpreted as just elements and we can
quantify over them.
Like shown above, the set interpretation can be recovered in matching logic
by associating constant symbols to propositional variables.
But the singleton interpretation of variables in matching logic, combined with
the capability to quantify over variables of any sort, allows us to
elegantly define many useful properties, such as those in
Section~\ref{sec:useful-symbols}.
For example, the simple pattern
$\forall x\,.\,\lceil x \rceil$ defines the semantics of the definedness
symbol $\lceil\_\rceil$, which as seen above gives us the $\Diamond$ construct
of S5.
It is critical that $x$ ranges over singleton elements in models.
If one attempts to do the same in polyadic modal logic naively replacing $x$ with
a propositional variable $p$, then one gets an inconsistent theory (because
we want $\lceil p \rceil$ to be false when $p$ is interpreted as the empty set
of worlds).
Definedness then allows us to define membership and equality, and thus allows
us to use patterns like $\forall x\,.\,\exists y\,.\,f(x)=y$ to state that
symbol $f$ is a function, etc.

Whether the results and observations above have practical relevance
remains to be seen.
We hope they at least enhance our understanding of both matching logic
and modal logic.

\section{Instance: Separation Logic}
\label{sec:sep-logic}

\newcommand{\SLisatom}{{\sf isatom?}}
\newcommand{\SLisloc}{{\sf isloc?}}
\newcommand{\SLfalse}{{\sf false}}
\newcommand{\SL}{{\it SL}}

Matching logic has inherent support for structural separation, without a
need for any special logic constructs or extensions.
Indeed, pattern matching has a spatial meaning by its very nature:
matching a subterm already separates that subterm from the rest of the
context, so matching two or more terms can only happen when there is no
overlapping between them.
Moreover, matching logic patterns can combine structure with logical
constraints, which allows not only to define very sophisticated patterns,
but also to reason about patterns as if they were logical formulae,
and to achieve modularity by globalizing local reasoning.
Finally, since matching logic allows variables of any sorts, including
of sort $\Map$ when heaps are concerned, it has inherent support
for heap framing and local reasoning, too.

\subsection{Separation Logic Basics}

Separation logic (originating with ideas in
\cite{OHearn99thelogic,sep-logic-csl01}, followed by canonical work in
\cite{reynolds-02}, with more recent developments in
\cite{Pek2014,Brotherston2014,Chu2015,Brotherston2016,Krebbers2017} and
with several provers supporting it in
\cite{Berdine2005,Appel2007,Botincan2009,Perez2011,Berdine2011,NavarroPerez2013,Parkinson2011,Piskac:2013:ASL:2526861.2526927,Pek2014}),
is a logic specifically crafted for reasoning about
heap structures.
There are many variants, but here we only consider
the original variant in~\cite{sep-logic-csl01,reynolds-02}.
Moreover, here we only discuss separation logic as an assertion-language,
used for specifying state properties, and not its extension
as an axiomatic programming language semantic framework.
We regard the latter as an orthogonal aspect, which can similarly be
approached using matching logic.

Separation logic extends the syntax of formulae in FOL (Section~\ref{sec:FOL})
as follows:
$$
\begin{array}{rrl}
\varphi & ::= & \textit{(FOL syntax)} \\
& \mid & \SLemp \\
& \mid & \Nat \mapsto \Nat \\
& \mid & \varphi \SLstar \varphi \\
& \mid & \varphi \,\magicwand \varphi \ \ \ \ \ \ \ \ \textrm{``magic wand''}
\end{array}
$$
Its semantics is based on a fixed model of stores and heaps, which
are finite-domain maps from variables and, respectively, locations
(which are particular numbers), to integers.
Below we recall the formal definition of satisfaction in the original
variant of separation logic, noting that subsequent variants of separation
logic extend the underlying model to include stacks (instead of stores)
as well as various types of resources that are encountered in modern
programming languages.
Such extensions are ignored here because they would only complicate the
presentation without changing the overall message: they would only add
more symbols to the corresponding matching logic signature with
appropriate interpretations in the underlying model, and Theorem~\ref{prop:SL}
would still hold.
Nevertheless, we leave the thorough analysis of the diversity of separation
logic variants proposed in the last 15 years through the lenses of matching
logic as a subject for future work.

\newcommand{\PVar}{\Var}

\begin{defi}
\label{dfn:sep-logic}
(Separation logic semantics, adapted from \cite{sep-logic-csl01,reynolds-02})
Partial finite-domain maps $s: \PVar \rightharpoonup \Nat$
are called \textbf{stores},
partial finite-domain maps $h: \Nat \rightharpoonup \Nat$
are called \textbf{heaps}, and pairs
$(s,h)$ of a store and a heap are called \textbf{states}.
The semantics of the separation logic constructs are given in terms of such
states, as follows:
\begin{itemize}
\item  $(s,h) \models_\SL \varphi$ for a FOL formula $\varphi$ iff  $s \models_\FOL \varphi$
(the heap portion of the model is irrelevant for the FOL fragment);

\item $(s,h) \models_\SL \SLemp$ iff $\Dom(h) = \emptyset$;

\item
$(s,h) \models_\SL e_1 \mapsto e_2$ where $e_1$ and $e_2$ are terms of sort
$\Nat$ (thought of as ``expressions'') iff
$\Dom(h)=\overline{s}(e_1)\neq 0$ and $h(\overline{s}(e_1))=\overline{s}(e_2)$,
where $\overline{s}$ is the (partial function) extension of $s$ to expressions
(with variables) of sort $\Nat$, defined similarly to the extension of
valuations to patterns in Definition~\ref{def:rho-bar};

\item
$(s,h) \models_\SL \varphi_1 \SLstar \varphi_2$ iff there exist $h_1$
and $h_2$ such that $\Dom(h_1) \cap \Dom(h_2)=\emptyset$ and $h = h_1*h_2$
(the merge of $h_1$ and $h_2$, a partial function on maps written as an
associative/commutative comma in Section~\ref{sec:maps})
and $(s,h_1) \models_\SL \varphi_1$, $(s,h_2) \models_\SL \varphi_2$;

\item
$(s,h) \models_\SL \varphi_1 \magicwand \varphi_2$ iff
for any $h_1$ with $\Dom(h_1) \cap \Dom(h) = \emptyset$,
if $(s,h_1) \models_\SL \varphi_1$ then $(s,h*h_1) \models_\SL \varphi_2$;
i.e., the semantics of ``magic wand'' is defined as the states whose heaps
extended with a fragment satisfying $\varphi_1$ result in ones satisfying
$\varphi_2$.
\end{itemize}
Separation logic formula $\varphi$ is \textbf{valid}, written
$\models_\SL\varphi$, iff $(s,h) \models_\SL \varphi$ for any state $(s,h)$.
\end{defi}

\subsection{Map Patterns}
\label{sec:map-patterns}

One of the most appealing aspects of separation logic is that it allows us
to define compact and elegant specifications of heap abstractions using
inductively defined predicates.
Such an abstraction which is quite common is the linked-list abstraction
$\llist(x,S)$ stating that $x$ points to a linked list containing an abstract
sequence of natural numbers $S$:
$$
\begin{array}{rcl}
\llist(x,\epsilon) & \stackrel{\it{def}}{=} & \SLemp \wedge x = 0 \\
\llist(x,n\cdot S) & \stackrel{\it{def}}{=} & \exists z \,.\,x \mapsto [n,z] \SLstar \llist(z,S)
\end{array}
$$
Above, $\epsilon$ is the empty sequence, $n \cdot S$ is the sequence starting
with natural number $n$ and followed by sequence $S$, and $x \mapsto [n,z]$ is
syntactic sugar for $x \mapsto n \SLstar (x+1) \mapsto z$.
So a linked list starting with address $x$ takes either empty heap space,
in which case $x$ must be 0 and the abstracted sequence is $\epsilon$,
or there is some node in the linked list at location $x$ in the heap that
holds the head of the abstracted sequence ($n$) and a link ($z$) to another
linked list that holds the tail of the abstracted sequence ($S$).
The implicit properties of the implicit map model (the heap) in
Definition~\ref{dfn:sep-logic} ensures the well-definedness of this
abstraction, which essentially means that all the locations reached by
unfolding a list abstraction using the inductive properties above are
distinct.
The symbol $\stackrel{\it{def}}{=}$, sometimes written $\equiv$
in the literature, is not part of separation logic; it is
a meta-logical means to define inductive, or recursive predicates, also
encountered in the context of first-order logic: the predicate in question is
interpreted in models as the least-fixed point of its defining
(meta-)equations.

We next show that similar heap patterns can be defined directly in matching
logic.
Specifically, we pick a particular signature (for maps/heaps) together
with desired axioms, that is, a matching logic specification, and show
how additional patterns can be defined in the context of that specification.
The definitions are as compact and elegant as those in separation logic,
and no meta-logical constructs, such as $\stackrel{\it{def}}{=}$ or
$\equiv$, appear to be necessary.

In what follows, we only discuss maps from natural numbers to natural numbers,
but they can be similarly defined over arbitrary domains as keys and as values.
Consider a matching logic specification of maps like the one shown in
Section~\ref{sec:maps}, but instantiated to natural numbers as both keys and
values, with its axioms explicitly listed, and with a syntax that deliberately
resembles that of separation logic (i.e., we use ``*'' instead of ``,''):
$$
\begin{array}{l@{\hspace*{10ex}}l}
\_\mapsto\_ : \Nat \times \Nat \rightharpoonup \Map &
  \SLemp \SLstar H = H \\

\SLemp : \ \ra \Map &
H_1 \SLstar H_2 = H_2 \SLstar H_1 \\

\_\SLstar\_ : \Map \times \Map \rightharpoonup \Map &
(H_1 \SLstar H_2) \SLstar H_3 = H_1 \SLstar (H_2 \SLstar H_3) \\

0 \mapsto a = \bot &
x \mapsto a \SLstar x \mapsto b = \bot
\end{array}
$$
Recall that there are no predicates here, only patterns.
When regarding the above ADT as a matching logic specification, we can
prove that the bottom two pattern equations above are equivalent to
$\neg(0 \mapsto a)$ and, respectively,
$(x \mapsto a \SLstar y \mapsto b) \ra x \neq y$, giving the
$\_\mapsto\_$ and $\_\SLstar\_$ the feel of ``predicates''.
Maps, like natural numbers, do not admit finite
(or even recursively enumberable) equational (or first-order)
axiomatizations, so adding a ``good enough'' subset of equations
is the best we can do in practice.
We chose ones that have been proposed by algebraic specification
languages and by separation logics for several reasons.
First, they have been extensively used, so there is
a good chance they are ``good enough'' for many purposes.
Second, we do not want to imply that we propose a novel axiomatization
of maps; our novelty is the presentation of known specifications of maps
using the general infrastructure of matching logic at no additional
translation cost, without a need to craft a new logic to address the
particularities of maps.
Third, this will ease our presentation in Section~\ref{sec:SLasML}
where the connection with such a logic specifically crafted for maps
is discussed.

Consider the canonical model of partial maps $M$, where:
$M_\Nat \mathrel{=} \{0,1,2,\ldots\}$;
$M_\Map =$ partial maps from natural numbers to natural numbers with
finite domains and undefined in 0, with $\SLemp$ interpreted as the map
undefined everywhere, with
$\_\mapsto\_$ interpreted as the corresponding one-element partial map except
when the first argument is 0 in which case it is undefined (note that
$\_\mapsto\_$ was declared using $\rightharpoonup$), and with $\_\SLstar\_$
interpreted as map merge when the two maps have disjoint domains, or undefined
otherwise (note that $\_\SLstar\_$ was also declared using $\rightharpoonup$).
$M$ satisfies all axioms above.

Following similar definitions in the context of separation logic,
we next define several patterns that are commonly used when proving
properties about programs that can allocate and de-allocate
data-structures in the heap.
We emphasize that our matching logic specifications below look almost
identical, if not identical, to their separation logic variants.
Which is, in fact, the main point we are making in this subsection.
That is, that matching logic allows us to specify the same complex heap
predicates as separation logic, equally compactly and elegantly, but without
a need to devise any new heap-specific logic for that.

We start with matching logic definitions for complete linked lists and for
list fragments.
Let $\llist \in \Sigma_{\Nat,\Map}$ and
$\listf \in \Sigma_{ \Nat \times \Nat, \Map}$
be two symbols together with patterns
$$
\begin{array}{l@{\hspace*{10ex}}l}
\llist(0) = \SLemp &
\listf(x,x) = \SLemp \\

\llist(x) \wedge x\neq 0 = \exists z \,.\,x \mapsto z \SLstar \llist(z) &
\listf(x,y) \wedge x\neq y = \exists z \,.\,x \mapsto z \SLstar \listf(z,y)
\end{array}
$$
Note that $\llist$ and $\listf$ are not meant to be functions,
so we did not use the functional notation (Section~\ref{sec:functions}) for
them.
Moreover, note that $\listf$ is not even meant to be a totally defined relation
(Section~\ref{sec:total-relations}); indeed, $\listf(0,m)$ is $\emptyset$
(and not $\SLemp$) for all $m > 0$.

The main difference between our definitions above and their separation logic
variants is that the latter cannot use the (FOL) equality symbol as we did.
Indeed, $\llist$ and $\listf$ are predicates there, same as equality,
and predicates cannot take predicates as arguments.
To define predicates like $\llist$ and $\listf$, as seen at the beginning of
this section, we have to {\em explicitly} use the meta-logical notation 
$\stackrel{\textrm{def}}{=}$ or $\equiv$ for defining
``recursive predicates'': predicates satisfying desired properties which
have a least fixed-point interpretation in models.
We emphasized ``explicitly'' above to distinguish it from the {\em implicit}
least fixed-point principles used when mathematically defining the semantics
of any logic.
For example, in our context, the extension of $\rho$ to $\overline{\rho}$ in
Definition~\ref{def:rho-bar} uses a least-fixed point construction, similar
to any other logic with terms, but that is orthogonal to how symbols are
interpreted in the given model (symbol interpretation is given by the
model, not by $\rho$).

There are two important questions about the matching logic specification above:
\begin{enumerate}
\item Does this specification admit any solution in $M$, i.e.,
total relations $\llist_M:M_\Nat \ra {\cal P}(M_\Map)$
and $\listf_M:M_\Nat \times M_\Nat \ra {\cal P}(M_\Map)$ satisfying the
patterns above?
\item If yes, is the solution unique?  This is particularly
important because we do not require initiality constraints on
$M$ nor smallest fixed-point constraints on solutions.
\end{enumerate}
We answer these questions positively.
We only discuss $\listf_M$, because the other is similar and simpler.
A solution $\listf_M:M_\Nat \times M_\Nat \ra {\cal P}(M_\Map)$ exists
iff it satisfies the two pattern axioms for $\listf$ above; explicitly,
that means that any solution must satisfy:
$$
\begin{array}{@{}l}
\listf_M(n,n)= \{\SLempM\} \mbox{ for all } n \geq 0 \\
\listf_M(0,m) = \emptyset \mbox{ for all } m \neq 0 \\
\listf_M(n,m)=\bigcup\{(\{n \mapstoM n_1\} \SLstarM \listf_M(n_1,m)) \mid n_1 \geq 0 \}
\mbox{ for all }
n \neq 0 \mbox{ and } n\neq m
\end{array}
$$
where $\_\SLstarM\_$ is $M$'s merge function explained above extended to sets of
maps for each argument; recall that the map merge function is undefined
(i.e., it yields an empty set of maps) when the two argument maps are not merge-able.
Note that we had to split the interpretation of the second equation pattern for
$\listf$ into two equalities, reflecting a case analysis on whether the first
argument is 0 or not.
Note also that $\listf(n,m)\neq \emptyset$ when $n\neq 0$, and that
$\listf(n,m)$ contains only non-empty maps when $n\neq 0$ and $n\neq m$.

First, we claim that the following is a solution:
$$
\begin{array}{l}
\listf_M(n,n) \mathrel{=} \{\SLempM\} \mbox{ for all } n \geq 0 \\
\listf_M(0,m) \mathrel{=} \emptyset \mbox{ for all } m \neq 0 \\
\listf_M(n,m) \mathrel{=} \{\ n \mapstoM n_1 \SLstarM n_1 \mapstoM n_2 \SLstarM \cdots \SLstarM n_{k-1} \mapstoM m \\
  \hspace*{9ex} \mid k > 0, \mbox{ and } n_0=n,n_1,n_2,\ldots,n_{k-1} > 0 \mbox{ all different and different from $m$}\}
\end{array}
$$
Indeed, the first two equalities that need to be satisfied by any solution
vacuously hold, while for the third all we need to note is that the
``junk'' maps where $n$ is 0 or in the domain of a map in
$\listf_M(n_1,m)$ are simply discarded by the map merge interpretation of
$\_\SLstar\_$.

Second, we claim that the above is the unique solution.
Let $\listf_M:M_\Nat \times M_\Nat \ra {\cal P}(M_\Map)$ be some
solution satisfying the three equality constraints.
It suffices to prove, by induction on the size $k$ of the domain of
$h\in M_\Map$ that:
$h \in \listf_M(n,m)$ for $n,m\in M_\Nat$ iff either $n=m$ and
$h=\SLempM$ (i.e., $k=0$), or otherwise $n\neq 0$ and $n\neq m$ and $k>0$ and
there are distinct $n_0 = n$, $n_1$,  \ldots, $n_{k-1}$ distinct from $m$ such that
$h = (n\mapstoM n_1 \SLstarM n_1 \mapstoM n_2 \SLstarM \cdots \SLstarM n_{k-1} \mapstoM m)$.
Since the maps in $\listf_M(n,m)$ when $n\neq 0$ and $n\neq m$ contain at least
one binding, we conclude $k=0$ can only happen iff $h\in \listf_M(n,n)$,
and then $h=\SLempM$.
Now suppose $k>0$, which can only happen iff $h \in \listf_M(n,m)$ for
$n\neq 0$ and $n\neq m$, which can only happen iff $n\neq 0$ and $n\neq m$ and
$h=n\mapstoM n_1 \SLstarM h_1$ for some $n_1\geq 0$ and
$h_1\in \listf_M(n_1,m)$.
It all follows now by the induction hypothesis applied to $h_1$.

It should be clear that patterns can be specified in many different ways.
E.g., the first list pattern can also be specified with a single pattern:
$$
\llist(x) = (x = 0 \wedge \SLemp \vee \exists z \,.\,x \mapsto z \SLstar \llist(z))
$$
We can similarly define more complex patterns, such as lists with data.
But first, we specify a convenient operation for defining maps over contiguous
keys, making use of a sequence data-type.
The latter can be defined like in Section~\ref{sec:collections};
for notational convenience, we take the freedom to use comma ``$,$'' instead of
``$\cdot$'' for sequence concatenation in some places:
$$
\begin{array}{l@{\hspace*{15ex}}l}
\_\mapsto[\_] : \Nat \times \Seq \ra \Map &
x \mapsto [\epsilon] = \SLemp \\
& x \mapsto [a, S] = x \mapsto a \SLstar (x+1) \mapsto [S]
\end{array}
$$
In our model $M$, we can take $M_\Seq$ to be the finite sequences of
natural numbers, with $\epsilon_M$ and
$\_\cdot_M\_$ interpreted as the empty sequence and,
respectively, sequence concatenation.

We can now define lists with data as follows:
$$
\begin{array}{@{}l@{\hspace*{7.7ex}}l@{}}
\llist \in \Sigma_{\Nat \times \Seq, \Map} &
\listf \in \Sigma_{\Nat \times \Seq \times \Nat, \Map} \\

\llist(x,\epsilon) = (\SLemp \wedge x = 0) &
\listf(x,\epsilon,y) = (\SLemp \wedge x = y) \\

\llist(x,n\cdot S) = \exists z \,.\,x \mapsto [n,z] \SLstar \llist(z,S) &
\listf(x, n\cdot S,y) =  \exists z \,.\,x \mapsto [n,z] \SLstar \listf(z,S,y)
\end{array}
$$
Note that, unlike in the case of lists without data, this time we
have not required the side conditions $x\neq 0$ and $x\neq y$, respectively.
The side conditions were needed in the former case because without them
we can infer, e.g., $\llist(0)=\bot$ (from the second equation of $\llist$),
which using the first equation would imply $\SLemp = \bot$.
However, they are not needed in the latter case because it is safe (and
even desired) to infer $\llist(0,n\cdot S) = \bot$ for any $n$ and $S$.
We can show, using a similar approach like for lists without data, that
the pattern $\listf(x,S,y)$ matches in $M$ precisely the lists
starting with $x$, exiting to $y$, and holding data sequence $S$.

We can similarly define other data-type specifications, such as
trees with data:
$$
\begin{array}{l}
\emptytree : \ \ra \Tree \\
\node : \Nat \times \Tree \times \Tree \ra \Tree \\
\tree \in \Sigma_{\Nat \times \Tree, \Map} \\[2ex]
\tree(0,\emptytree) = \SLemp \\
\tree(x,\node(n,\!t_1\!,\!t_2)) = \exists y\,z\,.\,x \!\mapsto\![n,y,z]\SLstar\tree(y,t_1)\SLstar\tree(z,t_2))
\end{array}
$$

Therefore, in the model $M$ of partial maps described above,
there is a unique way to interpret $\llist$ and $\listf$, namely
as the expected linked lists and, respectively, linked list fragments.
Fixing the interpretations of the basic mathematical domains, such
as those of natural numbers, sequences, maps, etc., suffices in order to
define interesting map patterns that appear in verification of heap properties
of programs, in the sense that the axioms themselves uniquely define the
desired data-types.
No logic extensions (such as adding free models with induction/recursion
principles as a matching logic equivalent to ``recursive predicates'',
or least fixed-point constraints, or even fixed points of any kind)
were needed to define them.
The defining axioms were precise enough to capture the intended concept
in the intended model.
Choosing the right basic mathematical domains is, however, crucial.
For example, if we allow the maps in $M_\Map$ to have infinite domains then
the list patterns without data above (the first ones) also include infinite
lists.
The lists with data cannot include infinite lists, because we only allow finite
sequences.
This would, of course, change if we allow infinite sequences, or streams,
in the model.
In that case, $\llist$ and $\listf$ would not admit unique interpretations
anymore, because we can interpret them to be either all the finite domain
lists, or both the finite and the infinite-domain lists.
Writing patterns which admit the desired solution in the desired model suffices
in practice; our reasoning techniques developed in the sequel allow us to
derive properties that hold in all models satisfying the axioms, so any derived
property is sound also for the intended model and interpretations.

\subsection{Separation Logic as an Instance of Matching Logic}
\label{sec:SLasML}

Consider the FOL fragment in Section~\ref{sec:FOL},
where the signature $\Sigma$ includes the signature of maps in
Section~\ref{sec:map-patterns}.
Any additional FOL constructs, background theories, and/or built-in domains
that one wants to consider in separation logic specifications, are handled
as already explained in Sections~\ref{sec:FOL} and ~\ref{sec:builtins}.
It is clear then that all the syntactic constructs of separation logic,
except for the magic wand, $\magicwand$, are given by the above matching
logic signature.
The magic wand, on the other hand, can be defined as the following
derived construct:
$$
\begin{array}{rcl}
\varphi_1 \magicwand \varphi_2 & \equiv & \exists H\!:\!\Map \,.\,H \wedge \lfloor H * \varphi_1 \ra \varphi_2\rfloor
\end{array}
$$
Recall from Section~\ref{sec:definedness} that $\lfloor\varphi\rfloor$ is $\top$
(it matches the entire set) iff its enclosed pattern $\varphi$ is $\top$;
otherwise, if $\varphi$ does not match some elements, then
$\lfloor\varphi\rfloor$ is $\bot$ (it matches nothing).
In words, $\varphi_1 \magicwand \varphi_2$ matches all maps $h$ which merged with
maps matching $\varphi_1$ yield only maps matching $\varphi_2$.
Thanks to the notational convention that Booleans $b$, respectively usual
predicates $p$, stand for equalities $b={\it true}$, respectively $p = \top_\Pred$
(Notation~\ref{notation:bool}),
\begin{quote}
\em 
Any separation logic formula is a matching logic pattern of sort $\Map$.
\end{quote} 


Semantically, note that separation logic hard-wires a particular model of maps.
That is, its satisfaction relation $\models_\SL\varphi$ is defined using a
pre-defined universe of maps, which is conceptually the same as our model of
maps in Section~\ref{sec:map-patterns}.
Since separation logic extends FOL, its models are still allowed to
instantiate the FOL part of its syntax in the usual FOL way,
but the maps are rigid and the models cannot touch them.
It is therefore not surprising that we also have to fix the maps in our
matching logic models corresponding to the syntax described so far in order
to faithfully capture separation logic semantically.
For the rest of this section, we only consider models $M$ for the matching
logic specification above whose reduct to the syntax of maps is
precisely the map domain in Section~\ref{sec:map-patterns}.
We let $\Map \models \varphi$ denote the resulting satisfaction relation: $\Map \models \varphi$ iff
$M \models \varphi$ for any model $M$ like above.

In separation logic formulae, variables range only over the domains of data
(e.g., natural numbers), but not over heaps/maps;
for example, ``$\exists H\!:\!\Map \, .\, 1 \mapsto 2 \SLstar H$'' is not
a proper separation logic formula (although it is one in matching logic).
Also, stores $s$ are mappings of variables to particular values.
In matching logic, variables range over all syntactic categories, including
over heaps in our case, and valuations $\rho$ can map such variables to any
values in the model;
for example, the variable $H$ of sort $\Map$ in the pattern defining $\magicwand$
above is a heap variable.
Since separation logic formulae $\varphi$ contain no heap variables, we can
regard separation logic stores $s$ as $M$-valuations with the property that
$\overline{s}(\varphi)$ contains precisely the heaps which together with
$s$ satisfy the original separation logic formula $\varphi$.
We prove this in the next proposition showing that separation
logic is not only syntactically an instance of matching logic (modulo notations
in Section~\ref{sec:useful-symbols}), but also semantically:

\begin{prop}
\label{prop:SL}
If $\varphi$ is a separation logic formula, then
$\models_\SL \varphi \mbox{ iff } \Map \models \varphi$.
\end{prop}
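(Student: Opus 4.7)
The plan is to reduce the proposition to a stronger structural statement: for every separation logic formula $\varphi$, every store $s$ (viewed as a matching logic $M$-valuation $\rho_s$, with arbitrary and irrelevant extension to variables of sort $\Map$, since $\varphi$ contains none), and every heap $h \in M_\Map$,
$$(s,h) \models_\SL \varphi \ \text{ iff } \ h \in \overline{\rho_s}(\varphi).$$
Once this is established, unfolding $\Map \models \varphi$ as ``$\overline{\rho}(\varphi) = M_\Map$ for every valuation $\rho$'' and $\models_\SL \varphi$ as ``$(s,h) \models_\SL \varphi$ for every state $(s,h)$'' gives the proposition.

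I would prove the lemma by structural induction on $\varphi$. For a FOL atom, $(s,h) \models_\SL \varphi$ does not depend on $h$ and reduces via Proposition~\ref{prop:FOL} and Notation~\ref{notation:pred} to either $\overline{\rho_s}(\varphi) = M_\Map$ or $\emptyset$, so the equivalence becomes trivial. For $\SLemp$, the fixed map model forces $\overline{\rho_s}(\SLemp) = \{\SLempM\}$, which is exactly the set of heaps with empty domain. For $e_1 \mapsto e_2$, the interpretation in $M$ is the singleton containing the one-binding heap when $\overline{s}(e_1) \neq 0$ and $\emptyset$ otherwise, matching Definition~\ref{dfn:sep-logic} directly. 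The cases for $\neg$, $\wedge$ and $\exists$ are routine from Definition~\ref{def:rho-bar} and the induction hypothesis, noting that the quantified variable is of a data sort.

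The separating conjunction case uses the extension of $\SLstarM$ to argument sets: $\overline{\rho_s}(\varphi_1 \SLstar \varphi_2) = \bigcup \{h_1 \SLstarM h_2 \mid h_i \in \overline{\rho_s}(\varphi_i)\}$, and since $h_1 \SLstarM h_2$ is $\{h_1 * h_2\}$ when the domains are disjoint and $\emptyset$ otherwise (because $\_\SLstar\_$ was declared with $\rightharpoonup$), $h \in \overline{\rho_s}(\varphi_1 \SLstar \varphi_2)$ iff there exist disjoint $h_1, h_2$ with $h = h_1 * h_2$ and $h_i \in \overline{\rho_s}(\varphi_i)$; the induction hypothesis closes the case. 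For the magic wand, unfold the derived construct $\exists H\!:\!\Map\,.\,H \wedge \lfloor H \SLstar \varphi_1 \ra \varphi_2 \rfloor$. By Proposition~\ref{prop:totality}, the totality subpattern evaluates to $M_\Map$ exactly when the enclosed implication holds in $\rho_s[h'/H]$, i.e., when $\overline{\rho_s[h'/H]}(H \SLstar \varphi_1) \subseteq \overline{\rho_s[h'/H]}(\varphi_2)$; conjoining with $H$ pins the witness to $h' = h$, and the disjoint-domain constraint is automatically enforced by the partial merge. Hence $h$ matches the wand iff for every $h_1$ with $\Dom(h_1) \cap \Dom(h) = \emptyset$ satisfying $\varphi_1$, $h * h_1$ satisfies $\varphi_2$, which is the separation logic clause.

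The main obstacle is the magic wand case: one must carefully track how the partiality of $\_\SLstarM\_$ (undefined on overlapping domains, yielding $\emptyset$) exactly encodes the disjointness quantifier that separation logic builds into the semantics, and how the $H \wedge \lfloor \cdots \rfloor$ idiom selects the right heap witness while turning the pointwise totality condition into a universal quantification over compatible extensions. Everything else is essentially a pedantic chase through Definition~\ref{def:rho-bar}, Notation~\ref{notation:pred} and the fixed map-model interpretations.
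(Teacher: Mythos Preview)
Your proposal is correct and follows essentially the same approach as the paper: both reduce the statement to the pointwise lemma $(s,h)\models_\SL\varphi$ iff $h\in\overline{s}(\varphi)$ and prove it by structural induction on $\varphi$, with the same case analysis and the same handling of the magic wand via the totality construct and the partiality of $\_\SLstarM\_$. The only cosmetic difference is that the paper treats an entire FOL subformula as a single base case (via the $\Pred$-sort desugaring), whereas you factor the logical connectives into the main induction; this is immaterial.
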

\begin{proof}
Like in the proofs of Propositions~\ref{prop:pure-predicate},
\ref{prop:alg-spec}, and \ref{prop:FOL}, there is a bijection between the
models of separation logic and the matching logic $\Map$-models.
The bijection works as described in the aforementioned propositions
for the FOL part, and adds the map model in Section~\ref{sec:map-patterns}
to the resulting matching logic models.
To keep the notation simple, we use the same name, $M$, to refer both to a
separation logic model and to its corresponding matching logic model,
remembering from the proofs of Propositions~\ref{prop:pure-predicate} and
\ref{prop:alg-spec} that the latter's carrier of sort $\Pred$ is a singleton
$\{\star\}$.
The context makes it clear which one we are referring to.

We show by structural induction on the separation logic formula $\varphi$
the more general result that for any store $s$ and any heap $h$, we have
$(s,h) \models_\SL \varphi$ iff $h \in \overline{s}(\varphi)$.

If $\varphi$ is a FOL formula then its desugared matching logic
correspondent is $\varphi =_\Pred^\Map \top_\Pred$
(Notation~\ref{notation:pred}).
Then $(s,h) \models_\SL \varphi$ iff
$s \models_\FOL \varphi$ (Definition~\ref{dfn:sep-logic}), iff
$\overline{s}(\varphi)=\{\star\}$
(see proof of Proposition~\ref{prop:FOL}), iff
$\overline{s}(\varphi) = \overline{s}(\top_\Pred)$, iff
$\overline{s}(\varphi =_\Pred^\Map \top_\Pred) = M_\Map$
(by Proposition~\ref{prop:equality}),
iff $h\in\overline{s}(\varphi =_\Pred^\Map \top_\Pred)$
(Proposition~\ref{prop:equality} again: equality is
interpreted as either $M_\Map$ or $\emptyset$).

Conjunction and negation are trivial.
Existential quantification:
$(s,h) \models_\SL \exists x\,.\,\varphi$
iff there exists some $a\in M$ of appropriate (non-heap) sort
such that $(s[a/x],h) \models \varphi$, iff
$h\in\overline{s[a/x]}(\varphi)$ (induction hypothesis), iff
$h\in\bigcup\{\overline{s'}(\varphi) \mid
s':\Var\rightarrow M,\ 
s'\!\!\upharpoonright_{\Var\backslash\{x\}} =
s\!\!\upharpoonright_{\Var\backslash\{x\}}
\}$,
iff
$h\in\overline{s}(\exists x\,.\,\varphi)$.
We next discuss the heap-related constructs of separation logic.

If $\varphi \equiv \SLemp$ then $(s,h)\models_\SL\SLemp$
iff $h=\SLempM$, iff $h\in\{\SLempM\}$, iff $h\in\overline{s}(\SLemp)$.

If $\varphi \equiv e_1 \mapsto e_2$ then $(s,h) \models_\SL e_1\mapsto e_2$ iff 
$\Dom(h)=\overline{s}(e_1)\neq 0$ and
$h(\overline{s}(e_1))=\overline{s}(e_2)$ (Definition~\ref{dfn:sep-logic}),
iff $h$ is the partial-domain map
$\overline{s}(e_1)\mapsto_M\overline{s}(e_2)$
(which is undefined when $\overline{s}(e_1)=0$---see Section~\ref{sec:map-patterns}),
iff $h\in\overline{s}(e_1 \mapsto e_2)$.

If $\varphi \equiv \varphi_1 \SLstar \varphi_2$ then
$(s,h) \models_\SL \varphi_1 \SLstar \varphi_2$ iff there exist $h_1$
and $h_2$ of disjoint domains such that $h = h_1 *_M h_2$
(the merge of $h_1$ and $h_2$, which is a partial function on
maps---see Definition~\ref{dfn:sep-logic} and Section~\ref{sec:map-patterns})
and $(s,h_1) \models_\SL \varphi_1$ and $(s,h_2) \models_\SL \varphi_2$, iff
$h = h_1 *_M h_2$ and
$h_1\in \overline{s}(\varphi_1)$
and
$h_2\in \overline{s}(\varphi_2)$
(induction hypothesis),
iff
$h\in \overline{s}(\varphi_1) \SLstar_M \overline{s}(\varphi_2)$, iff
$h\in \overline{s}(\varphi_1 \SLstar \varphi_2)$.

The only case left is the ``magic wand'',
$\varphi \equiv \varphi_1 \magicwand \varphi_2$:
$$
\begin{array}{ll}
& h \in \overline{s}(\varphi_1 \magicwand \varphi_2) \\
\mbox{iff} & h \in \overline{s}(\exists H\,.\,H\wedge \lfloor H*\varphi_1 \ra \varphi_2 \rfloor) \\
\mbox{iff} & \{h\} \SLstar_M \overline{s}(\varphi_1) \subseteq \overline{s}(\varphi_2) \\
\mbox{iff} & h*h_1 \in \overline{s}(\varphi_2) \mbox{ for any } h_1\in\overline{s}(\varphi_1)
  \mbox{ compatible with }h \\
\mbox{iff} & (s,h*h_1) \models_\SL\varphi_2 \mbox{ for any } h_1 \mbox{ compatible with $h$ such that }
(s,h_1) \models_\SL \varphi_1 \\
& \mbox{(previous step followed by the induction hypothesis)} \\
\mbox{iff} & (s,h) \models_\SL \varphi_1 \magicwand \varphi_2
\end{array}
$$
The proof is complete.
\end{proof}

Although matching logic is complete (Section~\ref{sec:deduction}), so
its validity $\models$ is semi-decidable, while results in
\cite{Calcagno2001,joel-2014-fossacs} state that the validity problem in separation
logics is not semi-decidable, note that there is no conflict here
because we restricted matching logic validity to $\Map$-models.
As an analogy, it is well-known that the validity of predicate logic formulae
can be arbitrarily hard when particular (and not all) models are concerned.
All the above says is that the results in \cite{Calcagno2001,joel-2014-fossacs} carry
over to the particular matching logic theory restricted to $\Map$-models
discussed in this section.
Most likely one can obtain even more general instances of the results
\cite{Calcagno2001,joel-2014-fossacs} for matching logic, but that is beyond the
scope of this paper.

The loose-model approach of matching logic is in sharp
technical, but not conceptual, contrast to separation logic.
In separation logic, the syntax of maps and separation constructs is part of
the syntax of the logic itself, and the model of maps is intrinsically integrated
within the semantics of the logic: its satisfaction relation is defined in
terms of a fixed syntax and the fixed model of the basic domains
(maps, sequences, etc.).
Then specialized proof rules and theorem provers need to be devised.
If any changes to the syntax or semantics are desired, for example adding a
new stack, or an I/O buffer, etc., then a new logic is obtained.
Proof rules and theorem provers may need to change as the logic changes.
In matching logic, the basic ingredients of separation logic form one
particular specification, with its particular signature and pattern axioms,
together with particular but carefully chosen models.
This enables us to use generic first-order reasoning in matching logic
(Section~\ref{sec:deduction}), as well as theorem provers or SMT solvers
for reasoning about the intended models.
Nevertheless, several high performance automated provers for separation logics
have been developed, e.g.
\cite{Berdine2005,Appel2007,Botincan2009,Perez2011,Berdine2011,NavarroPerez2013,Parkinson2011,Piskac:2013:ASL:2526861.2526927,Pek2014},
while there are no automated provers available for matching logic yet.
A technical challenge, left for future work, is to investigate the techniques
and algorithms underlying the existing separation logic provers and to generalize
them if possible to work with matching logic in general or at least with common
fragments of it.


Like for modal logic (Section~\ref{sec:modal-logic}), the result above in
combination with the reduction of matching logic to predicate logic with
equality in Section~\ref{sec:PL-reduction} yields a translation from
separation logic to predicate logic with equality.
Note that many of the separation logic provers above are implicitly or
explicitly based on translations to FOL, and specific translations to FOL or
fragments of it have been already
studied~\cite{Calcagno2005,10.1109/SKG.2010.29,Bobot2012}.
Like for modal logic (Section~\ref{sec:modal-logic}), our goal in this section
was not to propose yet another translation.
Our goal was to show how separation logic can be framed as a matching logic
specification both syntactically and semantically, without any translation
(but only with syntactic sugar notations).
Such results can help us better understand both logics, as well as their
strengths and limitations.

\section{Matching Logic: Reduction to Predicate Logic with Equality}
\label{sec:PL-reduction}

It is known that FOL formulae can be translated into equivalent
predicate logic with equality formulae (i.e., no function or constant
symbols---see Section~\ref{sec:predicate-logic}), by replacing all
functions with their graph relations (see, e.g., \cite{nla.cat-vn2062435}).
Specifically, function symbols $\sigma:s_1\times\cdots\times s_n \ra s$
are replaced with predicate symbols
$\pi_\sigma : s_1 \times \cdots \times s_n \times s$,
and then terms are transformed into formulae by adding existential
quantifiers for subterms.
Let us define such a translation, say $\PL$.
It takes each FOL predicate
$\pi(t_1,\ldots,t_n)$ into a pure predicate logic formula
as follows:
$$\PL(\pi(t_1,\ldots,t_n)) = \exists r_1 \cdots r_n\,.\,\PL_2(t_1,r_1)\wedge\cdots\wedge\PL_2(t_n,r_n)\wedge\pi(r_1,\ldots,r_n)$$
where $\PL_2(t,r)$ is a translation of term $t$ into a predicate stating
that $t$ equals variable $r$:
$$
\begin{array}{rcl}
\PL_2(x,r) & = & (x=r) \\
\PL_2(\sigma(t_1,\ldots,t_n),r) & = & 
  \exists r_1 \cdots \exists r_n\,.\,\PL_2(t_1,r_1) \wedge
 \cdots \wedge \PL_2(t_n,r_n) \wedge \pi_\sigma(r_1,\ldots,r_n,r)
\end{array}
$$
Axioms stating that the predicate symbols
$\pi_\sigma : s_1 \times \cdots \times s_n \times s$ associated to function symbols
$\sigma:s_1\times\cdots\times s_n \ra s$
are interpreted as total function relations are also needed:
$$
\forall x_1:s_1,\ldots,x_n:s_n\,.\,\exists y:s\,.\,\forall z:s\,.\,
(\pi_\sigma(x_1,\ldots,x_n,z)
\lra
y=z)
$$

We can similarly translate matching logic patterns into equivalent predicate
logic formulae.
Consider predicate logic with equality (and no function or constant symbols)
whose satisfaction relation is $\PLmodels$.
For matching logic signature $(S,\Sigma)$, let $(S,\Pi_\Sigma)$ be the
predicate logic signature with
$\Pi_\Sigma=\{\pi_\sigma : s_1 \times \cdots \times s_n \times s \mid
\sigma \in \Sigma_{s_1\ldots s_n,s}\}$, like above but without the axioms stating that these predicates have a functional interpretation in models (because the matching logic symbols need not be interpreted as functions).
We define the translation $\PL$ of matching logic $(S,\Sigma)$-patterns
into predicate logic $(S,\Pi_\Sigma)$-formulae inductively:
$$
\begin{array}{@{}r@{\ }c@{\ }l}
\PL(\varphi) & = & \forall r\,.\,\PL_2(\varphi,r) \\[1.5ex]
\PL_2(x,r) & = & (x = r) \\
\PL_2(\sigma(\varphi_1,\ldots,\varphi_n),r) & = & 
  \exists r_1 \cdots \exists r_n\,.\,\PL_2(\varphi_1,r_1) \wedge
 \cdots \wedge \PL_2(\varphi_n,r_n) \wedge \pi_\sigma(r_1,\ldots,r_n,r) \\
\PL_2(\neg \varphi,r) & = & \neg \PL_2(\varphi,r) \\
\PL_2(\varphi_1 \wedge \varphi_2,r) & = & \PL_2(\varphi_1,r) \wedge \PL_2(\varphi_2,r) \\
\PL_2(\exists x\,.\,\varphi,r) & = & \exists x\,.\,\PL_2(\varphi,r) \\[1.5ex]
\PL(\{\varphi_1,\ldots,\varphi_n\}) & = &
  \{\PL(\varphi_1),\ldots,\PL(\varphi_n)\}
\end{array}
$$
The predicate logic formula $\PL_2(\varphi,r)$ captures the
intuition that ``$r$ matches $\varphi$''.
The top transformation above, $\PL(\varphi)=\forall r\,.\,\PL_2(\varphi,r)$,
is different from (and simpler than) the corresponding translation
of predicates from FOL to predicate logic.
It captures the intuition that the pattern $\varphi$ is valid iff it is matched
by {\em all} values $r$.
Then the following result holds:
\begin{prop}
\label{prop:mlTopred}
If $F$ is a set of patterns and $\varphi$ is a pattern,
$F \models \varphi$ iff $\PL(F) \PLmodels \PL(\varphi)$.
\end{prop}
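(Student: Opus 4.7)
The plan is to establish a bijection between matching logic $(S,\Sigma)$-models and predicate logic $(S,\Pi_\Sigma)$-models, and then prove the key bridging lemma by structural induction on patterns, from which both directions of the proposition follow.

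First I would set up the model correspondence. Given a matching logic model $M$, I associate a predicate logic model $M^\PL$ with the same carriers $M^\PL_s = M_s$, interpreting each predicate $\pi_\sigma \in \Pi_\Sigma$ by $\pi_\sigma^{M^\PL}(a_1,\ldots,a_n,a)$ holds iff $a \in \sigma_M(a_1,\ldots,a_n)$. This is clearly a bijection with its inverse: any predicate logic $(S,\Pi_\Sigma)$-model $N$ yields a matching logic model $N^\ML$ by setting $\sigma_{N^\ML}(a_1,\ldots,a_n) = \{a \mid \pi_\sigma^N(a_1,\ldots,a_n,a)\}$, since $\Pi_\Sigma$ has no axioms constraining the relations (in particular no functionality axiom). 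Under this bijection, the equality symbol of predicate logic coincides with the identity on each carrier, which matches the set-theoretic semantics of variables in matching logic.

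Next I would prove the key bridging lemma: for any pattern $\varphi$, any valuation $\rho : \Var \ra M$, and any element $a \in M_s$ of the appropriate sort,
\[
a \in \overline{\rho}(\varphi) \ \ \ \text{iff} \ \ \ M^\PL, \rho[a/r] \PLmodels \PL_2(\varphi, r),
\]
where $r$ is fresh for $\varphi$. The proof is by structural induction on $\varphi$: the variable case uses the interpretation $\overline{\rho}(x)=\{\rho(x)\}$ and reduces to the equality $x = r$; the symbol case uses the inductive hypothesis on the arguments together with the definition of $\pi_\sigma$; and the $\neg$, $\wedge$, and $\exists$ cases are routine, using standard properties of substitution and the fact that $r$ is fresh so $\rho[a/r]$ interacts well with the $\bigcup_{b \in M}$ that defines $\overline{\rho[b/x]}(\varphi)$.

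With this lemma in hand the proposition follows quickly. By Definition~\ref{dfn:satisfaction}, $M \models \varphi$ iff $\overline{\rho}(\varphi) = M_s$ for all $\rho$, iff for all $\rho$ and all $a \in M_s$ we have $a \in \overline{\rho}(\varphi)$, iff by the lemma for all $\rho$ and all $a$ we have $M^\PL, \rho[a/r] \PLmodels \PL_2(\varphi,r)$, iff $M^\PL \PLmodels \forall r\,.\,\PL_2(\varphi,r) = \PL(\varphi)$. Hence $M \models F$ iff $M^\PL \PLmodels \PL(F)$, and since the assignment $M \mapsto M^\PL$ is a bijection between the two classes of models, we conclude $F \models \varphi$ iff $\PL(F) \PLmodels \PL(\varphi)$.

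The main obstacle I anticipate is bookkeeping around the fresh auxiliary variables $r, r_1, \ldots, r_n$ introduced by the translation: one must be careful that the existentially quantified $r_i$ in $\PL_2(\sigma(\varphi_1,\ldots,\varphi_n),r)$ do not clash with free variables of the $\varphi_i$ or with $r$, and that the inductive hypothesis can be applied at each $r_i$ after suitable valuation extension. A standard $\alpha$-renaming convention (assuming the auxiliary variables are always chosen fresh w.r.t.\ the pattern and the ambient $\rho$) makes this manageable, but it is the only step that requires genuine care; the set-theoretic content of each clause of $\PL_2$ is immediate from Definition~\ref{def:rho-bar}.
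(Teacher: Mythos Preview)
Your proposal is correct and follows essentially the same approach as the paper: establish the bijection between $(S,\Sigma)$-models and $(S,\Pi_\Sigma)$-models via the graph relation, prove the key lemma $a \in \overline{\rho}(\varphi)$ iff $\rho[a/r] \PLmodels \PL_2(\varphi,r)$ by structural induction on $\varphi$, and conclude. The paper's proof is terser and does not dwell on the freshness bookkeeping you flag, but the structure is identical.
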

\begin{proof}
It suffices to show that there is a bijective correspondence
between matching logic $(S,\Sigma)$-models $M$ and
predicate logic $(S,\Pi_\Sigma)$-models ${M}'$, such that
${M} \models \varphi$ iff ${M}' \PLmodels \PL(\varphi)$
for any $(S,\Sigma)$-pattern $\varphi$.
The bijection is defined as follows:
\begin{itemize}
\item $M'_s=M_s$ for each sort $s \in S$;
\item ${\pi_\sigma}_{M'} \subseteq M_{s_1}\times \cdots \times M_{s_n} \times M_s$ with
$(a_1,\ldots,a_n,a) \in {\pi_\sigma}_{M'}$ iff 
$\sigma_M:M_{s_1}\times \cdots \times M_{s_n} \ra {\cal P}(M_s)$ with
$a\in\sigma_M(a_1,\ldots,a_n)$.
\end{itemize}
To show ${M} \models \varphi$ iff ${M}' \PLmodels \PL(\varphi)$,
it suffices to show $a \in \overline{\rho}(\varphi)$ iff
$\rho[a/r] \PLmodels \PL_2(\varphi,r)$ for any $\rho:\Var\ra M$,
which follows easily by structural induction on $\varphi$.
\end{proof}

It is informative to translate the definedness and equality patterns in
Sections~\ref{sec:definedness} and Section~\ref{sec:equality} using the above,
and especially to sanity check that the equality pattern of matching logic
indeed translates to the equality predicate of predicate logic with equality.
Recall that the definedness symbols were axiomatized with pattern axioms
$\lceil x \rceil$, and that we assumed them always available
(Assumption~\ref{assumption:definedness}).
Then $\PL(\lceil x \rceil)$ is
$\forall r \,.\, \pi_{\lceil\_\rceil}(x,r)$.
We can drop the universal quantifier and therefore assume
$\pi_{\lceil\_\rceil}(x,r)$ as an axiom formula in the translated predicate
logic specification.
Let us now show that the matching logic equality $x=y$, which is syntactic
sugar for $\neg \lceil \neg(x \lra y) \rceil$, translates to the equality
$x=y$ in predicate logic.
Applying the translation above, we get
$\PL(x=y)$ is $\forall r\,.\,\neg(\exists r_1\,.\,\neg(x=r_1 \lra y=r_1)\wedge \pi_{\lceil\_\rceil}(r_1,r))$,
which is equivalent, in predicate logic with equality, to
$\forall r\,.\,\forall r_1\,.\,(x=r_1 \lra y=r_1)$,
which is further equivalent to $x=y$.
Similarly, we can show that the translation of the equational
pattern stating that $\sigma$ is functional,
namely $\forall x_1\ldots x_n\,.\,\exists y\,.\,\sigma(x_1,\ldots,x_n)=y$,
indeed corresponds to the predicate logic formula
$
\forall x_1,\ldots,x_n\,.\,\exists y\,.\,\forall z\,.\,
(\pi_\sigma(x_1,\ldots,x_n,z)
\lra
y=z)
$, as expected.
We leave this as an exercise to the interested reader.

Proposition~\ref{prop:mlTopred} gives us a sound and complete procedure for
matching logic reasoning: translate the specification $(S,\Sigma,F)$ and
pattern to prove $\varphi$ into the predicate logic
specification $(S,\Pi_\Sigma,\PL(F))$ and formula $\PL(\varphi)$,
respectively, and then derive it using the sound and complete proof
system of predicate logic.
However, translating patterns to predicate logic formulae makes reasoning
harder not only for humans, but also for computers, since new quantifiers
are introduced.
For example,
$$(1 \mapsto 5 \SLstar 2 \mapsto 0 \SLstar 7 \mapsto 9 \SLstar 8 \mapsto 1)
\ra \llist(7,9\cdot 5)$$
discussed and proved in a few steps in Section~\ref{sec:deduction},
translates into the following formula (to keep it small, we do not
translate the numbers), which takes dozens, if not hundreds of steps
to prove using the predicate logic proof system:
$$
\begin{array}{r}
\forall r\,.\,
(
\exists r_1\,.\,\exists r_2\,.\,
\pi_{\mapsto}(1,5,r_1)
\mathrel\wedge
(
\exists r_3\,.\,\exists r_4\,.\,
\pi_{\mapsto}(2,0,r_3)
\mathrel\wedge
(
\exists r_5\,.\,\exists r_6\,.\,
\pi_{\mapsto}(7,9,r_5)
\mathrel\wedge
\pi_{\mapsto}(8,1,r_6)
\\
\mathrel\wedge \pi_\SLstar(r_5,r_6,r_4)
)
\mathrel\wedge \pi_\SLstar(r_3,r_4,r_2)
)
\mathrel\wedge \pi_\SLstar(r_1,r_2,r)
) \rightarrow
\exists r_7\,.\,\pi_\cdot(9,5,r_7) \mathrel\wedge \pi_\llist(7,r_7,r)
)
\end{array}
$$
What we would like is to reason directly with matching logic patterns, the same
way we reason directly with terms in FOL without translating them to
predicate logic.

\section{Matching Logic: Sound and Complete Deduction}
\label{sec:deduction}

In Figure~\ref{fig:proof-system}, we propose a sound and complete
proof system for matching logic (under Assumption~\ref{assumption:definedness}).
The first group of rules/axioms are those of FOL with equality, discussed
and proved sound in Section~\ref{sec:matching-logic}
(predicate logic: Proposition~\ref{prop:PL-validity}),
Section~\ref{sec:equality}
(equational: Proposition~\ref{prop:equality}),
and Section~\ref{sec:functions}
(FOL Substitution, called Term Substitution there:
Corollary~\ref{cor:FOL-substitution}),
with a slightly generalized Substitution axiom that we call
Functional Substitution (discussed below), which requires another
axiom (shown sound by Corollary~\ref{cor:variables}), called
Functional Variable, stating that variables are functional.
The second group of rules/axioms are about membership and were proved
sound in Section~\ref{sec:membership} (Proposition~\ref{prop:membership}).

Substitution must be used with care.
Recall FOL's Substitution: $(\forall x\,.\,\varphi) \ra \varphi[t/x]$.
Since matching logic makes no syntactic distinction between terms and
predicates, we would like to have a proof system that does not make such a
distinction either.
Ideally, since terms and predicates are particular patterns,
we would like to take the proof system of FOL with equality and turn it into a
proof system for matching logic by simply replacing ``predicate'' and ``term''
with ``pattern''.
This actually worked for all the rules and axioms, except for Substitution:
$(\forall x\,.\,\varphi) \ra \varphi[t/x]$.
Unfortunately, Substitution is not sound if we replace $t$ with any pattern.
For example, let $\varphi$ be $\exists y\,.\,x=y$ (Corollary~\ref{cor:variables}).
If FOL's Substitution were sound for arbitrary patterns $\varphi'$
instead of $t$, then the formula $\exists y\,.\,\varphi'=y$, stating that
$\varphi'$ is a functional pattern (i.e., it evaluates to a unique element
for any valuation: Definition~\ref{dfn:functional}), would be valid for any
pattern $\varphi'$.
That is, any pattern would be functional, which is neither true nor
desired
(e.g., $\top$ evaluates to the total set, $\bot$ to the empty set, etc.).

Nevertheless, as proved in Corollary~\ref{cor:FOL-substitution}, Substitution
stays sound if $t$ is a term pattern
(Definition~\ref{dfn:functional-notation}), that is, a pattern build
with only functional symbols (interpreted as functions in all models) and no
other constructs:
$\models (\forall x\,.\,\varphi) \ra \varphi[t/x]$ holds if $\varphi$ is any
pattern but $t$ is a term pattern.
It turns out that the fact that $t$ is built with only functional symbols is
irrelevant, and all that matters is that $t$ is a functional pattern
(all term patterns are functional: Corollary~\ref{cor:terms}).
We therefore generalize the Term Substitution axiom:
\begin{quote}
Functional Substitution: $\vdash (\forall x\,.\,\varphi) \wedge
  (\exists y\,.\,\varphi'=y) \ra \varphi[\varphi'/x]$
\end{quote}
This is more general than the original Substitution in FOL
(which allowed only predicates for $\varphi$) and
than Term Substitution (Corollary~\ref{cor:FOL-substitution}): it can also
apply when $\varphi'$
is not a term pattern but can be proved to be functional.
It is interesting to note that a similar modification of Substitution was needed in
the context of {\em partial} FOL (PFOL) \cite{DBLP:journals/sLogica/FarmerG00}, where the
interpretations of functional symbols are partial functions, so terms may be
undefined; axiom PFOL5 in \cite{DBLP:journals/sLogica/FarmerG00} requires
$\varphi'$ to be {\em defined} in the Substitution rule, and several rules for proving
definedness are provided.
Note that our condition $\exists y\,.\,\varphi'=y$ is equivalent to definedness
in the special case of PFOL, and that, thanks to the definability of equality in
matching logic, we do not need any special axiomatic or rule support for proving
definedness.

\begin{figure}
\underline{FOL axioms and rules:}
\vspace*{-2ex}
\begin{quote}
\ 

\mbox{\phantom01.} $\vdash$ propositional tautologies

\mbox{\phantom02. Modus Ponens: $\vdash \varphi_1$ and $\vdash \varphi_1 \ra \varphi_2$
imply $\vdash\varphi_2$}

\mbox{\phantom03. $\vdash (\forall x\,.\,\varphi_1\ra\varphi_2) \ra (\varphi_1 \ra \forall x\,.\,\varphi_2)$
when $x\not\in\FV(\varphi_1)$}

\mbox{\phantom04.} Universal Generalization: $\vdash\varphi$ implies $\vdash \forall x\,.\,\varphi$

\mbox{\phantom05.} Functional Substitution: $\vdash (\forall x\,.\,\varphi) \wedge
  (\exists y\,.\,\varphi'=y) \ra \varphi[\varphi'/x]$

\mbox{\phantom05'.} Functional Variable: $\vdash \exists\,y.\,x=y$

\mbox{\phantom06.} Equality Introduction: $\vdash \varphi = \varphi$

\mbox{\phantom07. Equality Elimination: $\vdash \varphi_1=\varphi_2 \mathrel\wedge \varphi[\varphi_1/x] \ra \varphi[\varphi_2/x]$}
\end{quote}\medskip
\underline{Membership axioms and rules:}
\vspace*{-2ex}
\begin{quote}
\ 

\mbox{\phantom08.} $\vdash \forall x\,.\,x\in\varphi$ iff $\vdash \varphi$

\mbox{\phantom09.} $\vdash x \in y = (x=y)$ when $x,y\in\Var$

10. $\vdash x\in\neg\varphi = \neg(x\in\varphi)$

11. $\vdash x\in\varphi_1\wedge\varphi_2 = (x\in\varphi_1) \wedge (x\in\varphi_2)$

12. $\vdash (x\in\exists y . \varphi) = \exists y.(x\in\varphi)$, with $x$ and $y$ distinct

13. \mbox{$\vdash x\!\in\!\sigma(\varphi_1,\!..,\varphi_{i-1},\varphi_i,\varphi_{i+1},\!..,\varphi_n) = \exists y . (y\!\in\!\varphi_i \mathrel\wedge x \!\in\!\sigma(\varphi_1,\!..,\varphi_{i-1},y,\varphi_{i+1},\!..,\varphi_n))$}
\end{quote}
\caption{Sound and complete proof system of matching logic.}
\label{fig:proof-system}
\end{figure}

We have made no effort to minimize the number of rules and axioms in our
proof system in Figure~\ref{fig:proof-system}.
On the contrary, our approach was to include all the rules and axioms that
turned out to be useful in proof derivations, especially if they already
existed in FOL.
Moreover, we preferred to frame ``unexpected'' properties of matching
logic as axioms or proof rules, so that users of the proof system are
fully aware of them.
For example, we could have merged the Functional Substitution and Functional
Variable axioms into the conventional predicate logic Substitution
((5) in Proposition~\ref{prop:PL-validity}) or the FOL Term Substitution
(Corollary~\ref{cor:FOL-substitution}), but we preferred not to, because
we want the user of our proof system to be fully aware of the fact that
they cannot substitute arbitrary patterns for variables; they should first
prove that the pattern is functional.
Additionally, our Functional Substitution is more general, in that it applies
in more instances, so proof derivations are shorter.

\begin{prop}
\label{prop:substitutions}
With the proof system in Figure~\ref{fig:proof-system}, the following are derivable:
\begin{enumerate}
\item Predicate Logic Substitution ((5) in Proposition~\ref{prop:PL-validity}): $\vdash (\forall x\,.\,\varphi) \ra \varphi[y/x]$
\item
Term patterns are functional (Corollary~\ref{cor:terms}): $\vdash \exists y\,.\,t=y$ for any term
pattern $t$
\item Term Substitution (Corollary~\ref{cor:FOL-substitution}): $\vdash (\forall x\,.\,\varphi) \ra \varphi[t/x]$
\end{enumerate}
\end{prop}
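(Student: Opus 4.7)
The plan is to derive the three statements in order, since each builds on the previous ones.

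For (1), Predicate Logic Substitution, I will instantiate Functional Substitution (axiom 5) with $\varphi' := y$. This yields $\vdash (\forall x\,.\,\varphi) \wedge (\exists z\,.\,y=z) \ra \varphi[y/x]$. Functional Variable (axiom 5'), which schematically asserts $\vdash \exists z\,.\,v=z$ for any variable $v$, gives $\vdash \exists z\,.\,y=z$. Combining with propositional reasoning (axioms 1, 2) discharges the second conjunct on the left, leaving exactly $\vdash (\forall x\,.\,\varphi) \ra \varphi[y/x]$.

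For (2), that term patterns are functional, I will proceed by structural induction on the term pattern $t$. The base case, when $t$ is a variable, is immediate from Functional Variable. For the inductive step with $t = \sigma(t_1,\ldots,t_n)$ built from a function symbol $\sigma : s_1 \times \cdots \times s_n \ra s$, Definition~\ref{dfn:functional-notation} puts the pattern $\exists y\,.\,\sigma(x_1,\ldots,x_n) = y$ into $F$. Applying Universal Generalization (axiom 4) $n$ times yields $\vdash \forall x_1\cdots \forall x_n\,.\,\exists y\,.\,\sigma(x_1,\ldots,x_n) = y$. I will then perform $n$ successive applications of Functional Substitution, at step $i$ taking $\varphi$ to be the body $\forall x_{i+1}\cdots \forall x_n\,.\,\exists y\,.\,\sigma(t_1,\ldots,t_{i-1},x_i,x_{i+1},\ldots,x_n)=y$ and $\varphi' := t_i$. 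The side condition $\exists z\,.\,t_i = z$ for each $i$ is supplied by the induction hypothesis. After all $n$ substitutions are discharged by Modus Ponens, the conclusion $\exists y\,.\,\sigma(t_1,\ldots,t_n)=y$ falls out.

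For (3), Term Substitution, I will instantiate Functional Substitution with $\varphi' := t$, obtaining $\vdash (\forall x\,.\,\varphi) \wedge (\exists y\,.\,t = y) \ra \varphi[t/x]$. Using (2) for the right conjunct on the left together with propositional reasoning and Modus Ponens produces $\vdash (\forall x\,.\,\varphi) \ra \varphi[t/x]$.

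The delicate step is the induction in (2), specifically the bookkeeping around the iterated Functional Substitution: I must choose the bound variable $y$ fresh at every stage and ensure that $y$ does not coincide with any $x_j$ still pending substitution or any variable free in the already-substituted subterms $t_1,\ldots,t_{i-1}$, so that $(\varphi)[t_i/x_i]$ is well-formed and leaves the remaining $\forall x_{i+1}\cdots \forall x_n$ prefix intact. A standard $\alpha$-renaming (justified by the usual variable-renaming convention introduced after Definition~\ref{dfn:ML-patterns}) handles this uniformly; the other obligations are routine applications of propositional tautologies and Modus Ponens.
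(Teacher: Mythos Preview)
Your proof is correct and essentially follows the paper's approach for (1) and (3). For (2), there is a small organizational difference worth noting: the paper proves (2) and (3) \emph{together} by simultaneous structural induction on $t$, using the induction hypothesis of (3) applied to each subterm $t_i$ in the inductive step of (2) (and then deriving (3) for $t$ from (2) for $t$ via Functional Substitution, as you do). You instead prove (2) on its own, invoking axiom~5 (Functional Substitution) directly at each of the $n$ substitution steps and supplying the side condition from the induction hypothesis of (2) alone. Your route avoids the simultaneous induction and is arguably cleaner; the paper's route packages Functional Substitution together with the functionality of $t_i$ into a single appeal to (3). Both are valid and amount to the same underlying argument. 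Your remark about freshness is slightly misfocused---the real hazard is that the bound $x_{i+1},\ldots,x_n$ (not $y$) might clash with free variables of the $t_j$'s---but your invocation of $\alpha$-renaming covers this as well.
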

\begin{proof}
By propositional calculus reasoning, which is subsumed by
our proof system (1. and 2. in Figure~\ref{fig:proof-system}),
for any patterns $A$, $B$, and $C$, if $\vdash A \wedge B \ra C$ and $\vdash B$
then $\vdash A \ra C$.
To prove (1), pick $A$ as $\forall x\,.\,\varphi$, $B$ as $\exists z\,.\,y=z$,
$C$ as $\varphi[y/x]$.
Then $\vdash A \wedge B \ra C$ by Functional Substitution and $\vdash B$ by
Functional Variable, so $\vdash A \ra C$, i.e.,
$\vdash (\forall x\,.\,\varphi) \ra \varphi[y/x]$.

We prove (2) and (3) together, by structural induction on $t$.
If $t$ is a variable then they follow by Functional Variable and,
respectively, by (1).
Suppose that $t$ is $\sigma(t_1,\ldots,t_n)$ for some functional symbol
$\sigma$, i.e., one for which we have an axiom
$\exists y\,.\,\sigma(x_1,\ldots,x_n)=y$
(Definition~\ref{dfn:functional-notation}), and for some appropriate
term patterns $t_1$, \ldots, $t_n$.
By the induction hypothesis of (2), we have
$\vdash \exists y_1\,.\,t_1=y_1$, \ldots, $\vdash \exists y_n\,.\,t_n=y_n$.
By the induction hypothesis on (3) with $x$ as $x_1$ and $\varphi$ as
$\exists y\,.\,\sigma(x_1,\ldots,x_n)=y$, we derive
$$\vdash (\forall x_1\,.\,\exists y\,.\,\sigma(x_1,\ldots,x_n)=y)\ra 
\exists y\,.\,\sigma(t_1,\ldots,x_n)=y$$
Since $\vdash \forall x_1\,.\,\exists y\,.\,\sigma(x_1,\ldots,x_n)=y$
by the functionality axiom of $\sigma$ and Universal Generalization,
we derive
$\vdash \exists y\,.\,\sigma(t_1,x_2,\ldots,x_n)=y$.
By the induction hypothesis on (3) with $x$ as $x_2$ and $\varphi$ as
$\exists y\,.\,\sigma(t_1,x_2,\ldots,x_n)=y$, we derive
$\vdash (\forall x_2\,.\,\exists y\,.\,\sigma(t_1,x_2,\ldots,x_n)=y)\ra 
\exists y\,.\,\sigma(t_1,t_2,\ldots,x_n)=y$.
Since $\vdash \forall x_2\,.\,\exists y\,.\,\sigma(t_1,x_2,\ldots,x_n)=y$
by the previous derivation and Universal Generalization, we derive
$\vdash \exists y\,.\,\sigma(t_1,t_2,x_3,\ldots,x_n)=y$.
Iterating this process for all the arguments of $\sigma$, we eventually
derive $\vdash \exists y\,.\,\sigma(t_1,\ldots,t_n)=y$, that is,
$\vdash \exists y\,.\,t=y$.
The only thing left is to prove $(3)$.
We prove it similarly to (1), using (2): in the propositional calculus
property at the beginning of the proof,
pick $A$ as $\forall x\,.\,\varphi$, $B$ as $\exists y\,.\,t=y$,
and $C$ as $\varphi[t/x]$.
Then $\vdash A \wedge B \ra C$ by Functional Substitution and $\vdash B$ by
(2) above, so $\vdash A \ra C$, i.e.,
$\vdash (\forall x\,.\,\varphi) \ra \varphi[t/x]$.
\end{proof}

Our approach to obtain a sound and complete proof system for matching logic
is to build upon its reduction to predicate logic with equality in
Section~\ref{sec:PL-reduction}.
Specifically, to use Proposition~\ref{prop:mlTopred} and the complete proof system
of predicate logic with equality.
Given a matching logic signature $(S,\Sigma)$, let $(S,\Pi_\Sigma)$ be the
predicate logic (with equality) signature obtained like in
Section~\ref{sec:PL-reduction}.
Besides the $\PL$ translation there, we also define a backwards
translation $\ML$ of predicate logic with equality
$(S,\Pi_\Sigma)$-formulae into $(S,\Sigma)$-patterns:
$$
\begin{array}{rcl}
\ML(x=r) & = & x=r \\
\ML(\pi_\sigma(r_1,\ldots,r_n,r)) & = & r \in \sigma(r_1,\ldots,r_n) \\
\ML(\neg\psi) &=& \neg\ML(\psi) \\
\ML(\psi_1 \wedge \psi_2) &=& \ML(\psi_1) \wedge \ML(\psi_2) \\
\ML(\exists x\,.\,\psi) &=& \exists x\,.\,\ML(\psi) \\[1ex]
\ML(\{\psi_1,\ldots,\psi_n\}) &=& \{\ML(\psi_1),\ldots,\ML(\psi_n)\}
\end{array}
$$
Recall from Section~\ref{sec:equality} that we assume equality and
membership in all specifications.



\begin{thm}
\label{thm:completeness}
The proof system in Figure~\ref{fig:proof-system} is sound and complete:
$F\models\varphi$ iff $F\vdash\varphi$.
\end{thm}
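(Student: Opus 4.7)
The plan is to establish soundness and completeness separately. Soundness will be a routine compilation of earlier propositions, while completeness will route through the reduction to pure predicate logic with equality developed in Section \ref{sec:PL-reduction} and then pull the resulting derivations back into matching logic using the membership axioms.

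For soundness, the axioms and rules decompose cleanly along the earlier results: rules 1--4 follow from Proposition \ref{prop:PL-validity}; Functional Variable (5') from Corollary \ref{cor:variables}; equality axioms 6 and 7 from Proposition \ref{prop:FOL-equality}; and membership axioms 8--13 from Proposition \ref{prop:FOL-membership}. The only item requiring fresh work is Functional Substitution (5), but it extends the argument for Term Substitution (Corollary \ref{cor:FOL-substitution}) using Proposition \ref{prop:functional}: for any valuation $\rho:\Var\ra M$, the conjunct $\exists y\,.\,\varphi'=y$ forces $\overline{\rho}(\varphi')=\{a\}$ for some unique $a\in M$, so $\overline{\rho}(\varphi[\varphi'/x])=\overline{\rho[a/x]}(\varphi)\supseteq \overline{\rho}(\forall x\,.\,\varphi)$, exactly as in the term case.

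For completeness, suppose $F\models\varphi$. By Proposition \ref{prop:mlTopred} this is equivalent to $\PL(F)\PLmodels\PL(\varphi)$, and G\"odel-style completeness for pure predicate logic with equality \cite{Godel1930} yields $\PL(F)\PLvdash\PL(\varphi)$. The remaining task is to pull this predicate-logic derivation back into matching logic. The technical core is a bridge lemma, proved by structural induction on $\varphi$:
$$
\vdash (x \in \varphi) \lra \ML(\PL_2(\varphi, x)).
$$
The variable case is Axiom 9; the negation, conjunction, and existential cases follow from Axioms 10, 11, 12 combined with the induction hypothesis; the symbol case $\sigma(\varphi_1,\ldots,\varphi_n)$ applies Axiom 13 once per argument to introduce fresh witness variables $r_i$ and then uses the induction hypothesis to rewrite each $r_i \in \varphi_i$ into $\ML(\PL_2(\varphi_i, r_i))$. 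Since $\ML$ commutes with universal quantification, combining this lemma with Axiom 8 and Universal Generalization yields $F\vdash\varphi$ iff $F\vdash\ML(\PL(\varphi))$.

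The main obstacle is then a proof-theoretic lifting: if $\Psi\PLvdash\psi$ in PL with equality then $\ML(\Psi)\vdash\ML(\psi)$ in matching logic. I would prove this by induction on the PL derivation. Propositional tautologies and Modus Ponens transfer unchanged to Axioms 1--2, since $\ML$ preserves propositional structure. The quantifier manipulation axiom and Universal Generalization transfer to Axioms 3--4, since $\ML$ commutes with quantifiers. The PL Substitution axiom only ever substitutes variables for variables (pure PL has no other terms), and this is derivable in matching logic by Proposition \ref{prop:substitutions}(1), itself obtained from Functional Substitution and Functional Variable. Equality introduction and elimination transfer to Axioms 6--7 once one observes that $\ML$ respects variable-for-variable substitution, which is immediate from its structural definition. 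Composing the lifting lemma with the bridge lemma closes the chain $F\models\varphi \Rightarrow \PL(F)\PLmodels\PL(\varphi) \Rightarrow \PL(F)\PLvdash\PL(\varphi) \Rightarrow \ML(\PL(F))\vdash\ML(\PL(\varphi)) \Rightarrow F\vdash\varphi$, where the last step applies the bridge lemma both to $\varphi$ and pointwise to the hypotheses in $F$.
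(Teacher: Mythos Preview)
Your proposal is correct and follows essentially the same route as the paper: soundness is assembled from the same earlier propositions (with a direct argument for Functional Substitution), and completeness goes through Proposition~\ref{prop:mlTopred}, G\"odel completeness for predicate logic with equality, a proof-theoretic lifting along $\ML$, and the bridge lemma $\vdash (r\in\varphi) \sim \ML(\PL_2(\varphi,r))$ proved by structural induction using axioms 9--13. The only cosmetic difference is that the paper states the bridge lemma with $=$ rather than $\lra$, which lets it invoke Equality Elimination directly in the inductive steps; your $\lra$ formulation works too, since the axioms 9--13 are equalities (hence yield $\lra$ via rule 7) and the contexts you substitute into are built from $\neg$, $\wedge$, $\exists$, through which top-level $\lra$ propagates by pure propositional and quantifier reasoning.
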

\begin{proof}
Propositions~\ref{prop:PL-validity} and \ref{prop:FOL-equality} showed the
soundness of all rules except for Substitution.
Corollary~\ref{cor:FOL-substitution} showed the soundness of the stronger
Term Substitution.
To show the soundness of Functional Substitution, we show
$\overline{\rho}((\forall x\,.\,\varphi) \wedge
  (\exists y\,.\,\varphi'=y)) \subseteq \overline{\rho}(\varphi[\varphi'/x])$
for any model $M$ and valuation $\rho:\Var\ra M$.
Let $s$ be the sort of $\varphi$ and $s'$ be the sort of $\varphi'$.
We have
$\overline{\rho}((\forall x\,.\,\varphi) \wedge (\exists y\,.\,\varphi'=y))
= \bigcap\{\overline{\rho'}(\varphi)\mid
\rho'\!\!\upharpoonright_{\Var\backslash\{x\}} = \rho\!\!\upharpoonright_{\Var\backslash\{x\}}
\} \mathrel\cap \bigcup\{M_s\mid
\rho'\!\!\upharpoonright_{\Var\backslash\{x\}} = \rho\!\!\upharpoonright_{\Var\backslash\{x\}},\ 
\overline{\rho'}(\varphi') = \{\rho'(y)\}\}$.
Since $y\not\in\FV(\varphi')$, it follows that
$\overline{\rho'}(\varphi') =\overline{\rho}(\varphi')$.
Therefore, all we have to show is the following:
if $\overline{\rho}(\varphi') = \{a\}$ for some $a\in M_{s'}$ then
$\bigcap\{\overline{\rho'}(\varphi)\mid
\rho'\!\!\upharpoonright_{\Var\backslash\{x\}} = \rho\!\!\upharpoonright_{\Var\backslash\{x\}}
\} \subseteq \overline{\rho}(\varphi[\varphi'/x])$.
This holds because
$\overline{\rho}(\varphi[\varphi'/x]) = \overline{\rho[a/x]}(\varphi)$.

We now show the completeness.
First, note that Proposition~\ref{prop:mlTopred} and the completeness
of predicate logic imply that
$F\models\varphi$ iff $\PL(F)\PLvdash\PL(\varphi)$.
Second, note that $\PL(F)\PLvdash\PL(\varphi)$ implies
$\ML(\PL(F))\vdash\ML(\PL(\varphi))$, because the $\ML$
translation only replaces predicates
$\pi_\sigma(r_1,\ldots,r_n,r)$ with $r \in \sigma(r_1,\ldots,r_n)$
and the proof rules of predicate logic, except for Substitution,
are a subset of the proof rules in Figure~\ref{fig:proof-system},
while the predicate logic Substitution is derivable in matching logic
((1) in Proposition~\ref{prop:substitutions}).
Third, notice that the completeness result holds if we can show
$F \vdash \varphi$ iff $F \vdash\ML(\PL(\varphi))$ for any pattern $\varphi$:
indeed, if that is the case then $F \vdash \ML(\PL(F))$, which together with
$\ML(\PL(F))\vdash\ML(\PL(\varphi))$
implies $F\vdash \ML(\PL(\varphi))$, which further implies $F \vdash \varphi$.

Let us now prove that $F \vdash \varphi$ iff $F\vdash\ML(\PL(\varphi))$
for any pattern $\varphi$.
We first show $\vdash r\in\varphi = \ML(\PL_2(\varphi,r))$ by induction on
$\varphi$.
The cases $\varphi \equiv x$, $\varphi \equiv \neg\varphi'$,
$\varphi\equiv \varphi_1 \wedge \varphi_2$, and
$\varphi\equiv \exists y.\varphi'$ are immediate consequences
of the axioms 9-12 in Figure~\ref{fig:proof-system}, using the
induction hypothesis and Equality Elimination (rule 7).
For the case $\varphi\equiv\sigma(\varphi_1,\ldots,\varphi_n)$,
we can first derive
$\vdash
\ML(\PL_2(\varphi,r)) =
  \exists r_1 \cdots \exists r_n\,.\,r_1\in\varphi_1 \wedge \cdots \wedge r_n\in\varphi_n
 \wedge r\in\sigma(r_1,\ldots,r_n)
$
using the induction hypothesis and Equality Elimination, and then
$\vdash
r\in\varphi =
  \exists r_1 \cdots \exists r_n\,.\,r_1\in\varphi_1 \wedge \cdots \wedge r_n\in\varphi_n
 \wedge r\in\sigma(r_1,\ldots,r_n)
$
using axiom 13 in Figure~\ref{fig:proof-system} and conventional FOL reasoning.
Therefore, $\vdash r\in\varphi = \ML(\PL_2(\varphi,r))$.
Our result now follows by proof rules 8 in Figure~\ref{fig:proof-system},
since $\ML(\PL(\varphi)) \equiv \forall r\,.\,\ML(\PL_2(\varphi,r))$.
\end{proof}

As an example, let us informally use the proof system in Figure~\ref{fig:proof-system}
together with the axiom patterns in Section~\ref{sec:map-patterns},
to derive 
$(1 \mapsto 5 \SLstar 2 \mapsto 0 \SLstar 7 \mapsto 9 \SLstar 8 \mapsto 1)
\ra \llist(7,9\cdot 5)$.
For simplicity, like in separation logic, let us assume that the axioms of commutativity,
associativity and idempotence of $\_\SLstar\_$ are axiom {\em schemas}, so we do not need
to explicitly use the substitution rule to instantiate them; in a mechanical verification
setting, their soundness as schemas can be proved separately from the basic axioms.

Recall the following axiom patterns about linked lists with data from
Section~\ref{sec:map-patterns}:
$$
\begin{array}{l@{\hspace*{10ex}}l}
x \mapsto [\epsilon] = \SLemp &
\llist(x,\epsilon) = (\SLemp \wedge x = 0) \\
x \mapsto [a, S] = x \mapsto a \SLstar (x+1) \mapsto [S] &
\llist(x,n\cdot S) = \exists z \,.\,x \mapsto [n,z] \SLstar \llist(z,S) \\
\end{array}
$$
Using the left axioms, axioms for sequences in Section~\ref{sec:collections},
and axioms of maps, by Functional Substitution and Equality Elimination
(Figure~\ref{fig:proof-system}) we derive
$\vdash 1 \mapsto 5 \SLstar 2 \mapsto 0 = 1 \mapsto [5,0]$
and
$\vdash 7 \mapsto 9 \SLstar 8 \mapsto 1 = 7 \mapsto [9,1]$, respectively.
By the first axiom for $\llist$ above, $\vdash \llist(0,\epsilon) = \SLemp$.
Note that Functional Substitution is equivalent to
$\vdash \varphi[\varphi'/y] \wedge (\exists y\,.\,\varphi'=y) \ra 
(\exists x\,.\,\varphi)$ (by propositional reasoning, e.g.,
$A \ra B = \neg B \ra \neg A$), so we get
$\vdash 1 \mapsto [5,0]  \SLstar \llist(0,\epsilon) \ra
(\exists z\,.\,1 \mapsto [5,z]  \SLstar \llist(z,\epsilon))$, which by the
second axiom of $\llist$ above yields
$\vdash 1 \mapsto [5,0]  \SLstar \llist(0,\epsilon) \ra \llist(1,5)$.
Following similar reasoning for the other binding, we can construct the following
(informal) proof derivation:
$$
\begin{array}{rlr}
  & 1 \mapsto 5 \SLstar 2 \mapsto 0 \SLstar 7 \mapsto 9 \SLstar 8 \mapsto 1 \\
= & 1 \mapsto [5,0] \SLstar 7 \mapsto [9,1] \\
= & 1 \mapsto [5,0]  \SLstar \llist(0,\epsilon)\SLstar 7 \mapsto [9,1]
& \textrm{(structural framing---Proposition~\ref{prop:structural-framing})} \\
\ra & (\exists z\,.\,1 \mapsto [5,z]  \SLstar \llist(z,\epsilon))\SLstar 7 \mapsto [9,1] \\
= & \llist(1,5\cdot \epsilon) \SLstar 7 \mapsto [9,1] \\
= & \llist(1,5) \SLstar 7 \mapsto [9,1] \\
\ra & \exists z\,.\,7\mapsto[9,z] \wedge \llist(z,5) \\
= & \llist(7,9\cdot 5)
\end{array}
$$
When applying structural framing
(Proposition~\ref{prop:structural-framing})
above, we assumed the completeness of the
matching logic proof system (Theorem~\ref{thm:completeness}).
It is an insightful exercise to directly prove
Proposition~\ref{prop:structural-framing} with $\vdash$ instead
of $\models$, without using the completeness theorem but only
the proof rules in Figure~\ref{fig:proof-system} (hint: use the membership
rules).

The example proof above was neither difficult nor unexpected, and it followed
almost the same steps as the corresponding separation logic proof.
Indeed, in spite of matching logic's simplicity (recall that its syntax is
even simpler than that of FOL: Definition~\ref{dfn:ML-patterns})
and domain-independence, it has the necessary expressiveness and capability
to carry out proof derivations for particular domains given as matching
logic specifications that are as abstract and intuitive as in logics
specifically crafted for those domains.
Additionally, its patterns are expressive enough to capture complex structural
and logical properties about program configurations, at the same time giving
us the peace of mind that any such properties are derivable with a uniform,
domain-independent proof system.

\mycomment{
\section{Application: Program Verification}
\label{sec:program-reasoning}

\cite{DBLP:conf/tacas/Beyer16}

Term rewriting is an excellent formalism to define trusted,
executable reference models for programming languages,
essentially by capturing their operational semantics using rewrite rules.
For an overview, we refer the interested reader to
\cite{serbanuta-rosu-meseguer-2007-ic}, which shows how the various
operational semantics approaches (big-step, small-step, evaluation contexts,
continuation-based, CHAM-based) can be uniformly represented using rewrite rules
and then executed using rewrite systems like Maude \cite{clavel-et-al99a}.

On the other hand, to reason about programs, the programming language
community prefers a different approach, namely to define an alternative
semantics of the programming language using semantic formalisms like
axiomatic or Hoare logic, denotational, dynamic logic, etc.
Or worse, many program verifiers use no logical formalism or formal semantics
of any kind, but simply implement an adhoc verification condition generator
reflecting their informal understanding of the language.
Ideally, the alternative semantics should be proved sound w.r.t.\ the
operational semantics (see, e.g., \cite{DBLP:conf/esop/Appel11}).
Needless to say that defining an axiomatic semantics for a real language
like C is significantly harder than defining an operational semantics,
and that proving their equivalence is a burden that few can take.
Consequently, most program verifiers are actually based on no formal
semantics of their target language at all.
The overall effect is that in spite of taking significant effort to
develop, state-of-the-art program verifiers cannot be trusted, and
indeed, there is anecdotal evidence that program verifiers prove wrong
programs correct and correct programs wrong.

\begin{figure}[!t]
\centering
\includegraphics[width=3.4in]{dream}
\caption{Overview of the \K Framework}
\label{fig:dream}
\end{figure}

We firmly believe that this must change, that
{\em programming languages must have formal semantics!}
Moreover, that formal analysis tools and language implementations can
and should be derived from such formal semantics.
Following this belief, we are developing the rewrite-based \K framework
\cite{rosu-serbanuta-2010-jlap,rosu-serbanuta-2013-k} (\url{http://kframework.org}, whose architecture is shown in
Figure~\ref{fig:dream}:
there is a central formal definition of the programming language, and all
the generic, language-independent tools provided by the framework get
automatically instantiated to that particular language.
Although \K currently produces no proof objects, all its tools are
correct-by-construction in principle, in that their correctness
depends only on language-independent algorithms and implementation,
which can be formally verified once and for all languages, and on the
particular programming language definition, which is trusted.
Matching logic has been developed as the underlying assertion logic
for the \K framework.
The main requirement and challenge was to design a logic that can be used to
specify any state/configuration property in any programming
language, as well as to reason about any such program states regardless of the
target programming language.
The current variant of matching logic, presented for the first time in
detail in this paper (and in less detail in its RTA'15 precursor
paper~\cite{rosu-2015-rta}), appears to be capable of all the above.

We briefly introduce \K.
In \K, programming languages can be defined using configurations, computations
and rules.
Configurations organize the state in units called cells, which are labeled and
can be nested.
Computations carry computational meaning as special nested list structures
sequentializing computational tasks, such as fragments of program.
Computations extend the original language abstract syntax.
\K (rewrite) rules make it explicit which parts of the term they read-only,
write-only, read-write, or do not care about.
This makes \K suitable for defining truly concurrent languages even in the
presence of sharing.
Computations are like any other terms in a rewriting environment: they can be
matched, moved from one place to another, modified, or deleted.
This makes \K suitable for defining control-intensive features such as abrupt
termination, exceptions or call/cc.

For syntax, \K uses conventional BNF annotated with attributes
(which are syntactic sugar for matching logic symbols,
Section~\ref{sec:matching-logic}, and rewrite rules over patterns).
For example, the syntax of assignment in IMP (a simple imperative
language in the \K distribution) is:
\begin{verbatim}
  syntax Exp ::= Id "=" Exp   [strict(2)]
\end{verbatim}
The attribute {\tt strict(2)} states the evaluation strategy of the assignment
construct: first evaluate the second argument, and then apply the semantic
rule(s) for assignment.

Using \K's semantic cells (also sugar for matching logic symbols), we can define
arbitrarily complex and nested program configurations.
Each cell encapsulates relevant information for the semantics, including other
cells that can ``float'' inside it (cell contents can be anything, including
multi-sets of other cells---Section~\ref{sec:alg-spec}).
For our simple IMP language, a top cell \verb|<cfg>...</cfg>| containing
code cell \verb|<k>...</k>| and state \verb|<state>...</state>| suffices:
\begin{verbatim}
  configuration <cfg>
                  <k> $PGM </k>
                  <state> .Map </state>
                </cfg>
\end{verbatim}
The given cell contents tell \K how to initialize the configuration: {\tt \$PGM}
says where to put the input program once parsed, and {\tt .Map} is the empty map
(denoted $\SLemp$ in Section~\ref{sec:map-patterns}).

Once the syntax and configuration are defined, we can start adding semantic rules.
\K rules are contextual: they mention a configuration context in which they apply,
together with local changes they make to that context.
The user typically only mentions the absolutely necessary context in their rules;
the remaining details are filled in automatically by the tool.
For example, here is the \K rule for assignment:
\begin{verbatim}
  rule <k> X:Id = V:Val => V ...</k>
       <state>... X |-> (_ => V) ...</state>
\end{verbatim}
Recall that assignment was \verb|strict(2)|, so we can assume that its second
argument is a value, say \verb|V|.
The context of this rule involves two cells, the \verb|k| cell which holds the current
code and the \verb|state| cell which holds the current state.
Moreover, from each cell, we only need certain pieces of information: from the
\verb|k| cell we only need the first task, which is the assignment ``\verb|X = V|'',
and from the \verb|state| cell we only need the binding ``\verb$X |-> _$''.
The underscores and ellipses stand for anonymous variables matching
irrelevant parts of the configuration.
Then, once the local context is established, we identify the parts of the context
which need to change, and we apply the changes using local rewrite rules with
the arrow \verb|=>|.
In our case, we rewrite both the assignment expression and the value of
\verb|X| in the state to the assigned value \verb|V|.
Everything else stays unchanged.
The concurrent semantics of \K regards each rule as a transaction: all changes in
a rule happen concurrently; moreover, rules themselves apply concurrently,
provided their changes do not overlap \cite{serbanuta-rosu-2012-icgt}.
But for the scope of this paper, let us assume that $\K$ rules are syntactic
sugar for normal rewrite rules (we lose true concurrency, but that is irrelevant
here).
For example, the rule above is syntactic sugar for a rewrite rule as follows
(\verb|_~>_| is read ``followed by'' and is a sequence construct, Section~\ref{sec:alg-spec}):
\begin{verbatim}
 rule <k> X:Id = V:Val ~> Rest </k>  <state>State1, X |-> V', State2 </state>
 =>   <k> V            ~> Rest </k>  <state>State1, X |-> V , State2 </state>
\end{verbatim}

Once the definition is complete and saved in a .k file, say \verb|imp.k|,
the next step is to generate the desired language model.
This is done with the \verb|kompile| command:
\begin{verbatim}
  kompile imp.k
\end{verbatim}
By default, the fastest possible executable model is generated.
To generate models which are amenable for symbolic execution, test-case
generation, search, model checking, or deductive verification, one needs to
provide \verb|kompile| with appropriate options.

The generated language model is employed on a given program for the various
types of analyses using the \verb|krun| command.
By default, with the default language model, \verb|krun| simply runs the
program.
For example, if \verb|sum.imp| contains
\begin{verbatim}
  n = 100; s = 0;
  while(n > 0) {
    s = s + n; n = n - 1;
  }
\end{verbatim}
then the command
\begin{verbatim}
  krun sum.imp
\end{verbatim}
yields the final configuration
\begin{verbatim}
  <T>
    <k> . </k>             // . stands for empty, in this case empty code
    <state>
      n |-> 0, s |-> 5050
    </state>
  </T>
\end{verbatim}

Using the appropriate options to \verb|kompile| and \verb|krun|,
we can enable all the above-mentioned tools and analyses on the defined
programming language and the given program.
Many languages are provided with the \K tool distribution, and several others
are available from {\tt\url{http://kframework.org}} (start with the \K tutorial).
Some of these languages have dozens of cells in their configurations and
hundreds of rules.
Besides didactic and prototypical languages, \K has been used to formalize
several existing real-life languages and to design and develop analysis and
verification tools for them.
The most notable are complete \K definitions for 
C11 \cite{ellison-rosu-2012-popl,hathhorn-ellison-rosu-2015-pldi},
Java 1.4 \cite{bogdanas-rosu-2015-popl},
JavaScript ES5 \cite{park-stefanescu-rosu-2015-pldi}.
Each of these language semantics has more than 1,000 semantic rules
and has been tested on benchmarks and test suites that implementations
of these languages also use to test their conformance.

We next discuss our program verification approach based on matching logic.
An operational semantics of a programming language, be it defined in \K or not,
defines an execution model of the language typically in terms of a transition
relation ${\it cfg} \Ra {\it cfg}'$ between program configuration terms.
On the other hand, an axiomatic semantics defines a proof system typically in
terms of Hoare triples $\{\psi\}\,\texttt{code}\,\{\psi'\}$ with $\psi$ and $\psi'$
logical formulae specifying state properties, and serves as a basis for
reasoning about programs in the defined language.
In an attempt to unify operational and axiomatic semantics, we have proposed
a novel foundation to program verification,
{\em reachability logic} 
\cite{stefanescu-ciobaca-mereuta-moore-serbanuta-rosu-2014-rta,rosu-stefanescu-ciobaca-moore-2013-lics,rosu-stefanescu-2012-oopsla,rosu-stefanescu-2012-icalp},
which introduces the notion of a reachability rule to express dynamic properties
of programs.
A reachability rule is like a rewrite rule, but between matching logic patterns
instead of terms.
Since terms are particular patterns, and since patterns can also contain logical
constraints and in general any logical formula, reachability rules generalize
both the rewrite rules and the Hoare triples.
And thus, reachability logic smoothly unifies operational and axiomatic semantics.

Before we discuss reachability logic for dynamic properties, let us first
illustrate how naturally matching logic can be used as a generic static logic,
for specifying and reason about structural properties over arbitrary program
configurations.
Consider the C language, which we used for the program in
Figure~\ref{fig:matchC-example}.
The configuration of the \K semantics of C in
\cite{ellison-rosu-2012-popl,hathhorn-ellison-rosu-2015-pldi}
has more than 100 cells and it is rather complex for our scope here, so for
simplicity let us consider a simplified configuration structure with a top-level cell
$\kall{cfg}{...}$\footnote{In mathematical mode, we prefer the notation
$\kall{k}{...}$ instead of the ASCII XML notation
\texttt{<k>...</k>}.}
holding, as a multi-set, other cells with semantic data such
as the code
$\kall{k}{...}$,
an environment $\kall{env}{...}$,
a heap/map
$\kall{heap}{...}$, an input buffer $\kall{in}{...}$, an output buffer
$\kall{out}{...}$, etc., that is:
\[
\kall{cfg}{
  \kall{k}{...}\ 
  \kall{env}{...}\ 
  \kall{heap}{...}\ 
  \kall{in}{...}\ 
  \kall{out}{...}\ 
\ ...}
\]
The contents of the cells can be various algebraic data types, such as trees,
lists, sets, maps, etc.
Here are two particular concrete configurations (note that \texttt{x} and
\texttt{y} are program variables, which unlike in Hoare logics are not logical
variables; in matching logic they are constants used to build programs or
fragments of programs):
\[
\begin{array}{l}
\kall{cfg}{
  \kall{k}{\texttt{x=*y;\,y=x;}\ ...}\ 
  \kall{env}{\texttt{x} \mapsto 7,\ \texttt{y} \mapsto 3,\ ...}\ 
  \kall{heap}{3 \mapsto 5}\ 
\ ...} \\
\kall{cfg}{
  \kall{env}{\texttt{x} \mapsto 3}\ 
  \kall{heap}{3 \mapsto 5,\ 2\mapsto 7}\ 
  \kall{in}{1,2,3,...}\ 
  \kall{out}{...,7,8,9}\ 
\ ...}
\end{array}
\]

Different languages may have different configuration structures. For example,
languages whose semantics are intended to be purely syntactic and based on
substitution, such as $\lambda$-calculi, may contain only one cell, holding the
program itself.
Other languages may contain dozens of cells in their configurations.
However, no matter how complex a language is, its configurations can be defined
as ground term patterns over a signature of symbols, using algebraic techniques
like those in Section~\ref{sec:alg-spec}.
That is, concrete programming language configurations of any programming
language are all (ground) matching logic term patterns over some appropriate
signature.
But what is more interesting is that patterns can contain variables and
quantifiers and constraints over them, too, which is what allows matching logic
to specify virtually any sets of program configurations of interest, thus making
it a suitable logic for static/assertion program properties.
As a purposely artificial pattern, consider
\[
\begin{array}{l}
\exists c\!:\!\Cells,\ e\!:\!\Env,\ p\!:\!\Nat,\ i\!:\!\Int,\ \sigma\!:\!\Heap\ . \\
\ \ \ \ \ \ \kall{cfg}{\kall{env}{\texttt{x} \mapsto p,\ e}\ \kall{heap}{p
\mapsto i,\ \sigma}\ c} \ \andx \ i > 0 \ \andx \ p \neq i \\

\end{array}
\]
This is matched by all configurations where program variable \texttt{x} points to a
location $p$ holding a positive integer $i$ different from $p$.
Variables matching the irrelevant parts of a cell, such as the variables
$e$, $\sigma$, and $c$ above, are called {\em structural frames};
when reasoning about languages defined using \K, the structural frames
typically result from ellipses in rules, but this can be different in other
frameworks.
Structural frames are needed in order for the pattern to properly match the expected
structure of the desired configurations.
For example, if we want to additionally state that $p$ is the only location
allocated in the heap, then we can just remove $\sigma$ from the pattern
above and obtain: \[
%
%
\exists c\!:\!\Cells,\ e\!:\!\Env,\ p\!:\!\Nat,\ i\!:\!\Int\ .
\ \kall{cfg}{\kall{env}{\texttt{x} \mapsto p,\ e}\ \kall{heap}{p
\mapsto i}\ c} \ \andx \ i > 0 \ \andx \ p \neq i \\
%
%
\]
Matching logic allows us to reason about configurations, e.g., to prove:
\[
\begin{array}{@{}ll}
\models & \forall c\!:\!\Cells,\ e\!:\!\Env,\ p\!:\!\Nat \ . \\
&\ \ \ \ \kall{cfg}{\kall{env}{\texttt{x} \mapsto p,\ e}\ \kall{heap}{p
\mapsto 9}\ c} \ \andx \ p > 10 \\
&\ \ \ \ \ra \exists i\!:\!\Int,\ \sigma\!:\!\Heap \ .
\ \kall{cfg}{\kall{env}{\texttt{x} \mapsto p,\ e}\
\kall{heap}{p\mapsto i,\ \sigma}\ c} \ \andx \ i > 0 \ \andx \ p \neq i \\
\end{array}
\]
\grigore{Maybe move such properties in a previous section exemplifying the proof system}
To specify more complex properties, one can use abstractions (e.g., singly-linked lists
matched in the heap, etc.), which can be defined as shown in Section~\ref{sec:map-patterns} and
used as discussed in Section~\ref{sec:introduction}.
In fact, as shown in
\cite{rosu-stefanescu-2011-tr,rosu-stefanescu-2011-nier-icse,rosu-ellison-schulte-2010-amast},
like separation logic, matching logic can also be used as a program logic in the context of conventional
axiomatic (Hoare) semantics, allowing us to more easily specify structural properties about the program
state.
However, that way of using matching logic comes with a big disadvantage, shared with Hoare logics
in general: the formal semantics of the target language needs to be redefined axiomatically and tedious
soundness proofs need to be done.
Instead, we prefer to use reachability logic, which allows us to use
the operational semantics of the language for program verification as well.

An unconditional {\em reachability rule} is a pair $\varphi \Ra \varphi'$, where
$\varphi$ and $\varphi'$ are matching logic patterns.
The semantics of a reachability rule captures the intuition of
{\em partial correctness} in axiomatic semantics:
any configuration satisfying $\varphi$ either rewrites/transits forever or
otherwise reaches through (zero or more) successive transitions a configuration
satisfying $\varphi'$.
In \K, programming languages can be given operational semantics based on
rewrite rules of the form
``$l \Ra r\ \textsf{if} \ b$'', where $l$ and $r$ are configuration terms
with variables constrained by boolean condition $b$.
Such rules can be expressed as reachability rules $l \andx b \Ra r$.
On the other hand, a Hoare triple of the form
$\{\psi\}\,\texttt{code}\,\{\psi'\}$ can be regarded as a reachability
rule
$\kmiddle{cfg}{\kall{k}{\texttt{code}}} \andx \overline{\psi}
\Ra \kmiddle{cfg}{\kall{k}{.}} \andx \overline{\psi'}$
between patterns over minimal configurations holding only the code.
Here the ellipses represent appropriate structural frames, $\kall{k}{.}$
is the configuration holding the empty code, and $\overline{\psi}$ and
$\overline{\psi'}$ are variants of the original pre/post conditions replacing
program variables with appropriate logical variables (an example will be
shown shortly).
Therefore, reachability rules smoothly capture the basic ingredients of both
operational and axiomatic semantics, in that both operational semantics
rules and axiomatic semantics Hoare triples are instances of reachability
rules.

Figure~\ref{fig:proof-system} shows the reachability logic proof system
for unconditional reachability rules.
This is a simplification of a more general proof system in
\cite{rosu-stefanescu-ciobaca-moore-2013-lics}, where conditional
reachability rules were also considered, for the particular rewrite logic
theories supported by \K.
We here only discuss the one-path variant of reachability logic,
where $\varphi \Ra \varphi'$ means that $\varphi'$ is matched by some
configuration reached after some sequence of transitions from a configuration
matching $\varphi$.
The all-path variant is more complex and can be found in
\cite{stefanescu-ciobaca-mereuta-moore-serbanuta-rosu-2014-rta}.
The one-path and all-path reachability logic variants are equally
expressive when the target programming language is deterministic.
The target language is given as a reachability system $\cal S$ (from ``semantics'').
The soundness result in \cite{rosu-stefanescu-ciobaca-moore-2013-lics}
guarantees that 
$\varphi \Ra \varphi'$ holds semantically in the transition system
generated by $\cal S$ if $\sequent{}{\cal S}{\varphi}{\varphi'}$ is derivable.
Note that the proof system derives more general sequents of the form
$\sequent{\cal C}{\cal A}{\varphi}{\varphi'}$, where $\cal A$ and $\cal C$ are
sets of reachability rules.
Rules in $\cal A$ are called {\em axioms} and rules in $\cal C$ are called {\em circularities}.
If $\cal C$ does not appear in a sequent, it means it is empty:
$\sequent{}{\cal A}{\varphi}{\varphi'}$ is a shorthand for
$\sequent{\emptyset}{\cal A}{\varphi}{\varphi'}$.
Initially, $\cal C$ is empty and $\cal A$ is $\cal S$.
During the proof, circularities can be added to $\cal C$ via
\CircularityRule{} and flushed into $\cal A$ by
\TransitivityRule{} or \AxiomRule{}.

\begin{figure}[t]
\begin{center}
$\PROOFSYSTEM$
\end{center}
\caption{Proof system for (one-path) reachability using unconditional rules.}
\label{fig:RL-proof-system}
\end{figure}

The intuition is that rules in $\cal A$ can be assumed valid,
while those in $\cal C$ have been postulated but not yet justified.
After making progress it becomes (coinductively) valid to
rely on them.
The intuition for sequent $\sequent{\cal C}{\cal A}{\varphi}{\varphi'}$,
read ``$\cal A$ with circularities $\cal C$ proves $\varphi \Ra \varphi'$'',
is: $\varphi \Ra \varphi'$ is true if the rules in $\cal A$ are true
and those in $\cal C$ are true after making progress,
and if $\cal C$ is nonempty then $\varphi$ reaches $\varphi'$
(or diverges) after at least one transition.
%
Let us now discuss the proof rules.

\AxiomRule{} states that a trusted rule can be used in any {\em logical frame} $\psi$.
The logical frame is formalized as a patternless formula, as
it is meant to only add logical but no structural constraints.
Incorporating framing into the axiom rule is necessary to make logical
constraints available while proving the conditions of the axiom hold.
Since reachability logic keeps a clear separation between program
variables and logical variables the logical constraints are persistent, that is,
they do not interfere with the dynamic nature of the operational rules and can
therefore be safely used for framing.

\ReflexivityRule{} and \TransitivityRule{} correspond to homonymous closure
properties of the reachability relation.
\ReflexivityRule{} requires $\cal C$ to be empty to
meet the requirement above, that a reachability property derived with nonempty $\cal C$
takes one or more steps.
\TransitivityRule{} releases the circularities as axioms for the second premise,
because if there are any circularities to release
the first premise is guaranteed to make progress.

\ConsequenceRule{} and \CaseAnalysisRule{} are adapted from Hoare logic.
Note that \ConsequenceRule{} is the only rule which requires matching logic
reasoning (the only rule referring to $\models$).
In Hoare logic \CaseAnalysisRule{} is typically a derived rule, but there does not
seem to be any way to derive it language-independently.
Ignoring circularities, we can think 
of these five rules discussed so far as a formal infrastructure for symbolic execution.

\AbstractionRule{} allows us to hide irrelevant details of $\varphi$ behind an existential
quantifier, which is particularly useful in combination with the next proof rule.

\CircularityRule{} has a coinductive nature and allows us to make a new circularity claim
at any moment.
We typically make such claims for code with
repetitive behaviors, such as loops, recursive functions, jumps, etc.
If we succeed in proving the claim using itself as a circularity, then the claim holds.
This would obviously be unsound if the new assumption was available immediately,
but requiring progress before circularities can be used ensures that only
diverging executions can correspond to endless invocation of a circularity.

Consider again the sum program, say SUM, but without the assignment of \texttt{n}:
\begin{verbatim}
    s = 0;
    while(n > 0) {
      s = s + n;
      n = n - 1;
    }
\end{verbatim}
The reachability rule below captures the functional correctness property of SUM:
$$
\begin{array}{l}
\kall{cfg}{\kall{k}{\textrm{SUM}} \ \ 
{\kall{state}{\texttt{s}\mapsto s,\ \texttt{n}\mapsto n}}} \ \wedge\ n \geq_{\it Int} 0
\\
\Ra \ \
\kall{cfg}{\kall{k}{} \ \  
\kall{state}{\texttt{s}\mapsto n*_{\it Int}(n+_{\it Int}1)/_{\it Int}2,\ \texttt{n}\mapsto 0}
}
\end{array}
$$
The $\Int$ subscript indicates that those operations are the $\Int$-domain
mathematical ones and not the syntactic IMP ones.
We encourage the reader to derive the reachability rule above on her own,
using the proof system in Figure~\ref{fig:proof-system}.
Complete details can be found in \cite{rosu-stefanescu-2012-fm}.
We here only give the high-level structure of the proof.
By \AxiomRule\ with the semantic rule of assignment and by \TransitivityRule,
we reduce the above
reachability rule to one where the $s$ in the left-hand-side pattern is replaced
with $0$.
Let LOOP be the while loop of the SUM program.
Like in Hoare logic proofs, we have to derive an invariant for LOOP.
In reachability logic, we formalize invariants also as reachability rules, in our case
as
$$
\begin{array}{l}
\kall{cfg}{\kall{k}{\textrm{LOOP}} \ \ 
{\kall{state}{\texttt{s} \mapsto (n -_\Int n')*_\Int(n +_\Int n' +_\Int
1)/_\Int 2,\ {\tt n} \mapsto n'}}} \ \wedge\ n' \geq_{\it Int} 0
\\
\Ra \ \
\kall{cfg}{\kall{k}{} \ \  
\kall{state}{\texttt{s}\mapsto n*_{\it Int}(n+_{\it Int}1)/_{\it Int}2,\ \texttt{n}\mapsto 0}
}
\end{array}
$$
If this ``invariant'' reachability rule holds, then our original reachability
rule can be derived using \AbstractionRule{}, \ConsequenceRule{} and
\TransitivityRule{}.
To derive this invariant rule, we first use \CircularityRule{} to claim it as a circularity,
and then unroll the LOOP using the executable semantic rule for while, then do a
\CaseAnalysisRule{} for the resulting conditional, and then run the executable
semantics of the corresponding assignments via \AxiomRule{}, intermingled with
applications of \ConsequenceRule{} and \TransitivityRule{}.
Note that we can use the circularity claim in the proof of the positive
branch of the conditional, because the \TransitivityRule{} added it to the set of
axioms once the unrolling of the while loop took place.
We leave the rest of the details, as well as the full details of the proof
for the program in Figure~\ref{fig:matchC-example}, to the reader, as exercise.

}

\section{Additional Related Work}
\label{sec:related-work}

Matching logic builds upon intuitions from and relates to at least five
important logical frameworks:
(1) {\em Relation algebra (RA)} (see, e.g., \cite{tarski1987formalization}),
noticing that our interpretations of symbols as functions to powersets are
equivalent to relations; although our interpretation of symbols captures
better the intended meaning of pattern and matching, and our proof system
is quite different from that of RA, like with FOL we expect a tight
relationship between matching logic and RA, which is left as future work;
(2) {\em Partial FOL} (see, e.g., \cite{DBLP:journals/sLogica/FarmerG00}
for a recent work and a survey), noticing that our interpretations of symbols
into powersets are more general than partial functions
(Section~\ref{sec:equality} shows how we defined definedness);
(3) {\em Separation logics (SL)} (see, e.g.,\cite{sep-logic-csl01}),
which we discussed in Section~\ref{sec:sep-logic};
and (4) Precursors of matching logic in
\cite{rosu-ellison-schulte-2010-amast,rosu-stefanescu-2011-nier-icse,rosu-stefanescu-2012-oopsla,rosu-stefanescu-2012-fm,rosu-stefanescu-ciobaca-moore-2013-lics,stefanescu-ciobaca-mereuta-moore-serbanuta-rosu-2014-rta},
%
%
which proposed the pattern idea by
extending FOL with particular ``configuration'' terms (grayed box below is the
only change to FOL):
$$
\begin{array}{rcl}
t_s & ::= & x \in \Var_s \ \ \mid \ \ \sigma(t_1,\ldots,t_n)
  \ \mbox{ with } \sigma\in\Sigma_{s_1 \ldots s_n,s} \\
\varphi & ::= & \pi(x_1,\ldots,x_n)
  \ \mbox{ with }\pi\in\Pi_{s_1 \ldots s_n} \ \ \mid \ \ 
 \neg \varphi \ \ \mid \ \ \varphi \wedge \varphi
 \ \ \mid \ \ \exists x . \varphi \\
& \mid &
\graybox{14ex}{ $t \in T_{\Sigma,{\it Cfg}}(X)$ \vspace*{-3ex}}
\end{array}
$$
where $T_{\Sigma,{\it Cfg}}(X)$ is the set of terms of a special sort
$\it Cfg$ (from ``configurations'') over variables in set $X$.
To avoid terminology conflicts, we here strengthen the proposal in
\cite{rosu-2015-rta}
to call the variant above {\em topmost matching logic} from here on.
Topmost matching logic can trivially be desugared into FOL with equality
by regarding a particular pattern predicate $t \in T_{\Sigma,{\it Cfg}}(X)$
as syntactic sugar for
``(current state/configuration is) equal to $t$'', i.e., $\square=t$.
One major limitation of topmost matching logic, which motivated
the generalization in \cite{rosu-2015-rta} with full details added in this paper,
is that its restriction to patterns of sort {\it Cfg} prevented us to
define local patterns (e.g., the heap list pattern) and perform local
reasoning.
They had to be defined globally, as patterns of sort $\it Cfg$ with
structural frames for everything else except their target cell
(e.g., the heap), which was not only more verbose but also less modular.

The basic idea of regarding terms with variables as sentences/patterns
that are satisfied/matched by ground terms, goes back
to \cite{DBLP:journals/jar/LassezM87}
and it was further studied in
\cite{DBLP:conf/mfcs/LassezMM91,DBLP:conf/rta/Tajine93,DBLP:journals/jsc/Fernandez98,DBLP:journals/tcs/Pichler03,DBLP:conf/lopstr/MeseguerS15}.
Furthermore, terms enriched with Boolean conditions over their variables,
called {\em constrained terms}, were studied in
\cite{DBLP:journals/scp/CholewaEM15}, together with their relation to
narrowing.
These approaches allow certain Boolean algebra operations to be applied to
patterns, and study the expressiveness of such operations w.r.t.\ the
languages of ground terms that they define, in particular conditions
under which negation can be eliminated.
In addition to Boolean algebra operations and conditions on terms with
variables, matching logic also allows quantification over variables,
as well as using the resulting patterns nested inside other
patterns.
The richer syntax of patterns in matching logic is motivated by needs
to specify complex structures with mixed constraints over program
configurations, as shown in Section~\ref{sec:example}.
Also, matching logic allows models with any data, not only term models,
interprets symbols as relations with the axiomatic capability to constrain
them as functions, and organizes the patterns and their
models in a logic that admits a sound and complete proof system.

The idea of regarding terms as patterns is also reminiscent of
{\em pattern calculus}~\cite{Jay:2004:PC:1034774.1034775}, although
note that matching logic's patterns are intended to express and reason about
static properties of data-structures or program configurations, while pattern
calculi are aimed at generally and compactly expressing computations and
dynamic behaviors of systems.
So far we used rewriting to define dynamic semantics; it would
be interesting to explore the combination of pattern calculus and matching
logic for language semantics and reasoning.

\section{Conclusion and Future Work}
\label{sec:conclusion}

Matching logic is a sound and complete FOL variant
that makes no distinction between function and predicate symbols.
Its formulae, called patterns, mix symbols, logical connectives
and quantifiers, and evaluate in models to sets of values, those that
``match'' them, instead of just one value as terms do or a truth value as
predicates do in FOL.
Equality can be defined and several important variants
of FOL fall as special fragments.
Separation logic can be framed as a matching logic theory within the
particular model of partial finite-domain maps, and heap patterns can
be elegantly specified using equations.
Matching logic allows spatial specification and reasoning anywhere in a
program configuration, and for any language, not only in the heap or other
particular and fixed semantic components.

We made no efforts to minimize the number of rules in our proof system
(Figure~\ref{fig:proof-system}), because our main objective in this paper was
to include the proof system for FOL with equality as part of our proof system,
to indicate that conventional reasoning remains valid and thus automated provers
can be used unchanged.
It is likely, however, that a minimal proof system working directly with
the definedness symbols $\lceil\_\rceil$ can be obtained such that the
equality and membership axioms and rules in Figure~\ref{fig:proof-system}
can be proved as lemmas.

Our completeness result in Section~\ref{sec:deduction} relies heavily on
equality and on membership patterns, whose definitions require the existence
of the definedness symbols $\lceil\_\rceil$.
On the other hand, Proposition~\ref{prop:mlTopred} translates arbitrary
matching logic validity to validity in predicate logic with equality, even
when there are no definedness symbols.
Since predicate logic with equality admits complete deduction, we conjecture
that matching logic must admit an alternative complete proof system which
does not rely on definedness symbols.

We have not discussed any computationally effective fragments of matching
logic or heuristics to automate matching logic deduction.
These are crucial for the development of practical provers and program
verifiers.
The systematic study of such fragments and heuristics is left for future work.
Also, complexity results in the style
of~\cite{Calcagno2001,joel-2014-fossacs,Brotherston:2014:DPS:2603088.2603091,DBLP:conf/cade/IosifRS13}
for separation logic can likely also be obtained for fragments of matching
logic.

Many of the results related to localizing/globalizing reasoning, such as
Propositions~\ref{prop:structural-framing}, \ref{prop:constraint-propagation}, and \ref{prop:symbol-distributivity}, extend to monotone/positive contexts, that
is, to ones without negations on the path to the placeholder.
While non-monotonic contexts do not seem to occur frequently in program
verification efforts, it would nevertheless be worthwhile investigating
techniques for the elimination of negation, likely generalizing those in
\cite{DBLP:conf/mfcs/LassezMM91,DBLP:conf/rta/Tajine93,DBLP:journals/jsc/Fernandez98},
or intuitionistic variants of matching logic where negation is not
allowed at all in patterns.

Finally, the main application of matching logic so far was as a pattern
language for reachability logic~\cite{stefanescu-park-yuwen-li-rosu-2016-oopsla,stefanescu-ciobaca-mereuta-moore-serbanuta-rosu-2014-rta,rosu-stefanescu-ciobaca-moore-2013-lics,rosu-stefanescu-2012-oopsla}, where reachability rules, which
are pairs of patterns $\varphi \Ra \varphi'$, can be used to specify both operational
semantics rules and properties to prove about programs.
Reachability logic has its own (language independent) sound and relatively complete
proof system.
We conjecture that we can capture reachability logic as an instance of matching
logic, too, in a similar vein to how we did it for modal logic in Section~\ref{sec:modal-logic}: add some new symbols with their (axiomatized) semantics and then prove
the proof rules of reachability logic as lemmas/corollaries.
For example, we can extend the world models $M$ in Section~\ref{sec:modal-logic} with
a Kripke transition relation $w\,R\,w'$ by adding a symbol
$\circ\_\in\Sigma_{\World,\World}$ and assuming $w\,R\,w'$ iff
$w \in \circ_M(w')$, then define $\Diamond$ and other CTL or even CTL$^*$ operators
as least fixed points, and finally the reachability rules as sugar.

\textbf{Acknowledgments.}
This is an extended version of an RTA'15 invited paper \cite{rosu-2015-rta}.
The author warmly thanks the RTA'15 program committee for the invitation and
to the anonymous reviewers. 
The author also expresses his deepest thanks to the \K team (\url{http://kframework.org}),
who share the belief that programming languages should have only one semantics,
which should be executable, and formal analysis tools, including fully fledged
deductive program verifiers, should be obtained from such semantics at little or
no extra cost.
I would like to also warmly thank the following colleagues and friends for their
comments and criticisms on previous drafts of this paper:
Nikolaj Bjorner,
Xiaohong Chen,
Claudia-Elena Chiri\c{t}\u{a},
Maribel Fern{\' a}ndez,
Ioana Leu\c{s}tean,
Dorel Lucanu,
Jos{\' e} Meseguer,
Brandon Moore,
Daejun Park,
Cosmin R\u{a}doi,
Traian Florin \c{S}erb\u{a}nu\c{t}\u{a},
and 
Andrei \c{S}tef\u{a}nescu.

\grigore{Make sure you don't forget the material after the references.}

\bibliographystyle{abbrv}
\bibliography{refs}

\begin{thebibliography}{10}

\bibitem{joel-2014-fossacs}
T.~Antonopoulos, N.~Gorogiannis, C.~Haase, M.~I. Kanovich, and J.~Ouaknine.
\newblock Foundations for decision problems in separation logic with general
  inductive predicates.
\newblock In {\em FOSSACS'14}, volume 8412 of {\em LNCS}, pages 411--425, 2014.

\bibitem{Appel2007}
A.~Appel and S.~Blazy.
\newblock {Separation logic for small-step Cminor}.
\newblock In {\em TPHOLs'07}, volume 4732 of {\em LNCS}, pages 5--21, 2007.

\bibitem{DBLP:conf/esop/Appel11}
A.~W. Appel.
\newblock Verified software toolchain.
\newblock In {\em ESOP'11}, volume 6602 of {\em LNCS}, pages 1--17, 2011.

\bibitem{Barnett06boogie:a}
M.~Barnett, B.~yuh Evan~Chang, R.~DeLine, B.~Jacobs, and K.~R.~M. Leino.
\newblock Boogie: A modular reusable verifier for object-oriented programs.
\newblock In {\em FMCO'05}, volume 4111 of {\em LNCS}, pages 364--387, 2006.

\bibitem{DBLP:conf/cav/BarrettCDHJKRT11}
C.~Barrett, C.~L. Conway, M.~Deters, L.~Hadarean, D.~Jovanovic, T.~King,
  A.~Reynolds, and C.~Tinelli.
\newblock {CVC4}.
\newblock In {\em CAV'11}, volume 6806 of {\em LNCS}, pages 171--177. Springer,
  2011.

\bibitem{DBLP:conf/cav/BarrettT07}
C.~Barrett and C.~Tinelli.
\newblock {CVC3}.
\newblock In {\em CAV'07}, volume 4590 of {\em LNCS}, pages 298--302, 2007.

\bibitem{Becker1930}
O.~Becker.
\newblock Zur logik der modalit{\"a}ten.
\newblock {\em Jahrbuch f{\"u}r Philosophie und Ph{\"a}nomenologische
  Forschung}, 11:497--548, 1930.

\bibitem{Berdine2005}
J.~Berdine, C.~Calcagno, and P.~O'Hearn.
\newblock {Smallfoot: Modular Automatic Assertion Checking with Separation
  Logic.}
\newblock In {\em FMCO'05}, volume 4111 of {\em LNCS}, pages 115--137, 2005.

\bibitem{Berdine2011}
J.~Berdine, B.~Cook, and S.~Ishtiaq.
\newblock {SLAyer: Memory safety for systems-level code}.
\newblock In {\em CAV'11}, volume 6806 of {\em LNCS}, pages 178--183, 2011.

\bibitem{DBLP:conf/tacas/Beyer16}
D.~Beyer.
\newblock Reliable and reproducible competition results with benchexec and
  witnesses (report on {SV-COMP} 2016).
\newblock In {\em TACAS'16}, volume 9636 of {\em LNCS}, pages 887--904, 2016.

\bibitem{Blackburn:2001:ML:381193}
P.~Blackburn, M.~de~Rijke, and Y.~Venema.
\newblock {\em Modal Logic}.
\newblock Cambridge University Press, NY, USA, 2001.

\bibitem{Bobot2012}
F.~Bobot and J.-C. Filli{\^a}tre.
\newblock Separation predicates: A taste of separation logic in first-order
  logic.
\newblock In {\em ICFEM'12}, volume 7635 of {\em LNCS}, pages 167--181, 2012.

\bibitem{bodin-etal-javascript}
M.~Bodin, A.~Chargueraud, D.~Filaretti, P.~Gardner, S.~Maffeis,
  D.~Naudziuniene, A.~Schmitt, and G.~Smith.
\newblock A trusted mechanised {J}ava{S}cript specification.
\newblock In {\em POPL'14}, pages 87--100. ACM, 2014.

\bibitem{bogdanas-rosu-2015-popl}
D.~Bogd\u{a}na\c{s} and G.~Ro\c{s}u.
\newblock {K-Java: A Complete Semantics of Java}.
\newblock In {\em POPL'15}, pages 445--456. ACM, January 2015.

\bibitem{Botincan2009}
M.~Botincan, M.~Parkinson, and W.~Schulte.
\newblock {Separation Logic Verification of C Programs with an SMT Solver}.
\newblock {\em Electronic Notes in Theoretical Computer Science}, 254:5--23,
  2009.

\bibitem{Brotherston:2014:DPS:2603088.2603091}
J.~Brotherston, C.~Fuhs, J.~A.~N. P{\'e}rez, and N.~Gorogiannis.
\newblock A decision procedure for satisfiability in separation logic with
  inductive predicates.
\newblock In {\em CSL-LICS'14}, pages 25:1--25:10. ACM, 2014.

\bibitem{Brotherston2016}
J.~Brotherston, N.~Gorogiannis, M.~Kanovich, and R.~Rowe.
\newblock {Model Checking for Symbolic-Heap Separation Logic with Inductive
  Predicates}.
\newblock In {\em POPL'16}, pages 84--96. ACM, 2016.

\bibitem{Brotherston2014}
J.~Brotherston and J.~Villard.
\newblock {Parametric Completeness for Separation Theories}.
\newblock In {\em POPL'14}, pages 453--464. ACM, 2014.

\bibitem{Burstall:1977:PTT:1622943.1623045}
R.~M. Burstall and J.~A. Goguen.
\newblock Putting theories together to make specifications.
\newblock In {\em IJCAI'77}, pages 1045--1058. Morgan Kaufmann Publishers Inc.,
  1977.

\bibitem{Calcagno2005}
C.~Calcagno, P.~Gardner, and M.~Hague.
\newblock From separation logic to first-order logic.
\newblock In {\em FOSSACS'05}, volume 3441 of {\em LNCS}, pages 395--409, 2005.

\bibitem{Calcagno2001}
C.~Calcagno, H.~Yang, and P.~W. O'Hearn.
\newblock Computability and complexity results for a spatial assertion language
  for data structures.
\newblock In {\em FSTTCS'01}, volume 2245 of {\em LNCS}, pages 108--119, 2001.

\bibitem{10.1109/SKG.2010.29}
C.~Cao, J.~Wang, Y.~Sui, and Y.~Shen.
\newblock Translating separation logic into a fragment of the first-order
  logic.
\newblock {\em International Conference on Semantics, Knowledge and Grid},
  pages 188--194, 2010.

\bibitem{DBLP:journals/scp/CholewaEM15}
A.~Cholewa, S.~Escobar, and J.~Meseguer.
\newblock Constrained narrowing for conditional equational theories modulo
  axioms.
\newblock {\em Science of Computer Programming}, 112:24--57, 2015.

\bibitem{Chu2015}
D.-H. Chu, J.~Jaffar, and M.-T. Trinh.
\newblock {Automatic Induction Proofs of Data-Structures in Imperative
  Programs}.
\newblock In {\em PLDI'15}, pages 457--466. ACM, 2015.

\bibitem{maude-book}
M.~Clavel, F.~Dur\'an, S.~Eker, J.~Meseguer, P.~Lincoln, N.~Mart\'{\i}-Oliet,
  and C.~Talcott.
\newblock {\em All About Maude}, volume 4350 of {\em LNCS}.
\newblock Springer, 2007.

\bibitem{DBLP:conf/tphol/CohenDHLMSST09}
E.~Cohen, M.~Dahlweid, M.~A. Hillebrand, D.~Leinenbach, M.~Moskal, T.~Santen,
  W.~Schulte, and S.~Tobies.
\newblock {VCC:} {A} practical system for verifying concurrent {C}.
\newblock In {\em TPHOLs'09}, volume 5674 of {\em LNCS}, pages 23--42, 2009.

\bibitem{stefanescu-ciobaca-mereuta-moore-serbanuta-rosu-2014-rta}
A.~\c{S}tef\u{a}nescu, c.~Ciob\^{a}c\u{a}, R.~Mereu\c{t}\u{a}, B.~M. Moore,
  T.~F. \c{S}erb\u{a}nu\c{t}\u{a}, and G.~Ro\c{s}u.
\newblock All-path reachability logic.
\newblock In {\em RTA-TLCA'14}, volume 8560 of {\em LNCS}, pages 425--440.
  Springer, 2014.

\bibitem{stefanescu-park-yuwen-li-rosu-2016-oopsla}
A.~\c{S}tef\u{a}nescu, D.~Park, S.~Yuwen, Y.~Li, and G.~Ro\c{s}u.
\newblock Semantics-based program verifiers for all languages.
\newblock In {\em OOPSLA'16}, pages 74--91. ACM, 2016.

\bibitem{z3-solver}
L.~De~Moura and N.~Bj{\o}rner.
\newblock {Z3: an efficient SMT solver}.
\newblock In {\em TACAS'08}, volume 4963 of {\em LNCS}, pages 337--340, 2008.

\bibitem{caferep98}
R.~Diaconescu and K.~Futatsugi.
\newblock {\em {CafeOBJ} Report. The Language, Proof Techniques, and
  Methodologies for Object-Oriented Algebraic Specification}, volume~6 of {\em
  AMAST Series in Computing}.
\newblock World Scientific, 1998.

\bibitem{1449782}
D.~Distefano and M.~J. Parkinson.
\newblock {jStar}: Towards practical verification for {Java}.
\newblock In {\em OOPSLA'08}, pages 213--226. ACM, 2008.

\bibitem{yices}
B.~Dutertre.
\newblock Yices 2.2.
\newblock In {\em CAV'14}, volume 8559 of {\em LNCS}, pages 737--744, 2014.

\bibitem{ellison-rosu-2012-popl}
C.~Ellison and G.~Rosu.
\newblock An executable formal semantics of {C} with applications.
\newblock In {\em POPL}, pages 533--544. ACM, 2012.

\bibitem{DBLP:journals/sLogica/FarmerG00}
W.~M. Farmer and J.~D. Guttman.
\newblock A set theory with support for partial functions.
\newblock {\em Studia Logica}, 66(1):59--78, 2000.

\bibitem{DBLP:journals/jsc/Fernandez98}
M.~Fern{\'{a}}ndez.
\newblock Negation elimination in empty or permutative theories.
\newblock {\em Journal of Symbolic Computation}, 26(1):97--133, 1998.

\bibitem{k-php}
D.~Filaretti and S.~Maffeis.
\newblock An executable formal semantics of php.
\newblock In {\em ECOOP'14}, LNCS, pages 567--592. Springer, 2014.

\bibitem{DBLP:conf/esop/FilliatreP13}
J.~Filli{\^{a}}tre and A.~Paskevich.
\newblock Why3 - where programs meet provers.
\newblock In {\em ESOP'13}, volume 7792 of {\em LNCS}, pages 125--128, 2013.

\bibitem{Godel1930}
K.~G{\"o}del.
\newblock Die vollst{\"a}ndigkeit der axiome des logischen
  funktionenkalk{\"u}ls.
\newblock {\em Monatshefte f{\"u}r Mathematik und Physik}, 37(1):349--360,
  1930.

\bibitem{Godel1931}
K.~G{\"o}del.
\newblock {\"U}ber formal unentscheidbare s{\"a}tze der principia mathematica
  und verwandter systeme {I}.
\newblock {\em Monatshefte f{\"u}r Mathematik und Physik}, 38(1):173--198,
  1931.

\bibitem{goguen1976initial}
J.~Goguen, J.~Thatcher, and E.~Wagner.
\newblock {\em An Initial Algebra Approach to the Specification, Correctness,
  and Implementation of Abstract Data Types}.
\newblock IBM research reports. IBM Research Center, 1976.

\bibitem{iobj}
J.~Goguen, T.~Winkler, J.~Meseguer, K.~Futatsugi, and J.-P. Jouannaud.
\newblock Introducing {OBJ}.
\newblock In {\em Software Engineering with {OBJ}: {A}lgebraic specification in
  action}, pages 3--167. Kluwer, 2000.

\bibitem{GOGUEN1992217}
J.~A. Goguen and J.~Meseguer.
\newblock Order-sorted algebra i: equational deduction for multiple
  inheritance, overloading, exceptions and partial operations.
\newblock {\em Theoretical Computer Science}, 105(2):217 -- 273, 1992.

\bibitem{Goguen:1977:IAS:321992.321997}
J.~A. Goguen, J.~W. Thatcher, E.~G. Wagner, and J.~B. Wright.
\newblock Initial algebra semantics and continuous algebras.
\newblock {\em Journal of the Association for Computing Machinery},
  24(1):68--95, Jan. 1977.

\bibitem{Goldblatt:2003:MML:969657.969658}
R.~Goldblatt.
\newblock Mathematical modal logic: A view of its evolution.
\newblock {\em Journal of Applied Logic}, 1(5-6):309--392, 2003.

\bibitem{guth-2013-thesis}
D.~Guth.
\newblock A formal semantics of {Python} 3.3.
\newblock Master's thesis, University of Illinois at Urbana-Champaign, July
  2013.
\newblock \url{https://github.com/kframework/python-semantics}.

\bibitem{Harel84dynamiclogic}
D.~Harel, D.~Kozen, and J.~Tiuryn.
\newblock Dynamic logic.
\newblock In {\em Handbook of Philosophical Logic}, pages 497--604, 1984.

\bibitem{Harrison:2009:HPL:1540610}
J.~Harrison.
\newblock {\em Handbook of Practical Logic and Automated Reasoning}.
\newblock Cambridge University Press, New York, NY, USA, 1st edition, 2009.

\bibitem{hathhorn-ellison-rosu-2015-pldi}
C.~Hathhorn, C.~Ellison, and G.~Ro\c{s}u.
\newblock Defining the undefinedness of {C}.
\newblock In {\em PLDI'15}, pages 336--345. ACM, 2015.

\bibitem{hoare-69}
C.~A.~R. Hoare.
\newblock An axiomatic basis for computer programming.
\newblock {\em Communications of the Association for Computing Machinery},
  12(10):576--580, 1969.

\bibitem{DBLP:conf/cade/IosifRS13}
R.~Iosif, A.~Rogalewicz, and J.~Sim{\'a}cek.
\newblock The tree width of separation logic with recursive definitions.
\newblock In {\em CADE'13}, volume 7898 of {\em LNCS}, 2013.

\bibitem{DBLP:journals/jlp/Jacobs04}
B.~Jacobs.
\newblock Weakest pre-condition reasoning for {Java} programs with {JML}
  annotations.
\newblock {\em The Journal of Logic and Algebraic Programming}, 58(1-2):61--88,
  2004.

\bibitem{Jay:2004:PC:1034774.1034775}
C.~B. Jay.
\newblock The pattern calculus.
\newblock {\em ACM Transactions on Programming Languages and Systems},
  26(6):911--937, 2004.

\bibitem{plt-redex}
C.~Klein, J.~Clements, C.~Dimoulas, C.~Eastlund, M.~Felleisen, M.~Flatt, J.~A.
  McCarthy, J.~Rafkind, S.~Tobin-Hochstadt, and R.~B. Findler.
\newblock Run your research: On the effectiveness of lightweight mechanization.
\newblock In {\em POPL'12}, pages 285--296. ACM, 2012.

\bibitem{Krebbers2017}
R.~Krebbers, A.~Timany, and L.~Birkedal.
\newblock {Interactive Proofs in Higher-Order Concurrent Separation Logic}.
\newblock In {\em POPL'17}, pages 457--466. ACM, 2017.

\bibitem{nla.cat-vn2062435}
G.~Kreisel and J.~L. Krivine.
\newblock {\em Elements of mathematical logic (Model theory)}.
\newblock North Holland Publishing Company, Amsterdam, 1967.

\bibitem{kripke1959}
S.~A. Kripke.
\newblock A completeness theorem in modal logic.
\newblock {\em Journal of Symbolic Logic}, 24(1):1--14, 03 1959.

\bibitem{DBLP:conf/mfcs/LassezMM91}
J.~Lassez, M.~J. Maher, and K.~Marriott.
\newblock Elimination of negation in term algebras.
\newblock In {\em {MFCS'91}}, volume 520 of {\em LNCS}, pages 1--16, 1991.

\bibitem{DBLP:journals/jar/LassezM87}
J.~Lassez and K.~Marriott.
\newblock Explicit representation of terms defined by counter examples.
\newblock {\em Journal of Automated Reasoning}, 3(3):301--317, 1987.

\bibitem{Liskov:1974:PAD:942572.807045}
B.~Liskov and S.~Zilles.
\newblock Programming with abstract data types.
\newblock {\em SIGPLAN Not.}, 9(4):50--59, Mar. 1974.

\bibitem{Coq:manual}
\mbox{The Coq development team}.
\newblock {\em The Coq proof assistant reference manual}.
\newblock LogiCal Project, 2004.
\newblock Version 8.0, \url{http://coq.inria.fr}.

\bibitem{DBLP:conf/lopstr/MeseguerS15}
J.~Meseguer and S.~Skeirik.
\newblock Equational formulas and pattern operations in initial order-sorted
  algebras.
\newblock In {\em LOPSTR'15}, volume 9527 of {\em LNCS}, pages 36--53.
  Springer, 2015.

\bibitem{mosses_caslReferenceManual_2004}
P.~D. Mosses.
\newblock {\em CASL Reference Manual, The Complete Documentation of the Common
  Algebraic Specification Language}, volume 2960 of {\em LNCS}.
\newblock Springer, 2004.

\bibitem{Perez2011}
J.~A. {Navarro Perez} and A.~Rybalchenko.
\newblock {Separation logic + superposition calculus = heap theorem prover}.
\newblock In {\em PLDI'11}, pages 556--566. ACM, 2011.

\bibitem{NavarroPerez2013}
J.~A. {Navarro Perez} and A.~Rybalchenko.
\newblock {Separation logic modulo theories}.
\newblock In {\em APLAS'13}, volume 8301 of {\em LNCS}, pages 90--106, 2013.

\bibitem{Nipkow-FAC98}
T.~Nipkow.
\newblock Winskel is (almost) right: Towards a mechanized semantics textbook.
\newblock {\em Formal Aspects of Computing}, 10:171--186, 1998.

\bibitem{Nipkow:2002:IPA:1791547}
T.~Nipkow, M.~Wenzel, and L.~C. Paulson.
\newblock {\em Isabelle/HOL: A Proof Assistant for Higher-order Logic}.
\newblock Springer-Verlag, Berlin, Heidelberg, 2002.

\bibitem{sep-logic-csl01}
P.~O'Hearn, J.~Reynolds, and H.~Yang.
\newblock Local reasoning about programs that alter data structures.
\newblock In {\em CSL'01}, volume 2142 of {\em LNCS}, pages 1--19, 2001.

\bibitem{OHearn99thelogic}
P.~W. O'Hearn and D.~J. Pym.
\newblock The logic of bunched implications.
\newblock {\em Bulletin of Symbolic Logic}, 5(2):215--244, 1999.

\bibitem{caml-ott}
S.~Owens.
\newblock A sound semantics for {OC}aml$_{\mathit{light}}$.
\newblock In {\em ESOP'08}, volume 4960 of {\em LNCS}, pages 1--15, 2008.

\bibitem{park-stefanescu-rosu-2015-pldi}
D.~Park, A.~\c{S}tef\u{a}nescu, and G.~Ro\c{s}u.
\newblock {KJS}: A complete formal semantics of {JavaScript}.
\newblock In {\em PLDI'15}, pages 346--356. ACM, 2015.

\bibitem{Parkinson2011}
M.~J. Parkinson and A.~J. Summers.
\newblock The relationship between separation logic and implicit dynamic
  frames.
\newblock In {\em ESOP'11}, volume 6602 of {\em LNCS}, pages 439--458, 2011.

\bibitem{Pek2014}
E.~Pek, X.~Qiu, and P.~Madhusudan.
\newblock {Natural Proofs for Data Structure Manipulation in C using Separation
  Logic}.
\newblock In {\em PLDI'14}, pages 440--451. ACM, 2014.

\bibitem{DBLP:journals/tcs/Pichler03}
R.~Pichler.
\newblock Explicit versus implicit representations of subsets of the {Herbrand}
  universe.
\newblock {\em Theoretical Computer Science}, 290(1):1021--1056, 2003.

\bibitem{Piskac:2013:ASL:2526861.2526927}
R.~Piskac, T.~Wies, and D.~Zufferey.
\newblock Automating separation logic using smt.
\newblock In {\em CAV'13}, volume 8044 of {\em LNCS}, pages 773--789, 2013.

\bibitem{Plotkin:1970}
G.~D. Plotkin.
\newblock {A note on inductive generalization}.
\newblock {\em Machine Intelligence}, 5:153--163, 1970.

\bibitem{pltredex-python}
J.~G. Politz, A.~Martinez, M.~Milano, S.~Warren, D.~Patterson, J.~Li,
  A.~Chitipothu, and S.~Krishnamurthi.
\newblock Python: The full monty.
\newblock In {\em OOPSLA'13}, pages 217--232. ACM, 2013.

\bibitem{reynolds-02}
J.~C. Reynolds.
\newblock Separation logic: A logic for shared mutable data structures.
\newblock In {\em LICS'02}, pages 55--74. {IEEE}, 2002.

\bibitem{rosu-2015-rta}
G.~Ro\c{s}u.
\newblock Matching logic -- extended abstract.
\newblock In {\em RTA'15}, volume~36 of {\em Leibniz International Proceedings
  in Informatics (LIPIcs)}, pages 5--21, July 2015.

\bibitem{rosu-stefanescu-ciobaca-moore-2013-lics}
G.~Ro\c{s}u, A.~\c{S}tef\u{a}nescu, \c{S}tefan Ciob\^{a}c\u{a}, and B.~M.
  Moore.
\newblock One-path reachability logic.
\newblock In {\em LICS'13}, pages 358--367. IEEE, 2013.

\bibitem{rosu-ellison-schulte-2010-amast}
G.~Rosu, C.~Ellison, and W.~Schulte.
\newblock Matching logic: An alternative to {Hoare/Floyd} logic.
\newblock In {\em AMAST'10}, volume 6486 of {\em LNCS}, pages 142--162, 2010.

\bibitem{rosu-serbanuta-2010-jlap}
G.~Ro{\c s}u and T.~F. {\c S}erb{\u a}nu{\c t}{\u a}.
\newblock An overview of the {K} semantic framework.
\newblock {\em Journal of Logic and Algebraic Programming}, 79(6):397--434,
  2010.

\bibitem{rosu-serbanuta-2013-k}
G.~Rosu and T.~F. Serbanuta.
\newblock K overview and simple case study.
\newblock In {\em Proceedings of International K Workshop (K'11)}, volume 304
  of {\em ENTCS}, pages 3--56. Elsevier, June 2014.

\bibitem{rosu-stefanescu-2011-nier-icse}
G.~Rosu and A.~Stefanescu.
\newblock Matching logic: a new program verification approach ({NIER} track).
\newblock In {\em ICSE-NIER'11}, pages 868--871. ACM, 2011.

\bibitem{rosu-stefanescu-2012-oopsla}
G.~Rosu and A.~Stefanescu.
\newblock Checking reachability using matching logic.
\newblock In {\em OOPSLA'12}, pages 555--574. ACM, 2012.

\bibitem{rosu-stefanescu-2012-fm}
G.~Rosu and A.~Stefanescu.
\newblock From hoare logic to matching logic reachability.
\newblock In {\em FM'12}, volume 7436 of {\em LNCS}, pages 387--402, 2012.

\bibitem{ott-icfp}
P.~Sewell, F.~Z. Nardelli, S.~Owens, G.~Peskine, T.~Ridge, S.~Sarkar, and
  R.~Strnisa.
\newblock Ott: effective tool support for the working semanticist.
\newblock In {\em ICFP'07}, pages 1--12. ACM, 2007.

\bibitem{DBLP:conf/rta/Tajine93}
M.~Tajine.
\newblock The negation elimination from syntactic equational formula is
  decidable.
\newblock In {\em RTA'93}, volume 690 of {\em LNCS}, pages 316--327, 1993.

\bibitem{tarski1987formalization}
A.~Tarski and S.~Givant.
\newblock A formalization of set theory without variables.
\newblock {\em Colloquium Publications}, 41, 1987.

\bibitem{vanBenthem1983-VANMLA}
J.~F. A.~K. van Benthem.
\newblock {\em Modal Logic and Classical Logic}.
\newblock Bibliopolis, Naples. Distributed in the U.S.A. By Humanities Press,
  1983.

\bibitem{wikiADT}
Wikipedia.
\newblock Abstract data type, 2016.
\newblock \url{https://en.wikipedia.org/wiki/Abstract_data_type}.

\end{thebibliography}
\vspace{-40 pt}
\end{document}